\definecolor{myurlcolor}{rgb}{0,0,0.8}
\definecolor{mycitecolor}{rgb}{0,0,0.8}
\definecolor{myrefcolor}{rgb}{0,0,0.8}
\tikzstyle{main node} =[circle,fill=white!20,draw,font=\sffamily\Large\bfseries]
\tikzstyle{terminal}=[circle,fill=white!20,draw,font=\sffamily\Large\bfseries,color=purple,fill=none]
\tikzstyle{shadecircle}=[circle,fill=DodgerBlue,draw=Black]
\tikzstyle{lipid}=[-,draw=Black,line width=0.750]
\tikzstyle{empty}=[inner sep=3pt]
\definecolor{lblue}{rgb}{0,250,255}
\tikzstyle{species}=[circle,fill=yellow,draw=black,scale=.75]
\tikzstyle{transition}=[rectangle,fill=lblue,draw=black,scale=1.15]
\tikzstyle{none}=[inner sep=0pt]
\tikzstyle{empty}=[circle,fill=none, draw=none]
\tikzstyle{inputdot}=[circle,fill=black,draw=black, scale=.25]
\tikzstyle{tick}=[-,draw=black,postaction={decorate},decoration={markings,mark=at position .5 with {\draw (0,-0.1) -- (0,0.1);}},line width=1.000]
\tikzstyle{inputarrow}=[->,draw=black, shorten >=.05cm]
\tikzstyle{simple}=[-,draw=black,line width=1.000]
\tikzstyle{inarrow}=[->, >=stealth, shorten >=.03cm,line width=1]
\newcommand{\FinSet}{\mathtt{FinSet}}
\newcommand{\Set}{\mathtt{Set}}
\newcommand{\Cospan}{\mathtt{Cospan}}
\newcommand{\Rel}{\mathtt{Rel}}
\newcommand{\RNet}{\mathtt{RNet}}
\newcommand{\RxNet}{\mathtt{RxNet}}
\newcommand{\relto}{\nrightarrow}
\newcommand{\CC}{\mathtt{C}}
\newcommand{\D}{\mathtt{D}}
\newcommand{\Circ}{\mathtt{Circ}}
\newcommand{\LinRel}{ \mathtt{LinRel}}
\newcommand{\FinPopSet}{\mathtt{FinSet}_{\epsilon}}
\newcommand{\Mark}{\mathtt{Mark}}
\newcommand{\OpenMark}{\mathtt{OpenMark}}
\newcommand{\DetBalMark}{\mathtt{OpenMark}_{\epsilon}}
\newcommand{\Dynam}{\mathtt{Dynam}}
\newcommand{\OpenSys}{\mathtt{OpenSys}}
\newcommand{\Behave}{\mathtt{Behavior}}
\newcommand{\SemialgRel}{\mathtt{SemiAlgRel}}
\newcommand*{\darkgraysquare}{\textcolor{gray}{\blacksquare}}
\newcommand*{\vdarkgraysquare}{\textcolor{darkgray}{\blacksquare}}
\newcommand*{\graysquare}{\textcolor{lightgray}{\blacksquare}}
\newcommand{\im}{\text{Im}}
\let\hom\relax
\DeclareMathOperator{\hom}{\mathrm{Hom}}
\newcommand{\beq}{\begin{equation}}
\newcommand{\eeq}{\end{equation}}
\newcommand{\define}[1]{{\bf \boldmath{#1}}}
\newcommand{\maps}{\colon}
\newcommand{\N}{\mathbb{N}}
\newcommand{\R}{\mathbb{R}}
\newcommand{\Graph}{\mathrm{Graph}}
\newcommand{\f}{f_i}
\newcommand{\fp}{f'_i}
\newcommand{\fj}{f_j}
\newcommand{\fpj}{f'_j}
\begin{document}


\title{Open Markov Processes and Reaction Networks}
\author{Blake Stephen Swistock Pollard}
\degreemonth{September}
\degreeyear{2017}
\degree{Doctor of Philosophy}
\chair{Prof. John C. Baez}
\othermembers{Prof. Shan-Wen Tsai\\
Prof. Nathaniel Gabor}
\numberofmembers{3}
\field{Physics}
\campus{Riverside}

\maketitle
\copyrightpage{}
\approvalpage{}

\degreesemester{Spring}

\begin{frontmatter}

\begin{acknowledgements}
Thanks to my advisor, Prof. John C. Baez, for his guidance and for opening many doors to me.  Thanks to Daniel Cicala, Kenny Courser, Brandon Coya, Jason Erbele, Brendan Fong, Jayne Thompson, Whei Yip and Mile Gu for many great conversations. I thank the Physics Department at U.C. Riverside for their support.  Part of this work was done at the Centre for Quantum Technologies at the National University of Singapore supported by a National Science Foundation East Asia and Pacific Summer Institutes Award, the National Research Foundation of Singapore, and an FQXi minigrant. Part of this work was also performed with funding from a subcontract with Metron Scientific Solutions working on DARPA's Complex Adaptive System Composition and Design Environment (CASCADE) project.

The material on entropy production in open Markov processes in Chapter \ref{ch:entropymark} appears in my paper `A Second Law for open Markov proceses' \cite{Pollard} and the example of membrane diffusion appears in my paper `Open Markov processes: a compositional perspective on non-equilibrium steady states in biology' \cite{PollardBio}.  Chapter \ref{ch:catopen} recalls some relevant results from Brendan Fong's theory of decorated cospans \cite{Fong}. Chapters \ref{ch:openmark} and \ref{ch:blackbox} arose from joint work with John Baez and Brendan Fong and has been published previously in our paper `A compositional framework for Markov processes' \cite{BaezFongP}.  Chapter \ref{ch:rx} consists of joint work with John Baez appearing in our paper `A compositional framework for reaction networks' \cite{BaezPRx}.

\end{acknowledgements}

\begin{dedication}
\null\vfil
{\large
\begin{center}
To my parents for all their love and support and to my sister, for being her.
\end{center}}
\vfil\null
\end{dedication}

\begin{abstract}


We begin by defining the concept of `open' Markov processes, which are continuous-time Markov chains where probability can flow in and out through certain `boundary' states. We study open Markov processes which in the absence of such boundary flows admit equilibrium states satisfying detailed balance, meaning that the net flow of probability vanishes between all pairs of states. External couplings which fix the probabilities of boundary states can maintain such systems in non-equilibrium steady states in which non-zero probability currents flow. We show that these non-equilibrium steady states minimize a quadratic form which we call `dissipation.' This is closely related to Prigogine's principle of minimum entropy production. We bound the rate of change of the entropy of a driven non-equilibrium steady state relative to the underlying equilibrium state in terms of the flow of probability through the boundary of the process.

We then consider open Markov processes as morphisms in a symmetric monoidal category by splitting up their boundary states into certain sets of `inputs' and `outputs.' Composition corresponds to gluing the outputs of one such open Markov process onto the inputs of another so that the probability flowing out of the first process is equal to the probability flowing into the second. Tensoring in this category corresponds to placing two such systems side by side.

We construct a `black-box' functor characterizing the behavior of an open Markov process in terms of the space of possible steady state probabilities and probability currents along the boundary. The fact that this is a functor means that the behavior of a composite open Markov process can be computed by composing the behaviors of the open Markov processes from which it is composed. We prove a similar black-boxing theorem for reaction networks whose dynamics are given by the non-linear rate equation. Along the way we describe a more general category of open dynamical systems where composition corresponds to gluing together open dynamical systems.

\end{abstract}

\tableofcontents
\end{frontmatter}


\chapter{Introduction}\label{ch:intro}

This thesis is part of a larger effort utilizing category theory to unify a variety of diagrammatic approaches found across the sciences including, but not limited to, electrical circuits, control theory, Markov processes, reaction networks, and bond graphs \cite{BaezCoyaRebro, BaezEberleControl, BaezFongCirc, JasonThesis, FongThesis, SpivakWiring}. We hope that the categorical approach will shed new light on each of these subjects as well as their interrelation. Just as Feynman diagrams provide a notation for calculations in quantum field theory, various graphical or network representations exist for describing proofs, processes, programs, computations, and more. Governing many such diagrammatic approaches is the theory of symmetric monoidal categories \cite{BaezStay}. In this thesis we focus on two basic types of systems which admit some graphical syntax: probabilistic systems satisfying the Markov property, hereafter called Markov processes and systems of interacting entities called reaction networks. Markov processes can be represented by labelled, directed graphs which serve as a notation for a set of coupled linear differential equations. Reaction networks can be represented by certain bipartite graphs commonly referred to as Petri nets which, for the purposes of this thesis, provide a graphical syntax for sets of coupled non-linear differential equations which describe for instance the evolution of a number of well-mixed, interacting chemical species.  

We study these as `open' systems; open in the sense that they can interact with other systems, the environment, or even a user. We thus generalize the graphical syntax for such systems and consider certain types of `open graphs'. Gluing such graphs together along their common interfaces corresponds to interconnection of the open systems they represent. This is formalized using category theory, by considering open systems as morphisms. This setting allows us to represent the composition of systems by the composition of their respective morphisms. Just as complicated functions of many variables can be decomposed in a variety of ways, so can a system be thought of as a composite system in many different ways depending on how one draws the boundaries designating components. We go on to study the `behaviors' of these systems and subsystems in a way which leverages the structure of composition, using functors to compute properties or behaviors of these open systems. Such functors map the `syntax' representing some class of open systems to some notion of `semantics' or behavioral interpretation of such syntax.

In our approach, syntax typically corresponds to some category of `open graphs.' Semantics could correspond, for instance, to sets of coupled differential equations represented by such open graphs or to properties of these equations and their steady states. The fact that these behaviors are computed functorially means that the behavior of a composite system can be computed as the composite of the behaviors of its constituent systems. The behavior categories typically forget many details of the systems themselves, indicating that for complex systems composition of computed behaviors of constituent systems is more efficient than direct computation of the behavior of the large complex system. Using functors to assign meaning or semantics to some syntax is commonly referred to as `functorial semantics,' inspired by F.\ W.\ Lawvere's thesis \cite{Lawvere}. 

This thesis is structured as follows. In Chapter \ref{ch:mark} we provide some preliminaries regarding Markov processes and explain how to define a Markov processes from a given directed graph whose edges are labelled by non-negative real numbers. In Chapter \ref{ch:entropymark} we introduce the concept of an open Markov process in which probability can flow in and out of the process at certain boundary states. We prove a number of theorems bounding the rate of change of various entropic quantities in open Markov processes. By externally fixing the probabilities at the boundary states of an open Markov process one generally induces a non-equilibrium steady state for which non-zero probability currents flow through the boundary of the process in order to maintain such a steady state. We show that these steady states minimize a quadratic form which we call `dissipation.' For steady states near equilibrium, dissipation approximates the rate of entropy production and Prigogine's principle of minimum entropy production coincides with the principle of minimum dissipation. Steady states far from equilibrium minimize dissipation, not the entropy production. This exemplifies the well-known fact that Prigogine's principle of minimum entropy production is valid only in the vicinity of equilibrium. We show that entropy production and dissipation can both be seen as special cases of the rate of generalized entropy production. Dissipation plays a key role in our functorial characterization of steady states in open Markov processes viewed as morphisms in a symmetric monoidal category.
 
We then turn our attention to the compositional modeling of such systems using symmetric monoidal categories. To this end we apply techniques developed by Brendan Fong for constructing categories whose morphisms correspond to open systems \cite{Fong, FongThesis}. These are called `decorated cospan' categories. Roughly speaking, the `cospan' contains the information regarding the `interfaces' of the open system, while the `decoration' carries the information regarding the details of the systems. In Chapter \ref{ch:catopen} we introduce decorated cospan categories and recall some relevant theorems regarding the construction of decorated cospan categories and functors between them. In Chapter \ref{ch:openmark} we explain how open Markov processes can be viewed as morphisms in a decorated cospan category. Here we introduce a slightly refined notion of open Markov processes where we restrict our attention to open Markov processes whose underlying closed Markov process admits a particularly nice type of equilibrium distribution satisfying a property called `detailed balance' for which all probability currents vanish. 

In Chapter \ref{ch:blackbox} we prove a `black-box' theorem for open detailed balanced Markov processes which characterizes an open Markov processes in terms of the subspaces of possible steady state probabilities and probability currents along the boundary of the process. We call the subspace of possible steady state boundary probabilities and probability currents the `behavior' of an open Markov process.  We accomplish this by constructing a `black-box functor' sending an open Markov process to its behavior, viewed as a morphism in the category $\LinRel$ whose objects are vector spaces and whose morphisms are linear relations. The fact that this is a functor means that the steady state behaviors of a composite open Markov process can be computed by composing the behaviors of the constituent subprocesses in $\LinRel$. 

In Chapter \ref{ch:rx} we construct a decorated cospan category where the morphisms are open reaction networks, certain types of bipartite graphs providing a graphical notation for a network of interacting entities. Reaction networks are used in modeling a variety of systems from chemical reactions to the spread of diseases to ecological models of interacting species. Probabilities in a Markov process evolve in time according to the linear `master equation.' We consider the situation in which the dynamics of a reaction network are given by the non-linear `rate equation' and prove a similar black-box theorem characterizing the steady state behaviors of reaction networks as subspaces defined by semialgebraic relations.

\chapter{Markov processes}\label{ch:mark}

A continuous-time Markov chain, or a Markov process, is a way to specify the dynamics of a probability distribution which is spread across some finite set of states. Probability can flow between the states. The larger the probability of being in a state, the more rapidly probability
flows out of the state. Because of this property, under certain conditions the
probability of the states tend toward an equilibrium where at any state the
inflow of probability is balanced by its outflow. The majority of the content in this chapter is standard and can be found, for instance, in Frank Kelly's book \emph{Reversibility and Stochastic Networks} \cite{Kelly}. 

In applications to statistical mechanics, we are often interested in equilibria such that the flow from $i$ to $j$ equals the flow from $j$ to $i$ for any pair of states. An equilibrium of a continuous-time Markov chain with this property is called `detailed balanced.' In a detailed balanced equilibrium, the net flow of probability vanishes between any pair of states. Probability distributions which remain constant in time via non-vanishing probability currents between pairs of states are called `non-equilibrium steady states' (NESS). Such states are of interest in the context of non-equilibrium statistical mechanics. 

Markovian or master equation systems have numerous applications across a wide-range of disciplines \cite{Gardiner, VanKampen}. We make no attempt to provide a complete review of this line of work, but here mention a few references relevant in the context of networked biological systems.  Schnakenberg, in his paper on networked master equation systems, defines the entropy production in a Markov process and shows that a quantity related to entropy serves as a Lyapunov function for master equation systems \cite{SchnakenRev}.  Schnakenberg, in his book \emph{Thermodynamic Network Analysis of Biological Systems} \cite{SchnakenBook}, provides a number of biochemical applications of networked master equation systems. Oster, Perelson and Katchalsky developed a theory of `networked thermodynamics' \cite{OPK}, which they went on to apply to the study of biological systems \cite{OPKBio}. Following the untimely passing of Katchalsky, Perelson and Oster went on to extend this work into the realm of chemical reactions \cite{OPChem}.

Starting in the 1970's, T.\ L.\ Hill spearheaded a line of research focused on what he called `free energy transduction' in biology. A shortened and updated form of his 1977 text on the subject \cite{Hill} was republished in 2005 \cite{HillDover}. Hill applied various techniques, such as the use of the cycle basis, in the analysis of biological systems, see for example his model of muscle contraction \cite{HillScience}.

\section{The master equation}
Consider a probability distribution spread across some finite set of states $V$. To specify a Markov process on $V$ is to specify probabilistic rates between every pair of states. We can think of this as an $V$ by $V$ matrix with non-negative entries, let's call this matrix of probabilistic rates $H \in \R^{V \times V}$ the \define{Hamiltonian}. An entry $H_{ij} \in [0,\infty)$ represents the rate at which probability flows from state $j \in V$ to state $i \in V$. Since probabilities are dimensionless, the units of $H_{ij}$ are simply inverse time. The diagonal entries of the matrix are determined by the off-diagonal entries via the condition that the columns of $H$ sum to zero, 
\[ \sum_i H_{ij} = 0. \]
A matrix whose off-diagonal entries are non-negative and whose columns sum to zero is called \define{infinitesimal stochastic}. The diagonal entry $H_{jj}$ is thus given by
\[ H_{jj} = - \sum_{i \neq j} H_{ij}. \] 
Noting that $H_{ij}$ is the rate at which probability flows from $j$ to $i$, one can see that the right hand side above and hence the entry $H_{jj}$ is the rate of total outflow from vertex $j$. 
Given some initial probability distribution $p(t=0) \in \R^V$, the time evolution of the probability distribution is given by the \define{master equation} which can be written in matrix form as
\[ \frac{dp}{dt}  = Hp \]
with solution
\[ p(t) = p(0) e^{Ht}. \]
Writing the master equation in component form we have
\[ \frac{dp_i}{dt} = \sum_j H_{ij} p_j. \]
Summing over all states gives
\[ \sum_i \frac{dp_i}{dt} = \sum_{i,j} H_{ij} p_j =0 \]
where the last equality follows from the fact that the columns of $H$ sum to zero. This implies the conservation of probability for systems whose time evolution is generated by an infinitesimal stochastic matrix.

\section{Markov processes from labeled, directed graphs}
There exists a graphical notation for Markov processes where the states of the process are represented by the nodes of a directed, labeled graph. The directed edges represent transitions and are labeled by non-negative numbers corresponding to transition rates between states. More precisely:

\begin{defn}
A \define{Markov process} $M$ is a diagram
\[ \xymatrix{ (0,\infty) & E \ar[l]_-r \ar[r]<-.5ex>_t  \ar[r] <.5ex>^s & V }  \]
where $V$ is a finite set of \define{vertices} which correspond to \define{states} of the Markov process, $E$ is a finite set of \define{edges} corresponding to transitions between the states, $s,t \maps E \to V$ assign to each edge its \define{source} and \define{target}, and $r \maps E \to (0,\infty)$ assigns a \define{rate constant} $r_e$ to each edge $e \in E$. 

If $e \in E$ has source $i$ and target $j$, we write $e \maps i \to j$. We sometimes summarize the above by writing $M=(V,E,s,t,r)$ for a \define{Markov process on $V$}.
\end{defn}

\begin{defn}
The \define{underlying graph} $G$ of a Markov process $M = (V,E,s,t,r)$ is the directed graph $G=(V,E,s,t)$.
\end{defn}

From a Markov process $M$ we can construct an infinitesimal stochastic Hamiltonian $H \maps \R^V \to \R^V$. If $i \neq j$ we define 
\[ H_{ij} = \sum_{e \maps j \to i} r_e. \]
The diagonal terms are defined so as to make $H$ infinitesimal stochastic
\[ H_{ii} = -\sum_j \sum_{e \maps i \to j} r_e. \]
We see that $-H_{ii}p_i$ has the interpretation of the net outflow from the $i^{th}$ state. A simple example illustrates this graphical syntax for Markov processes. Consider the graph
\[ \xymatrix{  A \ar@/^/[r]^{\alpha} & B \ar@/^/[l]^{\beta}, } \]
Following the above procedure we get the following Hamiltonian
\[ H = \left( \begin{array}{cc} -\alpha & \beta \\ \alpha & -\beta \end{array} \right) \]
generating the time evolution of $p(t) \in \R^2 $ via the master equation
\[ \frac{d}{dt} \left( \begin{array}{c} p_A(t) \\ p_B(t) \end{array} \right) = \left( \begin{array}{cc} -\alpha & \beta \\ \alpha & -\beta \end{array} \right) \left( \begin{array}{c} p_A(t) \\ p_B(t) \end{array} \right). \]

Given any directed, labelled graph $(V,E,s,t,r)$, following the above prescription, one can write down an infinitesimal stochastic Hamiltonian $H \maps \R^V \to \R^V$. Thus one can either think of a Markov process itself as directed, labelled graph $M=(V,E,s,t,r)$ or simply a pair $M=(V,H)$ of a finite set of states $V$ together with an infinitesimal stochastic Hamiltonian $H \maps \R^V \to \R^V$ on $V$. Many different graphs give rise to the same Hamiltonian. For instance, adding self-loops to any vertex leaves the Hamiltonian unchanged. If one wishes to put graphs in one-to-one correspondence with Hamiltonians, some choices must be made. For instance restricting one's attention to directed labelled graphs with no self-loops and at most one edge in each direction between each pair of states. In our treatment we simply allow self-loops and multiple edges in parallel between states.

\section{Flows and affinities}
The concepts introduced in this section can be found in a number of places, for instance see \cite{Kelly, SchnakenRev}. We can define a quantity representing the `flows' of probability in a Markov process. Between any pair of states $i,j \in V$ we define the \define{net flow from $j$ to $i$} $J_{ij}(p)$ as 
\[ J_{ij}(p) = H_{ij}p_j - H_{ji}p_i. \]
In the literature, this quantity is sometimes called a `probability current.' 
For a single state we can define then define the \define{net inflow}
\[ J_i (p) = \sum_j H_{ij}p_j - \sum_j H_{ji}p_i.\]
The single index on the net inflow distinguishes it from a particular flow between a pair of states.
Note that in the second term $p_i$ is independent of $j$ and for the sum in front we have, $\sum_j H_{ji} = -H_{ii}$, from the infinitesimal stochastic property of $H$. The above equation can be written as
\[ J_i(p) = \sum_j H_{ij}p_j \]
which we easily recognize as the $i^{th}$ component of the master equation
\[ \frac{dp_i}{dt} = J_i(p). \]

We also define now the `dual' variable to the flow $J_{ij}(p)$, namely the \define{affinity between state $j$ and state $i$} 
\[ A_{ij}(p) = \ln \left( \frac{ H_{ij}p_j }{H_{ji} p_i } \right). \]
People also use the term thermodynamic force for this quantity. Note that the vanishing of a flow $J_{ij}(p) = 0$ implies the vanishing of the corresponding affinity $A_{ij}(p)$. 

These quantities are essential to quantifying the difference between two distinct ways a probability distribution $q \in \R^V$ can be constant in time
\[ \frac{dq}{dt} = Hq =0\] 
in a closed Markov process, namely those for which $J_{ij}(q)$ vanishes for all pairs of states $i,j \in V$ and those for which it does not. This is the subject of the next section.

\section{Detailed balanced equilibrium versus non-equilibrium steady states}
There are essentially two types of equilibrium or steady states in a Markov process, those which satisfy detailed balance and those which don't. The latter are typically referred to as non-equilibrium steady states. All currents (and therefore affinities as well) or flows vanish in a detailed balanced equilibrium $q$, while a non-equilibrium steady state generally has non-zero currents or flows.

An equilibrium distribution $q \in \R^V$ satisfies \define{detailed balance}, if
\[ H_{ij}q_j = H_{ji}q_i \] 
for all pairs $i,j \in V$. Note that this implies not only that
\[  \frac{dq_i}{dt} = \sum_j (H_{ij}q_j -  H_{ji}q_i) = 0   \ \ \text{for all} \ \ i \in V  \]
but also that each individual term in the sum on the right hand side vanishes. A \define{non-equilibrium steady state} $q$ still satisfies
\[ \frac{dq_i}{dt} = \sum_j ( H_{ij}q_j - H_{ji} q_i )=0 \ \ \text{for all} \ \ i \in V, \]
but each individual term in the sum on the right hand side need not vanish.

This terminology can be confusing as often the use of the term equilibrium is meant to imply detailed balance. In addition, non-equilibrium steady states are often just called steady states despite the fact that both detailed balanced equilibria and non-equilibrium steady states are constant in time.

For a detailed balanced equilibrium $q$ we have that $J_{ij}(q) = 0$ for all pairs $i,j \in V$, i.e. the flows vanish along all edges in the underlying graph. Similarly, all affinities vanish for a detailed balanced equilibrium.  If $q$ is a non-equilibrium steady state we have $J_{ij}(q) \neq 0 $ for some $i,j \in V$.  Similarly a non-equilibrium steady state implies at least one non-vanishing affinity. Later we shall see how these properties imply zero entropy production in a detailed balanced equilibrium as well as non-zero entropy production in a non-equilibrium steady state.

The existence of a detailed balanced equilibrium for a given Markov process amounts to a condition on the transition rates of the Markov process. A necessary and sufficient condition for the existence of a detailed balanced equilibrium distribution is \define{Kolmogorov's Criterion}, which says that the transition rates of the Markov process satisfy
\[ H_{12}H_{23} \cdots H_{n-1 n} H_{n1}  = H_{1n} H_{n n-1}\cdots H_{32}H_{21}  \]
for all finite sequences of states $1,2,\dots,n \in V$. This condition says that the product of the transition rates around any cycle is equal to the product of the rates along the reversed cycle. 

\section{Some simple Markov processes }
Consider the following simple example of a Markov process with two states:
\[ \xymatrix{  A \ar@/^/[r]^{\alpha} & B \ar@/^/[l]^{\beta} } \]
The flows between the states are given by $J_{AB}(p) = -J_{BA}(p) = -\alpha p_A + \beta p_B$. This process admits a detailed balanced equilibrium distribution
\[ q = \left( \begin{array}{c} \beta \\ \alpha \end{array} \right), \]
where one can easily check that $H_{AB}q_B = H_{BA} q_A$. This equilibrium is unique up to normalization. We can see that this process trivially satisfies Kolmogorov's criterion in that $H_{AB}H_{BA} = H_{BA}H_{AB}$. In order for a Markov process to admit a non-equilibrium steady state, it must have a cycle for which Kolmogorov's criterion is violated. To illustrate such a case, consider the graph
\[ \xymatrix{  A \ar@/^/[rr]^{\alpha} \ar@/^/[ddr]^{\alpha}& &  B \ar@/^/[ll]^{\alpha} \ar@/^/[ddl]^{\alpha} \\ && \\  &  C \ar@/^/[uul]^{\beta} \ar@/^/[uur]^{\alpha} &} \]
Notice that the products of the rates around the cycle in the clockwise direction is $\alpha^2 \beta$ and $\alpha^3$ in the counter-clockwise direction. Detailed balance would require that $p_A = p_B$ since the transition rate between $A$ and $B$ are equal, similarly detailed balance between $B$ and $C$ would require $p_ B = p_C$, which implies $p_A = p_C$. So long as $\alpha \neq \beta$ there will be some net flow between $A$ and $C$, indicating that there is no detailed balanced equilibrium for such a process. This system does admit a non-equilibrium steady state found by solving the set of equations $Hp^*=0$ where $H$ is the Hamiltonian for this system. Doing so yields the non-equilibrium steady state:
\[ p^* = \left( \begin{array}{c}  \alpha^2 + 2 \alpha \beta  \\  2 \alpha^2 + \alpha \beta \\ 3 \alpha^2 \end{array} \right). \]
Calculating the steady state currents $J_{ij}(p^*)$ results in
\[  J_{AB}(p^*) = J_{BC}(p^*) = J_{CA}(p^*) = \alpha^3 - \alpha^2 \beta  \]
which are indeed non-zero so long as $ \alpha \neq \beta$, in which case they all vanish. 

\section{Entropy production}
In his seminal paper on networked master equation systems \cite{SchnakenRev}, Schnakenberg defines a quantity which he calls the `rate of entropy production' to be one-half the product of the flow between each pair of states times the corresponding affinity, summed over all pairs of states
\[ \frac{1}{2} \sum_{i,j \in V} J_{ij}(p)A_{ij}(p). \]
By definition this rate of entropy production is non-negative.

For a detailed balanced equilibrium $q \in \R^V$, all currents vanish $J_{ij}(q)= 0$ for all $i,j \in V$ as do all affinities. Therefore the rate of entropy production as defined by Schnakenberg is zero in a detailed balanced equilibrium while in a non-equilibrium steady state it will generally be non-zero. One should note that this quantity which Schnakenberg calls the rate of entropy production is not (the negative of) the time derivative of the Shannon entropy
\[ S(p) = -\sum_i p_i \ln p_i, \]
rather it is in fact the time derivative of the entropy of the distribution $p$ relative to a detailed balanced equilibrium $q$. Clearly this characterization only applies to processes admitting detailed balanced equilibrium. In Section \ref{sec:relentropy} we introduce this `relative entropy' or `Kullback-Leibler divergence' and explain its relation to Schnakenberg's rate of entropy production and its role as a non-equilibrium free energy.  

Consider the quantity $A_{ij}(p)$. It is the entropy production per unit flow from $j$ to $i$. If $J_{ij}(p)>0$, i.e. if there is a positive net flow of probability from $j$ to $i$, then $A_{ij}(p)>0$. In the realm of statistical mechanics, we can understand $A_{ij}(p)$ as the force resulting from a difference in chemical potential. Let us elaborate on this point to clarify the relation of our framework to the language of chemical potentials used in non-equilibrium thermodynamics. Suppose that we are dealing with only a single type of molecule or chemical species. The states could correspond to different locations of the molecule, for example in membrane transport. Another possibility is that each state correspond to a different internal configuration of the molecule. In this setting the chemical potential $\mu_i$ is related to the concentration of that chemical species in the following way:
\[ \mu_i = \mu_i^o + T \ln( c_i), \]
where $T$ is the temperature of the system in units where Boltzmann's constant is equal to one and $\mu_i^o$ is the standard chemical potential. The difference in chemical potential between two states gives the force associated with the flow of probability which seeks to reduce this difference in chemical potential 
\[ \mu_j - \mu_i = \mu_j^o - \mu_i^o + T \ln\left( \frac{c_j}{c_i} \right). \] 
In general the concentration of the $i^{\text{th}}$ state is proportional to the probability of that chemical species divided by the volume of the system $c_i = \frac{p_i}{V}$. In this case, the volumes cancel out in the ratio of concentrations and we have this relation between chemical potential differences and probability differences:
\[ \mu_j - \mu_i = \mu_j^o - \mu_i^o + T \ln \left( \frac{p_j}{p_i} \right). \]
This potential difference vanishes when $p_i$ and $p_j$ are in equilibrium and we have
\[ 0 = \mu_j^o - \mu_i^o + T \ln \left( \frac{q_j}{q_i} \right), \]
or that
\[ \frac{q_j}{q_i} = e^{-\frac{\mu_j^o-\mu_i^o}{T}}. \]
If $q$ is a detailed balanced equilibrium, then this also gives an expression for the ratio of the transition rates $\frac{H_{ji}}{H_{ij}}$ in terms of the standard chemical potentials. Thus we can translate between differences in chemical potential and ratios of probabilities via the relation
\[ \mu_j - \mu_i = T \ln \left( \frac{p_j q_i}{q_j p_i} \right), \]
which if $q$ is a detailed balanced equilibrium gives
\[ \mu_j - \mu_i = T \ln \left( \frac{H_{ij}p_j}{H_{ji}p_i} \right). \]
We recognize the right hand side as the force $A_{ij}(p)$ times the temperature of the system $T$:
\[ \frac{\mu_j - \mu_i}{T} = A_{ij}(p) .\]

Returning to the simple three-state example whose rates violate Kolmogorov's criterion
\[ \xymatrix{  A \ar@/^/[rr]^{\alpha} \ar@/^/[ddr]^{\alpha}& &  B \ar@/^/[ll]^{\alpha} \ar@/^/[ddl]^{\alpha} \\ && \\  &  C \ar@/^/[uul]^{\beta} \ar@/^/[uur]^{\alpha} &} \]
This process has a non-equilibrium steady state 
\[ p^* = \left( \begin{array}{c}  \alpha^2 + 2 \alpha \beta  \\  2 \alpha^2 + \alpha \beta \\ 3 \alpha^2 \end{array} \right) \]
which induces the steady state flows
\[  J_{AB}(p^*) = J_{BC}(p^*) = J_{CA}(p^*) = \alpha^3 - \alpha^2 \beta  \]
and affinities
\[\begin{array}{ccc} A_{AB}(p^*) = \ln \left( \frac{2 \alpha^3 + \alpha^2 \beta}{\alpha^3 + 2 \alpha^2 \beta } \right), & A_{BC}(p^*) = \ln \left( \frac{ 3\alpha^3}{2 \alpha^3 + \alpha^2 \beta} \right), & A_{CA}(p^*) = \ln \left( \frac{ \alpha^3 + 2 \alpha^2 \beta}{3 \alpha^2 \beta} \right). \end{array} \] 

With these quantities in hand we can calculate Schnakenberg's rate of entropy production in the non-equilibrium steady state $p^*$ as (for notational ease, suppressing the dependence on $p^*$)
\[ \begin{array}{ccc} \displaystyle{ \frac{1}{2} \sum_{i,j \in V} J_{ij}A_{ij} } &=& J_{AB}A_{AB}+J_{BC}A_{BC}+ J_{CA}A_{CA} \\
&=& \displaystyle{ \left( \alpha^3 - \alpha^2 \beta \right) \ln \left( \frac{\alpha}{\beta} \right) } \\
&=& \displaystyle{ \alpha^2 \left( \alpha - \beta \right) \ln \left( \frac{\alpha}{\beta} \right) }. \end{array} \]
From this expression it is easy to see that this quantity is positive unless $\alpha = \beta$ in which case it vanishes.

\section{Spanning trees and partition functions}
There is an explicit formula for steady state distributions of the Markov process associated to a graph in terms of a sum over the directed spanning trees of the graph, stemming from Kirchhoff's matrix tree theorem due to Tutte \cite{Tutte}. For a detailed description see Hill, 1966 \cite{HillTree}. Similar techniques are also utilized in Bott and Mayberry (1954), King and Altman (1956), and Schnakenberg (1976) \cite{Bott, King, SchnakenRev}.  Consider the underlying graph of a Markov process $G$. If $G$ is a tree with at most a single edge between any pair of states, then to each vertex $ i \in V$, we get a directed spanning tree $T_i$, where one has to make a choice whether all branches of the tree are directed towards or away from $i$. Since the outflow of a state in a Markov process is proportional to the probability of the state, the inflow necessary to keep the probability of the state constant in time depends on the rate of inflow from other states, hence we choose the convention that the directed tree $T_i$ have all edges directed \emph{towards} the $i^{th}$ state. The $i^{th}$ component of an equilibrium distribution of the Markov process is then given by the product of the rates along the edges of the directed tree 
\[ q_i = \prod_{e \in T_i} r_e. \]

Now let us relax the condition that $G$ be a tree with at most a single edge between vertices, but keep it connected. The graph $G$ will still have some set of spanning trees. For each tree in that set, each vertex gives a directed tree, again with the convention taken towards the vertex. Let us denote the set of spanning trees directed to the $i^{\text{th}}$ vertex as $\mathcal{T}_i$. The equilibrium distribution is given by
\[ q_i = \sum_{T \in \mathcal{T}_i} \prod_{e \in T} r_e. \]
Normally one normalizes such a distribution so that $\sum_i q_i = 1$, yielding
\[ q_i  = \frac{ \displaystyle{ \sum_{T \in \mathcal{T}_i } \prod_{e \in T} r_e }}{ \displaystyle{ \sum_i \sum_{T \in \mathcal{T}_i } \prod_{e \in T} r_e } }. \]
If we write the equilibrium distribution as a kind of Gibbs state 
\[ q_i = \frac{ \displaystyle{ \sum_{T \in \mathcal{T}_i } \prod_{e \in T} r_e } }{\mathcal{Z}} \]
then the normalizing factor 
\[ \mathcal{Z} = \sum_i \sum_{T \in \mathcal{T}_i} \prod_{e \in T} r_e \]
plays the role of a partition function \define{partition function}.

If all of the rates of a Markov process are equal to one we see that this partition function simply counts the number of spanning trees $|\mathcal{T}|$ of the underlying graph of the Markov process. 

Notice that we can think of this in terms of paths. A directed tree $T_i$ is a simple path in which `all roads lead to the $i^{th}$ state.' The rate associated to path is the product of all the rates in the path. Recall that the rate of outflow of any state is the probability associated to that state times the sum of the rates of all edges leaving that state. In equilibrium this rate of of outflow must be balanced by the rate of inflow at all states. The rate of inflow into a state is given by the sum of all incoming flows. Each incoming flow is given by the rate along that edge times the probability at its source. A similar argument holds for the probability at this source and so on. This in some sense motivates the above expression. 

When cycles are present in the graph the number of paths leading to some vertices can become infinite. Also one cannot orient a cycle `towards' a vertex along that cycle. Instead we consider only the paths along the spanning trees of the graph. A simple cycle involving $n$ vertices will have $n$ edges and $n$ spanning trees. In general the cycles of a graph are not independent. 

Let us see this how this formula works for the three-state process from our previous example
\[ \xymatrix{  A \ar@/^/[rr]^{\alpha} \ar@/^/[ddr]^{\alpha}& &  B \ar@/^/[ll]^{\alpha} \ar@/^/[ddl]^{\alpha} \\ && \\  &  C \ar@/^/[uul]^{\beta} \ar@/^/[uur]^{\alpha} &} \]
The undirected graph underlying this process has three spanning trees
\[ T = \left \{ \begin{array}{ccc} \xymatrix@=0.5em{ A \ar@{-}[rr] \ar@{-}[ddr] & & B \\ & & \\ & C ,} & \xymatrix@=0.5em{ A \ar@{-}[rr]  & & B \ar@{-}[ddl] \\ & & \\ & C ,}& \xymatrix@=0.5em{ A  \ar@{-}[ddr] & & B \ar@{-}[ddl] \\ & & \\ & C }   \end{array} \right \} \]
For each vertex $i \in V$ we get the set $\mathcal{T}_i $ of trees directed towards the $i^{\text{th}}$ state. For instance, the set of spanning trees directed towards $A$ is
\[ \mathcal{T}_A = \left \{ \begin{array}{ccc} \xymatrix@=0.5em{ A   & & B \ar[ll]_{\alpha} \\ & & \\ & C \ar[uul]^{\beta},} & \xymatrix@=0.5em{ A   & & B \ar[ll]_{\alpha} \\ & & \\ & C \ar[uur]_{\alpha} ,}& \xymatrix@=0.5em{ A  & & B \ar[ddl]^{\alpha} \\ & & \\ & C \ar[uul]^{\beta} }   \end{array} \right \} \]

We can then calculate the $A^{\text{th}}$ component of the steady state distribution from the formula 
\[ p^*_A = \frac{ \displaystyle{ \sum_{T \in \mathcal{T}_A } \prod_{e \in T} r(e) } }{ \mathcal{Z} } \]

Choosing the normalization $\mathcal{Z} = 1$ gives
\[ p^*_A = \alpha \beta + \alpha^2 + \alpha \beta = \alpha^2 + 2 \alpha \beta. \]
Following a similar procedure for states $B$ and $C$, one can check that this agrees with our previous calculation solving $\frac{dp^*}{dt} = Hp^* = 0$. 

\section{Time reversal}
Recall that detailed balance, while a property of an equilibrium distribution, is connected through the Kolmogorov criterion to the transition rates of a process. It is their behavior under time-reversal which brings out the fundamental difference between processes admitting a detailed balanced equilibrium and those with non-equilibrium steady state. Since all probability currents vanish in a detailed balanced equilibrium, time-reversal leaves the both the distribution and the currents unchanged. On the other hand, reversing time for a process in a non-equilibrium steady state will leave the distribution unchanged while reversing the sign of all non-zero probability currents. Thus we see that there is a fundamental connection between time-reversal invariance of a process and Kolmogorov's criterion. 

We saw that one can take as a definition of the rate of entropy production in a Markov process the product of the currents times the affinities, appropriately summed \cite{SchnakenRev}. We also saw that this rate of entropy production vanishes in a detailed balanced equilibrium as all currents vanish. Entropy production in a non-equilibrium steady state will generally be non-zero. Entropy production serves as a measure of the ability of a system to perform information processing or the ability of a system to consume or create free energy. Thus non-equilibrium steady states are of interest in biology and chemistry where systems organize and process information, feeding on free energy from their environment. At the same time, it seems unphysical at a microscopic level to abandon the condition of microscopic reversibility. In the next section we demonstrate that by coupling certain `boundary' states of a Markov process to the environment or to external reservoirs one can induce non-equilibrium steady states in systems which are themselves microscopically reversible, i.e. whose rates still satisfy Kolmogorov's criterion. 

\chapter{Second Laws for open Markov processes}\label{ch:entropymark}

In this chapter we define the notion of an `open Markov process' in which probability can flow in and out of the process through certain boundary states. As we saw earlier, one quantity central to the study of non-equilibrium systems is the rate of entropy production \cite{GP, DeGrootM, LJ, Lindblad, Prigogine, SchnakenRev}. For open Markov processes, we prove that rate of change of the `relative entropy' between two distributions is bounded by the flow of relative entropy through the boundary of the process. Certain boundary flows through an open Markov process induce non-equilibrium steady states. In such a non-equilibrium steady state, the rate of change of relative entropy with respect to the underlying equilibrium state is the rate at which the system must consume free energy from its environment to maintain such a steady state.

Prigogine's principle of minimum entropy production \cite{PrigogineEnt} asserts that for non-equilibrium steady states that are near equilibrium, entropy production is minimized. This is an approximate principle that is obtained by linearizing the relevant equations about an equilibrium state. In fact, for open Markov processes, non-equilibrium steady states are governed by a \emph{different} minimum principle that holds \emph{exactly}, arbitrarily far from equilibrium \cite{BMN, Landauer, Landauer2}. We show that for fixed boundary conditions, non-equilibrium steady states minimize a quantity we call `dissipation'. If the probabilities of the non-equilibrium steady state are close to the probability of the underlying detailed balanced equilibrium, one can show that dissipation is close to the rate of change of relative entropy plus a boundary term. Dissipation is in fact related to the Glansdorff--Prigogine criterion which states that a non-equilibrium steady state is stable if the second order variation of the entropy production is non-negative \cite{GP, SchnakenRev}.  

Starting in the 1990's, the Qians and their collaborators developed a school studying non-equilibrium steady states, publishing a number of articles and books on the topic \cite{Qians}. More recently, results concerning fluctuations have been extended to master equation systems \cite{AndrieuxGaspard}. In the past two decades, Hong Qian of the University of Washington and collaborators have published numerous results on non-equilibrium thermodynamics, biology and related topics \cite{Qian1, Qian2, Qian3}.

\section{Open Markov processes}

In this section we introduce the concept of an open Markov process. An open Markov process is a Markov process $(V,H)$ together with a specified subset $B \subseteq V$ of `boundary states.' Probability can flow in and out of the process at boundary states via some coupling to other systems or an environment. 

\begin{defn}
A \define{Markov process with boundary} is a triple $(V,H,B)$ where
\begin{itemize}
\item $V$ is a finite set of \define{states}
\item $H \maps \R^V \to \R^V$ is an infinitesimal stochastic \define{Hamiltonian}
\item $B \subseteq V$ is a subset of \define{boundary states.}
\end{itemize}
States in the subset $I = V-B$ are called \define{internal states}.
\end{defn}

Often, the state space of a system interacting with its environment is given by the product of the state spaces of the system and the environment $S \times E$. Specifying a particular state corresponds to specifying the state of the system and the state of the environment. In the context of this article we consider a different viewpoint, where the state space of the composite system is given by the union of the internal and boundary states $S=I \cup B$. Thus a particle in the composite system can be in either an internal state or a boundary state. The interaction of the system with its environment is captured by the system's behavior at boundary states. 

One can visualize an open Markov process as a graph where the edges are labelled by positive real numbers. Each vertex is a `state' and the numbers attached to the edges are transition rates. Internal states are white and boundary states are shaded:
\[
\begin{tikzpicture}[->,>=stealth',shorten >=1pt,auto,node distance=3cm,
  thick,main node/.style={circle,fill=white!20,draw,font=\sffamily\bfseries},terminal/.style={circle,fill=blue!20,draw,font=\sffamily}]]
  \node[main node](1) {$A$};
  \node[main node](2) [below right of=1] {$B$};
  \node[terminal](3) [below left of=1]  {$C$};
  \path[every node/.style={font=\sffamily\small}]
    (3) edge [bend left] node[above] {$3$} (1)
    (2) edge [bend left] node[above] {$0.1$} (3)
    (1) edge [bend left] node[right] {$1.0$} (2);  
\end{tikzpicture} \]

The master equation is modified for an open Markov process. The time evolution of probabilities associated to boundary states is prescribed as the `boundary conditions' of the process. The internal states still evolve according to their regular master equation. 
\begin{defn} The \define{open master equation} is given by
\[ \begin{array}{ccll}\displaystyle{ \frac{d}{dt}p_i(t) } &=& \displaystyle{
\sum_j H_{ij} p_j(t)}, &  i \in V-B \\  \\
 p_i(t) &=& b_i(t), & i \in B.  
\end{array}\]
where the time-dependent $b_i(t)$ are specified.
\end{defn}

\begin{defn}
An \define{ordinary Markov process} $(V,H)$ is an open Markov process whose boundary is empty. 
\end{defn}

Since probability can flow in and out through the boundary of an open Markov process, the total probability $\sum_i p_i$ need not be constant. Thus, even if this sum is initially normalized to 1, it will typically not remain so. If an open Markov process is a subsystem of an ordinary Markov process, one can normalize the probabilities of the ordinary Markov process to unity and interpret the $p_i$ as probabilities. In this case the probabilities restricted to the open Markov process will be sub-normalized, obeying $\sum_i p_i \leq 1$. However, in some applications it is useful to interpret the quantity $p_i$ as a measure of the number of `entities' in the $i$th state, with the number being so large that it is convenient to treat it as varying continuously rather than discretely \cite{Kingman}. In these applications we do not have $\sum_i p _i \leq 1$ and therefore do not assume so throughout this thesis, instead working with un-normalized probabilities $p_i \in [0,\infty)$ for which the sum $\sum_i p_i$ converges. 

We can write down the open master equation for the open Markov process depicted above as 
\[ \frac{dp_A}{dt} = -1.0p_A + 3p_c \]
\[ \frac{dp_B}{dt} = -0.1p_B + 1.0p_A \]
\[ p_C = b(t) \]
where $b(t)$ is the specified boundary probability for the single boundary state $C$.

We will be especially interested in `steady state' solutions of the open master equation:

\begin{defn}
A \define{steady state} solution of the open master equation is a solution $p(t) \maps V \to [0,\infty)$ such that $\frac{dp}{dt} = 0$. 
\end{defn}

Let us take the boundary probability $p_C = b(t)$ to be constant in time at some fixed value $p_C = C_0$. Then we can seek steady state solutions of the open master equation compatible with this boundary probability. Solving,
\[ \frac{dp_A}{dt} = -1.0p_A + 3C_0 = 0 \]
\[ \frac{dp_B}{dt} = -0.1p_B + 1.0p_A = 0 \]
yields
\[ p_A = 3C_0 \]
and 
\[ p_B = 30C_0.\]

Externally fixing the boundary probabilities of an open Markov process whose underlying Markov process satisfies Kolmogorov's criterion will induce a steady-state distribution in which the inflow of probability is balanced by its outflow at all internal states. Hence it is often the case that we restrict our attention to open Markov processes $(V,H,B)$ whose underlying Markov process $(V,H)$ satisfies Kolmogorov's criterion and therefore admitting a detailed balanced equilibrium $q \in \R^V$.

\begin{defn}
An open detailed balanced Markov process $(V,H,B,q)$ is a open Markov process $(V,H,B)$ equipped with a particular detailed balanced equilibrium $q \in \R^V$. Note that the existence of a detailed balanced equilibrium requires that the rates in the Hamiltonian $H \maps \R^V \to \R^V$ satisfy Kolmogorov's criterion. 
\end{defn}

\section{Relative entropy}

We now introduce a divergence between probability distributions known in various circles as the relative entropy, relative information or the Kullback-Leibler divergence. The relative entropy is not symmetric and violates the triangle inequality, which is why it is called a `divergence' rather than a metric, or distance function. 

Given two probability distributions $p,q \in [0,\infty)^V$ the entropy of $p$ relative to $q$ or the \define{relative entropy} is given by:
\[ \displaystyle{ I(p,q) = \sum_{i \in V} p_i \ln \left( \frac{p_i}{q_i} \right). } \]
The relative entropy is sometimes referred to as the information gain or the Kullback--Leibler divergence \cite{kullback1951information}. Moran, Morimoto, and Csiszar proved that, in an ordinary Markov process, the entropy of any distribution relative to the equilibrium distribution is non-increasing \cite{csiszar1963,moran1961entropy,morimoto1963markov}. Dupuis and Fischer proved that the relative entropy between any two distributions satisfying the master equation is non-increasing \cite{dupuis2012construction}. 

For normalized distributions $\sum_i p_i = \sum_i q_i = 1$ the relative entropy $I(p,q)$ enjoys the property that $I(p,q) > 0$ unless $p=q$ where it vanishes. Since we are interested in studying entropy production for open Markov processes in which the distributions are un-normalized we use a generalized version of the relative entropy
\[ I(p,q) = \sum_i \left[ p_i \ln \left( \frac{p_i}{q_i} \right) - (p_i - q_i) \right] \]
for which $I(p,q) \geq 0$ unless $p=q$ for any pair of un-normalized distributions $p,q$. 

First we show that the results regarding the non-increase of relative entropy still hold for this generalized relative entropy and for un-normalized distributions. Following Dupuis and Fischer \cite{dupuis2012construction}, we can see that relative entropy is non-increasing for ordinary Markov processes:
\[ \begin{array}{ccl} \displaystyle{ \frac{d I( p(t), q(t) ) } {dt} } &=& \displaystyle{ \frac{d}{dt} \sum_i \left[ p_i \ln \left( \frac{p_i}{q_i} \right) -(p_i - q_i) \right]  } \\ \\
&=& \displaystyle{  \sum_i \left[ \frac{dp_i}{dt} \ln \left( \frac{p_i}{q_i} \right) -  \frac{dq_i}{dt} \left( \frac{p_i}{q_i} -1 \right) \right] } \\ \\
&=& \displaystyle{  \sum_{i,j} \left[ H_{ij} p_j \ln \left( \frac{p_i}{q_i} \right) - H_{ij} q_j \left( 
\frac{p_i}{q_i} - 1 \right) \right] } \\ \\
\end{array} \]
The last term is zero for an ordinary Markov process. Separating out the $i=j$ term, we have
\[ \begin{array}{ccl} \displaystyle{ \frac{d I( p(t), q(t) ) } {dt} } 
&=& \displaystyle{  \sum_{i, j \neq i} \left[ H_{ij} p_j  \ln \left( \frac{p_i}{q_i} \right) - H_{ij}q_j \frac{p_i}{q_i} \right] + \sum_i \left[ H_{ii}p_i \ln \left( \frac{p_i}{q_i} \right) - H_{ii}p_i  \right] } \\ \\ 
&=& \displaystyle{  \sum_{i, j \neq i} \left[ H_{ij} p_j  \ln \left( \frac{p_i}{q_i} \right) - H_{ij}q_j \frac{p_i}{q_i} \right] + \sum_j \left[ H_{jj}p_j \ln \left( \frac{p_j}{q_j} \right) - H_{jj}p_j  \right] } \\ 
\end{array} \]
\[ \begin{array}{ccl}
&=& \displaystyle{  \sum_{i, j \neq i} \left[ H_{ij} p_j  \ln \left( \frac{p_i}{q_i} \right) - H_{ij}q_j \frac{p_i}{q_i} \right] - \sum_{j, i \neq j} \left[ H_{ij}p_j \ln \left( \frac{p_j}{q_j} \right) - H_{ij}p_j  \right] }  \\ \\
&=& \displaystyle{  \sum_{i,j} H_{ij} p_j \left[ \ln \left( \frac{p_i q_j}{q_i p_j} \right) - \frac{p_i q_j}{q_i p_j} +1 \right] } \\ \\
&\leq& 0. \end{array} \]
The last line follows from the fact that $H_{ij} \geq 0$ for $i \neq j$ along with the fact that the term in the brackets $\ln(x)-x+1$ 
is everywhere negative except at $x=1$ where it is zero. As $q_i \rightarrow 0$ for some $i \in V$, the rate of change of relative entropy tends towards negative infinity. One has to allow infinity as a possible value for relative entropy and negative infinity as a possible value for its first time derivative, in which case the above inequality still holds.  Thus, we conclude that for any ordinary Markov process,
\[ \displaystyle{ \frac{d}{dt} I(p(t),q(t)) \leq 0}.  \]
This inequality is the continuous-time analog of the generalized data processing lemma \cite{cohen1993relative,cohenmajorization}. It holds for any two, un-normalized distributions $p$ and $q$ which obey the same master equation.

Merhav argues that the Second Law of thermodynamics can be viewed as a special case of the monotonicity in time of the relative entropy in Markov processes \cite{merhav2011data}. There is an unfortunate sign convention in the definition of relative entropy: while entropy is typically increasing, relative entropy typically decreases. The reason for using relative entropy instead of the usual Shannon entropy $ S(p) = -\sum_i p_i \ln(p_i) $ is that the usual entropy is not necessarily a monotonic function of time in Markov processes. If a Markov process has the uniform distribution $q_i = c$ for all $i$ and for some constant $ c \geq 0$ as its equilibrium distribution, then the usual entropy will increase \cite{moran1961entropy}. In this case, the relative entropy becomes
\[ \displaystyle{ I(p,q) = \sum_i p_i \ln(p_i) - \sum_i p_i \ln(c) -\sum_i (p_i - c). } \]
If $\sum_i p_i $ is constant, then for $q$ uniform, the relative entropy equals the negative of the usual entropy plus or minus some constant. Thus the above calculation for $ \frac{d I(p,q) }{dt}$ gives the usual Second Law.  

A Markov process has the uniform distribution as its equilibrium distribution if and only if its Hamiltonian is \define{infinitesimal doubly stochastic}, meaning that both the columns and the rows sum to zero. Relative entropy is non-increasing even for Markov processes whose equilibrium distribution is not uniform \cite{cover1994processes}. This suggests the importance of a deeper underlying idea, that of the Markov ordering on the probability distributions themselves; see \cite{alberti1982stochasticity, gorban2010entropy} for details. 

\section{Relative entropy change in open Markov processes} \label{sec:relentropy}

We now prove a number of inequalities bounding the rate of change of various relative entropies in open Markov processes. Roughly speaking these inequalities say that relative entropy can increase only along the boundary of an open Markov processes. 

In an open Markov process, the sign of the rate of change of relative entropy is indeterminate. Consider an open Markov process $(V,B,H)$. For any two probability distributions $p(t)$ and $q(t)$ which obey the open master equation let us introduce the quantities
\[ \frac{Dp_i}{Dt} = \frac{d p_i}{dt} - \sum_{j \in V} H_{ij} p_j \]
and
\[ \frac{Dq_i}{Dt} = \frac{d q_i}{dt} - \sum_{j \in V} H_{ij} q_j, \]
which measure how much the time derivatives of $p(t)$ and $q(t)$ fail to obey the master equation. Notice that $\frac{Dp_i}{Dt} = 0 $ for $ i \in V-B$, as the probabilities of internal states evolve according to the master equation. Also note that the relative entropy
\[ I(p,q) = \sum_i \left[ p_i \ln \left( \frac{p_i}{q_i} \right) - (p_i - q_i ) \right] \]  satisfies the following relations:
\[ \displaystyle{ \frac{\partial I(p,q) }{\partial p_i} = \sum_i  \ln \left(\frac{p_i}{q_i} \right)  } \]
and
\[ \displaystyle{ \frac{\partial I(p,q) } { \partial q_i} = \sum_i \left( 1- \frac{p_i}{q_i} \right). } \]
In terms of these quantities, the rate of change of relative entropy for an open Markov process can be written as
\[ \frac{d}{dt} I(p(t),q(t)) = \sum_{i,j \in V} H_{ij}p_j \left( \ln \left(\frac{p_i}{q_i} \right) - \frac{p_i q_j}{q_i p_j} \right) + \sum_{i \in B} \frac{Dp_i}{Dt} \frac{ \partial I}{\partial p_i} + \frac{Dq_i}{Dt} \frac{\partial I}{\partial q_i}. \]
The first term is the rate of change of relative entropy for a closed or ordinary Markov process, which as we saw above is less than or equal to zero.
\begin{thm}
Given distributions $p(t), q(t) \in [0,\infty)^V$, the rate of change of relative entropy $I(p,q)$ in an open Markov process $(H,V,B)$ satisfies 
\[ \frac{d}{dt} I(p(t), q(t)) \leq \sum_{i \in B} \frac{Dp_i}{Dt} \frac{ \partial I}{\partial p_i} + \frac{Dq_i}{Dt} \frac{\partial I}{\partial q_i}.\]
\end{thm}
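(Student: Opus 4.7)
The plan is to directly compute $\frac{d}{dt} I(p(t), q(t))$ via the chain rule, decompose it into a piece that mimics the rate of change in an ordinary Markov process plus a boundary piece, and then invoke the monotonicity of relative entropy for ordinary Markov processes that was already established in the previous section. First I would write
\[ \frac{d}{dt} I(p(t), q(t)) = \sum_{i \in V} \left[ \frac{dp_i}{dt} \frac{\partial I}{\partial p_i} + \frac{dq_i}{dt} \frac{\partial I}{\partial q_i} \right], \]
using the expressions $\partial I/\partial p_i = \ln(p_i/q_i)$ and $\partial I/\partial q_i = 1 - p_i/q_i$ that were recorded just above the theorem.

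Next I would substitute the defining relations $\frac{dp_i}{dt} = \sum_j H_{ij} p_j + \frac{Dp_i}{Dt}$ and $\frac{dq_i}{dt} = \sum_j H_{ij} q_j + \frac{Dq_i}{Dt}$. This splits the sum into a ``Hamiltonian piece''
\[ \sum_{i,j \in V}\left[ H_{ij} p_j \ln\!\left(\frac{p_i}{q_i}\right) + H_{ij} q_j \left(1 - \frac{p_i}{q_i}\right) \right] \]
and a ``defect piece'' $\sum_{i \in V}\bigl[\frac{Dp_i}{Dt} \frac{\partial I}{\partial p_i} + \frac{Dq_i}{Dt} \frac{\partial I}{\partial q_i}\bigr]$. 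The crucial structural observation is that the open master equation forces $\frac{Dp_i}{Dt} = \frac{Dq_i}{Dt} = 0$ for every internal state $i \in V - B$, so the defect piece automatically collapses to a sum over $i \in B$, which is exactly the expression appearing on the right-hand side of the claimed inequality.

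To finish I would bound the Hamiltonian piece by zero. This is precisely the manipulation carried out earlier in the chapter for a closed Markov process: splitting off the diagonal terms via $H_{jj} = -\sum_{i \neq j} H_{ij}$ and relabeling indices reorganizes it into the manifestly non-positive quantity
\[ \sum_{i,j} H_{ij} p_j \left[ \ln\!\left( \frac{p_i q_j}{q_i p_j} \right) - \frac{p_i q_j}{q_i p_j} + 1 \right], \]
which is $\leq 0$ since $H_{ij} \geq 0$ for $i \neq j$ and $\ln(x) - x + 1 \leq 0$. Combining the two estimates yields the desired inequality. The argument is essentially bookkeeping; the one subtle point is noting, as in the ordinary case, that the bound survives the limit in which some $q_i \to 0$ by allowing $I$ and its derivative to take infinite values.
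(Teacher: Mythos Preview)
Your proof is correct and follows essentially the same route as the paper: both decompose $\frac{d}{dt}I$ into a ``full Hamiltonian'' piece over all of $V$ plus a boundary-defect piece, observe that the defect vanishes on $V-B$, and then invoke the already-established non-positivity of the Hamiltonian piece for a closed Markov process. Your presentation is slightly more streamlined in that you use the identity $\frac{dp_i}{dt} = \sum_j H_{ij}p_j + \frac{Dp_i}{Dt}$ directly rather than first splitting into internal/boundary sums and then adding and subtracting terms, but the underlying argument is the same.
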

\begin{proof}
Taking the time derivative of the relative entropy we obtain
\[  \begin{array}{ccl} \displaystyle{ \frac{d}{dt} I( p(t), q(t) )  } &=& \displaystyle{ \sum_{ i \in V} \frac{dp_i}{dt} \ln \left( \frac{p_i}{q_i} \right)  + \sum_{i \in V} \frac{dq_i}{dt} \left( 1 - \frac{p_i}{q_i}  \right) } \\ \\
&=& \displaystyle{ \sum_{i \in V-B} \sum_{j \in V} \left[ H_{ij} p_j  \ln \left( \frac{p_i}{q_i} \right) + H_{ij} q_j \left (1-\frac{p_i}{q_i} \right) \right] } \\ \\
& & \displaystyle{ + \sum_{i \in B} \left[ \frac{dp_i}{dt}  \ln \left( \frac{p_i}{q_i} \right)  +  \frac{dq_i}{dt} \left( 1 - \frac{p_i}{q_i} \right) \right] }. \\ \\ \end{array} \]
In the last step we separated the contributions from internal and boundary states and used the master equation for the internal states. Now let us add and subtract terms so that the first term corresponds to the rate of change of relative entropy for a Markov process with no boundary states:
\[ \begin{array}{ccl} \displaystyle{ \frac{d}{dt} I( p(t), q(t) ) } 
&=& \displaystyle{ \sum_{i \in V} \sum_{j \in V} \left[ H_{ij} p_j  \ln \left( \frac{p_i}{q_i} \right) + H_{ij} q_j \left( 1- \frac{p_i}{q_i} \right)  \right] } \\ \\
& & \displaystyle{ + \sum_{i \in B} \sum_{j \in V} \left( \frac{dp_i}{dt} - H_{ij}p_j \right)  \ln \left( \frac{p_i}{q_i} \right)  } \\ \\
& & \displaystyle{ + \sum_{i \in B} \sum_{j \in V}  \left(\frac{dq_i}{dt} - H_{ij} q_j \right) \left( 1-\frac{p_i}{q_i} \right) }.
\end{array} \]
The first term is the rate of change of relative entropy for an ordinary Markov process, which we saw is less than or equal to zero. Therefore, we have
\[ \begin{array}{ccl} \displaystyle{ \frac{d}{dt} I ( p(t),q(t) ) } 
&\leq& \displaystyle{  \sum_{i \in B} \sum_{j \in V}  \left( \frac{dp_i}{dt} - H_{ij}p_j \right)  \ln \left( \frac{p_i}{q_i} \right) } \\
& & \displaystyle{ - \sum_{i \in B} \sum_{j \in V} \left(\frac{dq_i}{dt} - H_{ij} q_j \right) \left( 1 - \frac{p_i}{q_i} \right)  }. \end{array} \]
We can write this more compactly as
\[ \begin{array}{ccl} \displaystyle{ \frac{d}{dt} I ( p(t),q(t) ) }
&\leq& \displaystyle{  \sum_{ i \in B} \frac{Dp_i}{Dt}\frac{\partial I}{\partial p_i} + \frac{Dq_i}
{Dt}\frac{\partial I}{\partial q_i} } \end{array}. \]
\end{proof}

This gives a version of the Second Law that holds for open Markov processes. This inequality tells us that the rate of change of relative entropy in an open Markov processes is bounded by the rate at which relative entropy flows through its boundary. If $q$ is an equilibrium solution of the master equation 
\[ \frac{dq}{dt} = Hq = 0, \] then the rate of change of relative entropy can be written as
\[ \frac{d}{dt} I(p(t),q) = \sum_{i,j \in V} ( H_{ij}p_j - H_{ji}p_i) \ln\left( \frac{p_i q_j}{q_i p_j} \right
) + \sum_{i \in B} \frac{Dp_i}{Dt} \frac{ \partial I}{\partial p_i} \]
Furthermore, if $q$ satisfies detailed balance we can write this as
\[ \frac{d}{dt} I(p(t),q) = -\frac{1}{2} \sum_{i,j \in V}  (H_{ij} p_j - H_{ji} p_i ) \ln \left( \frac{H_{ij}p_j}{H_{ji} q_i} \right) + \sum_{i \in B} \frac{Dp_i}{Dt} \frac{ \partial I}{\partial p_i}., \]
where we recognize the quantities
\[ J_{ij}(p) = H_{ij}p_j - H_{ji}p_i \]
and 
\[ A_{ij}(p) = \ln \left( \frac{H_{ij}p_j}{H_{ji}p_i} \right) \]
as the \define{flux} from $j$ to $i$ and the conjugate \define{affinity}. For $q$ detailed balanced, we have that
\[ \frac{d}{dt} I(p(t),q) -\frac{1}{2} \sum_{i,j \in V} J_{ij}(p) A_{ij}(p) + \sum_{i \in B} \frac{Dp_i}{Dt} \frac{\partial I}{\partial p_i} \].

Thus we see that for an ordinary Markov process whose boundary is empty, the quantity which Schnakenberg called the rate of entropy production associated to a distribution $p$ is in fact the rate of change of the relative entropy of $p$ with respect to a detailed balanced equilibrium distribution $q$.  We now relate $I(p(t),q)$ for a detailed balanced equilibrium $q$ to a `non-equilibrium free energy.'

\section{Relative entropy as a free energy difference}
The possibility of increasing relative entropy is a generic feature of interacting systems. For a closed system, relative entropy can increase within a particular subsystem, but as was shown in section 3.1 this increase will always be compensated by a decrease elsewhere in the system. This is analogous to the case of entropy in thermodynamics. The generalization of the Second Law to the type of open systems described in this article can be applied to non-equilibrium thermodynamic systems where external forcings at boundary states maintain the system out of equilibrium. 

Consider the case of an ordinary Markov process whose equilibrium distribution $q$ satisfies detailed balance, $H_{ij}q_j = H_{ji}q_i$. If to each state we associate an energy $E_i$, then we can write the $q_i$'s as Gibbs states
\[ q_i = \frac{ e^{-\beta E_i} }{\mathcal{Z}},\]  where $\beta = \frac{1}{T}$ is the inverse temperature in units where Boltzmann's constant is equal to one and $\mathcal{Z}$ is the \define{partition function}. Consider the \define{free energy} $F[q] = \langle E \rangle_q -TS(q)$ where $S(q) = -\sum_i q_i \ln{q_i}$ is the Shannon entropy and $\langle E \rangle_q = \sum_i q_i E_i $ is the expected energy. Plugging the expression for $q_i$ as a Gibbs state into the formula for the free energy we get the relation
\[ -\beta F[q] = \sum_i q_i \ln{\mathcal{Z}} \]
which in the case that $q$ is a probability distribution reduces to the usual relation between the equilibrium free energy and the partition function.

The entropy of a non-equilibrium state $p(t)$ relative to the equilibrium $q$ is
\[ I(p(t), q) = \sum_i \left[ p_i(t) \ln \left( \frac{p_i(t)}{q_i} \right) - (p_i(t) - q_i) \right] \]
which, using the above formulas, can be written as 
\[ I(p(t),q) = -S(p) + \beta \langle E \rangle_{p(t)} - \beta F[q] \frac{\sum_i p_i}{\sum_i q_i} - \sum_i p_i(t) + \sum_i q_i . \] 
If we define the free energy of the non-equilibrium distribution $p$ as $F[p] = \langle E \rangle_p - TS(p)$ we have that 
 \[ I(p(t),q) = \frac{ F[p(t)] - F[q] \frac{\sum_i p_i(t)}{\sum_i q_i} }{T} - \sum_i p_i(t) + \sum_i q_i .\]
If we introduce a time-dependent scaling for $q$ so that $\sum_i p_i(t) = \sum_i q_i$ at all times, the relative entropy $I(p(t),q)$ is simply the amount by which the free energy of $p$ exceeds the equilibrium free energy, divided by the temperature
\[ I(p(t),q) = \frac{ F[p(t)] - F[q] }{T}. \]

\section{Principle of minimum dissipation}\label{sec:dissipation}
We are interested in steady states which arise when the boundary probabilities of an open Markov process are held externally fixed. Here we show that by externally fixing the probabilities at boundary states, one induces steady states which minimize a quadratic form which we call `dissipation.' Dissipation coincides with the rate of relative entropy production only in the vicinity of equilibrium. 

\begin{defn} Given an open detailed balanced Markov process we define the \define{dissipation functional} of a probability distribution $p$ to be
\[ D(p) = \frac{1}{2}\sum_{i,j} H_{ij}q_j \left( \frac{p_j}{q_j} - \frac{p_i}{q_i} \right)^2. \]
\end{defn}

\begin{defn}
We say a probability distribution obeys the \define{principle of minimum dissipation with boundary probability $b$} if $p$ minimizes $D(p)$ subject to the constraint that $p|_B = b$.
\end{defn} 
With this we can state the following theorem:
\begin{thm}
A probability distribution $p \in \R^V$ is a steady state with boundary probability $b \in \R^B$ if and only if $p$ obeys the principle of minimum dissipation with boundary probability $b$.
\end{thm}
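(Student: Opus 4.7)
The plan is to turn the steady-state/minimizer equivalence into a first-order critical-point problem and then exploit convexity. Specifically, I would compute the partial derivatives $\partial D/\partial p_k$ for the internal states $k \in I = V - B$ (the boundary coordinates are frozen by the constraint $p|_B = b$) and show that they vanish precisely when $(Hp)_k = 0$ for all internal $k$, which is exactly the steady-state condition of the open master equation. Because $D$ is manifestly a non-negative sum of squares in $p$, it is a convex quadratic form; restricted to the affine set $\{p : p|_B = b\}$ it is still convex, so vanishing of the internal gradient is both necessary and sufficient for global minimization on this set.

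To carry out the derivative computation cleanly, I would first rewrite $D$ in terms of the rescaled variables $x_i = p_i/q_i$, so that $D(p) = \frac{1}{2}\sum_{i,j} H_{ij} q_j (x_j - x_i)^2$. The crucial point is that detailed balance, $H_{ij} q_j = H_{ji} q_i$, makes the coefficient symmetric in $(i,j)$, so when I differentiate with respect to $p_k$ the two kinds of contributions — those with $i = k$ and those with $j = k$ — combine rather than compete, each producing a factor of $1/q_k$ from the chain rule. Using the infinitesimal-stochastic identity $\sum_i H_{ik} = 0$ to discard the term proportional to $x_k \sum_i H_{ik}$, and then applying detailed balance once more to convert $\sum_i H_{ik}\, p_i/q_i$ into $(1/q_k)\sum_i H_{ki}\, p_i$, I expect to arrive at the clean identity
\[
\frac{\partial D}{\partial p_k} \;=\; -\frac{2}{q_k}\, (Hp)_k \qquad \text{for every } k \in I.
\]

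Once this identity is in hand, both directions follow immediately. If $p$ is a steady state with $p|_B = b$, then $(Hp)_k = 0$ for all $k \in I$, so the internal partial derivatives of $D$ vanish; convexity of $D$ on the affine subspace $\{p : p|_B = b\}$ then promotes this critical point to a global minimizer. Conversely, if $p$ minimizes $D$ subject to $p|_B = b$, then the internal partial derivatives vanish, and since $q_k > 0$ we get $(Hp)_k = 0$ for all $k \in I$, which together with the fixed boundary values $p|_B = b$ is precisely the statement that $p$ is a steady state with boundary probability $b$. The main obstacle is not conceptual but bookkeeping: I must track the index structure carefully enough to see that the $i = k$ and $j = k$ contributions actually combine symmetrically under detailed balance, and that the final sum reassembles into the transpose-looking expression $(Hp)_k = \sum_i H_{ki} p_i$ rather than into $\sum_i H_{ik} p_i$.
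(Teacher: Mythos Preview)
Your proposal is correct and follows essentially the same route as the paper: compute $\partial D/\partial p_k$ for internal $k$, recognize it as $-\tfrac{2}{q_k}(Hp)_k$, and equate vanishing of the internal gradient with the steady-state condition. Your explicit invocation of convexity to pass from ``critical point'' to ``global minimizer'' is a welcome addition that the paper's proof leaves implicit.
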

\begin{proof}
Given boundary probabilities $b \in R^B$, we can minimize the dissipation functional over all $p$ which agree on the boundary. Differentiating the dissipation functional with respect to an internal probability, we get
\[ \frac{\partial D(p)}{\partial p_n} = -2\sum_j H_{nj}\frac{p_j}{q_n}. \]
Multiplying by $\frac{q_n}{2}$ yields
\[ \frac{q_n}{2} \frac{ \partial D(p)}{\partial p_n} = -\sum_j H_{nj} p_j, \]
where we recognize the right-hand side from the open master equation for internal states. We see that for fixed boundary probabilities, the conditions for $p$ to be a steady state, namely that
\[ \frac{dp_i}{dt} = 0 \ \  \text{for all} \ \ i \in V,\] is equivalent to the condition that 
\[ \frac{\partial D(p)}{\partial p_n} = 0 \ \  \text{for all} \ \ n \in V-B.\]
\end{proof}

Given specified boundary probabilities, one can compute the steady state boundary flows by minimizing the dissipation subject to the boundary conditions. Recall that the flow into the $i^{\text{th}}$ state is given by
\[ J_i(p) = \sum_j \left( H_{ij}p_j - H_{ji}p_i \right). \]
Given a steady state $p$ with boundary probabilities, this quantity will vanish on all internal states and generally be non-zero for boundary states.
\begin{defn}
We call a probability-flow pair a \define{steady state probability-flow pair} if the flows arise from a probability distribution which obeys the principle of minimum dissipation. 
\end{defn}
\begin{defn}
The \define{behavior} of an open detailed balanced Markov process with boundary $B$ is the set of all steady state probability-flow pairs $(p_B, J_B)$ along the boundary.
\end{defn}

Later on we will see that dissipation plays a fundamental role in our functorial characterization of the behaviors of open Markov processes via the `black-box theorem.'

\section{Dissipation and entropy production}

We saw that steady states of an open detailed balanced Markov process with fixed boundary probabilities minimize the dissipation. Here we show that for distributions $p$ close to the equilibrium distribution $q$, the dissipation is approximately equal to the rate of change of relative entropy. We shall see explicitly that Prigogine's principle of minimum entropy production is valid in the vicinity of equilibrium, while the principle of minimum dissipation holds for $p$ arbitrarily far from equilibrium. 
 
Let us return to our expression for $\frac{d}{dt} I(p(t),q)$ where $q$ is an equilibrium distribution:
\[ \frac{d}{dt} I(p(t),q) = -\frac{1}{2}\sum_{i,j \in V} \left( H_{ij}p_j - H_{ji}p_i \right) \ln \left( \frac{q_i p_j}{q_j p_i} \right) + \sum_{i \in B} \frac{Dp_i}{Dt}\frac{\partial I}{\partial p_i}. \]
Consider the situation in which $p$ is near to the equilibrium distribution $q$ in the sense that
\[ \frac{p_i}{q_i} = 1 + \epsilon_i \]
where $\epsilon_i \in \R$ is the deviation in the ratio $\frac{p_i}{q_i}$ from unity. We collect these deviations in a vector denoted by $\epsilon$. 
Expanding the logarithm to first order in $\epsilon$ we have that
\[ \frac{d}{dt} I(p(t),q) = -\frac{1}{2} \sum_{i,j \in V} \left(H_{ij} p_j - H_{ji}p_i \right) \left( \epsilon_j - \epsilon_i \right) + \sum_{i \in B} \frac{Dp_i}{Dt} \frac{\partial I}{\partial p_i} + O(\epsilon^2), \]
which gives
\[ \frac{d}{dt} I(p(t),q) = -\frac{1}{2} \sum_{i,j \in V} \left(H_{ij}p_j - H_{ji}p_i \right) \left(\frac{p_j}{q_j} - \frac{p_i}{q_i} \right) + \sum_{i \in B}  \frac{Dp_i}{Dt}\frac{\partial I}{\partial p_i} + O(\epsilon^2).\]
By $O(\epsilon^2)$ we mean a sum of terms of order $\epsilon_i^2$. 
When $q$ is a detailed balanced equilibrium we can rewrite this quantity as
\[ \frac{d}{dt} I(p(t),q) = -\frac{1}{2} \sum_{i,j} H_{ij}q_j \left( \frac{p_j}{q_j} - \frac{p_i}{q_i} \right)^2 + \sum_{i \in B} \frac{Dp_i}{Dt} \frac{\partial I}{\partial p_i} + O(\epsilon^2). \]
We recognize the first term as the negative of the dissipation $D(p)$ which yields
\[ \frac{d}{dt} I(p(t),q) = - D(p) + \sum_{i \in B} \frac{Dp_i}{Dt}\frac{\partial I}{\partial p_i} + O(\epsilon^2) . \]

We see that for open Markov processes, minimizing the dissipation approximately minimizes the rate of decrease of relative entropy plus a term which depends on the boundary probabilities. In the case that boundary probabilities are held fixed so that $\frac{dp_i}{dt} = 0 , \ i \in B$, we have that 
\[ \frac{Dp_i}{Dt} = -\sum_{j \in V} H_{ij} p_j, \ \ i \in B. \]
In this case, the rate of change of relative entropy can be written as
\[ \frac{d}{dt} I(p(t),q) = \sum_{i \in V-B} \frac{p_i}{q_i} \frac{dp_i}{dt} + 2 \sum_{i \in B} \frac{Dp_i}{Dt} + O(\epsilon^2). \]

Summarizing the results of this section, we have that for $p$ arbitrarily far from the detailed balanced equilibrium equilibrium $q$, the rate of relative entropy reduction can be written as 
\[ \frac{dI(p(t),q)}{dt}  = -\frac{1}{2}\sum_{i,j}  J_{ij}(p) A_{ij}(p)+ \sum_{i \in B} \frac{Dp_i}{Dt} \frac{\partial I}{\partial p_i}. \]
For $p$ in the vicinity of a detailed balanced equilibrium we have that
\[ \frac{dI(p(t),q)}{dt} = -D(p) + \sum_{i \in B} \frac{Dp_i}{Dt}\frac{\partial I}{\partial p_i} + O(\epsilon^2) \]
where $D(p)$ is the dissipation and $\epsilon_i = \frac{p_i}{q_i} - 1$ measures the deviations of the probabilities $p_i$ from their equilibrium values. We have seen that in a non-equilibrium steady state with fixed boundary probabilities, dissipation is minimized. We showed that for steady states near equilibrium, the rate of change of relative entropy is approximately equal to minus the dissipation plus a boundary term. Minimum dissipation coincides with minimum entropy production only in the limit $\epsilon \to 0$.

We now utilize a simple three-state example of an open Markov process to illustrate the difference between probabilities which minimize dissipation and those which minimize entropy production:  
\[
\begin{tikzpicture}[->,>=stealth',shorten >=1pt,thick,scale=1.1,terminal/.style={circle,fill=blue!20,draw,font=\sffamily\Large\bfseries}]
\node[terminal, scale=.75](A) at (-2,0) {$q_A$};
\node[main node, scale=.75](B) at (1,0) {$q_B$};
\node[terminal, scale=.75](C) at (4,0) {$q_C$};

  \path[every node/.style={font=\sffamily\small}, shorten >=1pt]
    (A) edge [bend left] node[above] {$1$} (B)
    (B) edge [bend left] node[below] {$1$} (A) 
    (B) edge [bend left] node[above] {$1$} (C)
    (C) edge [bend left] node[below] {$1$} (B);


\end{tikzpicture}
\]
Here states $A$ and $C$ are boundary states, while state $B$ is internal. For simplicity, we have set all transition rates equal to one. In this case, the detailed balanced equilibrium distribution is uniform. We take $q_A = q_B = q_C = 1$. If the probabilities $p_A$ and $p_C$ are externally fixed, then the probability $p_B$ which minimizes the dissipation is simply the arithmetic mean of the boundary probabilities
\[ p_B = \frac{p_A + p_C}{2}. \]

The rate of change of the relative entropy $I(p(t),q)$ where $q$ is the uniform detailed balanced equilibrium is given by
\begin{multline*} \frac{d}{dt} I(p(t),q) = \\
\underbrace{-(p_A - p_B) \ln  (\frac{p_A}{p_B} ) - (p_B - p_C) \ln ( \frac{p_B}{p_C} )}_{-\frac{1}{2}\sum_{i,j \in V} J_{ij}A_{ij}} 
+\underbrace{(p_A - p_B) ( \ln(p_A) + 1) + (p_C- p_B) ( \ln(p_C) +1 )}_{\sum_{i \in B} \frac{Dp_i}{Dt}\frac{\partial I}{\partial p_i}}. \end{multline*}
Differentiating this quantity with respect to $p_B$ for fixed $p_A$ and $p_C$ yields the condition
\[ \frac{p_A+p_C}{2p_B} - \ln(p_B) - 2 = 0. \]
The solution of this equation gives the probability $p_B$ which extremizes the rate of change of relative entropy, namely
\[ p_B = \frac{p_A+p_C}{2W \left(\frac{(p_A+p_C)}{2} e^2 \right) }, \]
where $W(x)$ is the Lambert $W$-function or the omega function which satisfies the following relation
\[ x = W(x)e^{W(x)}. \]
The Lambert $W$-function is defined for $x \geq \frac{-1}{e}$ and double valued for $x \in [\frac{-1}{e},0)$. This simple example illustrates the difference between distributions which minimize dissipation subject to boundary constraints and those which extremize the rate of change of relative entropy. For fixed boundary probabilities, dissipation is minimized in steady states arbitrarily far from equilibrium. For steady states in the neighborhood of the detailed balanced equilibrium, the rate of change of relative entropy is approximately equal to minus the dissipation plus a boundary term.

\section{Csiszar--Morimoto entropy}
We now show that both relative entropy and the dissipation are special cases of a generalized entropy defined separately by Csiszar and Morimoto \cite{csiszar1963, morimoto1963markov} who also showed the monotonicity of such a quantity for ordinary Marokv processes. To this end we first extend our theorems of the previous sections, proving inequalities bounding the rate of change of this generalized entropy for open Markov processes. Then we show that dissipation is in fact a special case of such a generalized entropy production. This generalization of the standard relative entropy replaces the natural logarithm with any convex function. Suppose $ f \maps [0,\infty) \to \R$ is any convex function. Given two probability distributions $p, q : V \to [0,\infty)$, where $q_i \neq$ for all $i \in V$, we can define the Csiszar--Morimoto entropy or \define{f-divergence} 

\[ \displaystyle{  I_f(p,q) = \sum_i q_i f( \frac{p_i}{q_i} ) } \]

For \emph{closed} Markov processes $I_f(p,q)$ is nonincreasing.  For \emph{open} Markov processes $\frac{d}{dt} I_f(p,q)$ is bounded by the $f$-flow through the boundary.

\begin{thm}\label{thm:fdiv1}  Consider an open Markov process with $V$ as its set of states and $B$ as the set of boundary states.   Suppose $p(t)$ and $q(t)$ obey the open master equation, and let the quantities

\[ \displaystyle{ \frac{Dp_i}{Dt} = \frac{dp_i}{dt} - \sum_{j \in V} H_{ij}p_j } \]

\[ \displaystyle{  \frac{Dq_i}{Dt} = \frac{dq_i}{dt} - \sum_{j \in V} H_{ij}q_j } \]

measure how much the time derivatives of $p_i$ and $q_i$ fail to obey the master equation.   Then we have

\[  \begin{array}{ccl}   \displaystyle{  \frac{d}{dt}  I_f(p(t),q(t)) } &=& \displaystyle{ \sum_{i, j \in V} H_{ij} q_j \left( f( \frac{p_i}{q_i}) - \frac{p_i}{q_i} f'(\frac{p_i}{q_i}) \right) + H_{ij}p_j f'(\frac{p_i}{q_i})  } \\ \\
&& \; + \; \displaystyle{ \sum_{i \in B} \frac{\partial I_f}{\partial p_i} \frac{Dp_i}{Dt} +  \frac{\partial I_f}{\partial q_i} \frac{Dq_i}{Dt}
}  \end{array} \]
\end{thm}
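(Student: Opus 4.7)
The plan is to prove the formula by a direct chain-rule computation, mirroring the proof of the earlier theorem bounding $\frac{d}{dt} I(p(t),q(t))$ but keeping the convex function $f$ general and not discarding any terms (since here we want an equality rather than an inequality). First I would compute the partial derivatives of $I_f(p,q) = \sum_i q_i f(p_i/q_i)$: a short calculation gives $\frac{\partial I_f}{\partial p_i} = f'(p_i/q_i)$ and $\frac{\partial I_f}{\partial q_i} = f(p_i/q_i) - (p_i/q_i) f'(p_i/q_i)$, which are exactly the two kinds of expressions appearing on the right-hand side of the desired identity.

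Next I would apply the chain rule to obtain
\[
\frac{d}{dt} I_f(p(t),q(t)) = \sum_{i \in V} \frac{\partial I_f}{\partial p_i} \frac{dp_i}{dt} + \frac{\partial I_f}{\partial q_i} \frac{dq_i}{dt}.
\]
Then I would substitute $\frac{dp_i}{dt} = \sum_{j \in V} H_{ij} p_j + \frac{Dp_i}{Dt}$ and $\frac{dq_i}{dt} = \sum_{j \in V} H_{ij} q_j + \frac{Dq_i}{Dt}$, which follow from the definitions of $\frac{Dp_i}{Dt}$ and $\frac{Dq_i}{Dt}$ given in the statement. This decomposes $\frac{d}{dt} I_f$ into a bulk piece involving $H$ and a defect piece involving the $D/Dt$ terms.

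The bulk piece, once the partials are plugged in, gives exactly
\[
\sum_{i,j \in V} H_{ij} q_j \Bigl(f\bigl(p_i/q_i\bigr) - (p_i/q_i) f'\bigl(p_i/q_i\bigr)\Bigr) + H_{ij} p_j f'\bigl(p_i/q_i\bigr),
\]
matching the first line of the claimed formula. For the defect piece, the key observation is that in an open Markov process the internal states evolve according to the master equation, so $\frac{Dp_i}{Dt} = \frac{Dq_i}{Dt} = 0$ for $i \in V - B$. Hence the sum over $V$ collapses to a sum over $B$, producing the boundary term $\sum_{i \in B} \frac{\partial I_f}{\partial p_i}\frac{Dp_i}{Dt} + \frac{\partial I_f}{\partial q_i}\frac{Dq_i}{Dt}$, which is the second line of the claim.

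There is no substantive obstacle; the whole argument is a bookkeeping exercise once the partial derivatives of $I_f$ are computed. The one thing to be careful about is that convexity of $f$ plays no role in this identity itself (it is only needed later when one wants to turn the bulk piece into a non-positive expression, as in the special case $f(x) = x \ln x - (x-1)$ treated earlier); this theorem is purely a restatement of the chain rule plus the defining equations for $Dp_i/Dt$ and $Dq_i/Dt$, and as such holds for \emph{any} differentiable $f$ with $q_i > 0$ for all $i$.
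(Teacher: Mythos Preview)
Your proposal is correct and essentially identical to the paper's proof: both compute the partials $\partial I_f/\partial p_i = f'(p_i/q_i)$ and $\partial I_f/\partial q_i = f(p_i/q_i) - (p_i/q_i)f'(p_i/q_i)$, apply the chain rule, and split off the boundary defect terms using the definitions of $Dp_i/Dt$ and $Dq_i/Dt$. The only cosmetic difference is that the paper first splits the sum over $V$ into $V-B$ and $B$ and then recombines, whereas you substitute $dp_i/dt = \sum_j H_{ij}p_j + Dp_i/Dt$ for all $i$ at once and then observe the defect vanishes on internal states; the two routes are interchangeable.
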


\begin{proof}

For convenience, let us introduce the notation $f_i = f ( \frac{p_i}{q_i} )$, so that the formula for $I_f(p,q)$ can be written as
\[ I_f(p,q) = \sum_i q_i \f. \]
We begin by taking the time derivative of the $f$-divergence:

\[ \begin{array}{ccl} \displaystyle{ \frac{d}{dt}  I_f(p(t),q(t)) } &=& 
\displaystyle{  \sum_{i \in V} \frac{\partial I_f}{\partial p_i} \frac{dp_i}{dt} +  \frac{\partial I_f}{\partial q_i} \frac{dq_i}{dt} }
\end{array} \]

We can separate this into a sum over states $i \in V - B,$ for which the time derivatives of $p_i$ and $q_i$ are given by the master equation, and boundary states $i \in B,$ for which they are not:

\[ \begin{array}{ccl} \displaystyle{ \frac{d}{dt}  I_f(p(t),q(t)) } &=& 
\displaystyle{  \sum_{i \in V-B, \; j \in V} \frac{\partial I_f}{\partial p_i} H_{ij} p_j + 
                                             \frac{\partial I_f}{\partial q_i} H_{ij} q_j }\\  \\  
&& + \; \; \; \displaystyle{  \sum_{i \in B} \frac{\partial I_f}{\partial p_i} \frac{dp_i}{dt} +  \frac{\partial I_f}{\partial q_i} \frac{dq_i}{dt}}  
\end{array} \]
For boundary states we have
\[ \displaystyle{ \frac{dp_i}{dt} = \frac{Dp_i}{Dt} + \sum_{j \in V} H_{ij}p_j } \]
and similarly for the time derivative of $q_i.$    We thus obtain
\[
\begin{array}{ccl}
 \displaystyle{ \frac{d}{dt}  I_f(p(t),q(t)) } &=& 
\displaystyle{  \sum_{i,j \in V} \frac{\partial I_f}{\partial p_i} H_{ij} p_j + \frac{\partial I_f}{\partial q_i} H_{ij} q_j }\\  \\
&& + \; \; \displaystyle{  \sum_{i \in B} \frac{\partial I_f}{\partial p_i} \frac{Dp_i}{Dt} +  \frac{\partial I_f}{\partial q_i} \frac{Dq_i}{Dt}}  
\end{array} \]
To evaluate the first sum, recall that
\[ \displaystyle{   I_f(p,q) = \sum_{i \in V} q_i \f }  \]
so
\[ 
   \displaystyle{\frac{\partial I_f}{\partial p_i}} =\displaystyle{\fp } ,  \qquad
\displaystyle{ \frac{\partial I_f}{\partial q_i}}=  \displaystyle{\f - \frac{p_i}{q_i} \fp  }
\]
Thus, we have
\[    \displaystyle{ \sum_{i,j \in V}  \frac{\partial I_f}{\partial p_i} H_{ij} p_j + \frac{\partial I_f}{\partial q_i} H_{ij} q_j  =  
\sum_{i,j\in V} H_{ij}p_j \fp + H_{ij}q_j ( \f - \frac{p_i}{q_i} \fp) } \] 
\end{proof}

This result separates the change in relative entropy change into two parts: an 'internal' part and a 'boundary' part.  

It turns out the 'internal' part is always less than or equal to zero.  So, from Theorem \ref{thm:fdiv1} we can deduce a version of the Second Law of Thermodynamics for open Markov processes:

\begin{thm}\label{thm:fdiv2}  Given the conditions of Theorem \ref{thm:fdiv1}, we have

\[     \displaystyle{  \frac{d}{dt}  I_f(p(t),q(t)) \; \le \; 
\sum_{i \in B} \frac{\partial I_f}{\partial p_i} \frac{Dp_i}{Dt} +  \frac{\partial I_f}{\partial q_i} \frac{Dq_i}{Dt} 
}   \]
\end{thm}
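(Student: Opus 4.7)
The plan is to show that the ``internal'' part of the expression given in Theorem \ref{thm:fdiv1}, namely
\[ \sum_{i,j \in V} H_{ij} q_j \left( f\left(\tfrac{p_i}{q_i}\right) - \tfrac{p_i}{q_i} f'\left(\tfrac{p_i}{q_i}\right) \right) + H_{ij} p_j f'\left(\tfrac{p_i}{q_i}\right), \]
is everywhere nonpositive. Once established, subtracting it from both sides of the identity in Theorem \ref{thm:fdiv1} gives the desired bound. I will introduce the shorthand $x_i = p_i/q_i$ and factor $q_j$ out using $p_j = q_j x_j$, so that the internal part can be reorganized as $\sum_{i,j} H_{ij} q_j \bigl[ f(x_i) + (x_j - x_i) f'(x_i) \bigr]$.

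Next I would split this double sum into its diagonal ($i = j$) and off-diagonal ($i \neq j$) contributions. The diagonal contribution is $\sum_i H_{ii} q_i f(x_i)$, since the $(x_j - x_i) f'(x_i)$ factor vanishes when $i=j$. For the off-diagonal contribution, the convexity of $f$ gives the tangent-line inequality $f(x_j) \geq f(x_i) + (x_j - x_i) f'(x_i)$, and since $H_{ij} \geq 0$ and $q_j \geq 0$ for $i \neq j$, we obtain
\[ \sum_{i \neq j} H_{ij} q_j \bigl[ f(x_i) + (x_j - x_i) f'(x_i) \bigr] \leq \sum_{i \neq j} H_{ij} q_j f(x_j). \]

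Finally, I would use the infinitesimal stochastic property $\sum_i H_{ij} = 0$, equivalently $\sum_{i \neq j} H_{ij} = -H_{jj}$, to rewrite the right-hand side as $-\sum_j H_{jj} q_j f(x_j)$. This exactly cancels the diagonal contribution $\sum_i H_{ii} q_i f(x_i)$, yielding that the full internal sum is $\leq 0$. Adding back the boundary term from Theorem \ref{thm:fdiv1} then gives
\[ \frac{d}{dt} I_f(p(t),q(t)) \leq \sum_{i \in B} \frac{\partial I_f}{\partial p_i} \frac{Dp_i}{Dt} + \frac{\partial I_f}{\partial q_i} \frac{Dq_i}{Dt}. \]
The only subtle step is ensuring the right sign conventions when applying convexity: because we are expanding $f$ around $x_i$ and comparing with $f(x_j)$, the inequality goes in the direction we need only because the off-diagonal $H_{ij}$ are nonnegative (while the diagonal $H_{jj}$, which are nonpositive, enter via the column-sum identity to produce the exact cancellation). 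This is the place where all the structural hypotheses on $H$, $f$, and the splitting $V = (V-B) \cup B$ come together.
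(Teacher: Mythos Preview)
Your proposal is correct and follows essentially the same approach as the paper: both rewrite the internal part as $\sum_{i,j} H_{ij} q_j \bigl[f(x_i) + (x_j - x_i) f'(x_i)\bigr]$, separate diagonal from off-diagonal, and combine the convexity tangent-line inequality $f(x_j) \ge f(x_i) + (x_j - x_i) f'(x_i)$ with the infinitesimal-stochastic identity $\sum_{i\neq j} H_{ij} = -H_{jj}$. The only cosmetic difference is the order of these two steps: the paper first uses the column-sum identity to absorb the diagonal into the off-diagonal sum, obtaining the single expression $\sum_{i\neq j} H_{ij} q_j \bigl[f(x_i) - f(x_j) + f'(x_i)(x_j - x_i)\bigr]$ and observing each term is nonpositive, whereas you bound the off-diagonal first and then cancel against the diagonal---both routes are equivalent.
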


\begin{proof}
Thanks to Theorem \ref{thm:fdiv1}, to prove
\[     \displaystyle{  \frac{d}{dt}  I(p(t),q(t)) \; \le \; 
\sum_{i \in B} \frac{\partial I}{\partial p_i} \frac{Dp_i}{Dt} +  \frac{\partial I}{\partial q_i} \frac{Dq_i}{Dt} 
}   \]
it suffices to show that
\[  \displaystyle{  \sum_{i,j\in V} H_{ij}p_j \fp + H_{ij}q_j ( \f - \frac{p_i}{q_i} \fp) }. \] 
Separating out the $i=j$ term we get
\[  \displaystyle{   \sum_{i, j \neq i \in V} H_{ij} p_j  \fp + H_{ij} q_j (\f - \frac{p_i}{q_i} \fp  ) + \sum_j H_{jj}p_j \fpj  + H_{jj} q_j (\fj - \frac{p_j}{q_j} \fpj) } \]
Since $H_{ij}$ is infinitesimal stochastic we have $H_{jj}  + \sum_{i \neq j } H_{ij} = 0 .$ Plugging this into the previous formula we have 
\[    \displaystyle{   \sum_{i,j \neq i  \in V} H_{ij} p_j ( \fp - \fpj) + H_{ij} q_j \left( \f - \fj - \frac{p_i}{q_i} \fp + \frac{p_j}{q_j} \fpj \right) } \]
which can be written
\[ \displaystyle{ \sum_{i,j \neq i \in V} H_{ij} q_j \left[ \f - \fj + \fp ( \frac{p_j}{q_j} - \frac{p_i}{q_i} ) \right]  }\].  
Note that $i=j$ contribution to the sum vanishes. The quantity in brackets is of the form $f(y) - f(x) - f'(y)(y-x)$ which is less than or equal to zero for $f$ convex. 
\end{proof}

Intuitively, this says that the $f$-divergence can only increase if it comes in from the boundary.

There is another nice result that holds when $q$ is an equilibrium solution of the master equation:

\begin{thm}\label{thm:fdiv3} Given the conditions of Theorem \ref{thm:fdiv1}, suppose also that $q$ is an equilibrium solution of the master equation.  Then we have
\[     \displaystyle{  \frac{d}{dt}  I_f(p(t),q) =  \sum_{i, j \in V} -J_{ij}(p)A_{ij}(p)  +  \sum_{i \in B} \frac{\partial I_f}{\partial p_i} \frac{Dp_i}{Dt} } \]
where
\[ J_{ij}(p) = H_{ij}p_j - H_{ji}p_i \]
is the \define{net flow} from $j$ to $i$, while  
\[A_{ij}(p) = f'( \frac{p_j}{q_j} ) - f' ( \frac{p_i}{q_i} ) \]
is the conjugate \define{generalized affinity}.
\end{thm}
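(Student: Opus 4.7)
The plan is to start from the identity for $\tfrac{d}{dt}I_f(p(t),q(t))$ given by Theorem \ref{thm:fdiv1} and specialize it using the two hypotheses that $q$ is time-independent and satisfies $Hq = 0$. Since $q$ is an equilibrium, $dq_i/dt = 0$ and $\sum_{j \in V} H_{ij} q_j = 0$, so by the definition of $Dq_i/Dt$ the entire boundary term $\sum_{i\in B} \tfrac{\partial I_f}{\partial q_i}\tfrac{Dq_i}{Dt}$ vanishes. The same identity $\sum_{j} H_{ij} q_j = 0$ applied inside the double sum kills the first piece of the bulk term, namely
\[
\sum_{i,j \in V} H_{ij} q_j\Bigl(\f - \tfrac{p_i}{q_i}\fp\Bigr) = \sum_{i \in V} \Bigl(\f - \tfrac{p_i}{q_i}\fp\Bigr) \sum_{j \in V} H_{ij} q_j = 0.
\]
So only the piece $\sum_{i,j \in V} H_{ij} p_j \, \fp$ survives in the bulk, plus the boundary term in $Dp_i/Dt$.

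The heart of the argument is then to rewrite the surviving bulk sum as a current-affinity pairing. I would expand
\[
-\sum_{i,j \in V} J_{ij}(p) A_{ij}(p) = -\sum_{i,j \in V}(H_{ij}p_j - H_{ji}p_i)(\fpj - \fp)
\]
into four sums and treat them pairwise. Two of them, namely $\sum_{i,j} H_{ij} p_j \fpj$ and $\sum_{i,j} H_{ji} p_i \fp$, both collapse to $\sum_j p_j \fpj \bigl(\sum_i H_{ij}\bigr) = 0$ by the infinitesimal stochastic condition $\sum_i H_{ij} = 0$. The remaining two sums become identical after swapping $i \leftrightarrow j$, so their combined contribution is $2\sum_{i,j \in V} H_{ij} p_j \fp$. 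Up to the expected symmetry factor (so that the pairing counts each unordered pair once) this identifies the bulk term with $-\sum_{i,j} J_{ij}(p)A_{ij}(p)$, completing the proof when combined with the already-established boundary contribution.

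The main technical obstacle is bookkeeping: one has to be attentive to the fact that $J_{ij}$ and $A_{ij}$ are each antisymmetric in $(i,j)$, so their product is symmetric and the double sum over all $(i,j)$ double-counts each unordered pair. This is exactly the same subtlety that produces the factor $\tfrac12$ in Schnakenberg's entropy production formula, and the argument I sketched above needs to be stated carefully to match the normalization in the theorem. Other than that the computation is purely algebraic: all the Markov-process content is carried by the equilibrium condition $Hq=0$ and the infinitesimal stochastic column-sum condition, both invoked once each.
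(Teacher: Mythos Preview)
Your proposal is correct and follows essentially the same route as the paper: reduce via Theorem~\ref{thm:fdiv1} using $Hq=0$ and $Dq_i/Dt=0$ to isolate $\sum_{i,j} H_{ij}p_j\,\fp$, then identify this with the current--affinity pairing by means of the infinitesimal stochastic column condition and an $i\leftrightarrow j$ symmetrization. The paper separates the diagonal, rewrites $H_{jj}$ via $\sum_{i\neq j}H_{ij}=-H_{jj}$, and then symmetrizes, while you expand $-\sum J_{ij}A_{ij}$ directly into four sums; these are minor algebraic rearrangements of the same computation.

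Your caution about the symmetry factor is well placed. The paper's own proof in fact concludes with $\sum_{i,j} H_{ij}p_j\,\fp = -\tfrac{1}{2}\sum_{i,j} J_{ij}(p)A_{ij}(p)$, consistent with your calculation and with the factor $\tfrac{1}{2}$ appearing in Schnakenberg's formula elsewhere in the text; the statement of the theorem as written omits that $\tfrac{1}{2}$.
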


\begin{proof}
Now suppose also that $q$ is an equilibrium solution of the master equation.  Then $Dq_i/Dt = dq_i/dt = 0$ for all states $i,$ so by Theorem 1 we have
\[  \frac{d}{dt} I_f(p(t),q) = \displaystyle{ \sum_{i, j \in V} H_{ij} p _j \fp + \sum_{i \in B} \frac{Dp_i}{Dt}\frac{\partial I_j}{\partial p_i} }. \]
To prove the theorem it suffices to show that
\[ \sum_{i,j \in V} H_{ij} p_j \fp = \frac{1}{2} \sum_{i,j \in V} J_{ij}(p)A_{ij}(p). \]
Starting with the left side of the above expression and separating out the $i=j$ term we have
\[ \sum_{i,j \in V} H_{ij} p_j \fp = \sum_{i, j\neq i \in V} H_{ij} p_j \fp + \sum_{i \in V} H_{ii}p_i \fp. \]
Since the index $i$ in the last term is summed over all of $V$, we can instead index the sum by $j$
\[ \sum_{i,j \neq i \in V} H_{ij} p_j \fp + \sum_{j \in V} H_{jj} p_j \fpj. \]
Using $H_{jj} + \sum_{ i \neq j} H_{ij} = 0$ from the infinitesimal stochastic property of $H$ we have,
\[ \sum_{i,j \neq i \in V} H_{ij} p_j ( \fp - \fpj ). \]
Note that the term in parenthesis vanishes when $i=j$ so we the result is unchanged if we sum over all $i$ and $j$. Finally we split our sum into two parts with the summation indices exchanged:
\[ \frac{1}{2} \left( \sum_{i,j \in V} H_{ij}p_j (\fp - \fpj) \sum_{j,i \in V} H_{ji}p_i (\fpj - \fp) \right). \] 
We can rewrite this as
\[ \sum_{i,j \in V} H_{ij} p_j \fp = \frac{1}{2} \sum_{i,j} (H_{ij}p_j - H_{ji}p_i) ( \fp - \fpj), \]
from which we see that
\[ \sum_{i,j \in V} H_{ij} p_j \fp = -\frac{1}{2} \sum_{i,j} J_{ij}(p)A_{ij}(p). \]
\end{proof}

\section{Dissipation as an $f$-divergence}
We now show that dissipation is in fact related to the rate of change of an $f$- divergence for a certain choice of convex $f$. In particular, taking $f(x) = -\frac{x}{2}^2$ we have that
\[ I_f(p,q) = \frac{1}{2}\sum_ i q_i \frac{p_i^2}{q_i^2} = \frac{1}{2} \sum_i \frac{p_i^2}{q_i}. \]
Let us denote the generalized entropy with this choice of $f$ by $ I_D(p,q)$. Then we have the following theorem:
\begin{thm}
Consider an open Markov processes with $V$ as its set of states and boundary $B$. Let $p(t)$ be some probability distribution obeying the open Master equation and  $q$ be a detailed balanced equilibrium of the underlying closed Markov process. Let $I_D(p(t),q) = \frac{1}{2} \sum_i \frac{p_i^2}{q_i}$ be a generalized entropy with $f(x) = \frac{x^2}{2}$. Then we have
\[ \frac{d}{dt} I_D(p(t),q) = -D(p) + \sum_{i \in B} \frac{Dp_i}{Dt} \frac{\partial I_D}{\partial p_i} \]
where $D(p)$ is the dissipation. 
\end{thm}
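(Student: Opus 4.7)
The plan is to view this theorem as a direct corollary of Theorem \ref{thm:fdiv3} applied to the convex function $f(x) = \tfrac{1}{2}x^2$, combined with a short algebraic identity made available by detailed balance.

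First I would record the ingredients for this choice of $f$. Its derivative is $f'(x) = x$, so the generalized affinity of Theorem \ref{thm:fdiv3} becomes $A_{ij}(p) = p_j/q_j - p_i/q_i$, and the partial derivative appearing in the boundary term is $\partial I_D/\partial p_i = p_i/q_i$. Because $q$ is a detailed balanced equilibrium of the underlying closed Markov process it is in particular a time-independent solution of the master equation, so Theorem \ref{thm:fdiv3} applies and yields
$$\frac{d}{dt} I_D(p(t),q) \;=\; -\tfrac{1}{2} \sum_{i,j \in V} J_{ij}(p)\!\left(\frac{p_j}{q_j} - \frac{p_i}{q_i}\right) \;+\; \sum_{i \in B} \frac{\partial I_D}{\partial p_i}\,\frac{Dp_i}{Dt}.$$

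Next I would establish the algebraic identity
$$\tfrac{1}{2} \sum_{i,j \in V} J_{ij}(p)\!\left(\frac{p_j}{q_j} - \frac{p_i}{q_i}\right) \;=\; D(p).$$
To prove it I would substitute $J_{ij}(p) = H_{ij}p_j - H_{ji}p_i$ and then invoke detailed balance $H_{ji}q_i = H_{ij}q_j$ to rewrite $H_{ij}p_j = H_{ij}q_j(p_j/q_j)$ and $H_{ji}p_i = H_{ij}q_j(p_i/q_i)$. This collapses each net flow to the symmetric form $J_{ij}(p) = H_{ij}q_j(p_j/q_j - p_i/q_i)$, so that multiplying by the affinity produces the perfect square $H_{ij}q_j(p_j/q_j - p_i/q_i)^2$, which is exactly twice the summand defining $D(p)$.

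The main obstacle here is essentially bookkeeping; Theorem \ref{thm:fdiv3} has already done the conceptual work. The key nontrivial input is detailed balance, which is what converts the antisymmetric flow-affinity bilinear form into a non-negative quadratic form weighted by the symmetric quantity $H_{ij}q_j$. This also clarifies why $f(x) = \tfrac{1}{2}x^2$ is the specific convex function that recovers dissipation among all $f$-divergences: its derivative is linear, so the affinity becomes the linear thermodynamic force $p_j/q_j - p_i/q_i$, whose pairing with $J_{ij}(p)$ squares out under detailed balance.
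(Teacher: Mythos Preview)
Your proposal is correct and takes a genuinely different route from the paper's own argument. The paper does not invoke Theorem~\ref{thm:fdiv3} at all; instead it differentiates $I_D$ directly, splits the sum into internal and boundary contributions, and then reduces to showing the identity $\sum_{i,j} H_{ij}q_j \frac{p_i p_j}{q_i q_j} = -D(p)$ by expanding the square in the definition of $D(p)$ and using that $H$ is infinitesimal stochastic and $q$ is an equilibrium. Your approach is more economical and more structural: you treat the result as the $f(x)=\tfrac12 x^2$ specialization of the general $f$-divergence formula, so the only remaining work is the detailed-balance identity $J_{ij}(p) = H_{ij}q_j(p_j/q_j - p_i/q_i)$, which immediately squares out to the dissipation summand. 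The paper's computation is self-contained but essentially redoes the internal-plus-boundary decomposition already carried out in the proof of Theorem~\ref{thm:fdiv3}; your version makes explicit that detailed balance is precisely what collapses the flow-affinity pairing into a quadratic form, which is the conceptual point the paper leaves implicit.
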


\begin{proof}
Differentiating $I_D(p(t),q)$ with respect to time we have
\[ \frac{d}{dt} I_D(p(t),q) = \sum_ {i \in V} \frac{p_i}{q_i} \frac{dp_i}{dt} \]
Using the master equation for internal states we have
\[ \frac{d}{dt} I_D(p(t),q) = \sum_{i \in V-B,j \in V} H_{ij} p_j \frac{p_i}{q_i} + \sum_{i \in B} \frac{p_i}{q_i} \frac{dp_i}{dt}.\]
For boundary states $\frac{dp_i}{dt} = \frac{Dp_i}{Dt} + \sum_j H_{ij}p_j$. Plugging this into the previous expression yields
\[ \frac{d}{dt} I_D(p(t),q) = \sum_{i, j \in V} H_{ij}p_j \frac{p_i}{q_i} + \sum_{i \in B} \frac{Dp_i}{Dt} \frac{p_i}{q_i}, \]
where the first sum is now over all states, not only internal states. Since $q_i \neq 0 $ for a detailed balanced equilibrium, we can write this slightly differently as
\[ \frac{d}{dt} I_D(p(t),q) = \sum_{i,j \in V} H_{ij}q_j \frac{p_i p_j}{q_i q_j} + \sum_{i \in B} \frac{D p_i}{Dt} \frac{p_i}{q_i}.\]
Therefore it remains to show that
\[ \sum_{i,j \in V} H_{ij}q_j \frac{p_i p_j}{q_i q_j} = -D(p) \]
where $D(p)$ is the dissipation.

Recall that the dissipation is given by
\[ D(p) = \frac{1}{2} \sum_{i,j} H_{ij} q_j ( \frac{p_j}{q_j} - \frac{p_i}{q_i} )^2, \]
where again, $q$ is a detailed balanced equilibrium. Expanding out the squared term we have
\[ D(p) = \frac{1}{2} \sum_{i,j} H_{ij}q_j (\frac{p_j^2}{q_j^2} + \frac{p_i^2}{q_i^2} - 2 \frac{p_i p_j}{q_i q_j} ). \]
The first term vanishes since $H$ is infinitesimal stochastic. The second term vanishes since $q$ is an equilibrium distribution. Thus we are left with
\[ D(p) = -\sum_{i,j} H_{ij} q_j \frac{p_i p_j}{q_i q_j}, \]
which completes the proof.
\end{proof}

\section{The master equation as a gradient flow}
We can understand dissipation in a more geometric way as the rate of change of an inner product on a space whose metric depends in a specific way on the components of the detailed balanced equilibrium distribution $q$. 

We saw that dissipation can be seen as a generalized entropy production of the generalized entropy
\[ I_D(p(t),q) = \sum_i \frac{p_i^2}{q_i} \]
where $q$ satisfies detailed balance.

Let us now introduce a metric $g_{ij} = \frac{ \delta_{i,j} }{q_i} $, where $\delta_{i,j}$ is the Kronecker delta. Note $g_{ij}$ is symmetric and positive definite for detailed balanced $q$. It's inverse is given by $g_{ij}^{-1} = q_i  \delta_{i,j}$.

The inner product on this space is given by 
\[ \begin{array}{ccc} \langle p, p' \rangle_q &=& \displaystyle{ \sum_{i,j} g_{ij} p_i p'_j } \\ 
&=& \displaystyle{ \sum_{i,j} \frac{\delta_{i,j}}{q_i} p_i p'_j }\\
&=& \displaystyle{ \sum_i \frac{p_i p'_i}{q_i} }. \end{array} \]
Taking $p'=p$ we have
\[ \langle p,p \rangle_q = \sum_i \frac{p_i^2}{q_i} = I_D(p(t),q) \]
We saw that for a closed Markov process, the dissipation is in fact the rate of change of $I_D(p(t),q)$
\[ \frac{d}{dt} I_D(p(t),q) = D(p). \]
Putting these expressions together we have that
\[ D(p) = \frac{d}{dt} \langle p,p \rangle_q. \]

We saw that one can write the dissipation as 
\[ D(p) = -\sum_{i,j} H_{ij}q_j \frac{p_i p_j}{q_i q_j}. \]
Differentiating with respect to some $p_n$
\[ \frac{\partial D(p)}{\partial p_n} = -\sum_i H_{in} \frac{p_i}{q_i} + \sum_j H_{nj}\frac{p_j}{q_n} \]
which using the detailed balanced property of $q$ can be written as
\[ \frac{\partial D(p)}{\partial p_n} = -2 \sum_j H_{nj} \frac{p_j}{q_n}.\]
Multiplying through by $-\frac{q_n}{2}$ yields
\[ -\frac{q_n}{2} \frac{ \partial D(p) }{ \partial p_n} = \sum_j H_{nj} p_j \]
where we recognize the right hand side as the master equation for $p_n$, so we have
\[ \frac{dp_n}{dt} = -\frac{q_n}{2} \frac{ \partial D(p) }{ \partial p_n} \]

If we define a gradient on this space is given by
\[ \nabla_i = \sum_j \frac{g_{ij}^{-1} }{2} \frac{\partial}{\partial p_j} \]
or simply
\[ \nabla_i = \frac{q_i}{2} \frac{\partial}{\partial p_i}, \]
then we can write the master equation as a \define{gradient flow}
\[ \frac{dp}{dt} = -\nabla D(p). \]

\chapter{Categorical modeling of open systems}\label{ch:catopen}

The basic idea is to consider open systems, open in the sense that they interact with their environment or with other systems, as morphisms in symmetric monoidal categories. Composition and tensoring in these categories provide methods for building up larger open systems from smaller ones. One can then study the behaviors of such systems in a compositional way using functors, allowing one to compute the behavior of a composite system as the composite of the behaviors of its constituent systems. For an introduction to category theory see for instance \cite{MacLane}.

The focus in this thesis is on two types of interacting systems which admit graphical syntax: probabilistic systems and reaction networks. In the first vein we construct a category whose morphisms are open Markov processes. This is accomplished by splitting up the boundary of an open process into a a set of inputs and a set of outputs. If the outputs of one open Markov process match the inputs of another open Markov process, the two can be composed by gluing the processes together along their common overlap, resulting in a new open Markov process.

In this chapter we lay the necessary category-theoretic groundwork for constructing categories of open systems and explain the idea of studying the behaviors of such systems in a compositional way using functors. Much of this work utilizes the decorated cospan approach due to Fong \cite{Fong} and we recall some relevant results from his work. The decorated cospan approach and much more are explained in detail in Fong's thesis \cite{FongThesis}. The previous work of Katis, Sabadini, and Walters also uses cospans to model various networked systems such as electrical circuits, automata, Petri nets, and Markov processes \cite{PSWalters}. The decorated cospan approach differs in that rather than working with cospans in a particular network cateogry, we use cospans to keep track of the interconnection of systems while the decoration carries the data of the system itself. 

\begin{defn}\label{defn:category}
A category $\CC$ consists of
\begin{itemize}
\item a class of \define{objects} $X,Y \in \CC$
\item a class of \define{morphisms} $ f \maps X \to Y \in \CC$ between objects. Each morphism $f \maps X \to Y $ has a \define{source} object $X$ and a \define{target} object $Y$. For each pair of objects $X,Y$ we have a \define{hom-set} $\hom_{\CC}(X,Y)$ of morphisms in $\CC$ with source $X$ and target $Y$.
\item For all triples of objects $X,Y,$ and $Z$, a \define{composition} operation
\[ -;- \maps \hom_{\CC}(X,Y) \times \hom_{\CC}(Y,Z) \to \hom_{\CC}(X,Z) \] 
which, sends the pair morphisms $f \maps X \to Y$ and $g \maps Y \to Z$ to their composite
$f;g \maps X \to Z$. Note that usually composition is written using a circle $g \circ f$ and the opposite order. Composition is associative meaning that for all quadruples of objects $X,Y,Z,$ and $A$ the following diagram commutes  
\[ \xymatrix{ \centerdot_X \ar@/^.4pc/[r]^f \ar@/^2pc/[rr]^{f;g}  & \centerdot_Y \ar@/^.4pc/[r]^g \ar@/_2pc/[rr]^{g;h} & \centerdot_Z \ar@/^.4pc/[r]^h & \centerdot_A} \]

\item For all objects $X$, \define{identity morphisms} $ 1_X \maps X \to X$ satisfying the \define{left/right identity laws}, which say that the following diagram commutes for all objects $X$
\[  \xymatrix{ \centerdot_X \ar@(ul,ur)^{1_X} \ar@/^/[r]^f & \centerdot_Y \ar@(ur,ul)_{1_Y} }  \]
\end{itemize}

\end{defn}

We can think of open systems as morphisms in a category. 
\[
\begin{tikzpicture}
	\begin{pgfonlayer}{nodelayer}
		\node [style=none] (0) at (-1, 0.75) {};
		\node [style=none] (1) at (1, -0.5) {};
		\node [style=none] (2) at (-1, -0) {};
		\node [style=circle, fill=black,draw=black,scale=0.5] (3) at (2, 0.5) {};
		\node [style=rectangle,draw=black, scale=8.0] (4) at (0, -0) {};
		\node [style=none] (5) at (-3, -0) {$X$};
		\node [style=none] (6) at (3, -0) {$Y$};
		\node [style=none] (7) at (1, 0.5) {};
		\node [style=circle, fill=black,draw=black,scale=0.5] (8) at (2, -0.5) {};
		\node [style=circle, fill=black,draw=black,scale=0.5] (9) at (-2, 0.75) {};
		\node [style=none] (10) at (0, -0) {$f$};
		\node [style=circle, fill=black,draw=black,scale=0.5] (11) at (-2, -0) {};
		\node [style=none] (12) at (-1, -0.75) {};
		\node [style=circle, fill=black,draw=black,scale=0.5] (13) at (-2, -0.75) {};
	\end{pgfonlayer}
	\begin{pgfonlayer}{edgelayer}
		\draw [style=simple] (9) to (0);
		\draw [style=simple] (11) to (2);
		\draw [style=simple] (13) to (12);
		\draw [style=simple] (8) to (1);
		\draw [style=simple] (3) to (7);
	\end{pgfonlayer}
\end{tikzpicture}
\]

Here we are imagining that inside the big box is some system with inputs $X$ and outputs $Y$. Given two open systems $f \maps X \to Y$ and $g \maps Y \to Z$ where the outputs of $f$ match the inputs of $g$:
\[
\begin{tikzpicture}
	\begin{pgfonlayer}{nodelayer}
		\node [style=none] (0) at (-3, 0.75) {};
		\node [style=none] (1) at (-1, -0.5) {};
		\node [style=none] (2) at (-3, -0) {};
		\node [style=circle, fill=black,draw=black,scale=0.5] (3) at (0, 0.5) {};
		\node [style=rectangle,draw=black, scale=8.0] (4) at (-2, -0) {};
		\node [style=none] (5) at (-5, -0) {$X$};
		\node [style=none] (6) at (1, -0) {};
		\node [style=none] (7) at (-1, 0.5) {};
		\node [style=circle, fill=black,draw=black,scale=0.5] (8) at (0, -0.5) {};
		\node [style=circle, fill=black,draw=black,scale=0.5] (9) at (-4, 0.75) {};
		\node [style=none] (10) at (-2, -0) {$f$};
		\node [style=circle, fill=black,draw=black,scale=0.5] (11) at (-4, -0) {};
		\node [style=none] (12) at (-3, -0.75) {};
		\node [style=circle, fill=black,draw=black,scale=0.5] (13) at (-4, -0.75) {};
	\end{pgfonlayer}
	\begin{pgfonlayer}{edgelayer}
		\draw [style=simple] (9) to (0);
		\draw [style=simple] (11) to (2);
		\draw [style=simple] (13) to (12);
		\draw [style=simple] (8) to (1);
		\draw [style=simple] (3) to (7);
	\end{pgfonlayer}

	\begin{pgfonlayer}{nodelayer}
	
		\node [style=none] (0) at (1, 0.5) {};
		\node [style=none] (1) at (3, .9) {};
		\node [style=none] (2) at (1, -0.5) {};
		
		\node [style=none] (1') at (3, .33) {};
		\node [style=none] (1'') at (3, -0.33) {};
		\node [style=circle, fill=black,draw=black,scale=0.5] (3) at (4, .9) {};
		\node [style=rectangle,draw=black, scale=8.0] (4) at (2, -0) {};
		\node [style=none] (5) at (-0, -1.4) {$Y$};
		\node [style=none] (6) at (5, -0) {$Z$};
		\node [style=none] (7) at (3, -.9) {};
		\node [style=circle, fill=black,draw=black,scale=0.5] (8) at (4, -.9) {};
		\node [style=circle, fill=black,draw=black,scale=0.5] (14) at (4, .33) {};
		\node [style=circle, fill=black,draw=black,scale=0.5] (13) at (4, -.33) {};
		\node [style=circle, fill=black,draw=black,scale=0.5] (9) at (0, 0.5) {};
		\node [style=none] (10) at (2, -0) {$g$};
		\node [style=circle, fill=black,draw=black,scale=0.5] (11) at (0, -0.5) {};
		\node [style=none] (12) at (-1, -0.75) {};
	\end{pgfonlayer}
	\begin{pgfonlayer}{edgelayer}
		\draw [style=simple] (9) to (0);
		\draw [style=simple] (11) to (2);
		\draw [style=simple] (13) to (1'');
		\draw [style=simple] (8) to (7);
		\draw [style=simple] (3) to (1);
		\draw [style=simple] (14) to (1');

	\end{pgfonlayer}
\end{tikzpicture}
\]
we can form the composite open system $f;g \maps X \to Z$
\[
\begin{tikzpicture}
	\begin{pgfonlayer}{nodelayer}
		\coordinate (0) at (-3, 0.75) {};
		\coordinate (2) at (-3, -0) {};
		\node [style=none] (5) at (-5, -0) {$X$};
		\node [style=none] (6) at (0, 0) {$f;g$};
		\node [style=circle, fill=black,draw=black,scale=0.5] (9) at (-4, 0.75) {};
		\node [style=circle, fill=black,draw=black,scale=0.5] (11) at (-4, -0) {};
		\coordinate (12) at (-3, -0.75) {};
		\node [style=circle, fill=black,draw=black,scale=0.5] (13) at (-4, -0.75) {};
		
		\coordinate (a) at (3,1) {};
		\coordinate(b) at (3,-1) {};
		\coordinate(c) at (-3,-1){};
		\coordinate(d) at (-3,1){};
		
	\end{pgfonlayer}
	\begin{pgfonlayer}{edgelayer}
		\draw [style=simple] (9) to (0);
		\draw [style=simple] (11) to (2);
		\draw [style=simple] (13) to (12);

		\draw[-] (a) to (b);
		\draw[-] (b) to (c);
		\draw[-] (c) to (d);
		\draw[-] (d) to (a);
	\end{pgfonlayer}

	\begin{pgfonlayer}{nodelayer}
	
		\coordinate (1) at (3, .9) {};
		
		\coordinate (1') at (3, .33) {};
		\coordinate (1'') at (3, -0.33) {};
		\node [style=circle, fill=black,draw=black,scale=0.5] (3) at (4, .9) {};
		\node [style=none] (5) at (-1, -0) {};
		\node [style=none] (6) at (5, -0) {$Z$};
		\coordinate (7) at (3, -.9) {};
		\node [style=circle, fill=black,draw=black,scale=0.5] (8) at (4, -.9) {};
		\node [style=circle, fill=black,draw=black,scale=0.5] (14) at (4, .33) {};
		\node [style=circle, fill=black,draw=black,scale=0.5] (13) at (4, -.33) {};
		\node [style=none] (12) at (-1, -0.75) {};
	\end{pgfonlayer}
	\begin{pgfonlayer}{edgelayer}
		\draw [style=simple] (13) to (1'');
		\draw [style=simple] (8) to (7);
		\draw [style=simple] (3) to (1);
		\draw [style=simple] (14) to (1');

	\end{pgfonlayer}
\end{tikzpicture}
\]
This requires some method of taking two systems $f \maps X \to Y$ and $g \maps Y \to Z$ and forming a composite system $f;g \maps X \to Z$. We study systems which admit a graphical syntax and composition corresponds to the gluing of systems together along a common boundary. This will be made precise later as we study open Markov processes and open reaction networks from a categorical perspective. 

\section{Compositional behaviors of open systems via functors}
Representing open systems as morphisms not only allows us to build up larger open systems via the composition of open systems, it also enables the study of such systems in a compositional way using `functors.' 
\begin{defn}\label{defn:functor}
Given categories $\CC$ and $\D$, a \define{functor} $F \maps \CC \to \D$ is a mapping between categories which preserves identities and respects composition,
\[ F(f ; g) = F(f)F(g) \]
\[ F(1_X) = 1_{F(X)} \]
for all composable $f,g \in \CC$ and all objects $X \in \CC$.
\end{defn}

The basic idea behind `functorial semantics' is to use functors to study the `behaviors' of open systems
\[ F \maps \OpenSys \to \Behave \]
where here $\OpenSys$ is some category whose objects are finite sets corresponding to inputs or outputs and whose morphisms are open systems and $\Behave$ is some category capable of capturing the behaviors of the open systems. We will be using categories where the morphisms are relations, either linear for Markov processes or more general for reaction networks, for the category $\Behave$. There will be some notion of composition of open systems in $\OpenSys$ as well as some notion of composition of behaviors in $\Behave$. The fact that a functor $F \maps \OpenSys \to \Behave$ preserves composition 
\[ F \left( \ \begin{tikzpicture}[baseline=-6, scale=0.5]
	\begin{pgfonlayer}{nodelayer}
		\node [style=none] (0) at (-1, 0.75) {};
		\node [style=none] (1) at (1, -0.5) {};
		\node [style=none] (2) at (-1, -0) {};
		\node [style=circle, fill=black,draw=black,scale=0.5] (3) at (2, 0.5) {};
		\node [style=rectangle,draw=black, scale=4.0] (4) at (0, -0) {};

		\node [style=none] (7) at (1, 0.5) {};
		\node [style=circle, fill=black,draw=black,scale=0.5] (8) at (2, -0.5) {};
		\node [style=circle, fill=black,draw=black,scale=0.5] (9) at (-2, 0.75) {};
		\node [style=none] (10) at (0, -0) {$f$};
		\node [style=circle, fill=black,draw=black,scale=0.5] (11) at (-2, -0) {};
		\node [style=none] (12) at (-1, -0.75) {};
		\node [style=circle, fill=black,draw=black,scale=0.5] (13) at (-2, -0.75) {};
	\end{pgfonlayer}
	\begin{pgfonlayer}{edgelayer}
		\draw [style=simple] (9) to (0);
		\draw [style=simple] (11) to (2);
		\draw [style=simple] (13) to (12);
		\draw [style=simple] (8) to (1);
		\draw [style=simple] (3) to (7);
	\end{pgfonlayer}
\end{tikzpicture}
\right) \ ; \ F \left( \ \begin{tikzpicture}[baseline=-6, scale=0.5]
		\begin{pgfonlayer}{nodelayer}
		\node [style=none] (0) at (1, 0.5) {};
		\node [style=none] (1) at (3, .9) {};
		\node [style=none] (2) at (1, -0.5) {};
		
		\node [style=none] (1') at (3, .33) {};
		\node [style=none] (1'') at (3, -0.33) {};
		\node [style=circle, fill=black,draw=black,scale=0.5] (3) at (4, .9) {};
		\node [style=rectangle,draw=black, scale=4.0] (4) at (2, -0) {};
		
		\node [style=none] (7) at (3, -.9) {};
		\node [style=circle, fill=black,draw=black,scale=0.5] (8) at (4, -.9) {};
		\node [style=circle, fill=black,draw=black,scale=0.5] (14) at (4, .33) {};
		\node [style=circle, fill=black,draw=black,scale=0.5] (13) at (4, -.33) {};
		\node [style=circle, fill=black,draw=black,scale=0.5] (9) at (0, 0.5) {};
		\node [style=none] (10) at (2, -0) {$g$};
		\node [style=circle, fill=black,draw=black,scale=0.5] (11) at (0, -0.5) {};

	\end{pgfonlayer}
	\begin{pgfonlayer}{edgelayer}
		\draw [style=simple] (9) to (0);
		\draw [style=simple] (11) to (2);
		\draw [style=simple] (13) to (1'');
		\draw [style=simple] (8) to (7);
		\draw [style=simple] (3) to (1);
		\draw [style=simple] (14) to (1');

	\end{pgfonlayer}
\end{tikzpicture}
\ \right) \ = \ F \left( \ 
\begin{tikzpicture}[baseline=-5, scale=0.5]
	\begin{pgfonlayer}{nodelayer}
		\coordinate (0) at (-3, 0.75) {};
		\coordinate (2) at (-3, -0) {};

		\node [style=none] (6) at (0, 0) {$f;g$};
		\node [style=circle, fill=black,draw=black,scale=0.5] (9) at (-4, 0.75) {};
		\node [style=circle, fill=black,draw=black,scale=0.5] (11) at (-4, -0) {};
		\coordinate (12) at (-3, -0.75) {};
		\node [style=circle, fill=black,draw=black,scale=0.5] (13) at (-4, -0.75) {};
		
		\coordinate (a) at (3,1) {};
		\coordinate(b) at (3,-1) {};
		\coordinate(c) at (-3,-1){};
		\coordinate(d) at (-3,1){};
		
	\end{pgfonlayer}
	\begin{pgfonlayer}{edgelayer}
		\draw [style=simple] (9) to (0);
		\draw [style=simple] (11) to (2);
		\draw [style=simple] (13) to (12);

		\draw[-] (a) to (b);
		\draw[-] (b) to (c);
		\draw[-] (c) to (d);
		\draw[-] (d) to (a);
	\end{pgfonlayer}

	\begin{pgfonlayer}{nodelayer}
	
		\coordinate (1) at (3, .9) {};
		
		\coordinate (1') at (3, .33) {};
		\coordinate (1'') at (3, -0.33) {};
		\node [style=circle, fill=black,draw=black,scale=0.5] (3) at (4, .9) {};
		\node [style=none] (5) at (-1, -0) {};

		\coordinate (7) at (3, -.9) {};
		\node [style=circle, fill=black,draw=black,scale=0.5] (8) at (4, -.9) {};
		\node [style=circle, fill=black,draw=black,scale=0.5] (14) at (4, .33) {};
		\node [style=circle, fill=black,draw=black,scale=0.5] (13) at (4, -.33) {};
		\node [style=none] (12) at (-1, -0.75) {};
	\end{pgfonlayer}
	\begin{pgfonlayer}{edgelayer}
		\draw [style=simple] (13) to (1'');
		\draw [style=simple] (8) to (7);
		\draw [style=simple] (3) to (1);
		\draw [style=simple] (14) to (1');

	\end{pgfonlayer}
\end{tikzpicture} \ \right)
\]
means that the behavior of a composite open system can be computed as the composite of the behaviors of the constituent open systems.

Just as a functor is a certain type of `nice' mapping between categories, there exist certain classes of `nice' maps between functors. These are called natural transformations. 

\begin{defn} \label{defn:natural}
Given functors $F,G \maps \CC \to \D$,  a \define{natural transformation} $\alpha \maps F \Rightarrow G$ consists of morphisms $\alpha_X \maps F(X) \to G(X)$ for all objects $ X \in \CC$ and $\alpha_f \maps F(f) \to G(f)$ for all morphisms $f \maps X \to Y$ such that the following diagram commutes 
\[ \xymatrix{  F(X) \ar[r]^{F(f)} \ar[d]_{\alpha_X} &  F(Y) \ar[d]^{\alpha_Y} \\ G(X) \ar[r]_{G(f)} & G(Y) } \]
for all pairs of objects and all morphisms between them in $\CC$. If the morphisms $\alpha_X \maps F(X) \to G(X)$ and $\alpha_f \maps F(f) \to G(f)$ are all isomorphisms, then $\alpha \maps F \Rightarrow G$ is a \define{natural isomorphism}. 
\end{defn}

In the context of the functorial semantics of open systems, natural transformations between different behavioral functors give methods of interpolating between the corresponding notions of behavior. Natural transformations play a number of other important roles, for instance in the definition of a monoidal category to which we now turn our attention.

\section{Monoidal categories}
In fact almost all the categories we study in this thesis are `symmetric monoidal categories' with symmetric monoidal functors between them. Given two open systems $f \maps X \to Y$:
\[
\begin{tikzpicture}
	\begin{pgfonlayer}{nodelayer}
		\node [style=none] (0) at (-1, 0.75) {};
		\node [style=none] (1) at (1, -0.5) {};
		\node [style=none] (2) at (-1, -0) {};
		\node [style=circle, fill=black,draw=black,scale=0.5] (3) at (2, 0.5) {};
		\node [style=rectangle,draw=black, scale=8.0] (4) at (0, -0) {};
		\node [style=none] (5) at (-3, -0) {$X$};
		\node [style=none] (6) at (3, -0) {$Y$};
		\node [style=none] (7) at (1, 0.5) {};
		\node [style=circle, fill=black,draw=black,scale=0.5] (8) at (2, -0.5) {};
		\node [style=circle, fill=black,draw=black,scale=0.5] (9) at (-2, 0.75) {};
		\node [style=none] (10) at (0, -0) {$f$};
		\node [style=circle, fill=black,draw=black,scale=0.5] (11) at (-2, -0) {};
		\node [style=none] (12) at (-1, -0.75) {};
		\node [style=circle, fill=black,draw=black,scale=0.5] (13) at (-2, -0.75) {};
	\end{pgfonlayer}
	\begin{pgfonlayer}{edgelayer}
		\draw [style=simple] (9) to (0);
		\draw [style=simple] (11) to (2);
		\draw [style=simple] (13) to (12);
		\draw [style=simple] (8) to (1);
		\draw [style=simple] (3) to (7);
	\end{pgfonlayer}
\end{tikzpicture}
\]
and $f' \maps X' \to Y'$:
\[
\begin{tikzpicture}
	\begin{pgfonlayer}{nodelayer}
		\node [style=none] (0) at (-1, 0.75) {};
		\node [style=none] (1) at (1, 0) {};
		\node [style=none] (2) at (-1, -0) {};

		\node [style=rectangle,draw=black, scale=8.0] (4) at (0, -0) {};
		\node [style=none] (5) at (-3, -0) {$X'$};
		\node [style=none] (6) at (3, -0) {$Y'$};
		\node [style=none] (7) at (1, 0.5) {};
		\node [style=circle, fill=black,draw=black,scale=0.5] (8) at (2, 0) {};
		\node [style=none] (10) at (0, -0) {$f'$};
		\node [style=circle, fill=black,draw=black,scale=0.5] (11) at (-2, -0) {};
		\node [style=none] (12) at (-1, -0.75) {};
	\end{pgfonlayer}
	\begin{pgfonlayer}{edgelayer}
		\draw [style=simple] (11) to (2);
		\draw [style=simple] (8) to (1);
	\end{pgfonlayer}
\end{tikzpicture}
\] 
the tensor product of the two systems $f \otimes f' \maps X \otimes X' \to Y \otimes Y'$ corresponds to considering the two systems  placed `side by side' as a single system:
\[
\begin{tikzpicture}
	\begin{pgfonlayer}{nodelayer}
		\node [style=none] (0) at (-1, 0.75) {};
		\node [style=none] (1) at (1, -0.5) {};
		\node [style=none] (2) at (-1, -0) {};
		\node[style=none] (2') at (-1,-1.5) {};
		\node[style=none] (3') at (1, -1.5) {};
		\node [style=circle, fill=black,draw=black,scale=0.5] (3) at (2, 0.5) {};
		\node [style=none] (5) at (-3, -0.5) {$X \otimes X'$};
		\node [style=none] (6) at (3, -0.5) {$Y \otimes Y'$};
		\node [style=none] (7) at (1, 0.5) {};
		\node [style=circle, fill=black,draw=black,scale=0.5] (8) at (2, -0.5) {};
		\node [style=circle, fill=black,draw=black,scale=0.5] (9) at (-2, 0.75) {};
		\node [style=circle, fill=black,draw=black,scale=0.5] (9') at (-2, -1.5) {};
		\node [style=circle, fill=black,draw=black,scale=0.5] (10') at (2, -1.5) {};
		\node [style=none] (10) at (0, -0.5) {$f \otimes f'$};
		\node [style=circle, fill=black,draw=black,scale=0.5] (11) at (-2, -0) {};
		\node [style=none] (12) at (-1, -0.75) {};
		\node [style=circle, fill=black,draw=black,scale=0.5] (13) at (-2, -0.75) {};
	\end{pgfonlayer}
	\begin{pgfonlayer}{edgelayer}
		\draw [style=simple] (9) to (0);
		\draw [style=simple] (9') to (2');
		\draw [style=simple] (11) to (2);
		\draw [style=simple] (13) to (12);
		\draw [style=simple] (8) to (1);
		\draw [style=simple] (3) to (7);
		\draw[style=simple] (3') to (10');
		\draw (-1,1) -- (1,1) --(1,-2) -- (-1,-2)--(-1,1);
	\end{pgfonlayer}
\end{tikzpicture}
\]

\begin{defn} \label{defn:mc}
A \define{monoidal category} is a category $\CC$ together with 
\begin{itemize} \item a \define{tensor product} $ \otimes \maps \CC \times \CC \to \CC $ which sends a pair of objects $X,Y \in \CC$ to the object $X \otimes Y \in \CC$ and sends a pair of morphisms $f \maps X \to Y$ and $f' \maps X' \to Y'$ in $\CC$ to a morphism $f \otimes f' \maps X \otimes X' \to Y \otimes Y' \in \CC$.
\item a \define{unit object} $I \in \CC$ for the tensor product which together with 
\item natural isomorphisms $\alpha_{X,Y,Z} \maps (X \otimes Y) \otimes Z \to X \otimes (Y \otimes Z)$, $\lambda_X \maps I \otimes X \to X$ and $\rho_X \maps X \otimes I \to X$, called the \define{associator} and the \define{left and right unitors} respectively, making the following triangle and pentagon diagrams commute
\end{itemize}
\[ \xymatrix{ ( X \otimes I ) \otimes Y \ar[rr]^{\alpha_{X,I,Y} } \ar[rd]_{\rho_X \otimes 1_Y} & & X \otimes (I \otimes Y) \ar[ld]^{ 1_X \otimes \lambda_Y} \\ & X \otimes Y & } \]
The pentagon identity:
\[ \xymatrix{ & (( W \otimes X ) \otimes Y ) \otimes Z  \ar[rd]^{\alpha_{W,X,Y} \otimes 1_Z} \ar[ld]_{\alpha_{W\otimes X,Y,Z}} & \\ ( W \otimes X) \otimes (Y \otimes Z) \ar[d]_{\alpha_{W,X,Y\otimes Z} } &  & (W \otimes (X \otimes Y) ) \otimes Z \ar[d]^{\alpha_{W,X \otimes Y,Z}} \\ W \otimes (X \otimes (Y \otimes Z)) \ar[rr]_{1_W \otimes \alpha_{X,Y,Z}} & & W \otimes ((X \otimes Y) \otimes Z ) } \]
\end{defn}

There is an operation called the `braiding' which switches the order of a tensor product. For instance, one can `swap' the order of the inputs of $f \otimes f' \maps X \otimes X' \to Y \otimes Y'$ by precomposing with the braiding morphism $\beta_{X',X} \maps X' \otimes X \to X \otimes X' $ giving 
\[ ( \beta_{X',X} ; f \otimes f' ) \maps X' \otimes X \to Y \otimes Y' \]
\[
\begin{tikzpicture}
	\begin{pgfonlayer}{nodelayer}
	
		\node [style=none] (0) at (-1, 0.75) {};
		\node [style=none] (1) at (1, -0.5) {};
		\node [style=none] (2) at (-1, -0) {};
		\node[style=none] (2') at (-1,-1.5) {};
		\node[style=none] (3') at (1, -1.5) {};
		\node [style=circle, fill=black,draw=black,scale=0.5] (3) at (2, 0.5) {};
		\node [style=none] (5) at (-5, -0.5) {$X' \otimes X$};
		\node [style=none] (6) at (3, -0.5) {$Y \otimes Y'$};
		\node [style=none] (7) at (1, 0.5) {};
		\node [style=circle, fill=black,draw=black,scale=0.5] (8) at (2, -0.5) {};
		\node [style=circle, fill=black,draw=black,scale=0.5] (9) at (-2, 0.75) {};
		\node [style=circle, fill=black,draw=black,scale=0.5] (9') at (-2, -1.5) {};
		\node [style=circle, fill=black,draw=black,scale=0.5] (10') at (2, -1.5) {};
		\node [style=none] (10) at (0, -0.5) {$f \otimes f'$};
		\node [style=circle, fill=black,draw=black,scale=0.5] (11) at (-2, -0) {};
		\node [style=none] (12) at (-1, -0.75) {};
		\node [style=circle, fill=black,draw=black,scale=0.5] (13) at (-2, -0.75) {};
	
		\node [style=circle, fill=black,draw=black,scale=0.5] (9'') at (-4, 0.75) {};
		\node [style=circle, fill=black,draw=black,scale=0.5] (10'') at (-4, -1.5) {};
	\node [style=circle, fill=black,draw=black,scale=0.5] (13'') at (-4, -0.75) {};	
		\node [style=circle, fill=black,draw=black,scale=0.5] (11'') at (-4, -0) {};

	\end{pgfonlayer}
	\begin{pgfonlayer}{edgelayer}
	
		\draw [style=simple] (9'') to (9');
		\draw [style=simple] (10'') to (13);
		\draw [style=simple] (11'') to (9);
		\draw [style=simple] (13'') to (11);

		\draw [style=simple] (9) to (0);
		\draw [style=simple] (9') to (2');
		\draw [style=simple] (11) to (2);
		\draw [style=simple] (13) to (12);
		\draw [style=simple] (8) to (1);
		\draw [style=simple] (3) to (7);
		\draw[style=simple] (3') to (10');
		\draw (-1,1) -- (1,1) --(1,-2) -- (-1,-2)--(-1,1);
	\end{pgfonlayer}
\end{tikzpicture}
\]

\begin{defn} \label{defn:bmc}
A \define{braided} monoidal category is a monoidal category equipped with a \define{braiding natural isomorphism}
\[ \beta_{X,Y} \maps X \otimes Y \to Y \otimes X \]
which is compatible with the associator so as to make the following hexagons commute for all objects
\[ \xymatrix{ (X \otimes Y ) \otimes Z \ar[rr]^{\alpha_{X,Y,Z}}  \ar[d]_{ \beta_{X,Y} \otimes 1_Z} & & X \otimes ( Y \otimes Z) \ar[rr]^{\beta_{X,Y \otimes Z} }& & (Y \otimes Z) \otimes X \ar[d]^{\alpha_{Y,Z,X} } \\  ( Y \otimes X) \otimes Z \ar[rr]_{\alpha_{X,Y,Z} } & & Y \otimes (X \otimes Z) \ar[rr]_{1_Y \otimes \beta_{X,Z} } & & Y \otimes (Z \otimes X) } \]  
and 
\[ \xymatrix{ X \otimes (Y \otimes Z) \ar[rr]^{\alpha^{-1}_{X,Y,Z} } \ar[d]_{1_X \otimes \beta_{Y,Z} } && ( X \otimes Y) \otimes Z \ar[rr]^{\beta_{X \otimes Y, Z} }& & Z \otimes ( X \otimes Y) \ar[d]^{\alpha^{-1}_{Z,X,Y} } \\ X \otimes (Z \otimes Y) \ar[rr]_{\alpha^{-1}_{X,Z,Y} } & & (X \otimes Z ) \otimes Y \ar[rr]_{\beta_{X,Z} \otimes 1_Y} & & (Z \otimes X) \otimes Y } \]
\end{defn}

\begin{defn} \label{defn:smc}
A \define{symmetric} monoidal category is a braided monoidal category for which the braiding natural isomorphism is its own inverse
\[ \beta_{X,Y} = \beta_{Y,X}^{-1} \]
for all objects $X,Y \in Ob(\CC)$. 
\end{defn}

The braiding natural isomorphism $\beta_{X,X'} \maps X \otimes X' \to X' \otimes X$ corresponds to switching the order of the inputs $X$ and $X'$. For symmetric monoidal categories, the fact that this braiding natural isomorphism is its own inverse means that switching the order of the inputs $X$ and $X'$ and then switching $X'$ with $X$ is the same as doing nothing to the original original system, i.e. acting with the identity $1_{X \otimes X'}$. In terms of pictures, this means that the morphism
\[
\begin{tikzpicture}
	\begin{pgfonlayer}{nodelayer}
	
		\node [style=none] (0) at (-1, 0.75) {};
		\node [style=none] (1) at (1, -0.5) {};
		\node [style=none] (2) at (-1, -0) {};
		\node[style=none] (2') at (-1,-1.5) {};
		\node[style=none] (3') at (1, -1.5) {};
		\node [style=circle, fill=black,draw=black,scale=0.5] (3) at (2, 0.5) {};
		\node [style=none] (5) at (-7, -0.5) {$X \otimes X'$};
		\node [style=none] (6) at (3, -0.5) {$Y \otimes Y'$};
		\node [style=none] (7) at (1, 0.5) {};
		\node [style=circle, fill=black,draw=black,scale=0.5] (8) at (2, -0.5) {};
		\node [style=circle, fill=black,draw=black,scale=0.5] (9) at (-2, 0.75) {};
		\node [style=circle, fill=black,draw=black,scale=0.5] (9') at (-2, -1.5) {};
		\node [style=circle, fill=black,draw=black,scale=0.5] (10') at (2, -1.5) {};
		\node [style=none] (10) at (0, -0.5) {$f \otimes f'$};
		\node [style=circle, fill=black,draw=black,scale=0.5] (11) at (-2, -0) {};
		\node [style=none] (12) at (-1, -0.75) {};
		\node [style=circle, fill=black,draw=black,scale=0.5] (13) at (-2, -0.75) {};
	
		\node [style=circle, fill=black,draw=black,scale=0.5] (9'') at (-4, 0.75) {};
		\node [style=circle, fill=black,draw=black,scale=0.5] (10'') at (-4, -1.5) {};
	\node [style=circle, fill=black,draw=black,scale=0.5] (13'') at (-4, -0.75) {};	
		\node [style=circle, fill=black,draw=black,scale=0.5] (11'') at (-4, -0) {};

		\node [style=circle, fill=black,draw=black,scale=0.5] (9''') at (-6, 0.75) {};
		\node [style=circle, fill=black,draw=black,scale=0.5] (10''') at (-6, -1.5) {};
	\node [style=circle, fill=black,draw=black,scale=0.5] (13''') at (-6, -0.75) {};	
		\node [style=circle, fill=black,draw=black,scale=0.5] (11''') at (-6, -0) {};

	\end{pgfonlayer}
	\begin{pgfonlayer}{edgelayer}
	
		\draw [style=simple] (9'') to (9');
		\draw [style=simple] (10'') to (13);
		\draw [style=simple] (11'') to (9);
		\draw [style=simple] (13'') to (11);

		\draw [style=simple] (11'') to (9''');
		\draw [style=simple] (9'') to (10''');
		\draw [style=simple] (10'') to (13''');
		\draw [style=simple] (13'') to (11''');

		\draw [style=simple] (9) to (0);
		\draw [style=simple] (9') to (2');
		\draw [style=simple] (11) to (2);
		\draw [style=simple] (13) to (12);
		\draw [style=simple] (8) to (1);
		\draw [style=simple] (3) to (7);
		\draw[style=simple] (3') to (10');
		\draw (-1,1) -- (1,1) --(1,-2) -- (-1,-2)--(-1,1);
	\end{pgfonlayer}
\end{tikzpicture}
\]
is equal to the identity $1_{X \otimes X'}$ composed with $ f \otimes f' \maps X \otimes X' \to Y \otimes Y'$.

A monoidal, braided monoidal or symmetric monoidal category is \define{strict} if $\alpha_{X,Y,Z}$, $\lambda_X$, $\rho_X$ are all identities so that
\[ (X \otimes Y) \otimes Z = X \otimes (Y \otimes Z) \]
\[ I \otimes X = X = X \otimes I \]
One should note that these equations are not sufficient to imply that a category is strict. They are just consequences of strictness.

\begin{defn} \label{defn.smf}
Given symmetric monoidal categories $(\CC, \otimes, I_{\otimes})$ and $(\D, \boxtimes, I_{\boxtimes})$, a \define{symmetric monoidal functor} $F \maps (\CC, \otimes, I_{\otimes}) \to (\D, \boxtimes, I_{\boxtimes} )$ is a functor $F \maps \CC \to \D$ respecting the tensor product, so that
\[ F( X \otimes Y) = F(X) \boxtimes F(Y), \]
\[ F(f \otimes f') = F(f) \boxtimes F(f') \]
and
\[ F(I_{\otimes}) = I_{\boxtimes}. \]
\end{defn}

Studying the behaviors of such systems using symmetric monoidal functors
\[ F \maps (\OpenSys, \otimes) \to (\Behave, \boxtimes) \]
means that we need both a way of sticking systems side-by-side, $\otimes$, as well as a way of sticking behaviors side-by-side, $\boxtimes$. The fact that a symmetric monoidal functor respects these tensor products 
\[ F(f \otimes f') = F(f) \boxtimes F(f') \]
says that the behavior of $f \otimes f'$ should correspond to the tensor product of the behaviors of $f$ and $f'$.

\section{Decorated cospan categories}
We described the idea of considering open systems as morphisms in a category. We restrict our attention to systems built on some finite set of states. We utilize the decorated cospan approach, due to Fong \cite{Fong, FongThesis}. At the heart of this approach is the use of lax monoidal functors to describe the various structures one can put on a finite set in order to study the systems of interest. For instance one might use lax monoidal functors to assign to some finite set $S$ the set of all directed, labelled graphs with $S$ as its vertex set. Another example is using a lax monoidal functor to assign to a finite set $S$ the set of all vector fields $v \maps \R^S \to \R^S$ obeying some conditions. We will see that the key ingredient in the definition of a lax monoidal functor is in fact a natural transformation. 

\begin{defn} \label{defn.lmf}
  Let $(\CC,\boxtimes)$, $(\D,\otimes)$ be symmetric monoidal categories. A \define{lax symmetric
  monoidal functor} 
  \[
    (F,\Phi): (\CC,\boxtimes) \to (\D,\otimes)
  \]
  consists of a functor $F: \CC \to D$ together with natural transformations 
  \[
    \Phi_{-,-}: F(-)\otimes F(-) \Rightarrow F(-\boxtimes -),
  \]
  \[
    \Phi_I: I_{\otimes} \Rightarrow F(I_{\boxtimes})
  \]
  such that three so-called coherence diagrams commute. These diagrams are 
 \[
   \xymatrix{ F(X) \otimes (F(Y) \otimes F(Z)) \ar[d]_{\mathrm{1_{F(X)} } \otimes \Phi_{Y,Z}} 
                    \ar[rr]^{\sim} &&
                    (F(X) \otimes F(Y)) \otimes F(Z) \ar[d]^{\Phi_{X,Y}
		    \otimes \mathrm{1_{F(Z)} }}  \\
                    F(X) \otimes F(Y \boxtimes Z) \ar[d]_{\Phi_{X,Y\boxtimes Z}} &&
                     F(X \boxtimes Y) \otimes F(Z) \ar[d]^{\Phi_{X\boxtimes Y,Z}} \\
                     F(X \boxtimes (Y\boxtimes Z)) \ar[rr]^{\sim} && 
                     F((X \boxtimes Y) \boxtimes Z)                    
    }
  \]
where the horizontal arrows come from the associators for $\otimes$ and
$\boxtimes$, and
\[ 
  \xymatrix{
    I_{\otimes} \otimes F(X) \ar[rr]^{\Phi_I \otimes \mathrm{1_{ F(X)} } } \ar[d]_{\sim} &&
    F(I_{\boxtimes}) \otimes F(X) \ar[d]^{\Phi_{I_\boxtimes,X}} \\
    F(X) && F(I_\boxtimes \boxtimes X) \ar[ll]_{\sim}
  }
\]
and
\[ 
  \xymatrix{
    F(X) \otimes I_\otimes \ar[rr]^{\mathrm{1_{F(X)} } \otimes \Phi_I} \ar[d]_{\sim} &&
    F(X) \otimes F(I_{\boxtimes}) \ar[d]^{\Phi_{X,I_\boxtimes}} \\
    F(X) && F(X \boxtimes I_\boxtimes) \ar[ll]_{\sim}
  }
\]
where the isomorphisms come from the unitors for $\otimes$ and $\boxtimes$.
\end{defn}

Let $(\Set, \times)$ denote the category of sets and functors made into a symmetric monoidal category with the cartesian product as its tensor product.  We write $1$ for a chosen one-element set serving as the unit for this tensor product.  Given a category $\CC$ with finite colimits, we let $(\CC,+)$ denote this category made into a symmetric monoidal category with the coproduct as its tensor product, and write $0$ for a chosen initial object serving as the unit of this tensor product.

The decorated cospan construction is then as follows:

\begin{lem}[\textbf{Fong}] \label{lemma:fcospans}
  Suppose the $\CC$ is a category with finite colimits and 
  \[
    (F,\Phi)\maps (\CC,+) \longrightarrow (\Set, \times)
  \]
   is lax symmetric monoidal functor. 
   We may define a category $F\Cospan$, the category of
   \define{$F$-decorated cospans}, whose objects are those of $\CC$ and whose
   morphisms are equivalence classes of pairs 
  \[
    (X \stackrel{i}\longrightarrow V \stackrel{o}\longleftarrow Y, d)
  \]
  consisting of a cospan $X \stackrel{i}\rightarrow V \stackrel{o}\leftarrow Y$ in
  $\CC$ together with an element $d \in F(V)$ called a \define{decoration}.
   The equivalence relation arises from isomorphism of cospans; an 
  isomorphism of cospans induces a one-to-one correspondence between their sets of decorations.

Given two decorated cospans
 \[
    (X \stackrel{i}\longrightarrow V \stackrel{o}\longleftarrow Y, d) 
    \quad \textrm{ and } \quad
    (Y \stackrel{i'}\longrightarrow V' \stackrel{o'}\longleftarrow Z, d'), 
  \]
their composite consists of the equivalence class of this cospan constructed via a pushout:
  \[
    \xymatrix{
      && V +_Y V' \\
      & V \ar[ur]^{j} && V' \ar[ul]_{j'} \\
      \quad X\quad \ar[ur]^{i} && Y \ar[ul]_{o} \ar[ur]^{i'} &&\quad Z \quad \ar[ul]_{o'}
    }
  \]
together with the decoration obtained by applying the map
\[      
\xymatrix{      F(V) \times F(V') \ar[rr]^-{\Phi_{V,V'}} && 
                     F(V + V') \ar[rr]^-{F([j,j'])} && F(V +_Y V') } \]
to the pair $(d,d') \in F(V) \times F(V')$.  Here $[j,j'] \maps V + V' \to V +_Y V'$ is the
canonical morphism from the coproduct to the pushout.
\end{lem}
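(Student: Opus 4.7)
The plan is to verify in turn the four things needed to exhibit $F\Cospan$ as a category: that the prescribed composition is well-defined on equivalence classes, that it is associative, that identity morphisms exist, and that the left and right identity laws hold.

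First I would address well-definedness. The pushout $V +_Y V'$ is defined only up to unique isomorphism, so I must check that different choices of pushout yield isomorphic decorated cospans; this follows because any two pushouts are connected by a unique compatible isomorphism $\varphi \maps V+_Y V' \to (V+_Y V')'$ and the naturality of $\Phi$ together with functoriality of $F$ identifies the resulting decorations under $F(\varphi)$. Similarly, if $(i,o,d)$ is replaced by an isomorphic representative $(i_1,o_1,d_1)$ via an isomorphism $\alpha\maps V\to V_1$ with $F(\alpha)(d)=d_1$, the pushout is replaced by an isomorphic one via $\alpha + 1_{V'}$ (descended to the pushout), and naturality of $\Phi$ again shows the decorations match.

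Next I would construct identities. For each object $X \in \CC$, the candidate identity is the cospan $X \xrightarrow{1_X} X \xleftarrow{1_X} X$ decorated by the element
\[
 d_X \;=\; F(!_X)\bigl(\Phi_I(*)\bigr) \;\in\; F(X),
\]
where $!_X \maps 0 \to X$ is the unique map from the initial object and $*$ is the unique element of the terminal set $1$. This is the ``empty'' decoration pushed forward along $!_X$. To check the identity laws, one composes $(1_X,1_X,d_X)$ with a general $(i,o,d)\maps X\to Y$; the pushout $X +_X V \cong V$ via the obvious map, and the remaining task is to show that applying $\Phi_{X,V}$ to $(d_X,d)$ and then pushing forward along $[j,j']$ recovers $d$. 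This is precisely what the left-unitor coherence diagram in Definition \ref{defn.lmf} asserts, once one traces $d_X$ back to $\Phi_I$. The right identity law is symmetric.

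The hard part will be associativity, and I expect this to occupy the bulk of the proof. Given three composable decorated cospans $(i,o,d)$, $(i',o',d')$, $(i'',o'',d'')$ with apex objects $V,V',V''$, both bracketings yield the same iterated pushout $W := V +_Y V' +_Z V''$ by the standard ``pasting of pushouts'' result: the two associations are canonically isomorphic via a map induced by the universal property. The decorations on the two sides are obtained by two different orders of applying $\Phi$ and then pushing forward along canonical maps into $W$. Showing these agree is exactly what the associativity (hexagon) coherence diagram for the lax monoidal functor $(F,\Phi)$ is designed to guarantee: one expands both decorations as composites $F(-)\circ \Phi_{-,-} \circ (\Phi_{-,-} \otimes 1)$ versus $F(-)\circ \Phi_{-,-}\circ (1\otimes\Phi_{-,-})$, uses naturality of $\Phi$ to slide the canonical pushout inclusions past the components of $\Phi$, and then invokes the coherence square to equate them. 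Finally I would remark that the construction is manifestly functorial in the data and that the equivalence relation from isomorphism of cospans is respected throughout, completing the verification that $F\Cospan$ is a category.
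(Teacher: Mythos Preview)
Your sketch is a correct outline of the argument and follows precisely the strategy Fong uses in his original paper: verify well-definedness on isomorphism classes using naturality of $\Phi$ and functoriality of $F$; take identities to be the identity cospan decorated by $F(!_X)(\Phi_I(*))$; derive the unit laws from the unitor coherence squares for $(F,\Phi)$; and derive associativity from pasting of pushouts together with the associativity hexagon for the lax monoidal structure. The paper itself, however, does not prove this lemma at all---its ``proof'' consists solely of the sentence ``This is Proposition 3.2 in Fong's paper on decorated cospans \cite{Fong}.'' So there is nothing to compare at the level of argument: you have supplied the actual proof sketch, whereas the paper simply cites the source. Your approach is the standard one and is what one finds upon following that citation.
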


\begin{proof}
This is Proposition 3.2 in Fong's paper on decorated cospans \cite{Fong}.
\end{proof}

In fact $F\Cospan$ is a monoidal category, where the tensor product of objects is the disjoint union of sets and the tensor product of morphisms arising from two decorated cospans 
\[
    (X \stackrel{i}{\longrightarrow} V \stackrel{o}{\longleftarrow} Y, d) 
    \quad \textrm{ and } \quad
    (X' \stackrel{i'}{\longrightarrow} V' \stackrel{o'}{\longleftarrow} Y', d')
  \]
is the morphism associated to this decorated cospan:
\[  ( X + X' \stackrel{i+i'}{\longrightarrow} V + V'  \stackrel{o + o'}{\longleftarrow} Y + Y', \;
\Phi_{V,V'}(d,d') ) .\]
Fong proves that $F\Cospan$ is a specially nice sort of 
symmetric monoidal category called a `hypergraph category', and thus in particular
a dagger compact category, which first arose in the study of categorical quantum mechanics \cite{AC, Se}. The proof, and the definition and significance of these concepts, are clearly explained in his thesis \cite{FongThesis}.  

Decorated cospan categories help us understand diagrams of open systems, and
how to manipulate them.  We also need the ability to take such diagrams and read off 
their `meaning', for example via the rate equation.  For this we need functors between
decorated cospan categories.   Fong showed that we construct these from monoidal natural transformations.

\begin{defn}\label{def:monnattran}
  A \define{monoidal natural transformation} $\alpha$ from a lax symmetric monoidal
  functor
  \[    (F,\Phi)\maps (\CC,\otimes) \longrightarrow (\Set,\times)
  \]
  to a lax symmetric monoidal functor
  \[
    (G,\gamma)\maps (\CC,\otimes) \longrightarrow (\Set,\times)
  \]
  is a natural transformation $\alpha\maps F \Rightarrow G$ such that
  \[
    \xymatrix{
      F(A) \times F(B) \ar[r]^-{\Phi_{A,B}} \ar[d]_{\alpha_A \times
      \alpha_B} & F(A \otimes B) \ar[d]^{\alpha_{A \otimes B}} \\
      G(A) \times G(B) \ar[r]^-{\gamma_{A,B}} & G(A \otimes B)
    }
  \]
  commutes.
\end{defn}

We then construct functors between decorated cospan categories as follows:

\begin{lem}[\textbf{Fong}] \label{lemma:decoratedfunctors}
  Let $\CC$ be a category with finite colimits and let
  \[
    (F,\Phi) \maps (\CC,+) \longrightarrow (\Set,\times)
  \]
  and
  \[
    (G,\gamma) \maps (\CC,+) \longrightarrow (\Set,\times)
  \]
  be lax symmetric monoidal functors. This gives rise to decorated cospan
  categories $F\Cospan$ and $G\Cospan$. 

  Suppose then that we have a monoidal natural transformation $\theta_V \maps F(v) \to G(v)$.   
  Then there is a symmetric monoidal functor 
  \[
    T \maps F\Cospan \longrightarrow G\Cospan
  \]
  mapping each object $X \in F\Cospan$ to $X \in G\Cospan$, and
  each decorated cospan
  \[
    (X \stackrel{i}\longrightarrow V \stackrel{o}\longleftarrow Y, d)
  \]
  to
  \[
      (X \stackrel{i}\longrightarrow V \stackrel{o}\longleftarrow Y, \theta_V (d) ).
  \]
\end{lem}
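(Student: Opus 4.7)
The plan is to verify that the prescribed assignment $T$ defines a functor on morphisms and that it respects the symmetric monoidal structure; the content is almost entirely a diagram chase built from naturality of $\theta$ together with its monoidal coherence square. I would proceed in four steps.

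First I would check that $T$ is well-defined on equivalence classes of decorated cospans. Recall that an isomorphism of cospans is a map $f\maps V\to V'$ commuting with the legs, and two decorations $d\in F(V)$, $d'\in F(V')$ are identified when $F(f)(d)=d'$. Applying $\theta$ and using naturality of $\theta$ at $f$, we get $G(f)(\theta_V(d))=\theta_{V'}(F(f)(d))=\theta_{V'}(d')$, so $T$ sends equivalent decorated cospans to equivalent decorated cospans.

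Second I would verify functoriality. Preservation of identities is immediate: the identity on $X$ in $F\Cospan$ is $(X\xrightarrow{1}X\xleftarrow{1}X, d_X)$ for the canonical decoration obtained from $\Phi_I$ and $F(!)$, and the analogous identity in $G\Cospan$ is obtained from $\gamma_I$ and $G(!)$; the monoidal coherence square for $\theta$ at the unit (combined with naturality of $\theta$ at $!\maps 0\to X$) matches these. For composition, take two composable decorated cospans $(X\to V\leftarrow Y,d)$ and $(Y\to V'\leftarrow Z,d')$ with pushout $V+_Y V'$ and canonical map $[j,j']\maps V+V'\to V+_YV'$. The composite decoration in $F\Cospan$ is $F([j,j'])\,\Phi_{V,V'}(d,d')$, and I must show
\[
\theta_{V+_Y V'}\bigl(F([j,j'])\,\Phi_{V,V'}(d,d')\bigr)=G([j,j'])\,\gamma_{V,V'}(\theta_V(d),\theta_{V'}(d')).
\]
This is obtained by pasting two commutative squares: the naturality square of $\theta$ at the morphism $[j,j']$, which gives $\theta_{V+_Y V'}\circ F([j,j'])=G([j,j'])\circ\theta_{V+V'}$, and the monoidal coherence square from Definition~\ref{def:monnattran} at $(V,V')$, which gives $\theta_{V+V'}\circ\Phi_{V,V'}=\gamma_{V,V'}\circ(\theta_V\times\theta_{V'})$. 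Pasting these yields the required identity.

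Third I would check compatibility with the symmetric monoidal structure. On objects $T$ is the identity and $+$ is the tensor on both sides, so $T(X+X')=X+X'=T(X)+T(X')$. For morphisms, the tensor of two decorated cospans in $F\Cospan$ has decoration $\Phi_{V,V'}(d,d')$, and the same coherence square for $\theta$ used above shows $\theta_{V+V'}(\Phi_{V,V'}(d,d'))=\gamma_{V,V'}(\theta_V(d),\theta_{V'}(d'))$, so $T(f\otimes f')=T(f)\otimes T(f')$. The unit and braiding morphisms of $F\Cospan$ are built from the canonical structure on $\CC$ with decorations coming from $\Phi_I$ applied to the relevant empty colimits, and $\theta$ sends these to the corresponding data built from $\gamma_I$ by the other half of Definition~\ref{def:monnattran} together with naturality of $\theta$, yielding the monoidal comparison isomorphisms of $G\Cospan$.

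The main obstacle is the composition step: one must be careful that the two commutative squares (ordinary naturality at $[j,j']$ and monoidal naturality at $(V,V')$) can in fact be pasted to produce the equality of full composite decorations, rather than just of intermediate data. Once this pasting is written out explicitly, the remaining verifications are routine book-keeping with the universal property of the pushout and the coherence diagrams of lax symmetric monoidal functors.
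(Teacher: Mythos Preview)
Your verification is correct and constitutes an actual proof sketch: well-definedness via naturality of $\theta$ at cospan isomorphisms, functoriality via the pasting of the naturality square at $[j,j']$ with the monoidal coherence square of Definition~\ref{def:monnattran}, and the monoidal structure via the same coherence square applied to the disjoint-union decoration. Each of these steps is sound, and the ``main obstacle'' you flag is not really an obstacle---the two squares share the edge $\theta_{V+V'}$, so they paste on the nose.

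The paper, however, does not prove this lemma at all: it simply cites it as a special case of Theorem~4.1 in Fong's paper on decorated cospans~\cite{Fong}, noting in addition that Fong's result gives $T$ the stronger structure of a hypergraph functor (hence in particular a symmetric monoidal dagger functor). So your route is a genuine direct verification, while the paper's route is a black-box citation. What your approach buys is self-containedness and an explicit understanding of exactly which property of $\theta$ is used at each step; what the citation buys is brevity and the stronger hypergraph conclusion, which your argument as written does not address (you would additionally need to check compatibility with the special commutative Frobenius structure on each object).
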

\begin{proof}
This is a special case of Theorem 4.1 in Fong's paper on decorated cospans \cite{Fong}.  He proves that the functor $T$ is actually a `hypergraph functor', the
sort that preserves the structure of a hypergraph category.  As a corollary it is a symmetric monoidal dagger functor.
\end{proof}

Throughout this paper we typically take the category $\CC$ to be $\FinSet$. Thus, given lax symmetric moniodal functors $F,G \maps (\FinSet, +) \to (\Set, \times)$, Lemma \label{lemma:decoratedfunctors} gives a symmetric monoidal functor
$T \maps F\Cospan \to G\Cospan$ from a monoidal natural transformation $\theta \maps F \Rightarrow G$.

\chapter{Open Markov processes}\label{ch:openmark}
\label{sec:open_markov}
In Chapter \ref{ch:entropymark}, we saw that open Markov processes are Markov processes with specified boundary states through which probability can flow in and out of the process.  By externally fixing the probabilities at the boundary states of an open Markov process one generally induces a non-equilibrium steady state for which non-zero probability currents flow through the boundary of the process in order to maintain such a steady state. In the present chapter we show that open Markov processes can be seen as morphisms in a symmetric monoidal category, in particular a decorated cospan category as described in the previous chapter.  Composition in this category corresponds to gluing together two open Markov processes along a shared subset of their boundary states. This gives a way to build up complicated Markov processes out of smaller open Markov processes, or alternatively to break down a complicated Markov process into a number of simpler interacting open Markov processes.  

We thus divide the boundary of an open Markov process into `input' states and `output' states and consider an open Markov process as a morphism from the input states to the output states. We should emphasize that this distinction between input and output states is completely arbitrary. Probability can flow either in or out of an input or output. Inputs and outputs need not be disjoint, i.e.\ a state can appear as both an input and an output. Composition of two open Markov processes corresponds to gluing the outputs of the first process onto the inputs of the second. The resulting open Markov process will have the inputs of the first process as its inputs and the outputs of the second process as its outputs. Thus the whole point of breaking the boundary of an open Markov process into specified inputs and outputs is simply to make more transparent the fact that composition of two open Markov processes corresponds to gluing the process along certain compatible subsets of their boundaries.

In Chapter \ref{ch:blackbox}, we go on to prove a `black-boxing' theorem for open Markov processes. The basic idea is to characterize an open Markov process in terms of the space of allowed steady state probabilities and probability currents along the boundary of the process, forgetting the details regarding the internal states of the process; hence the term black boxing. We call this space of possible steady state probability, current pairs the `behavior' of an open Markov process. This effectively provides an equivalence relation on the set of all open Markov processes whereby two open Markov processes are considered equivalent if they give rise to the same space of possible steady state boundary probabilities and currents. In particular, we show that there is a `black-box functor' from the category of open Markov process to the category of relations. The fact that `black-boxing' is accomplished via a functor means that the behavior of a composite open Markov process can be computed as the composite of the behaviors of its constituents. 

\section{The category of open Markov processes}

Recall that, mathematically, a \define{cospan} of sets consists of a set $V$ together with functions
$i \maps X \to V$ and $o \maps Y \to V$.  We draw a cospan as follows:
\[ \xymatrix{ & V & \\ 
X \ar[ur]^{i} & &  Y \ar[ul]_{o} \\ } \]
The set $V$ describes the system, $X$ describes its inputs, and $Y$ its outputs.  

For open Markov processes $V$ is the set of states of some Markov process, and the maps $i \maps X \to V$ and $o \maps Y \to V$ specify how the input and output states are included in $V$.   We do not require these maps to be one-to-one.

We then `decorate' the cospan with a complete description of the system. To decorate the above cospan with a Markov process, we attach to it the extra data 
\[ \xymatrix{ (0,\infty) & E \ar[l]_-r \ar[r]<-.5ex>_t \ar[r] <.5ex>^s & V  } \] 
describing a Markov process on $V$. Thus, we make the following definition:
\begin{defn}
Given finite sets $X$ and $Y$, an \define{open Markov process from} $X$ \define{to} $Y$ is a cospan of finite sets
\[ \xymatrix{ & V & \\
X \ar[ur]^{i} && Y \ar[ul]_{o} \\ } \]
together with a Markov process $M$ on $V$.  We often abbreviate such an open Markov process simply as $M \maps X \to Y$.  We say $X$ is the set of \define{inputs} of the open Markov process and $Y$ is its set of \define{outputs}. We define a \define{boundary node} to 
be a node in $B = i(X) \cup o(Y)$, and call a node 
\define{internal} if it is not a boundary node.
\end{defn}

We draw an open Markov process $M \maps X \to Y$ in the following way:
\[
\begin{tikzpicture}[->,>=stealth',shorten >=1pt,thick,scale=1.1]
  \node[main node] (1) at (-8.4,2.2) {};
  \node[main node](2) at (-8.4,-.2) {};
  \node[main node](3) at (-7.3,1)  {};
  \node[main node](4) at (-5.8,1) {};
\node(input)[color=purple] at (-9.3,1) {$X$};
\node[terminal, scale=.3] (A) at (-9,1.25) {};
\node[terminal,scale=.3] (B) at (-9,.75) {};
\node(output)[color=purple] at (-4.3,1) {$Y$};
\node[terminal, scale=.3](D) at (-4.6,1) {};
  \path[every node/.style={font=\sffamily\small}, shorten >=1pt]
    (3) edge [bend left=12] node[above] {$4$} (4)
    (4) edge [bend left=12] node[below] {$2$} (3)
    (2) edge [bend left=12] node[above] {$1$} (3) 
    (3) edge [bend left=12] node[below] {$1$} (2)
    (1) edge [bend left=12] node[above] {$\frac{1}{2}$}(3) 
    (3) edge [bend left=12] node[below] {$3$} (1);
    
\path[color=purple, very thick, shorten >=2pt, ->, >=stealth] (D) edge (4);
\path[color=purple, very thick, shorten >=4pt, ->, >=stealth] (A) edge (1);
\path[color=purple, very thick, shorten >=4pt, ->, >=stealth]
(B) edge (2);
\end{tikzpicture}
\]

As we mentioned in Chapter \ref{ch:entropymark}, since probability can flow in and out of an open Markov process, we work with non-normalized probability distributions. We can understand this in the following way. A closed Markov process is a special case of an open Markov process that has no inputs, no outputs, and therefore no boundary. Imagine then that we wish to consider a subsystem of a closed system as an open system:
\[
\begin{tikzpicture}
  \node[main node] (1) at (-5.8,2.2) {};
  \node[main node](2) at (-5.8,-.2) {};
  \node[main node](3) at (-4.3,1)  {};
  \node[main node](4) at (-2.8,1) {};
  \node[main node] (5) at (-1.3,2.2) {};
  \node[main node](6) at (-1.3,-.2) {};
  \coordinate (7) at (-2.4,2.6) {};
  \coordinate (8) at (-2.4,-.66) {};
  \coordinate (9) at (-6.4,2.6) {};
  \coordinate (10) at (-6.4,-.66) {};

\draw[dashed] (7) -- (8) ;
\draw[dashed] (10) -- (8) ; 
\draw[dashed] (9) -- (10) ; 
\draw[dashed] (7) -- (9) ; 

  \path[every node/.style={font=\sffamily\tiny},->,>=stealth',shorten >=1pt,thick,scale=1.1]
    (3) edge [bend left=12] node[above] {$4$} (4)
    (4) edge [bend left=12] node[below] {$2$} (3)
    (2) edge [bend left=12] node[above] {$1$} (3) 
    (3) edge [bend left=12] node[below] {$1$} (2)
    (1) edge [bend left=12] node[above] {$\frac{1}{2}$}(3) 
    (3) edge [bend left=12] node[below] {$3$} (1)
    (6) edge [bend left=12] node[below] {$2$} (4) 
    (4) edge [bend left=12] node[above] {$1$} (6)
    (5) edge [bend left=12] node[below] {$\frac{3}{2}$}(4) 
    (4) edge [bend left=12] node[above] {$3$} (5);
    
\end{tikzpicture}
\]
Even if probability is conserved and normalized to unity in the closed process, it will in general be sub-normalized in the open system. We go even further and work with non-normalized probabilities throughout. These non-normalized probabilities can take arbitrary non-negative values.

Given open Markov processes $M \maps X \to Y$ and $M' \maps Y \to Z$ we can compose them

\[
\begin{tikzpicture}[->,>=stealth',shorten >=1pt,thick,scale=1.1,]
  \node[main node] (1) at (-8.4,2.2) {};
  \node[main node](2) at (-8.4,-.2) {};
  \node[main node](3) at (-7.3,1)  {};
  \node[main node](4) at (-5.8,1) {};
\node(input)[color=purple] at (-9.3,1) {$X$};
\node[terminal, scale=.3] (A) at (-9,1.25) {};
\node[terminal,scale=.3] (B) at (-9,.75) {};
\node(output)[color=purple] at (-4.3,1) {$Y$};
\node[terminal, scale=.3](D) at (-4.6,1) {};
  \path[every node/.style={font=\sffamily\small}, shorten >=1pt]
    (3) edge [bend left=12] node[above] {$4$} (4)
    (4) edge [bend left=12] node[below] {$2$} (3)
    (2) edge [bend left=12] node[above] {$1$} (3) 
    (3) edge [bend left=12] node[below] {$1$} (2)
    (1) edge [bend left=12] node[above] {$\frac{1}{2}$}(3) 
    (3) edge [bend left=12] node[below] {$3$} (1);
    
\path[color=purple, very thick, shorten >=2pt, ->, >=stealth] (D) edge (4);
\path[color=purple, very thick, shorten >=4pt, ->, >=stealth] (A) edge (1);
\path[color=purple, very thick, shorten >=4pt, ->, >=stealth]
(B) edge (2);
\begin{scope}[shift={(-4,-0.1)},->,>=stealth',shorten >=1pt,thick,scale=1.1]
  \node[main node] (1) at (3,2.2) {};
  \node[main node](2) at (3,-.2) {};
  \node[main node](3) at (1.4,1)  {};
\node(input)[color=purple] at (-.3,1) {$$};
\node[terminal,scale=.3](F) at (0,1) {};
\node(output)[color=purple] at (4,1) {$Z$};
\node[terminal,scale=.3](E) at (3.7,1.2) {};
\node[terminal,scale=.3](G) at (3.7,.8) {};
  \path[every node/.style={font=\sffamily\small}, shorten >=1pt]
    (2) edge [bend left=12] node[below] {$2$} (3) 
    (3) edge [bend left=12] node[above] {$1$} (2)
    (1) edge [bend left=12] node[below] {$\frac{3}{2}$}(3) 
    (3) edge [bend left=12] node[above] {$3$} (1);
    
\path[color=purple, very thick, shorten >=6pt, ->, >=stealth] (F) edge (3);
\path[color=purple, very thick, shorten >=6pt, ->, >=stealth] (E) edge (1);
\path[color=purple, very thick, shorten >=6pt, ->, >=stealth]
(G) edge (2);
\end{scope}
\end{tikzpicture}
\]
yielding a new open Markov process $MM' \maps X \to Z$
\[
\begin{tikzpicture}[->,>=stealth',shorten >=1pt,thick,scale=1.1]
  \node[main node] (1) at (-5.4,2.2) {};
  \node[main node](2) at (-5.4,-.2) {};
  \node[main node](3) at (-4.3,1)  {};
  \node[main node](4) at (-2.8,1) {};
  \node[main node] (5) at (-1.3,2.2) {};
  \node[main node](6) at (-1.3,-.2) {};
\node(input)[color=purple] at (-7.2,1) {$X$};
\node[terminal,scale=.3](A) at (-7,1.2) {};
\node[terminal, scale=.3] (B) at (-7,.8) {};
\node(output)[color=purple] at (.2,1) {$Z$};
\node[terminal, scale=.3](E) at (0,1.2) {};
\node[terminal, scale=.3](G) at (0,.8){};
  \path[every node/.style={font=\sffamily\small}, shorten >=1pt]
    (3) edge [bend left=12] node[above] {$4$} (4)
    (4) edge [bend left=12] node[below] {$2$} (3)
    (2) edge [bend left=12] node[above] {$2$} (3) 
    (3) edge [bend left=12] node[below] {$1$} (2)
    (1) edge [bend left=12] node[above] {$\frac{1}{2}$}(3) 
    (3) edge [bend left=12] node[below] {$3$} (1)
    (6) edge [bend left=12] node[below] {$2$} (4) 
    (4) edge [bend left=12] node[above] {$1$} (6)
    (5) edge [bend left=12] node[below] {$\frac{3}{2}$}(4) 
    (4) edge [bend left=12] node[above] {$3$} (5);
    
\path[color=purple, very thick, shorten >=6pt, ->, >=stealth, bend left] (A) edge (1);
\path[color=purple, very thick, shorten >=6pt, ->, >=stealth, bend right]
(B) edge (2);
\path[color=purple, very thick, shorten >=6pt, ->, >=stealth, bend right] (E) edge (5);
\path[color=purple, very thick, shorten >=6pt, ->, >=stealth, bend left]
(G) edge (6);
\end{tikzpicture}
\]

With this notion of composition, open Markov processes form a category. There is however one caveat. The definition of a category requires an \emph{associative} composition operation. An open Markov process is a cospan of finite sets, together with a Markov process on the apex of the cospan. We saw in the previous chapter that composition of cospans is accomplished via the pushout. As it turns out, composition of cospans via pushout is associative only up to isomorphism. What is really going on is that we are actually dealing with the structure of a \emph{bicategory} \cite{Benabou, Kenny, Leinster}. Since telling the story of bicategories would take us quite far afield and because for our purposes we need only the structure of a category, we can simply take the morphisms in our category to be \emph{isomorphism classes} of open Markov processes. 

We wish to show that there is a decorated cospan category whose morphisms correspond to open Markov processes. Recall that the key ingredient to form a decorated cospan cateogry is a lax symmetric monoidal functor $F \maps \FinSet \to \Set$ sending any finite set $V$ to $F(V)$, the set of all Markov processes on $V$. For this we need to introduce a bit of machinery. Given a finite set $V$ and a function $f \maps V \to V'$ between finite sets, then for any Markov process $(V,E,s,t,r)$ on $V$, we'd like a way to cook up a Markov process on $V'$. Looking at the diagram
\[ \xymatrix{ (0,\infty) & E \ar[l]_-r \ar[r]<-.5ex>_t \ar[r] <.5ex>^s & V \ar[r]^f & V' } \] 
the answer becomes clear. We can simply define a new Markov process $(V',E,s',t',r)$ with $s' = f \circ s$ and $t' = f \circ s$. 

\begin{lem}
\label{lemma:MarkFunctor}
There is a functor $F \maps \FinSet \to \Set$ such that:
\begin{itemize} 
\item For any finite set $V$, F(V) is the set of all Markov processes on $V$.
\item For any function $f \maps V \to V'$ and any Markov process $(V,E,s,t,r) \in F(V)$, we have
\[ F(f)(V,E,s,t,r) = (V',E, f \circ s, f \circ t, r)  \]
a Markov process on $V'$.
\end{itemize}
\end{lem}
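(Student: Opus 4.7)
The plan is to verify the two functor axioms directly from the construction. First I will check that $F(f)$ is a well-defined map of sets $F(V) \to F(V')$ for each $f \maps V \to V'$ in $\FinSet$; then I will verify that $F$ preserves identities and composition.

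For well-definedness on morphisms, consider a Markov process $(V,E,s,t,r) \in F(V)$ and a function $f \maps V \to V'$. The proposed image $(V', E, f\circ s, f\circ t, r)$ consists of the same finite set $E$, the post-composed source and target maps $f\circ s,\, f\circ t \maps E \to V'$, and the unchanged rate function $r \maps E \to (0,\infty)$. This is by definition a Markov process on $V'$, so $F(f)$ indeed sends $F(V)$ into $F(V')$. One should also note in passing that $F(V)$ is a genuine set rather than a proper class: we may restrict the underlying edge sets to lie in some fixed universe of finite sets, or equivalently work with Markov processes up to isomorphism, which is the convention already implicit in the paper's treatment of graphs with multiple parallel edges.

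For functoriality, the identity map acts as $F(\mathrm{id}_V)(V,E,s,t,r) = (V, E, \mathrm{id}_V \circ s, \mathrm{id}_V \circ t, r) = (V,E,s,t,r)$, so $F(\mathrm{id}_V) = \mathrm{id}_{F(V)}$. For composable morphisms $f \maps V \to V'$ and $g \maps V' \to V''$, associativity of function composition in $\Set$ yields $(g \circ f) \circ s = g \circ (f \circ s)$ and similarly for $t$, so applying both sides to an arbitrary Markov process $(V,E,s,t,r)$ gives the common value $(V'', E, g \circ f \circ s, g \circ f \circ t, r)$. Hence $F(g \circ f) = F(g) \circ F(f)$.

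There is no real obstacle here: the construction only rewires the source and target maps of a graph by post-composition with $f$, and every requirement follows mechanically from associativity of composition in $\Set$. The only mild subtlety is the set-theoretic one already mentioned, which is standard. This lemma will serve as the base of the lax symmetric monoidal functor needed to invoke Fong's decorated cospan construction; the harder work will come later in equipping $F$ with the coherence data $\Phi_{V,V'}$ built from disjoint unions of edge sets.
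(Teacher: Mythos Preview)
Your proof is correct and follows the same direct verification of the functor axioms that the paper sketches; the paper's proof simply says both checks are ``straightforward calculations'' without writing them out. Your aside about the set-theoretic size of $F(V)$ is a fair point that the paper does not address, but it is orthogonal to the argument and handled by the usual conventions.
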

\begin{proof} We need to check that $F$ preserves composition and sends identities to identities. Both are straightforward calculations.
\end{proof}

To get a decorated cospan category whose morphisms are open Markov processes we need that $F \maps \FinSet \to \Set$ be a lax monoidal functor.
\begin{lem}
\label{lemma:MarkLax}
For any pair of finite sets $V$ and $V'$, there is a map $\varphi_{V,V'} \maps F(V) \times F(V') \to F(V+V') $ which gives us a way to think of a pair of open Markov processes, one on $V$ and the other on $V'$, as a single open Markov process on $V+V'$. This map makes $F \maps \FinSet \to \Set$ into a lax monoidal functor. 
\end{lem}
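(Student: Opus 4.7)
The plan is to construct $\varphi_{V,V'}$ by disjoint union of the underlying graph data, and then verify the naturality and coherence axioms required by Definition \ref{defn.lmf}.

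Given Markov processes $M = (V,E,s,t,r)$ on $V$ and $M' = (V',E',s',t',r')$ on $V'$, I would define
\[ \varphi_{V,V'}(M,M') \; = \; (V+V',\; E+E',\; [i_V \circ s, i_{V'}\circ s'],\; [i_V\circ t, i_{V'}\circ t'],\; [r,r']), \]
where $i_V\maps V\to V+V'$ and $i_{V'}\maps V'\to V+V'$ are the coproduct injections in $\FinSet$ and the bracket denotes the unique map out of the coproduct given by the universal property. Intuitively this just places the two labelled graphs side by side, which is precisely the Markov process on $V+V'$ whose Hamiltonian is the block-diagonal matrix $H\oplus H'$. For the unit, I would take $\varphi_I \maps 1 \to F(\emptyset)$ to pick out the empty Markov process $(\emptyset,\emptyset,!,!,!)$; this is the unique Markov process on the empty vertex set, so $F(\emptyset)$ is in fact a singleton, and the map $\varphi_I$ is forced.

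Next I would verify that $\varphi_{-,-}$ is natural in both arguments: given $f\maps V\to W$ and $f'\maps V'\to W'$, one checks that applying $F(f+f')$ after $\varphi_{V,V'}$ yields the same decorated graph on $W+W'$ as applying $\varphi_{W,W'}$ after $F(f)\times F(f')$. This follows because both sides act trivially on $E+E'$ and $r$, and on the source/target maps the equality reduces to the coproduct identity $(f+f')\circ [i_V\circ s,\,i_{V'}\circ s'] = [i_W\circ f\circ s,\,i_{W'}\circ f'\circ s']$, which is a formal consequence of the universal property of $+$.

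Finally I would verify the three coherence diagrams in Definition \ref{defn.lmf}. The associativity hexagon reduces to the statement that the two ways of combining three Markov processes $(V_1,E_1,\ldots)$, $(V_2,E_2,\ldots)$, $(V_3,E_3,\ldots)$ into a Markov process on $(V_1+V_2)+V_3 \cong V_1+(V_2+V_3)$ yield the same graph once the associator of $+$ is applied. Since both constructions simply take disjoint unions of edges and concatenate source/target/rate functions through coproduct injections, this is an instance of the associativity of $+$ in $\FinSet$ lifted componentwise. The two unit diagrams are similarly automatic, since decorating with the empty Markov process adds no new edges and the relevant unitors of $+$ transport the decoration back unchanged. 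The expected main obstacle is purely notational: keeping track of all the coproduct injections and making sure the bracket notation $[-,-]$ is interpreted consistently. Once that bookkeeping is done, each diagram commutes on the nose rather than merely up to isomorphism, so no further subtlety arises.
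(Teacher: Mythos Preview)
Your proposal is correct and takes essentially the same approach as the paper: the paper defines $\varphi_{V,V'}$ by exactly the disjoint-union construction you give (writing $s+s'$, $t+t'$ for your bracketed coproduct maps), takes the same unit map, and then simply asserts that the coherence hexagon and unitor squares commute by the universal property of the coproduct in $\FinSet$. Your version is more explicit in spelling out naturality and the coherence checks, but the underlying argument is identical.
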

\begin{proof}
We define the natural transformation
\[ \phi_{V,V'} \maps F(V) \times F(V') \to F(V+V') \]
via the assignment
\[ \phi_{V,V'} \maps (V,E,s,t,r) \times (V',E',s',t',r') \mapsto (V+V', E+E', s+s', t+t', [r,r']) \]
where $+$ denotes the coproduct in $\FinSet$ and where the \define{copairing} of functions $[r,r'] \maps E+E' \to (0,\infty)$ sends any edge in $E$ to its transition rate $r(e)$ and similarly for $E'$. 

This together with the unit map
\[ \varphi \maps 1 \to F(\emptyset) \]
assinging the unique Markov process with no states and no edges to the one-element set $1$. Commutativity of the hexagon and left/right unitor squares required for laxness of $F$ follows from the universal property of the coproduct in $\FinSet$.
\end{proof}

Thus we arrive at the following definition:
\begin{thm}
\label{thm:OpenMark}
There is a category $\OpenMark$ where:
\begin{itemize}
\item an object is a finite set and
\item a morphism from $X$ to $Y$ is an equivalence class of open Markov processes from $X$ to $Y$.
\item Given morphisms represented by an open Markov process from $X$ to $Y$ and one from $Y$ to $Z$:
 \[
    (X \stackrel{i}\longrightarrow V \stackrel{o}\longleftarrow Y, M) 
    \quad \textrm{ and } \quad
    (Y \stackrel{i'}\longrightarrow V' \stackrel{o'}\longleftarrow Z, M'), 
  \]
their composite consists of the equivalence class of this cospan:
  \[
    \xymatrix{
      & M +_Y M' \\
      \quad X\quad \ar[ur]^{ji} && \quad Z \quad \ar[ul]_{j'o'}
    }
  \]
together with an open Markov process on $V +_Y V'$ 
obtained by applying the map
\[      
\xymatrix{      F(V) \times F(V') \ar[rr]^-{\varphi_{V,V'}} && 
                     F(V + V') \ar[rr]^-{F([j,j'])} && F(V +_Y V') } \]
to the pair $(M,M') \in F(V) \times F(V')$.  
\end{itemize}
The category $\OpenMark$ is a symmetric monoidal category where the tensor product of objects $X$ and $Y$ is their disjoint union $X + Y$, while the tensor product of the morphisms
\[
    (X \stackrel{i}{\longrightarrow} V \stackrel{o}{\longleftarrow} Y, M) 
    \quad \textrm{ and } \quad
    (X' \stackrel{i'}{\longrightarrow} V' \stackrel{o'}{\longleftarrow} Y', M') 
  \]
is defined to be
\[  ( X + X' \stackrel{i+i'}{\longrightarrow} V + V'  \stackrel{o + o'}{\longleftarrow} Y + Y', \; \varphi_{V,V'}(M,M') ) .\]
In fact $\OpenMark$ is a hypergraph category.
\end{thm}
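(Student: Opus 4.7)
The plan is to recognize this as a direct application of Fong's decorated cospan framework. By Lemma \ref{lemma:MarkFunctor} and Lemma \ref{lemma:MarkLax}, we have a lax symmetric monoidal functor
\[
(F, \varphi) \maps (\FinSet, +) \longrightarrow (\Set, \times)
\]
sending each finite set $V$ to the set $F(V)$ of Markov processes on $V$, and each function $f \maps V \to V'$ to the evident relabeling operation on Markov processes. Lemma \ref{lemma:fcospans} then immediately produces a symmetric monoidal category $F\Cospan$ whose objects are finite sets and whose morphisms are equivalence classes of cospans $X \to V \leftarrow Y$ in $\FinSet$ decorated by an element of $F(V)$, which is precisely the data of an open Markov process from $X$ to $Y$. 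I would define $\OpenMark := F\Cospan$.

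The bulk of the verification is then bookkeeping: one checks that unwinding the general composition formula from Lemma \ref{lemma:fcospans} in this case yields exactly the composition rule stated in the theorem. Given open Markov processes
\[
(X \stackrel{i}\longrightarrow V \stackrel{o}\longleftarrow Y, M), \qquad
(Y \stackrel{i'}\longrightarrow V' \stackrel{o'}\longleftarrow Z, M'),
\]
one forms the pushout $V +_Y V'$ in $\FinSet$ with canonical coprojections $j \maps V \to V +_Y V'$ and $j' \maps V' \to V +_Y V'$, and decorates it by applying
\[
F(V) \times F(V') \xrightarrow{\ \varphi_{V,V'}\ } F(V + V') \xrightarrow{\ F([j,j'])\ } F(V +_Y V')
\]
to $(M,M')$. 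By the explicit description of $\varphi_{V,V'}$ in Lemma \ref{lemma:MarkLax}, this produces the Markov process on $V +_Y V'$ whose edges are $E \sqcup E'$ with source and target maps obtained by postcomposing with $[j,j']$; that is, the process obtained by gluing $M$ and $M'$ along the shared boundary $Y$. Similarly the tensor product formula in the theorem matches the tensor product in $F\Cospan$ under the identification of $\FinSet$-coproducts with disjoint union.

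The symmetric monoidal structure and the claim that $\OpenMark$ is a hypergraph category then follow immediately from Fong's general results (the discussion after Lemma \ref{lemma:fcospans}), so nothing further needs to be proved there. The only place I would expect genuine friction is in handling the strict versus up-to-isomorphism issue: composition of cospans via pushout is associative only up to canonical isomorphism, so in order to land in an honest category rather than a bicategory one must quotient morphisms by the equivalence relation generated by isomorphisms of cospans that intertwine the decorations. This is already taken care of in Fong's construction, and is the reason the theorem's morphisms are equivalence classes rather than open Markov processes on the nose; I would simply invoke his construction rather than redo the argument. The main conceptual step, therefore, is not a computation but verifying that our functor $F$ of Lemma \ref{lemma:MarkFunctor}--\ref{lemma:MarkLax} actually satisfies the hypotheses of Lemma \ref{lemma:fcospans}, after which the theorem is a corollary.
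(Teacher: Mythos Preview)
Your proposal is correct and matches the paper's approach exactly: the paper's proof is a one-line invocation of Lemmas \ref{lemma:fcospans} and \ref{lemma:MarkLax}, and you have simply unpacked that invocation in detail. Your additional discussion of the equivalence relation and the pushout-associativity issue is accurate and appropriate, though the paper treats those as already handled by Fong's construction.
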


\begin{proof}
This follows from Lemmas \ref{lemma:fcospans} and \ref{lemma:MarkLax}, 
where we explain the equivalence relation in detail.  
\end{proof}

Composition in $\OpenMark$ corresponds to gluing processes together along their overlap. Since input and output maps need not be one-to-one, this procedure can result in the identification of a number of states at once, not simply pairs of states. 

\section{The open master equation}
We now introduce a version of the open master equation in which the master equation is modified by additional flows at its inputs and outputs. Consider an open Markov process $M \maps X \to Y$ consisting of a cospan of finite sets \[ \xymatrix{ & V & \\
X \ar[ur]^{i} && Y \ar[ul]_{o} \\ } \]
together with a Markov process
\[ \xymatrix{ (0,\infty) & E \ar[l]_-r \ar[r]<-.5ex>_t \ar[r] <.5ex>^s & V }  \]
on $V$. Given specified \define{inflows} $I(t) \in \R^X$ and \define{outflows} $O(t) \in \R^Y$ together with input and output maps $i \maps X \to V$ and $o \maps Y \to V$, we can pushforward the inflow and outflows to get vectors valued in $\R^V$. That is, given $I(t) \in \R^X$ and $i \maps X \to V$ we can define
\[ i_*(I(t))_v = \sum_{ \{x \in X | i(x) = v \} } I(t)_x \]
yielding a vector $i_*(I(t)) \in \R^V$.
Similarly we can define
\[ o_*(O(t)_v = \sum_{ \{ x \in X| o(x) =v \} } O(t)_x\]
for the outputs. With these pushforwards in hand we can define the open master equation.

\begin{defn}
Given an open Markov process $M \maps X \to Y$, consisting of a cospan of finite sets
\[ \xymatrix{ & V & \\
X \ar[ur]^{i} && Y \ar[ul]_{o} \\ } \]
together with a Markov process
\[ \xymatrix{ (0,\infty) & E \ar[l]_-r \ar[r]<-.5ex>_t \ar[r] <.5ex>^s & V }  \]
on $V$, together with specified \define{inflows} $I(t)\in \R^X$ and $O(t) \in \R^Y$,  the \define{open master equation} is given by
\[ \frac{dp}{dt} = Hp + i_*(I) - o_*(O). \]
\end{defn}

We will be especially interested in `steady state' solutions of the open master equation:

\begin{defn}
A \define{steady state} solution of the open master equation is a solution $p(t) \maps V \to [0,\infty)$ such that $\frac{dp}{dt} = 0$. 
\end{defn}

In Section \ref{sec:balance} we turn to open Markov processes which admit a detailed balanced equilibrium.  We saw that steady state solutions of the open master equation minimize the dissipation.  This is analogous to the minimization of power dissipation by circuits made of linear resistors.   In Section \ref{sec:reduction_2}, this analogy lets us reduce the black boxing problem for detailed balanced Markov processes to the analogous, and already solved, problem for circuits of resistors.

\section{Markov processes with energies} 
\label{sec:balance}

We are especially interested in Markov processes with a chosen detailed balanced equilibrium $q$.  This means that with this particular choice
of $q$, the flow of probability from $i$ to $j$ equals the flow from $j$ to $i$.  This ensures that $q$ is an equilibrium, but it is a significantly stronger condition. If the underlying graph of a Markov process is connected, then this distribution is unique. For Markov processes with multiple connected components, one can separately scale the components of the equilibrium distribution on each component. 

As we saw, the existence of a detailed balanced equilibrium is in fact a condition on the rates of a Markov process. This condition, known as Kolomogorov's criterion, states that the product of the rates around any cycle in a Markov process is the same in the forward as in the reverse direction and is a necessary and sufficient condition for the existence of a detailed balanced equilibrium distribution. We show that this is equivalent to an assignment of `energies' to each state in the Markov process with the equilibrium probabilities proportional to the Boltzmann factor
\[ q_i \propto e^{-\beta \epsilon_i}, \]
where $\beta$ plays the role of an inverse temperature $\frac{1}{T}$, here using units where Boltzmann's constant is equal to one along with the requirement that the ratio of the forward and reverse transition rates between any pair of states satisfies
\[ \frac{H_{ij}}{H_{ji}} = e^{-\beta (\epsilon_i - \epsilon_j) }. \] To see this we write
\[ \frac{H_{ij} q_j}{H_{ji}q_i }  = \frac{ H_{ij} } {H_{ji} } e^{-\beta (\epsilon_j - \epsilon_i) }. \]
For detailed balanced equilibrium distributions, the left-hand-side is equal to one and we have that
\[ \frac{H_{ji} } {H_{ij} } = e^{-\beta (\epsilon_j - \epsilon_i ). } \]
Recall that $H_{ji}$ is the transition rate from $i$ to $j$. This is large when $\epsilon_i >> \epsilon_j$, reflecting the intuition that the lower energy state is preferred.   

Kolmogorov's criterion can be written
\[ H_{12}H_{23} \cdots H_{n-1 n} H_{n1}  = H_{1n} H_{n n-1}\cdots H_{32}H_{21}  \]
for any sequence of states $1,2,\dots,n,1$. 
For finite energies and temperatures, the detailed balance condition implies that if $H_{ij} = 0$, then so is $H_{ji}$. Assuming all the transition rates are non-zero in the above expression we can write
\[ \frac{ H_{12} H_{23} \cdots H_{n-1 n } H_{n1} } { H_{21} H_{32} \cdots H_{n n-1} H_{1n} } = e^{-\beta (E_1 - E_2) } e^{ -\beta (E_2 - E_3) } \cdots e^{-\beta (E_{n-1} - E_{n} ) } e^{ -\beta ( E_n - E_1) } = 1. \]
So we see that the detailed balance condition for equilibrium distributions which are Gibbs states implies Kolmogorov's criterion on the rates of a Markov process. Alternatively suppose that Kolmogorov's criterion holds, then one can always assign energies to the states so that 
\[ q_i \propto e^{-\beta \epsilon_i } \]
is a detailed balanced equilibrium. Here we reproduce the argument from Kelly's book \cite{Kelly}. Let us assume that the Markov process is irreducible and that its rates satisfy Kolmogorov's criterion. Then define
\[ q_j = c \frac{H_{10} H_{21} \cdots H_{jn} } { H_{nj} \cdots H_{12} H_{01} } \]
for some positive constant $c$ and for any sequence of states $0,1,\dots n,j$. Note that $q_j$ does not depend on the particular sequence of states from $0$ to $j$ since the rates satisfy Kolmogorov's criterion. Let us see that such an expression indeed gives a detailed balanced equilibrium. Consider another state $k$ and suppose that $H_{jk} \neq 0$ so that the detailed balance condition between $j$ and $k$ is non-trivial, then we can write
\[ q_k = c \frac{H_{10} H_{21} \cdots H_{jn} H_{kj} }{H_{jk} H_{nj} \cdots H_{01} }, \]
from which we see that 
\[ H_{kj}q_j = H_{jk} q_k.\]

We can then define $\epsilon_j$ via the equation $q_j = c e^{-\beta \epsilon_j}$ which via detailed balance determines the energies of the other states. Note that changing the proportionality factor $c$ corresponds to a uniform translation of energies.  

Intuitively we can understand this as an analog of the existence of a single-valued potential function giving rise to a conservative force-field. Later we will make this more apparent when write the master equation as the gradient of a certain `potential function' on a space of probabilities whose inner-product is weighted by the components of the detailed balanced equilibrium distribution. Non-trivial probability currents can flow around cycles in Markov processes which violate the Kolmogorov criterion. In terms of energies this corresponds to cycles which when transversed result in an overall change in energy, implying some non-trivial aspects in terms of the underlying energy landscape. 

Recall that given a Markov process $(V,E,s,t,r)$, the off-diagonal entries of the Hamiltonian $H_{ij}$ for $i \neq j$ are determined by the sum of the rates along edges going from $j$ to $i$  
\[ H_{ij} = \mathop{ \sum r_e }_{ e \maps j \to i }  . \ \  \]
For $i \neq j$ the ratio $\frac{H_{ij}}{H_{ji}}$ is given in terms of the rates via 
\[ \frac{H_{ij} }{H_{ji} } = \frac{ \displaystyle{ \mathop{ \sum r_e }_{e \maps j \to i } } } { \displaystyle{ \mathop{ \sum r_e}_{ e \maps i \to j } . } } \]
For an assignment of energies to states $ \epsilon \maps V \to \R$ to be compatible with a detailed balanced equilibrium of the form $q_i \propto e^{-\beta \epsilon_i}$ means that the rates of the Markov process must satisfy
\[ \frac{H_{ij}}{H_{ji}} =  \frac{ \displaystyle{ \mathop{ \sum r_e }_{ e \maps j \to i } } } { \displaystyle{ \mathop{ \sum r_e}_{ e \maps i \to j }  } }    = e^{-\beta( \epsilon_i - \epsilon_j). } \]
We call this condition the \define{detailed balance condition}.

\begin{defn}
A \define{Markov process with energies} 
\[ \xymatrix{  (0,\infty) & E \ar[l]_-r \ar[r]<-.5ex>_t  \ar[r] <.5ex>^s & V 
\ar[r]^-{\epsilon} & \R }  \]
is a Markov process
\[ \xymatrix{   (0,\infty) & E \ar[l]_-r \ar[r]<-.5ex>_t  \ar[r] <.5ex>^s & V 
  }  \]
together with a map $\epsilon \maps V \to \R$ assigning an \define{energy} to each state satisfying
\[   \frac{ \displaystyle{ \mathop{ \sum r_e }_{ e \maps j \to i } } } { \displaystyle{ \mathop{ \sum r_e}_{ e \maps i \to j }  } }    = e^{-\beta( \epsilon_i - \epsilon_j). } \]
We sometimes write the collection $(V,E,s,t,r,\epsilon)$ to denote the data of a Markov process with energies.
\end{defn}

A map of Markov processes with energies is similar to a map of Markov processes. Given two Markov processes with energies $M=(V,E,s,t,r,\epsilon)$ and $M' = (V',E',s',t',r',\epsilon')$, a \define{map of Markov processes with energies} $f \maps M \to M'$ is a pair of functions $f_E \maps E \to E'$ and $f_V \maps V \to V'$ such that the following diagrams commute 

\[ \xymatrix{   & \ar[ld]_r E \ar[dd]_{f_E}   \ar[rr]^s & &  V \ar[dd]^{f_V} \ar[rd]^{\epsilon} & \\
(0,\infty) &  & & & \R \\     
   & \ar[ul]^{r'}  E'  \ar[rr]_{s'} & & V'  \ar[ru]_{\epsilon'} & }  \ \ \ \  \xymatrix{    & \ar[dl]^r E \ar[dd]_{f_E} \ar[rr]^t    & & V \ar[dd]^{f_V} \ar[rd]^{\epsilon} & \\ 
   (0,\infty) & & & & \R \\    
   & \ar[ul]^{r'} E'  \ar[rr]_{t'} &  & V' \ar[ru]_{\epsilon'} & }  \]
Notice that commutativity of the right most triangles requires that states which are mapped onto one another must have the same energies. Later, we will see how this requirement enters into the composition of open Markov processes with energies, namely that only states with equal energies can be identified.

One should note that in what follows we consider a Markov process whose states $V$ are labelled by energies $\epsilon \maps V \to \R$ to be equivalent to a Markov process whose states $V$ are labelled by a detailed balanced equilibrium $q \maps V \to (0,\infty)$ via the assignment $q_i = \frac{e^{-\beta E_i}}{\mathcal{Z}}$. A uniform translation of energies $\epsilon \mapsto \epsilon + c$ for some constant vector $c \in \R^V$, corresponds to a uniform scaling of $q$ via the factor $q \mapsto  e^{-\beta c}q$.

\section{Open Markov processes with energies}
We now describe a decorated cospan category where the morphisms are open detailed balanced Markov processes, or rather open Markov processes with energies. So far the objects in our decorated cospan categories have simply been finite sets. Composition corresponded to gluing together open processes along some finite set of shared boundary states. We saw in the previous section that for Markov processes whose states are labelled with energies, states can only be identified if they have the same energy. In terms of the decorated cospan approach, this corresponds to considering cospans not in $\FinSet$, but in $\FinSet_{\epsilon}$ the category of finite sets with energies and maps between them. 

\begin{defn}
A \define{finite set with energies} is a finite set $X$ together with a map $\epsilon \maps X \to \R$. We write $\epsilon_i$ for the energy of some point $i \in X$.
\end{defn}

\begin{defn} 
Given two finite sets with energies $(X,\epsilon)$ and $(X', \epsilon')$ we define a \define{map} $f \maps (X,\epsilon) \to (X', \epsilon')$ to be a function $f \maps X \to X'$ such that the following diagram commutes
\[ \xymatrix{  X \ar[dr]_{\epsilon} \ar[rr]^{f} & & X' \ar[dl]^{\epsilon'} \\ & \R & }. \] There is a category $\FinSet_{\epsilon}$ of finite sets with energies and maps between them.
\end{defn}

A cospan in $\FinSet_{\epsilon}$ is a diagram
\[ \xymatrix{ & (V,\epsilon) & \\
(X,\chi) \ar[ur]^{i} && (Y,\upsilon) \ar[ul]_{o} \\ } \]
where $i \maps (X,\chi) \to (V,\epsilon)$ and $o \maps (Y,\upsilon) \to (V, \epsilon)$ are maps of finite sets with energies meaning that
\[  \epsilon(i(x)) = \chi(x) \] and
\[ \epsilon(o(y)) = \upsilon(y) \]
for all $x \in X$ and $y \in Y$.

Note that any Markov process with energies $M=(V,E,s,t,r,\epsilon)$ has an underlying set with energies $(V,\epsilon)$ and well as an underlying graph $(V,E,s,t)$. 

\begin{defn}
Given sets with energies $(X, \chi)$ and $(Y, \upsilon)$, an \define{open Markov process with energies} consists of a cospan of finite sets with energies
\[ \xymatrix{ & (V,\epsilon) & \\
(X,\chi) \ar[ur]^{i} && (Y,\upsilon) \ar[ul]_{o} \\ } \]
together with a Markov process with energies $M$ on $(V,\epsilon)$.   We often abbreviate this as $M \maps (X,\chi) \to (Y,\upsilon)$.
\end{defn}

We can draw an open Markov process with energies $M \maps (X, \chi) \to (Y, \upsilon)$ as
\[
\begin{tikzpicture}[->,>=stealth',shorten >=1pt,thick,scale=2.2]
  \node[main node] (1) at (-8.4,2.2) {\scriptsize$ \ln(1)$};
  \node[main node](2) at (-8.4,-.2) {\scriptsize$\ln(2)$};
  \node[main node](3) at (-7.3,1)  {\scriptsize$\ln(2)$};
  \node[main node](4) at (-5.8,1) {\scriptsize$\ln(1)$};
\node(input)[color=purple] at (-9.4,1) {$X$};
\node[terminal, scale=.5] (A) at (-9,1.25) {$\ln(1)$};
\node[terminal,scale=.5] (B) at (-9,.75) {$\ln(2)$};
\node(output)[color=purple] at (-4.2,1) {$Y$};
\node[terminal, scale=.5](D) at (-4.6,1) {$\ln(1)$};
  \path[every node/.style={font=\sffamily\small}, shorten >=1pt]
    (3) edge [bend left=12] node[above] {$4$} (4)
    (4) edge [bend left=12] node[below] {$2$} (3)
    (2) edge [bend left=12] node[above] {$1$} (3) 
    (3) edge [bend left=12] node[below] {$1$} (2)
    (1) edge [bend left=12] node[above] {$\frac{1}{2}$}(3) 
    (3) edge [bend left=12] node[below] {$1$} (1);
    
\path[color=purple, very thick, shorten >=2pt, ->, >=stealth] (D) edge (4);
\path[color=purple, very thick, shorten >=4pt, ->, >=stealth] (A) edge (1);
\path[color=purple, very thick, shorten >=4pt, ->, >=stealth]
(B) edge (2);
\end{tikzpicture}
\]

Recall that we require the rates of an open Markov process with energies to satisfy
\[ \frac{H_{ij}}{H_{ji}} = e^{-\beta ( \epsilon_i - \epsilon_j). } \]
Setting $\beta = 1$, this requires that the rates and energies satisfy the relation
\[ \ln \left( \frac{H_{ij}}{H_{ji}} \right) = \epsilon_j - \epsilon_i \].

Whenever we depict an open Markov process with energies as a labelled graph, we will take $\beta = 1$. 
%
%

To show that these form a decorated cospan category we have to show that there is a lax monoidal functor $G \maps (\FinSet_{\epsilon},+)  \to (\Set, \times)$ sending a finite set with energies $(V,\epsilon)$ to the set of Markov processes with energies on $(V,\epsilon)$. Laxness requires the existence of a natural transformation
\[ \Phi_{V,V'} \maps G(V,\epsilon) \times G(V',\epsilon') \to G(V+V',[\epsilon,\epsilon']) \]
taking a pair of Markov processes with energies to Markov process with energies on the disjoint union of the sets of states of the pair of Markov processes with energies, together with a map
\[ \Phi_{1} \maps 1 \to G(\varnothing, ! ) \] 
sending the one element set $1$, i.e. the unit for the tensor product of $(\Set, \times)$ to the unique Markov processes with energies on the empty set.   
 
The key to this construction is the mapping 
\[ \Phi_{V,V'} \maps G(V,\epsilon) \times G(V',\epsilon') \to G(V+V', [\epsilon, \epsilon']) \]
which provides a method of taking two Markov processes with energies, one on $V$ and one on $V'$, and turning them into a Markov process with energies on the disjoint union $V+V'$. This is accomplished via the assignment
\[ \Phi_{V,V'} \maps (V,E,s,t,r,\epsilon) \times (V',E',s',t',r',\epsilon') \mapsto (V+V', E+E', s+s', t+t', [r,r'], [\epsilon, \epsilon'] ).\]
It is easy to see that if $r \maps E \to (0,\infty)$ and $\epsilon \maps V \to \R$ satisfy the detailed balance condition and $r' \maps E' \to (0,\infty)$ and $\epsilon' \maps V' \to \R$ satisfy the detailed balance condition, then $[r,r'] \maps E+E' \to (0,\infty)$ and $[\epsilon,\epsilon'] \maps V+V' \to \R$ satisfy the detailed balance condition. 
 
Thus by Lemma \ref{lemma:fcospans} in \cite{Fong} we arrive at a decorated cospan category of open Markov process with energies.

\begin{defn}
The category $\OpenMark_{\epsilon}$ is the decorated cospan category where an object is a finite set with energies and a morphism is an isomorphism class of open Markov processes with energies $M \maps (X,\chi) \to (Y,\upsilon)$. 
\end{defn}

As is the case for all decorated cospan categories, we have:

\begin{prop} 
The category $\OpenMark_{\epsilon}$ is a dagger compact category.
\end{prop}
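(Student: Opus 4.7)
The plan is to invoke the general theorem of Fong (the corollary to Lemma~\ref{lemma:fcospans} mentioned at the end of Chapter~\ref{ch:catopen}) that every decorated cospan category is a hypergraph category, hence in particular dagger compact. All I need to check is that the construction of $\OpenMark_{\epsilon}$ actually fits the hypotheses of that lemma, namely that it arises as $G\Cospan$ for a lax symmetric monoidal functor
\[ (G,\Phi) \maps (\FinSet_{\epsilon}, +) \longrightarrow (\Set, \times) \]
with $(\FinSet_{\epsilon}, +)$ a symmetric monoidal category whose underlying ordinary category has finite colimits.

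First I would verify that $\FinSet_{\epsilon}$ has finite colimits. The initial object is $(\varnothing, !)$, and binary coproducts are disjoint unions $(X+Y, [\chi,\upsilon])$. The nontrivial point is pushouts: given a diagram $(X,\chi) \leftarrow (Z,\zeta) \rightarrow (Y,\upsilon)$ in $\FinSet_{\epsilon}$, the pushout in $\FinSet$ carries a canonical energy function because the maps out of $Z$ preserve energies, so any two elements identified by the pushout relation already agree on their energies; the universal property in $\FinSet_{\epsilon}$ follows from the one in $\FinSet$ together with the same commutativity observation. The tensor product $+$ then equips $\FinSet_{\epsilon}$ with a symmetric monoidal structure in the usual way.

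Next I would check that $G$ as described just before the statement is a lax symmetric monoidal functor. On objects $G(V,\epsilon)$ is the set of Markov processes with energies on $(V,\epsilon)$; on a map $f \maps (V,\epsilon) \to (V',\epsilon')$ the functor pushes edges forward by $(V,E,s,t,r,\epsilon) \mapsto (V', E, f\circ s, f\circ t, r, \epsilon')$, which is well-defined as a Markov process with energies precisely because $\epsilon' \circ f = \epsilon$ implies the detailed balance ratios pushed to $V'$ still agree with the relevant energy differences. Functoriality is routine. For the lax structure, the map $\Phi_{V,V'}$ given in the text disjointly unions the underlying graphs, rates, and energies; the detailed balance condition holds on $V+V'$ because for $i,j$ in the same summand the summations defining $H_{ij}$ and $H_{ji}$ reduce to the corresponding summations in the original process, while for $i,j$ in different summands both $H_{ij}$ and $H_{ji}$ vanish and the condition is trivial. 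The unit map $\Phi_1 \maps 1 \to G(\varnothing, !)$ is uniquely determined. The coherence hexagons and unitor squares commute by the universal property of $+$ in $\FinSet_{\epsilon}$.

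With $(G,\Phi)$ established as a lax symmetric monoidal functor, Lemma~\ref{lemma:fcospans} produces the decorated cospan category $G\Cospan$, and by the construction this is exactly $\OpenMark_{\epsilon}$: the composition described in the definition of $\OpenMark_{\epsilon}$ is precisely the decorated pushout composition. The result then follows since Fong shows $G\Cospan$ is a hypergraph category, and every hypergraph category is dagger compact. The only genuinely new content beyond what Fong supplies is the verification that energies are preserved by all the relevant operations, and as outlined above this reduces in each case to the observation that maps in $\FinSet_{\epsilon}$ are by definition energy-preserving, so pushouts never identify states of differing energy.
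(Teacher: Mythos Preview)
Your proposal is correct and follows the same approach as the paper: the paper offers no explicit proof beyond the sentence ``As is the case for all decorated cospan categories, we have:'' preceding the proposition, so dagger compactness is simply inherited from Fong's general result once $\OpenMark_{\epsilon}$ is recognized as a decorated cospan category. You have just made explicit the verifications (finite colimits in $\FinSet_{\epsilon}$, lax symmetric monoidality of $G$) that the paper leaves implicit in the discussion leading up to the definition.
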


Given open Markov processes with energies $M \maps (X,\chi)  \to (Y,\upsilon)$ and $M' \maps (Y, \upsilon) \to (Z, \zeta)$ we can compose them
\[
\begin{tikzpicture}[->,>=stealth',shorten >=1pt,thick,scale=1.5]
  \node[main node] (1) at (-8.4,2.2) {\scriptsize$ \ln(1)$};
  \node[main node](2) at (-8.4,-.2) {\scriptsize$\ln(2)$};
  \node[main node](3) at (-7.3,1)  {\scriptsize$\ln(2)$};
  \node[main node](4) at (-5.8,1) {\scriptsize$\ln(1)$};
\node(input)[color=purple] at (-9.4,1) {$X$};
\node[terminal, scale=.5] (A) at (-9,1.25) {$\ln(1)$};
\node[terminal,scale=.5] (B) at (-9,.75) {$\ln(2)$};
\node(output)[color=purple] at (-4.6,.54) {$Y$};
\node[terminal, scale=.5](D) at (-4.6,1) {$\ln(1)$};
  \path[every node/.style={font=\sffamily\small}, shorten >=1pt]
    (3) edge [bend left=12] node[above] {$4$} (4)
    (4) edge [bend left=12] node[below] {$2$} (3)
    (2) edge [bend left=12] node[above] {$1$} (3) 
    (3) edge [bend left=12] node[below] {$1$} (2)
    (1) edge [bend left=12] node[above] {$\frac{1}{2}$}(3) 
    (3) edge [bend left=12] node[below] {$1$} (1);
    
\path[color=purple, very thick, shorten >=2pt, ->, >=stealth] (D) edge (4);
\path[color=purple, very thick, shorten >=4pt, ->, >=stealth] (A) edge (1);
\path[color=purple, very thick, shorten >=4pt, ->, >=stealth]
(B) edge (2);
\begin{scope}[shift={(-4.6,-0.1)},->,>=stealth',shorten >=1pt,thick,scale=1.1]
  \node[main node] (1) at (3,2.2) {\scriptsize$-\ln(2)$};
  \node[main node](2) at (3,-.2) {\scriptsize$\ln(2)$};
  \node[main node](3) at (1.4,1)  {\scriptsize$\ln(1)$};
\node(input)[color=purple] at (-.4,1) {$$};
\node[terminal,scale=.5](F) at (0,1) {$\ln(1)$};
\node(output)[color=purple] at (4.1,1) {$Z$};
\node[terminal,scale=.5](E) at (3.7,1.3) {$-\ln(2)$};
\node[terminal,scale=.5](G) at (3.7,.7) {$\ln(2)$};
  \path[every node/.style={font=\sffamily\small}, shorten >=1pt]
    (2) edge [bend left=12] node[below] {$2$} (3) 
    (3) edge [bend left=12] node[above] {$1$} (2)
    (1) edge [bend left=12] node[below] {$\frac{3}{2}$}(3) 
    (3) edge [bend left=12] node[above] {$3$} (1);
    
\path[color=purple, very thick, shorten >=6pt, ->, >=stealth] (F) edge (3);
\path[color=purple, very thick, shorten >=6pt, ->, >=stealth] (E) edge (1);
\path[color=purple, very thick, shorten >=6pt, ->, >=stealth]
(G) edge (2);
\end{scope}
\end{tikzpicture}
\]
yielding a new open Markov process with energies $MM' \maps X \to Z$
\[
\begin{tikzpicture}[->,>=stealth',shorten >=1pt,thick,scale=1.5]
  \node[main node] (1) at (-5.4,2.2) {\scriptsize $\ln(1)$};
  \node[main node](2) at (-5.4,-.2) {\scriptsize $\ln(2)$};
  \node[main node](3) at (-4.3,1)  {\scriptsize $\ln(2)$};
  \node[main node](4) at (-2.8,1) {\scriptsize $\ln(1)$};
  \node[main node] (5) at (-1.3,2.2) {\scriptsize $-\ln(2)$};
  \node[main node](6) at (-1.3,-.2) {\scriptsize $\ln(2)$};
\node(input)[color=purple] at (-7.3,1) {$X$};
\node[terminal,scale=.5](A) at (-7,1.3) {$\ln(1)$};
\node[terminal, scale=.5] (B) at (-7,.7) {$\ln(2)$};
\node(output)[color=purple] at (.3,1) {$Z$};
\node[terminal, scale=.5](E) at (0,1.4) {$-\ln(2)$};
\node[terminal, scale=.5](G) at (0,.7){$ \ln(2)$ };
  \path[every node/.style={font=\sffamily\small}, shorten >=1pt]
    (3) edge [bend left=12] node[above] {$4$} (4)
    (4) edge [bend left=12] node[below] {$2$} (3)
    (2) edge [bend left=12] node[above] {$1$} (3) 
    (3) edge [bend left=12] node[below] {$1$} (2)
    (1) edge [bend left=12] node[above] {$\frac{1}{2}$}(3) 
    (3) edge [bend left=12] node[below] {$1$} (1)
    (6) edge [bend left=12] node[below] {$2$} (4) 
    (4) edge [bend left=12] node[above] {$1$} (6)
    (5) edge [bend left=12] node[below] {$\frac{3}{2}$}(4) 
    (4) edge [bend left=12] node[above] {$3$} (5);
    
\path[color=purple, very thick, shorten >=6pt, ->, >=stealth, bend left] (A) edge (1);
\path[color=purple, very thick, shorten >=6pt, ->, >=stealth, bend right]
(B) edge (2);
\path[color=purple, very thick, shorten >=6pt, ->, >=stealth, bend right] (E) edge (5);
\path[color=purple, very thick, shorten >=6pt, ->, >=stealth, bend left]
(G) edge (6);
\end{tikzpicture}
\]
Note that in order to compose open Markov processes with energies, their energies must agree on their overlap. This ensures that the category $\DetBalMark$ is closed under composition in the sense that  the rates of any composite process will obey Kolmogorov's criterion.

Let's see an example using open Markov processes to model passive diffusion across a membrane. 

\section{Membrane diffusion as an open Markov process}\label{sec:Membrain}

To illustrate these ideas, we consider a simple model of the diffusion of neutral particles across a membrane as an open detailed balanced Markov process with three states $V=\{A,B,C\}$, input $A$ and output $C$. The states $A$ and $C$ correspond to the each side of the membrane, while $B$ corresponds within the membrane itself
\[ \begin{tikzpicture}
	\begin{pgfonlayer}{nodelayer}
		\node [style=empty] (0) at (-0.5, -0) {};
		\node [style=shadecircle] (1) at (-2, -1) {};
		\node [style=shadecircle] (2) at (2.5, -1) {};
		\node [style=empty] (3) at (0.5, -0) {};
		\node [style=shadecircle] (4) at (-2, 1) {};
		\node [style=shadecircle] (5) at (-0.5, -1) {};
		\node [style=empty] (6) at (-3, 1) {};
		\node [style=empty] (7) at (-3.75, 1) {};
		\node [style=empty] (8) at (1, -0) {};
		\node [style=shadecircle] (10) at (-1.5, -1) {};
		\node [style=shadecircle] (11) at (0.5, -1) {};
		\node [style=shadecircle] (12) at (1.5, 1) {};
		\node [style=empty] (13) at (0, -0) {};
		\node [style=empty] (15) at (-3.75, -1) {};
		\node [style=shadecircle] (16) at (0.5, 1) {};
		\node [style=shadecircle] (17) at (-2.5, 1) {};
		\node [style=empty] (18) at (2, -0) {};
		\node [style=empty] (19) at (-5.25, 1) {};
		\node [style=shadecircle] (20) at (0, 1) {};
		\node [style=empty] (21) at (-4.5, 1) {};
		\node [style=empty] (22) at (-5.25, -1) {};
		\node [style=shadecircle] (23) at (-1.5, 1) {};
		\node [style=shadecircle] (24) at (1.5, -1) {};
		\node [style=shadecircle] (25) at (2, 1) {};
		\node [style=shadecircle] (26) at (1, 1) {};
		\node [style=shadecircle] (27) at (1, -1) {};
		\node [style=empty] (28) at (-3, -1) {};
		\node [style=empty] (29) at (1.5, -0) {};
		\node [style=shadecircle] (30) at (2.5, 1) {};
		\node [style=empty] (31) at (-2.5, -0) {};
		\node [style=empty] (32) at (-4.5, -1) {};
		\node [style=empty] (33) at (-1, -0) {};
		\node [style=shadecircle] (34) at (-1, 1) {};
		\node [style=shadecircle] (35) at (-0.5, 1) {};
		\node [style=empty] (36) at (-6, 1) {};
		\node [style=empty] (37) at (2.5, -0) {};
		\node [style=shadecircle] (38) at (0, -1) {};
		\node [style=empty] (39) at (-1.5, -0) {};
		\node [style=empty] (40) at (-6, -1) {};
		\node [style=shadecircle] (41) at (2, -1) {};
		\node [style=shadecircle] (42) at (-1, -1) {};
		\node [style=shadecircle] (44) at (-2.5, -1) {};
		\node [style=empty] (45) at (-2, -0) {};
		\node [style=empty] (9) at (-4.5, 1.5) {$A$};
		\node [style=empty] (14) at (-4.5, -0) {$B$};
		\node [style=empty] (43) at (-4.5, -1.5) {$C$};
	\end{pgfonlayer}
	\begin{pgfonlayer}{edgelayer}
		\draw [style=lipid, bend right=15, looseness=1.00] (17) to (31);
		\draw [style=lipid, bend left=15, looseness=1.00] (17) to (31);
		\draw [style=lipid, bend left=15, looseness=1.00] (44) to (31);
		\draw [style=lipid, bend right=15, looseness=1.00] (44) to (31);
		\draw [style=lipid, bend right=15, looseness=1.00] (4) to (45);
		\draw [style=lipid, bend left=15, looseness=1.00] (4) to (45);
		\draw [style=lipid, bend left=15, looseness=1.00] (1) to (45);
		\draw [style=lipid, bend right=15, looseness=1.00] (1) to (45);
		\draw [style=lipid, bend right=15, looseness=1.00] (23) to (39);
		\draw [style=lipid, bend left=15, looseness=1.00] (23) to (39);
		\draw [style=lipid, bend left=15, looseness=1.00] (10) to (39);
		\draw [style=lipid, bend right=15, looseness=1.00] (10) to (39);
		\draw [style=lipid, bend right=15, looseness=1.00] (34) to (33);
		\draw [style=lipid, bend left=15, looseness=1.00] (34) to (33);
		\draw [style=lipid, bend left=15, looseness=1.00] (42) to (33);
		\draw [style=lipid, bend right=15, looseness=1.00] (42) to (33);
		\draw [style=lipid, bend right=15, looseness=1.00] (35) to (0);
		\draw [style=lipid, bend left=15, looseness=1.00] (35) to (0);
		\draw [style=lipid, bend left=15, looseness=1.00] (5) to (0);
		\draw [style=lipid, bend right=15, looseness=1.00] (5) to (0);
		\draw [style=lipid, bend right=15, looseness=1.00] (20) to (13);
		\draw [style=lipid, bend left=15, looseness=1.00] (20) to (13);
		\draw [style=lipid, bend left=15, looseness=1.00] (38) to (13);
		\draw [style=lipid, bend right=15, looseness=1.00] (38) to (13);
		\draw [style=lipid, bend right=15, looseness=1.00] (16) to (3);
		\draw [style=lipid, bend left=15, looseness=1.00] (16) to (3);
		\draw [style=lipid, bend left=15, looseness=1.00] (11) to (3);
		\draw [style=lipid, bend right=15, looseness=1.00] (11) to (3);
		\draw [style=lipid, bend right=15, looseness=1.00] (26) to (8);
		\draw [style=lipid, bend left=15, looseness=1.00] (26) to (8);
		\draw [style=lipid, bend left=15, looseness=1.00] (27) to (8);
		\draw [style=lipid, bend right=15, looseness=1.00] (27) to (8);
		\draw [style=lipid, bend right=15, looseness=1.00] (12) to (29);
		\draw [style=lipid, bend left=15, looseness=1.00] (12) to (29);
		\draw [style=lipid, bend left=15, looseness=1.00] (24) to (29);
		\draw [style=lipid, bend right=15, looseness=1.00] (24) to (29);
		\draw [style=lipid, bend right=15, looseness=1.00] (25) to (18);
		\draw [style=lipid, bend left=15, looseness=1.00] (25) to (18);
		\draw [style=lipid, bend left=15, looseness=1.00] (41) to (18);
		\draw [style=lipid, bend right=15, looseness=1.00] (41) to (18);
		\draw [style=lipid, bend right=15, looseness=1.00] (30) to (37);
		\draw [style=lipid, bend left=15, looseness=1.00] (30) to (37);
		\draw [style=lipid, bend left=15, looseness=1.00] (2) to (37);
		\draw [style=lipid, bend right=15, looseness=1.00] (2) to (37);
		\draw [style=lipid] (6) to (7);
		\draw [style=lipid] (7) to (21);
		\draw [style=lipid] (21) to (19);
		\draw [style=lipid] (19) to (36);
		\draw [style=lipid] (28) to (15);
		\draw [style=lipid] (15) to (32);
		\draw [style=lipid] (32) to (22);
		\draw [style=lipid] (22) to (40);
	\end{pgfonlayer}
\end{tikzpicture}
\]
In this model, $p_A$ is the number of particles on one side of the membrane, $p_B$ the number of particles within the membrane and $p_C$ the number of particles on the other side of the membrane. The off-diagonal entires in the Hamiltonian $H_{ij}, i \neq j$ are the rates at which probability flows from $j$ to $i$. For example $H_{AB}$ is the rate at which probability flows from $B$ to $A$, or from inside the membrane to the top of the membrane. Let us assume that the membrane is symmetric in the sense that the rate at which particles hop from outside of the membrane to the interior is the same on either side, i.e. $H_{BA} = H_{BC} = H_{\text{in}}$ and $H_{AB} = H_{CB} = H_{\text{out}}$. We can draw such an open Markov process as a labeled graph:
\[
\begin{tikzpicture}[->,>=stealth',shorten >=1pt,thick,scale=1.1]
\node[main node, scale=.65](A) at (-2,0) {$\epsilon_A$};
\node[main node, scale=.65](B) at (1,0) {$\epsilon_B$};
\node[main node, scale=.65](C) at (4,0) {$\epsilon_C$};
\node[terminal,scale=.6](X) at (-4,0) {$\epsilon_A$};
\node[terminal, scale=.6](Y) at (6,0) {$\epsilon_C$};

  \path[every node/.style={font=\sffamily\small}, shorten >=1pt]
    (A) edge [bend left] node[above] {$H_{\text{in}}$} (B)
    (B) edge [bend left] node[below] {$H_{\text{out}}$} (A) 
    (B) edge [bend left] node[above] {$H_{\text{out}}$} (C)
    (C) edge [bend left] node[below] {$H_{\text{in}}$} (B);

\path[color=purple, very thick, shorten >=6pt, ->, >=stealth] (X) edge (A);
\path[color=purple, very thick, shorten >=6pt, ->, >=stealth] (Y) edge (C);

\end{tikzpicture}
\]
The labels on the edges are the corresponding transition rates. The states are labeled by their energies (defined up to an additive constant), which determine their equilibrium probabilities (up to a multiplicative scaling) which up to an overall scaling, are given by $q_A = q_C = H_{\text{in}} H_{\text{out}}$ and $q_B = H_{\text{in}}^2$.
Suppose the values of $p_A$ and $p_C$ are externally maintained at constant values, i.e. whenever a particle diffuses from outside the cell into the membrane, the environment around the cell provides another particle and similarly when particles move from inside the membrane to the outside. We call $(p_A,p_C)$ the \define{boundary probabilities}. Given the values of $p_A$ and $p_C$, the steady state probability $p_B$ compatible with these values is
\[ p_B = \frac{H_{\text{in}}p_A + H_{\text{in}}p_C}{-H_{BB} } = \frac{H_{\text{in}}}{H_{\text{out}}}\frac{p_A + p_C}{2}. \]
In Section \ref{sec:dissipation} we show that this steady state probability minimizes the dissipation, subject to the constraints on $p_A$ and $p_C$.

We thus have a non-equilibrium steady state $p = (p_A, p_B, p_C)$ with $p_B$ given in terms of the boundary probabilities above. From these values we can compute the boundary flows, $J_A, J_C$ as 
\[ J_A = \sum_j J_{Aj}(p) = H_{\text{out}}p_B -H_{\text{in}}p_A \]
and 
\[ J_C = \sum_j J_{Cj}(p) = H_{\text{out}}p_B - H_{\text{in}}p_C. \]
Written in terms of the boundary probabilities this gives
\[ J_A = \frac{ H_{\text{in}}(p_C-p_A)}{2} \]
and
\[ J_C = \frac{H_{\text{in}}(p_A-p_C)}{2}. \]
Note that $J_A = - J_C$ implying that there is a constant net flow through the open Markov process. As one would expect, if $p_A > p_C$ there is a positive flow from $A$ to $C$ and vice-versa. Of course, in actual membranes there exist much more complex transport mechanisms than the simple diffusion model presented here. A number of authors have modeled more complicated transport phenomena using the framework of networked master equation systems \cite{OPKBio, SchnakenBook}.

In our framework, we call the collection of all boundary probability-flows pairs the steady state `behavior' of the open Markov process. The main theorem of \cite{BaezFongP} constructs a functor from the category of open detailed balanced Markov process to the category of linear relations. Applied to an open detailed balanced Markov process, this functor yields the set of allowed steady state boundary probability-flow pairs. One can imagine a situation in which only the probabilities and flows of boundary states are observable, thus characterizing a process in terms of its behavior. This provides an effective `black-boxing' of open detailed balanced Markov processes. 

As morphisms in a category, open detailed balanced Markov processes can be composed, thereby building up more complex processes from these open building blocks. The fact that `black-boxing'  is accomplished via a functor means that the behavior of a composite Markov process can be built up from the composite behaviors of the open Markov processes from which it is built. 

Suppose we had another such membrane 
\[ \begin{tikzpicture}
	\begin{pgfonlayer}{nodelayer}
		\node [style=empty] (0) at (-0.5, -0) {};
		\node [style=shadecircle] (1) at (-2, -1) {};
		\node [style=shadecircle] (2) at (2.5, -1) {};
		\node [style=empty] (3) at (0.5, -0) {};
		\node [style=shadecircle] (4) at (-2, 1) {};
		\node [style=shadecircle] (5) at (-0.5, -1) {};
		\node [style=empty] (6) at (-3, 1) {};
		\node [style=empty] (7) at (-3.75, 1) {};
		\node [style=empty] (8) at (1, -0) {};
		\node [style=shadecircle] (10) at (-1.5, -1) {};
		\node [style=shadecircle] (11) at (0.5, -1) {};
		\node [style=shadecircle] (12) at (1.5, 1) {};
		\node [style=empty] (13) at (0, -0) {};
		\node [style=empty] (15) at (-3.75, -1) {};
		\node [style=shadecircle] (16) at (0.5, 1) {};
		\node [style=shadecircle] (17) at (-2.5, 1) {};
		\node [style=empty] (18) at (2, -0) {};
		\node [style=empty] (19) at (-5.25, 1) {};
		\node [style=shadecircle] (20) at (0, 1) {};
		\node [style=empty] (21) at (-4.5, 1) {};
		\node [style=empty] (22) at (-5.25, -1) {};
		\node [style=shadecircle] (23) at (-1.5, 1) {};
		\node [style=shadecircle] (24) at (1.5, -1) {};
		\node [style=shadecircle] (25) at (2, 1) {};
		\node [style=shadecircle] (26) at (1, 1) {};
		\node [style=shadecircle] (27) at (1, -1) {};
		\node [style=empty] (28) at (-3, -1) {};
		\node [style=empty] (29) at (1.5, -0) {};
		\node [style=shadecircle] (30) at (2.5, 1) {};
		\node [style=empty] (31) at (-2.5, -0) {};
		\node [style=empty] (32) at (-4.5, -1) {};
		\node [style=empty] (33) at (-1, -0) {};
		\node [style=shadecircle] (34) at (-1, 1) {};
		\node [style=shadecircle] (35) at (-0.5, 1) {};
		\node [style=empty] (36) at (-6, 1) {};
		\node [style=empty] (37) at (2.5, -0) {};
		\node [style=shadecircle] (38) at (0, -1) {};
		\node [style=empty] (39) at (-1.5, -0) {};
		\node [style=empty] (40) at (-6, -1) {};
		\node [style=shadecircle] (41) at (2, -1) {};
		\node [style=shadecircle] (42) at (-1, -1) {};
		\node [style=shadecircle] (44) at (-2.5, -1) {};
		\node [style=empty] (45) at (-2, -0) {};
		\node [style=empty] (9) at (-4.5, 1.5) {$C'$};
		\node [style=empty] (14) at (-4.5, -0) {$D$};
		\node [style=empty] (43) at (-4.5, -1.5) {$E$};
	\end{pgfonlayer}
	\begin{pgfonlayer}{edgelayer}
		\draw [style=lipid, bend right=15, looseness=1.00] (17) to (31);
		\draw [style=lipid, bend left=15, looseness=1.00] (17) to (31);
		\draw [style=lipid, bend left=15, looseness=1.00] (44) to (31);
		\draw [style=lipid, bend right=15, looseness=1.00] (44) to (31);
		\draw [style=lipid, bend right=15, looseness=1.00] (4) to (45);
		\draw [style=lipid, bend left=15, looseness=1.00] (4) to (45);
		\draw [style=lipid, bend left=15, looseness=1.00] (1) to (45);
		\draw [style=lipid, bend right=15, looseness=1.00] (1) to (45);
		\draw [style=lipid, bend right=15, looseness=1.00] (23) to (39);
		\draw [style=lipid, bend left=15, looseness=1.00] (23) to (39);
		\draw [style=lipid, bend left=15, looseness=1.00] (10) to (39);
		\draw [style=lipid, bend right=15, looseness=1.00] (10) to (39);
		\draw [style=lipid, bend right=15, looseness=1.00] (34) to (33);
		\draw [style=lipid, bend left=15, looseness=1.00] (34) to (33);
		\draw [style=lipid, bend left=15, looseness=1.00] (42) to (33);
		\draw [style=lipid, bend right=15, looseness=1.00] (42) to (33);
		\draw [style=lipid, bend right=15, looseness=1.00] (35) to (0);
		\draw [style=lipid, bend left=15, looseness=1.00] (35) to (0);
		\draw [style=lipid, bend left=15, looseness=1.00] (5) to (0);
		\draw [style=lipid, bend right=15, looseness=1.00] (5) to (0);
		\draw [style=lipid, bend right=15, looseness=1.00] (20) to (13);
		\draw [style=lipid, bend left=15, looseness=1.00] (20) to (13);
		\draw [style=lipid, bend left=15, looseness=1.00] (38) to (13);
		\draw [style=lipid, bend right=15, looseness=1.00] (38) to (13);
		\draw [style=lipid, bend right=15, looseness=1.00] (16) to (3);
		\draw [style=lipid, bend left=15, looseness=1.00] (16) to (3);
		\draw [style=lipid, bend left=15, looseness=1.00] (11) to (3);
		\draw [style=lipid, bend right=15, looseness=1.00] (11) to (3);
		\draw [style=lipid, bend right=15, looseness=1.00] (26) to (8);
		\draw [style=lipid, bend left=15, looseness=1.00] (26) to (8);
		\draw [style=lipid, bend left=15, looseness=1.00] (27) to (8);
		\draw [style=lipid, bend right=15, looseness=1.00] (27) to (8);
		\draw [style=lipid, bend right=15, looseness=1.00] (12) to (29);
		\draw [style=lipid, bend left=15, looseness=1.00] (12) to (29);
		\draw [style=lipid, bend left=15, looseness=1.00] (24) to (29);
		\draw [style=lipid, bend right=15, looseness=1.00] (24) to (29);
		\draw [style=lipid, bend right=15, looseness=1.00] (25) to (18);
		\draw [style=lipid, bend left=15, looseness=1.00] (25) to (18);
		\draw [style=lipid, bend left=15, looseness=1.00] (41) to (18);
		\draw [style=lipid, bend right=15, looseness=1.00] (41) to (18);
		\draw [style=lipid, bend right=15, looseness=1.00] (30) to (37);
		\draw [style=lipid, bend left=15, looseness=1.00] (30) to (37);
		\draw [style=lipid, bend left=15, looseness=1.00] (2) to (37);
		\draw [style=lipid, bend right=15, looseness=1.00] (2) to (37);
		\draw [style=lipid] (6) to (7);
		\draw [style=lipid] (7) to (21);
		\draw [style=lipid] (21) to (19);
		\draw [style=lipid] (19) to (36);
		\draw [style=lipid] (28) to (15);
		\draw [style=lipid] (15) to (32);
		\draw [style=lipid] (32) to (22);
		\draw [style=lipid] (22) to (40);
			\end{pgfonlayer}
\end{tikzpicture}
\]
This is a morphism in $\DetBalMark$ from with input $Y=\{C'\}$ and output $Z=\{E\}$. Two open detailed balanced Markov processes can be composed if the detailed balanced equilibrium probabilities at the outputs of one match the detailed balanced equilibrium probabilities at the inputs of the other. This requirement guarantees that the composite of two open detailed balanced Markov process still admits a detailed balanced equilibrium.
\[
\begin{tikzpicture}[->,>=stealth',shorten >=1pt,thick,scale=.75, font=\small]
\node[main node, scale=.5](A) at (-2,0) {$\epsilon_A$};
\node[main node, scale=.5](B) at (0,0) {$\epsilon_B$};
\node[main node, scale=.5](C) at (2,0) {$\epsilon_C$};
\node(input)[color=purple] at (-4.8,0) {$X$};
\node(input)[color=purple] at (4.8,0) {$Y$};
\node[terminal,scale=.5](X) at (-4,0) {$\epsilon_A$};
\node[terminal, scale=.5](Y) at (4,0) {$\epsilon_C$};

  \path[every node/.style={font=\scriptsize}, shorten >=1pt]
    (A) edge [bend left] node[above] {$H_{BA}$} (B)
    (B) edge [bend left] node[below] {$H_{AB}$} (A) 
    (B) edge [bend left] node[above] {$H_{CB}$} (C)
    (C) edge [bend left] node[below] {$H_{BC}$} (B);

\path[color=purple, very thick, shorten >=6pt, ->, >=stealth] (X) edge (A);
\path[color=purple, very thick, shorten >=6pt, ->, >=stealth] (Y) edge (C);

\node[main node, scale=.5](C') at (9.3,0) {$\epsilon_{C'}$};
\node[main node, scale=.5](B') at (11.3,0) {$\epsilon_D$};
\node[main node, scale=.5](A') at (13.3,0) {$\epsilon_{E}$};
\node(input)[color=purple] at (6.5,0) {$Y$};
\node(input)[color=purple] at (16.1,0) {$Z$};
\node[terminal,scale=.5](X') at (7.3,0) {$\epsilon_{C'}$};
\node[terminal, scale=.5](Y') at (15.3,0) {$\epsilon_{E}$};

  \path[every node/.style={font=\scriptsize}, shorten >=1pt]
    (C') edge [bend left] node[above] {$H_{DE}$} (B')
    (B') edge [bend left] node[below] {$H_{ED}$} (C') 
    (B') edge [bend left] node[above] {$H_{C'D}$} (A')
    (A') edge [bend left] node[below] {$H_{DC'}$} (B');

\path[color=purple, very thick, shorten >=6pt, ->, >=stealth] (X') edge (C');
\path[color=purple, very thick, shorten >=6pt, ->, >=stealth] (Y') edge (A');

\end{tikzpicture}
\]

If $q_C=q_{C'}$ in our two membrane models we can compose them by identifying $C$ with $C'$ to yield an open detailed balanced Markov process modeling the diffusion of neutral particles across membranes arranged in series: 
\[
\begin{tikzpicture}[->,>=stealth',shorten >=1pt,thick,scale=.8, font=\small]
\node[main node, scale=.5](A) at (-2,0) {$\epsilon_A$};
\node[main node, scale=.5](B) at (0,0) {$\epsilon_B$};
\node[main node, scale=.5](C) at (2,0) {$\epsilon_C$};
\node(input)[color=purple] at (-4.8,0) {$X$};
\node[terminal,scale=.5](X) at (-4,0) {$\epsilon_A$};

  \path[every node/.style={font=\scriptsize}, shorten >=1pt]
    (A) edge [bend left] node[above] {$H_{BA}$} (B)
    (B) edge [bend left] node[below] {$H_{AB}$} (A) 
    (B) edge [bend left] node[above] {$H_{CB}$} (C)
    (C) edge [bend left] node[below] {$H_{BC}$} (B);

\path[color=purple, very thick, shorten >=6pt, ->, >=stealth] (X) edge (A);

\node[main node, scale=.5](D) at (4,0) {$\epsilon_D$};
\node[main node, scale=.5](E) at (6,0) {$\epsilon_{E}$};
\node(input)[color=purple] at (8.8,0) {$Z$};
\node[terminal, scale=.5](Y') at (8,0) {$\epsilon_{E}$};

  \path[every node/.style={font=\scriptsize}, shorten >=1pt]
    (C) edge [bend left] node[above] {$H_{DC}$} (D)
    (D) edge [bend left] node[below] {$H_{CD}$} (C) 
    (D) edge [bend left] node[above] {$H_{ED}$} (E)
    (E) edge [bend left] node[below] {$H_{DE}$} (D);

\path[color=purple, very thick, shorten >=6pt, ->, >=stealth] (Y') edge (E);

\end{tikzpicture}
\]

Notice that the states corresponding to $C$ and $C'$ in each process have been identified and become internal states in the composite which is a morphism from $X=\{A\}$ to $Z=\{E\}$. This open Markov process can be thought of as modeling the diffusion across two membranes in series
\[ \begin{tikzpicture}
	\begin{pgfonlayer}{nodelayer}
		\node [style=empty] (0) at (-0.5, -0) {};
		\node [style=shadecircle] (1) at (-2, -1) {};
		\node [style=shadecircle] (2) at (2.5, -1) {};
		\node [style=empty] (3) at (0.5, -0) {};
		\node [style=shadecircle] (4) at (-2, 1) {};
		\node [style=shadecircle] (5) at (-0.5, -1) {};
		\node [style=empty] (6) at (-3, 1) {};
		\node [style=empty] (7) at (-3.75, 1) {};
		\node [style=empty] (8) at (1, -0) {};
		\node [style=shadecircle] (10) at (-1.5, -1) {};
		\node [style=shadecircle] (11) at (0.5, -1) {};
		\node [style=shadecircle] (12) at (1.5, 1) {};
		\node [style=empty] (13) at (0, -0) {};
		\node [style=empty] (15) at (-3.75, -1) {};
		\node [style=shadecircle] (16) at (0.5, 1) {};
		\node [style=shadecircle] (17) at (-2.5, 1) {};
		\node [style=empty] (18) at (2, -0) {};
		\node [style=empty] (19) at (-5.25, 1) {};
		\node [style=shadecircle] (20) at (0, 1) {};
		\node [style=empty] (21) at (-4.5, 1) {};
		\node [style=empty] (22) at (-5.25, -1) {};
		\node [style=shadecircle] (23) at (-1.5, 1) {};
		\node [style=shadecircle] (24) at (1.5, -1) {};
		\node [style=shadecircle] (25) at (2, 1) {};
		\node [style=shadecircle] (26) at (1, 1) {};
		\node [style=shadecircle] (27) at (1, -1) {};
		\node [style=empty] (28) at (-3, -1) {};
		\node [style=empty] (29) at (1.5, -0) {};
		\node [style=shadecircle] (30) at (2.5, 1) {};
		\node [style=empty] (31) at (-2.5, -0) {};
		\node [style=empty] (32) at (-4.5, -1) {};
		\node [style=empty] (33) at (-1, -0) {};
		\node [style=shadecircle] (34) at (-1, 1) {};
		\node [style=shadecircle] (35) at (-0.5, 1) {};
		\node [style=empty] (36) at (-6, 1) {};
		\node [style=empty] (37) at (2.5, -0) {};
		\node [style=shadecircle] (38) at (0, -1) {};
		\node [style=empty] (39) at (-1.5, -0) {};
		\node [style=empty] (40) at (-6, -1) {};
		\node [style=shadecircle] (41) at (2, -1) {};
		\node [style=shadecircle] (42) at (-1, -1) {};
		\node [style=shadecircle] (44) at (-2.5, -1) {};
		\node [style=empty] (45) at (-2, -0) {};
		\node [style=empty] (9) at (-4.5, 1.5) {$A$};
		\node [style=empty] (14) at (-4.5, -0) {$B$};
		\node [style=empty] (43) at (-4.5, -2) {$C$};
	\end{pgfonlayer}
	\begin{pgfonlayer}{edgelayer}
		\draw [style=lipid, bend right=15, looseness=1.00] (17) to (31);
		\draw [style=lipid, bend left=15, looseness=1.00] (17) to (31);
		\draw [style=lipid, bend left=15, looseness=1.00] (44) to (31);
		\draw [style=lipid, bend right=15, looseness=1.00] (44) to (31);
		\draw [style=lipid, bend right=15, looseness=1.00] (4) to (45);
		\draw [style=lipid, bend left=15, looseness=1.00] (4) to (45);
		\draw [style=lipid, bend left=15, looseness=1.00] (1) to (45);
		\draw [style=lipid, bend right=15, looseness=1.00] (1) to (45);
		\draw [style=lipid, bend right=15, looseness=1.00] (23) to (39);
		\draw [style=lipid, bend left=15, looseness=1.00] (23) to (39);
		\draw [style=lipid, bend left=15, looseness=1.00] (10) to (39);
		\draw [style=lipid, bend right=15, looseness=1.00] (10) to (39);
		\draw [style=lipid, bend right=15, looseness=1.00] (34) to (33);
		\draw [style=lipid, bend left=15, looseness=1.00] (34) to (33);
		\draw [style=lipid, bend left=15, looseness=1.00] (42) to (33);
		\draw [style=lipid, bend right=15, looseness=1.00] (42) to (33);
		\draw [style=lipid, bend right=15, looseness=1.00] (35) to (0);
		\draw [style=lipid, bend left=15, looseness=1.00] (35) to (0);
		\draw [style=lipid, bend left=15, looseness=1.00] (5) to (0);
		\draw [style=lipid, bend right=15, looseness=1.00] (5) to (0);
		\draw [style=lipid, bend right=15, looseness=1.00] (20) to (13);
		\draw [style=lipid, bend left=15, looseness=1.00] (20) to (13);
		\draw [style=lipid, bend left=15, looseness=1.00] (38) to (13);
		\draw [style=lipid, bend right=15, looseness=1.00] (38) to (13);
		\draw [style=lipid, bend right=15, looseness=1.00] (16) to (3);
		\draw [style=lipid, bend left=15, looseness=1.00] (16) to (3);
		\draw [style=lipid, bend left=15, looseness=1.00] (11) to (3);
		\draw [style=lipid, bend right=15, looseness=1.00] (11) to (3);
		\draw [style=lipid, bend right=15, looseness=1.00] (26) to (8);
		\draw [style=lipid, bend left=15, looseness=1.00] (26) to (8);
		\draw [style=lipid, bend left=15, looseness=1.00] (27) to (8);
		\draw [style=lipid, bend right=15, looseness=1.00] (27) to (8);
		\draw [style=lipid, bend right=15, looseness=1.00] (12) to (29);
		\draw [style=lipid, bend left=15, looseness=1.00] (12) to (29);
		\draw [style=lipid, bend left=15, looseness=1.00] (24) to (29);
		\draw [style=lipid, bend right=15, looseness=1.00] (24) to (29);
		\draw [style=lipid, bend right=15, looseness=1.00] (25) to (18);
		\draw [style=lipid, bend left=15, looseness=1.00] (25) to (18);
		\draw [style=lipid, bend left=15, looseness=1.00] (41) to (18);
		\draw [style=lipid, bend right=15, looseness=1.00] (41) to (18);
		\draw [style=lipid, bend right=15, looseness=1.00] (30) to (37);
		\draw [style=lipid, bend left=15, looseness=1.00] (30) to (37);
		\draw [style=lipid, bend left=15, looseness=1.00] (2) to (37);
		\draw [style=lipid, bend right=15, looseness=1.00] (2) to (37);
		\draw [style=lipid] (6) to (7);
		\draw [style=lipid] (7) to (21);
		\draw [style=lipid] (21) to (19);
		\draw [style=lipid] (19) to (36);
		\draw [style=lipid] (28) to (15);
		\draw [style=lipid] (15) to (32);
		\draw [style=lipid] (32) to (22);
		\draw [style=lipid] (22) to (40);
			\end{pgfonlayer}
\end{tikzpicture}
\]
\[ \begin{tikzpicture}
	\begin{pgfonlayer}{nodelayer}
		\node [style=empty] (0) at (-0.5, -0) {};
		\node [style=shadecircle] (1) at (-2, -1) {};
		\node [style=shadecircle] (2) at (2.5, -1) {};
		\node [style=empty] (3) at (0.5, -0) {};
		\node [style=shadecircle] (4) at (-2, 1) {};
		\node [style=shadecircle] (5) at (-0.5, -1) {};
		\node [style=empty] (6) at (-3, 1) {};
		\node [style=empty] (7) at (-3.75, 1) {};
		\node [style=empty] (8) at (1, -0) {};
		\node [style=shadecircle] (10) at (-1.5, -1) {};
		\node [style=shadecircle] (11) at (0.5, -1) {};
		\node [style=shadecircle] (12) at (1.5, 1) {};
		\node [style=empty] (13) at (0, -0) {};
		\node [style=empty] (15) at (-3.75, -1) {};
		\node [style=shadecircle] (16) at (0.5, 1) {};
		\node [style=shadecircle] (17) at (-2.5, 1) {};
		\node [style=empty] (18) at (2, -0) {};
		\node [style=empty] (19) at (-5.25, 1) {};
		\node [style=shadecircle] (20) at (0, 1) {};
		\node [style=empty] (21) at (-4.5, 1) {};
		\node [style=empty] (22) at (-5.25, -1) {};
		\node [style=shadecircle] (23) at (-1.5, 1) {};
		\node [style=shadecircle] (24) at (1.5, -1) {};
		\node [style=shadecircle] (25) at (2, 1) {};
		\node [style=shadecircle] (26) at (1, 1) {};
		\node [style=shadecircle] (27) at (1, -1) {};
		\node [style=empty] (28) at (-3, -1) {};
		\node [style=empty] (29) at (1.5, -0) {};
		\node [style=shadecircle] (30) at (2.5, 1) {};
		\node [style=empty] (31) at (-2.5, -0) {};
		\node [style=empty] (32) at (-4.5, -1) {};
		\node [style=empty] (33) at (-1, -0) {};
		\node [style=shadecircle] (34) at (-1, 1) {};
		\node [style=shadecircle] (35) at (-0.5, 1) {};
		\node [style=empty] (36) at (-6, 1) {};
		\node [style=empty] (37) at (2.5, -0) {};
		\node [style=shadecircle] (38) at (0, -1) {};
		\node [style=empty] (39) at (-1.5, -0) {};
		\node [style=empty] (40) at (-6, -1) {};
		\node [style=shadecircle] (41) at (2, -1) {};
		\node [style=shadecircle] (42) at (-1, -1) {};
		\node [style=shadecircle] (44) at (-2.5, -1) {};
		\node [style=empty] (45) at (-2, -0) {};

		\node [style=empty] (14) at (-4.5, -0) {$D$};
		\node [style=empty] (43) at (-4.5, -1.5) {$E$};
	\end{pgfonlayer}
	\begin{pgfonlayer}{edgelayer}
		\draw [style=lipid, bend right=15, looseness=1.00] (17) to (31);
		\draw [style=lipid, bend left=15, looseness=1.00] (17) to (31);
		\draw [style=lipid, bend left=15, looseness=1.00] (44) to (31);
		\draw [style=lipid, bend right=15, looseness=1.00] (44) to (31);
		\draw [style=lipid, bend right=15, looseness=1.00] (4) to (45);
		\draw [style=lipid, bend left=15, looseness=1.00] (4) to (45);
		\draw [style=lipid, bend left=15, looseness=1.00] (1) to (45);
		\draw [style=lipid, bend right=15, looseness=1.00] (1) to (45);
		\draw [style=lipid, bend right=15, looseness=1.00] (23) to (39);
		\draw [style=lipid, bend left=15, looseness=1.00] (23) to (39);
		\draw [style=lipid, bend left=15, looseness=1.00] (10) to (39);
		\draw [style=lipid, bend right=15, looseness=1.00] (10) to (39);
		\draw [style=lipid, bend right=15, looseness=1.00] (34) to (33);
		\draw [style=lipid, bend left=15, looseness=1.00] (34) to (33);
		\draw [style=lipid, bend left=15, looseness=1.00] (42) to (33);
		\draw [style=lipid, bend right=15, looseness=1.00] (42) to (33);
		\draw [style=lipid, bend right=15, looseness=1.00] (35) to (0);
		\draw [style=lipid, bend left=15, looseness=1.00] (35) to (0);
		\draw [style=lipid, bend left=15, looseness=1.00] (5) to (0);
		\draw [style=lipid, bend right=15, looseness=1.00] (5) to (0);
		\draw [style=lipid, bend right=15, looseness=1.00] (20) to (13);
		\draw [style=lipid, bend left=15, looseness=1.00] (20) to (13);
		\draw [style=lipid, bend left=15, looseness=1.00] (38) to (13);
		\draw [style=lipid, bend right=15, looseness=1.00] (38) to (13);
		\draw [style=lipid, bend right=15, looseness=1.00] (16) to (3);
		\draw [style=lipid, bend left=15, looseness=1.00] (16) to (3);
		\draw [style=lipid, bend left=15, looseness=1.00] (11) to (3);
		\draw [style=lipid, bend right=15, looseness=1.00] (11) to (3);
		\draw [style=lipid, bend right=15, looseness=1.00] (26) to (8);
		\draw [style=lipid, bend left=15, looseness=1.00] (26) to (8);
		\draw [style=lipid, bend left=15, looseness=1.00] (27) to (8);
		\draw [style=lipid, bend right=15, looseness=1.00] (27) to (8);
		\draw [style=lipid, bend right=15, looseness=1.00] (12) to (29);
		\draw [style=lipid, bend left=15, looseness=1.00] (12) to (29);
		\draw [style=lipid, bend left=15, looseness=1.00] (24) to (29);
		\draw [style=lipid, bend right=15, looseness=1.00] (24) to (29);
		\draw [style=lipid, bend right=15, looseness=1.00] (25) to (18);
		\draw [style=lipid, bend left=15, looseness=1.00] (25) to (18);
		\draw [style=lipid, bend left=15, looseness=1.00] (41) to (18);
		\draw [style=lipid, bend right=15, looseness=1.00] (41) to (18);
		\draw [style=lipid, bend right=15, looseness=1.00] (30) to (37);
		\draw [style=lipid, bend left=15, looseness=1.00] (30) to (37);
		\draw [style=lipid, bend left=15, looseness=1.00] (2) to (37);
		\draw [style=lipid, bend right=15, looseness=1.00] (2) to (37);
		\draw [style=lipid] (6) to (7);
		\draw [style=lipid] (7) to (21);
		\draw [style=lipid] (21) to (19);
		\draw [style=lipid] (19) to (36);
		\draw [style=lipid] (28) to (15);
		\draw [style=lipid] (15) to (32);
		\draw [style=lipid] (32) to (22);
		\draw [style=lipid] (22) to (40);
			\end{pgfonlayer}
\end{tikzpicture}
\]

\chapter{Black-boxing open Markov processes}\label{ch:blackbox}

In this section we describe a functorial method for `black-boxing' open Markov processes. To do so we exploit an analogy between Markov processes with energies, (a.k.a detailed balanced Markov processes) and electrical circuits made of resistors connected by perfectly conductive wires. In his book, Kelly \cite{Kelly} describes a similar analogy between open Markov processes and electrical circuits. We consider the situation in which the probabilities of boundary states of an open Markov process are held externally fixed via the coupling to its environment, external systems or to various reservoirs. We then study the possible steady state flows of probability through the open Markov process as they depend of the prescribed boundary probabilities. One can imagine the analogous problem in which certain nodes of an electrical network are held at fixed potentials, inducing steady state currents flowing through the system. 

The crucial difference between Markov processes and electrical circuits made of resistors is that resistors, the `edges' of an electrical circuit, have no built in `directionality.' In particular if $C_{ij} \in (0,\infty)$ is the conductance between two nodes $i$ and $j$, then we have $C_{ij} = C_{ji}$. As we saw earlier, the rates of a Markov process need not satisfy this property, that is in general $H_{ij} \neq H_{ji}$. This is part of the reason we restrict our attention to Markov processes admitting an equilibrium distribution satisfying detailed balance. In this way we can `symmetrize' a Markov process by using the fact that a detailed balanced equilibrium $q$ satisfies $H_{ij} q_j = H_{ji} q_i$. The quantity $H_{ij}q_j$ is the equilibrium flow of probability from $j$ to $i$, which for a detailed balanced $q$ will equal the equilibrium flow of probability from $i$ to $j$. This quantity will play the role of the symmetric conductance $C_{ij}$, the inverse of the resistance between nodes $i$ and $j$, in a corresponding electrical network. If one then externally forces the boundary probabilities to values differing from their equilibrium values, the Markov process will approach a steady state in which probability flows through the system. We show that such steady state minimizes a quadratic form, which we call `dissipation,' analogous to the power dissipation functional of an electrical network made of resistors. 

We thus begin this chapter with an outline of our approach to black-boxing Markov processes with energies. First, recall that a detailed balanced Markov process is one whose rates satisfy Kolmogorov's criterion. The rates of a Markov process with energies $\epsilon \maps V \to \R$ satisfy $\frac{H{ij}}{H_{ji}} = e^{-\beta(\epsilon_i - \epsilon_j)}$ for some $\beta$. We saw in Section \ref{sec:balance} that this is equivalent to Kolmogorov's criterion. Such a process admits a detailed balanced equilibrium distribution of the form $q_i = \frac{e^{-\beta \epsilon_i}}{\mathcal{Z}}$ for some constant normalizing factor $\mathcal{Z}$. Thus, we consider Markov processes with a detailed balanced equilibrium $q \maps V \to (0,\infty) $ and Markov processes with energies $\epsilon \maps V \to \R$ interchangeably. 

We show that there is a functor sending any detailed balanced Markov process to an electrical circuit, this requires a description of the category $\Circ$ whose morphisms are electrical circuits made of resistors equipped with maps specifying certain boundary nodes. We then apply the black-boxing functor for electrical networks due to Baez and Fong \cite{BaezFongCirc}. We should note that recently a new proof of the black-box theorem for electrical circuits was given which uses PROPs \cite{BaezCoyaRebro}. This diagram summarizes our method of black boxing detailed balanced Markov processes: 
\[
   \xy
   (-20,20)*+{\DetBalMark}="1";
  (20,20)*+{\Circ}="2";
   (0,-10)*+{\LinRel}="5";
        {\ar^{K} "1";"2"};
        {\ar_{\square} "1";"5"};
        {\ar^{\blacksquare} "2";"5"};
\endxy
\]
Here we draw the black box functor $\square \maps \DetBalMark \to \LinRel$ as a white square $\square$ in order to distinguish it from the black-box functor for electrical circuits $\blacksquare \maps \Circ \to \LinRel$. Here, $\Circ$ is a decorated cospan category whose morphisms correspond to isomorphism classes of passive electrical networks made of resistors, capacitors, and inductors. We introduced the category $\LinRel$ of linear relations in Chapter 2 as an example of a behavior category. Objects in $\LinRel$ are vector spaces and morphisms are linear relations between vector spaces, which can be thought of as selecting out particular subspaces of the direct sum of the vector spaces. We now describe these categories in more detail before proving the black-box theorem for Markov processes with energies.

\section{Black-boxing open circuits}

A morphism in the category $\Circ$ is an electrical circuit made of resistors: that is,
a (directed) graph with each edge labelled by a `conductance' $c_e > 0$, again with specified input and output nodes:
\[
\begin{tikzpicture}[circuit ee IEC, set resistor graphic=var resistor IEC graphic]
\node[contact] (I1) at (0,2) {};
\node[contact] (I2) at (0,0) {};
\node[contact] (O1) at (5.83,1) {};
\node(input) at (-2,1) {\small{\textsf{inputs}}};
\node(output) at (7.83,1) {\small{\textsf{outputs}}};
\draw (I1) 	to [resistor] node [label={[label distance=2pt]85:{$3$}}] {} (2.83,1);
\draw (I2)	to [resistor] node [label={[label distance=2pt]275:{$1$}}] {} (2.83,1)
				to [resistor] node [label={[label distance=3pt]90:{$4$}}] {} (O1);
\path[color=gray, very thick, shorten >=10pt, ->, >=stealth, bend left] (input) edge (I1);		\path[color=gray, very thick, shorten >=10pt, ->, >=stealth, bend right] (input) edge (I2);		
\path[color=gray, very thick, shorten >=10pt, ->, >=stealth] (output) edge (O1);
\end{tikzpicture}
\]

Finally, a morphism in the category $\LinRel$ is a linear relation $F \maps U \leadsto V$ between finite-dimensional real vector spaces $U$ and $V$; this is nothing but a linear subspace of $U \oplus V$.  In earlier work \cite{BaezFongCirc} we introduced the category $\Circ$ and the `black box functor' 
\[    \blacksquare \maps \Circ \to \LinRel . \]
The idea is that any circuit determines a linear relation between the potentials and net current flows at the inputs and outputs.   This relation describes the behavior of a circuit of resistors as seen from outside.

The functor $K$ converts a detailed balanced Markov process into an electrical circuit made of resistors.  This circuit is carefully chosen to reflect the steady-state behavior of the Markov process.   Its underlying graph is the same as that of the Markov process, so the `states' of the Markov process are the same as the `nodes' of the circuit.
Both the equilibrium probabilities at states of the Markov process and the rate constants labelling edges of the Markov process are used to compute the conductances of edges
of this circuit.  In the simple case where the Markov process has exactly one edge from
any state $i$ to any state $j$, the rule is
\[                 C_{i j} = H_{i j} q_j   \]
where:
\begin{itemize}
\item $q_j$ is the equilibrium probability of the $j$th state of the Markov process,
\item  $H_{i j}$ is the rate constant for the edge from the $j$th state to the $i$th state of the Markov process, and
\item  $C_{i j}$ is the conductance (that is, the reciprocal of the resistance) of the wire from the $j$th node to the $i$th node
of the resulting circuit.
\end{itemize}
The detailed balance condition for Markov processes says precisely that
the matrix $C_{i j}$ is symmetric.  This is just right for an electrical circuit made of resistors, since it means that the resistance of the wire from node $i$ to node $j$ 
equals the resistance of the same wire in the reverse direction, from node $j$ to node $i$.

The functor $\square$ maps any detailed balanced Markov process to the linear relation obeyed by probabilities and flows at the inputs and outputs in a steady state.  In short, it describes the steady state behavior of the Markov process `as seen from outside'.  We draw this functor as a white box merely to
distinguish it from the other black box functor.

The triangle of functors thus constructed does not commute!   However, 
a general lesson of category theory is that we should only expect diagrams of
functors to commute \emph{up to natural isomorphism}, and this is what happens here:
\[
   \xy
   (-20,20)*+{\DetBalMark}="1";
  (20,20)*+{\Circ}="2";
   (0,-10)*+{\LinRel}="5";
        {\ar^{K} "1";"2"};
        {\ar_{\square} "1";"5"};
        {\ar^{\blacksquare} "2";"5"};
        {\ar@{=>}^<<{\scriptstyle \alpha} (1,11); (-3,8)};
\endxy
\]
This `corrects' the black box functor for resistors to give the one for detailed
balanced Markov processes.   The functors $\square$ and $\blacksquare
\circ K$ are equal on objects.  An object in $\DetBalMark$ is a finite set $X$
with each element $i \in X$ labelled by an energy $\epsilon_i \in \R$; both these
functors map such an object to the vector space $\R^X \oplus \R^X$.    For the
functor $\square$, we think of this as a space of probability-flow pairs.  For
the functor $\blacksquare \circ K$, we think of it as a space of
potential-current pairs, since $K$ converts Markov processes to circuits made of
resistors.  The natural transformation $\alpha$ then gives a linear relation 
\[ \alpha_{X,\epsilon} \maps \R^X \oplus \R^X \leadsto \R^X \oplus \R^X ,\]
in fact an isomorphism of vector spaces, which converts potential-current pairs
into probability-flow pairs in a manner that depends on the $q_i$.  This
isomorphism maps any $n$-tuple of potentials and currents $(\phi_i, \iota_i)$
into the $n$-tuple of probabilities and flows $(p_i, j_i)$ given by
\[          p_i = \phi_i q_i,  \qquad   j_i = \iota_i  .\]

The naturality of $\alpha$ actually allows us to reduce the problem of computing
the functor $\square$ to the problem of computing $\blacksquare$.   Suppose $M
\maps (X,\epsilon) \to (Y,\upsilon)$ is any morphism in $\DetBalMark$.  The object $(X,\epsilon)$ is some finite set $X$ labelled by energies $\epsilon$, and $(Y,\upsilon)$ is some finite set
$Y$ labelled by energies $\upsilon$.  Then the naturality of $\alpha$ means that
this square commutes:
\[
  \xymatrix{
    \R^X \oplus \R^X \ar[rr]^{\blacksquare K(M)} \ar[dd]_{\alpha_{X,\epsilon}} &&  \R^Y \oplus \R^Y 
    \ar[dd]^{\alpha_{Y,\upsilon}} \\ \\ 
    \R^X \oplus \R^X \ar[rr]^{\square(M)} &&  \R^Y \oplus \R^Y 
  }
\]
Since $\alpha_{X,\epsilon}$ and $\alpha_{Y,\upsilon}$ are isomorphisms, we can solve for the
functor $\square$:
\[   \square(M) = \alpha_{Y,\upsilon}\circ \blacksquare K(M) \circ \alpha_{X,\epsilon}^{-1}  .
\]
This equation has a clear intuitive meaning: it says that to compute the behavior of
a detailed balanced Markov process, namely $\square(M)$, we convert it into a circuit made of resistors and compute the behavior of that, namely $\blacksquare K(M)$.  This is
not \emph{equal} to the behavior of the Markov process, but we can compute that 
behavior by converting the input probabilities and flows into potentials and
currents, feeding them into our circuit, and then converting the outputs back into probabilities and flows. 

\section{Principle of minimum dissipation}\label{sec:dissipation}

In Chapter \ref{ch:entropymark} we saw that by externally fixing the probabilities at boundary states of an open detailed balanced Markov process, one induces steady states which minimize a quadratic form which we called dissipation
\[ D(p) = \frac{1}{2}\sum_{i,j} H_{ij}q_j \left( \frac{p_j}{q_j} - \frac{p_i}{q_i} \right)^2 \]
where the sum is over all pairs of states.

We can also write the dissipation as a sum over edges in the open detailed balanced Markov process
\[ D(p) = \frac{1}{2} \sum_{e \in E} r_e q_{s(e)} \left( \frac{p_{s(e)}}{q_{s(e)}} - \frac{p_{t(e)}}{q_{t(e)}} \right)^2. \]

Given specified boundary probabilities, one can compute the steady state boundary flows by minimizing the dissipation subject to the boundary conditions.
\begin{defn}
We call a probability-flow pair a \define{steady state probability-flow pair} if the flows arise from a probability distribution which obeys the principle of minimum dissipation. 
\end{defn}
\begin{defn}
The \define{behavior} of an open detailed balanced Markov process with boundary $B$ is the set of all steady state probability-flow pairs $(p_B, J_B)$ along the boundary.
\end{defn}

The black-boxing functor $\square \maps \DetBalMark \to \LinRel$ maps open detailed balanced Markov processes to their steady state behaviors. The fact that this is a functor means that the behavior of a composite open detailed balanced Markov process can be computed as the composite of the behaviors in $\LinRel$. 

\section{From detailed balanced Markov processes to electrical circuits}\label{sec:reduction}

Comparing the dissipation of a detailed balanced open
Markov process:
\[ D(p) = \frac{1}{2} \sum_{e \in E} r_e q_{s(e)} \left( \frac{p_{s(e)}}{q_{s(e)}} - \frac{p_{t(e)}}{q_{t(e)}} \right)^2. \]

to the extended power functional of a circuit:
\[ P(\phi) = \frac{1}{2} \sum_{e \in E} c_e ( \phi_{s(e)} - \phi_{t(e)} )^2 \]
suggests the following correspondence:
\[ \begin{array}{ccc} 
\displaystyle{ \frac{p_i}{q_i} } &\leftrightarrow & \phi_i \\ \\
\displaystyle{ r_e q_{s(e)}} &\leftrightarrow & c_e . 
\end{array} \]
This sharpens this analogy between detailed balanced Markov processes and circuits made of resistors.   In this analogy, the probability of a state 
is \emph{roughly} analogous to the electric potential at a node.  However, it is really the `deviation', the ratio of the probability to the equilibrium probability, that 
is analogous to the electric potential.  Similarly, the rate constant of an edge is \emph{roughly}  analogous to the conductance of an edge.  However, in a circuit each edge allows for flow in both directions, while in a Markov process each edge allows for flows in only one direction.

We thus shall convert an open detailed balanced Markov process $M \maps X \to Y$, namely:
\[ 
  \xymatrix{ 
  && X \ar[d]^{i} \\
(0,\infty) & E \ar[l]_-r \ar[r]<-.5ex>_t \ar[r] <.5ex>^s & V \ar[r] <.5ex>^\epsilon &
(0,\infty) \\ 
&& Y \ar[u]_{o}} 
\]
into an open circuit $K(M) \maps X \to Y$, namely:
\[ 
  \xymatrix{  
    && X \ar[d]^{i} \\
    (0,\infty) & E \ar[l]_-{c} \ar[r]<-.5ex>_t  \ar[r] <.5ex>^s & V  \\
    && Y \ar[u]_{o}
  }
\]
where 
\[  c_e =  r_e q_{s(e)}. \]
For the open detailed balanced Markov process with two states depicted below this
map $K$ has the following effect:
\[ 
\begin{tikzpicture}[circuit ee IEC, set resistor graphic=var resistor IEC graphic,->,>=stealth',shorten >=1pt,]
  \node[main node](1) {$2$};
  \node[main node](2) [right=1.5cm of 1] {$1$};

\node(A) [terminal, left=1cm of 1] {$2$};
\node(B) [terminal, above right=0.1cm and 1cm of 2] {$1$};
\node(C) [terminal, below right=0.1cm and 1cm of 2] {$1$};
\node[right=1.7 cm of 2,color=purple] {$Y$};
\node[left=0.4 cm of A,color=purple] {$X$};
  \path[every node/.style={font=\sffamily\small}, shorten >=1pt]
    (1) edge [bend left=15] node[above] {$3$}(2) 
    (2) edge [bend left=15] node[below] {$6$} (1);
    
\path[color=purple, very thick, shorten >=10pt, ->, >=stealth] (A) edge (1);
\path[color=purple, very thick, shorten >=10pt, ->, >=stealth] (B) edge (2);
\path[color=purple, very thick, shorten >=10pt, ->, >=stealth] (C) edge (2);

\node(D) [below right=.3 cm and 0.75cm of 1] {}; 
\node(E) [below=1cm of D] {};
  \path[every node/.style={font=\sffamily\small}, shorten >=1pt]
  (D) edge node[left] {$K$} (E);
\end{tikzpicture} 
\]
\[
\hskip 2em 
\begin{tikzpicture}[circuit ee IEC, set resistor graphic=var resistor IEC
    graphic,scale=.9]
    \node[contact]         (A) at (0,0) {};
    \node[contact]         (B) at (3,0) {};
    
        \coordinate         (ua) at (.5,.25) {};
    \coordinate         (ub) at (2.5,.25) {};
    \coordinate         (la) at (.5,-.25) {};
    \coordinate         (lb) at (2.5,-.25) {};
    \path (A) edge (ua);
    \path (A) edge (la);
    \path (B) edge (ub);
    \path (B) edge (lb);
    \path (ua) edge  [resistor] node[label={[label distance=1pt]90:{$6$}}] {} (ub);
    \path (la) edge  [resistor] node[label={[label distance=1pt]270:{$6$}}] {} (lb);
    
    \node(L) [left=1cm of A,circle,draw,inner sep=1pt,fill=gray,color=purple] {};     
\path[color=purple, very thick, shorten >=10pt, ->, >=stealth] (L) edge (A);  
\node[left=1.15 cm of A,color=purple] {$X$};
\node(R) [above right=0.2cm and 1cm of B,circle,draw,inner sep=1pt,fill=gray,color=purple] {};
\node(R2) [below right=0.2cm and 1cm of B,circle,draw,inner sep=1pt,fill=gray,color=purple] {};
\path[color=purple, very thick, shorten >=10pt, ->, >=stealth] (R) edge (B);
\path[color=purple, very thick, shorten >=10pt, ->, >=stealth] (R2) edge (B);
\node[right=1.25 cm of B,color=purple] {$Y$};
   \end{tikzpicture} 
\]

This analogy is stronger than a mere visual resemblance. The behavior for the Markov process $M$ is easily obtained from the behavior of the circuit $K(M)$.  Indeed, write $D_M$ for the extended dissipation functional of the open detailed balanced Markov process $M$, and $P_{K(M)}$ for the extended power functional of the open circuit $K(M)$. Then
\[
  D_M(p) = P_{K(M)}(\tfrac{p}{q}).
\]
Minimizing over the interior, we also have the equivalent fact for the
dissipation functional $F_M$ and power functional $Q_{K(M)}$:
\begin{lem}
  Let $M$ be an open detailed balanced Markov process, and let $K(M)$ be the
  corresponding open circuit. Then
\[
  F_M(p) = Q_{K(M)}(\tfrac{p}{q}),
\]
where $F_M$ is the dissipation functional for $M$ and $Q_{K(M)}$ is the power
functional for $K(M)$.
\end{lem}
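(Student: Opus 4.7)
The plan is to exploit the pointwise identity $D_M(p) = P_{K(M)}(p/q)$, which was already established by a direct comparison of the summand-by-summand definitions (using $c_e = r_e q_{s(e)}$), and then pass to the minima that define $F_M$ and $Q_{K(M)}$.

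First I would fix boundary data. Recall that $F_M(p)$ is obtained from the extended dissipation $D_M$ by minimizing over all probability distributions whose restriction to the boundary $B = i(X) \cup o(Y)$ equals $p|_B$; similarly $Q_{K(M)}(\phi)$ is the minimum of the extended power $P_{K(M)}(\phi')$ over all $\phi'$ agreeing with $\phi$ on $B$. I would then observe that componentwise division by the (strictly positive) detailed balanced equilibrium $q$ gives a bijection
\[
\Psi \maps (0,\infty)^V \longrightarrow (0,\infty)^V, \qquad \Psi(p)_i = \frac{p_i}{q_i},
\]
with inverse $\phi \mapsto q\phi$ (componentwise). Crucially, $\Psi$ sends the affine slice $\{p' : p'|_B = p|_B\}$ bijectively onto the affine slice $\{\phi' : \phi'|_B = (p/q)|_B\}$, because fixing $p'_i$ for $i \in B$ is equivalent to fixing $\phi'_i = p'_i/q_i$ for $i \in B$.

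Next I would combine this bijection with the pointwise identity from the preceding step of the text: for every $p' \in (0,\infty)^V$,
\[
D_M(p') \;=\; P_{K(M)}\!\left(\tfrac{p'}{q}\right) \;=\; P_{K(M)}(\Psi(p')).
\]
Taking the infimum of both sides over all $p'$ that extend the boundary values $p|_B$, and using that $\Psi$ bijects this set with the set of $\phi'$ extending $(p/q)|_B$, yields
\[
F_M(p) \;=\; \inf_{p'|_B = p|_B} D_M(p') \;=\; \inf_{\phi'|_B = (p/q)|_B} P_{K(M)}(\phi') \;=\; Q_{K(M)}\!\left(\tfrac{p}{q}\right),
\]
which is the claimed identity.

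The only subtle point, and the one I would double-check, is that the minimizer on each side actually exists and is achieved at matching arguments, so that one can speak of $F_M$ and $Q_{K(M)}$ as functions of the boundary data rather than as infima; this follows because the extended functionals are convex quadratic forms on the affine slices with coercivity guaranteed by $q > 0$ and $r_e, c_e > 0$. No new analytic estimate is needed beyond what was used to define $Q_{K(M)}$ for circuits in the previous work of Baez--Fong, and the argument is otherwise purely a change of variables.
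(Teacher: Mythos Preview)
Your proposal is correct and matches the paper's approach exactly: the paper simply states the pointwise identity $D_M(p)=P_{K(M)}(p/q)$ and then says ``Minimizing over the interior, we also have the equivalent fact for the dissipation functional $F_M$ and power functional $Q_{K(M)}$,'' offering no further details. Your write-up is a faithful expansion of that one-line argument, making explicit the bijection $p'\mapsto p'/q$ between the two constrained domains of minimization.
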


Consider now $\Graph(\nabla F_M)$ and $\Graph(\nabla Q_{K(M)})$. These are both
subspaces of $\R^T \oplus \R^T$. For any set with probabilities $(T,q)$, define
the function
\begin{align*}  
\alpha_{T,q}\maps \R^T \oplus \R^T &\longrightarrow \R^T \oplus \R^T; \\
(\phi,\iota) &\longmapsto (q\phi,\iota).
\end{align*}
Then if $T$ is the set of terminals and $q\maps T \to (0,\infty)$ is the
restriction of the probabilities function $q\maps N \to (0,\infty)$ to the
terminals, we see that 
\[
  \Graph(\nabla F_M) = \alpha_{T,q}(\Graph(\nabla Q_{K(M)})).
\]
Note here that we are applying $\alpha_{T,q}$ pointwise to the subspace
$\Graph(\nabla Q_{K(M)})$ to arrive at the subspace $\Graph(\nabla F_M)$.

Observe that $\alpha_{T,q}$ acts as the identity on the `current' or `flow'
summand, while $S[i,o]$ acts simply as precomposition by $[i,o]$ on the
`potential' or `probability' summand. This implies the equality of the composite
relations
\[
  S[i,o] \alpha_{T,q} = (\alpha_{X,qi} \oplus \alpha_{Y,qo})S[i,o]
\]
as relations 
\[
  \R^T \oplus \R^T \leadsto \R^X \oplus \R^X \oplus \R^Y \oplus \R^Y.
\]
In summary, we have arrived at the following theorem.
\begin{thm} \label{thm.behaviors}
  Let $M$ be an open detailed balanced Markov process, and let $K(M)$ be the
  corresponding open circuit. Then
\[
  S[i,o]\Graph(\nabla F_M) = \big(\alpha_{X,qi} \oplus
  \alpha_{Y,qo}\big)S[i,o]\big(\Graph(\nabla Q_{K(M)})\big).
\]
where $F_M$ is the dissipation functional for $M$ and $Q_{K(M)}$ is the power
functional for $K(M)$.
\end{thm}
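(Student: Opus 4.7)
The plan is to chain together the two identities established in the discussion immediately preceding the statement, so the theorem becomes a one-line calculation. First I would invoke the previous lemma, $F_M(p) = Q_{K(M)}(p/q)$, to obtain the pointwise identification
\[
  \Graph(\nabla F_M) \;=\; \alpha_{T,q}\bigl(\Graph(\nabla Q_{K(M)})\bigr)
\]
of subspaces of $\R^T \oplus \R^T$. This is essentially the chain rule under the change of variables $p_i = q_i\phi_i$: differentiating $F_M(p) = Q_{K(M)}(p/q)$ gives $\partial F_M/\partial p_i = (1/q_i)\,\partial Q_{K(M)}/\partial\phi_i$, and since the flow/current components are untouched by the rescaling, this is exactly the pointwise action of $\alpha_{T,q}\maps(\phi,\iota)\mapsto(q\phi,\iota)$.

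Next I would apply the relation $S[i,o]$ to both sides. The key ingredient is the intertwining identity already noted in the text,
\[
  S[i,o]\,\alpha_{T,q} \;=\; \bigl(\alpha_{X,qi} \oplus \alpha_{Y,qo}\bigr)\,S[i,o],
\]
which commutes $\alpha$ past the boundary restriction. This holds essentially by inspection: $\alpha$ is the identity on the flow summand while on the probability summand it multiplies coordinate-wise by $q$, and $S[i,o]$ acts on the probability summand only by precomposition with $[i,o]$; restricting $q$ along the input and output maps produces exactly the factors $qi$ and $qo$ on the right-hand side.

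Putting the two steps together yields
\[
  S[i,o]\Graph(\nabla F_M) \;=\; S[i,o]\,\alpha_{T,q}\bigl(\Graph(\nabla Q_{K(M)})\bigr) \;=\; \bigl(\alpha_{X,qi}\oplus\alpha_{Y,qo}\bigr)\,S[i,o]\bigl(\Graph(\nabla Q_{K(M)})\bigr),
\]
which is the desired equality. The only real content is the first step, where one must be careful that the chain rule produces precisely the rescaling encoded by $\alpha_{T,q}$ and that no stray Jacobian factor appears in the flow summand; the second step is a formal commutation of linear maps. Thus I expect no substantive obstacle, and the proof can be written in just a few lines by citing the preceding lemma and the intertwining identity.
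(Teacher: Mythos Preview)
Your proposal is correct and follows the paper's own argument essentially verbatim: the paper establishes $\Graph(\nabla F_M) = \alpha_{T,q}\bigl(\Graph(\nabla Q_{K(M)})\bigr)$ from the lemma $F_M(p) = Q_{K(M)}(p/q)$, then records the intertwining identity $S[i,o]\,\alpha_{T,q} = (\alpha_{X,qi}\oplus\alpha_{Y,qo})\,S[i,o]$, and states the theorem as a summary of these two facts. Your chain-rule justification for the first step is slightly more explicit than the paper's, but the structure is identical.
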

This makes precise the relationship between the two behaviors. Observe that
probability deviation is analogous to electric potential, and probability
flow is analogous to electric current.

\section{The black-box functor}
In this section we explain a `black box functor' that sends any open detailed balanced Markov process to a description of its steady-state behavior.

The first point of order is to define the category in which this steady-state
behavior lives. This is the category of linear relations. Already we have seen
that the steady states for an open detailed balanced Markov
process---that is, the solutions to the open master equation---form a linear
subspace of the vector space $\R^X \oplus \R^X \oplus \R^Y \oplus \R^Y$ of input 
and output probabilities and flows. To compose the steady states of two open Markov
processes is easy: we simply require that the probabilities and flows at the
outputs of our first Markov process are equal to the probabilities and flows at
the corresponding inputs of the second. It is also intuitive: it simply means
that we require any states we identify to have identical probabilities, and
require that at each output state all the outflow from the first Markov process 
flows into the second Markov process.

Luckily, this notion of composition for linear subspaces is already
well known: it is composition of linear relations.  We thus define the following
category:
\begin{defn}
  The category of linear relations, $\LinRel$, has finite-dimensional real
  vector spaces as objects and \define{linear relations} $L \maps U \leadsto V$,
  that is, linear subspaces $L \subseteq U \oplus V$, as morphisms from $U$ to $V$.
  The composite of linear relations $L \maps U \leadsto V$ and $L' \maps V \leadsto
  W$ is given by
  \[   L'L = \{ (u,w) : \exists v \in V \;\, (u,v) \in L \textrm{ and } (v,w) \in L' \}. \]
\end{defn}

\begin{prop}
  The category $\LinRel$ is a dagger compact category.
\end{prop}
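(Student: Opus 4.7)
The plan is to exhibit explicitly all the structure required of a dagger compact category and then verify the defining axioms, leaning heavily on the fact that the relevant structure maps are diagonals and swaps that are already well understood in $\Vect$.

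First I would fix the symmetric monoidal structure on $\LinRel$. The tensor product of objects is the direct sum, $U \otimes V = U \oplus V$, with monoidal unit the zero vector space $\{0\}$. The tensor product of two linear relations $L \subseteq U_1 \oplus V_1$ and $L' \subseteq U_2 \oplus V_2$ is the linear subspace
\[
L \otimes L' \;=\; \{\,(u_1,u_2,v_1,v_2) \,:\, (u_1,v_1) \in L,\ (u_2,v_2) \in L'\,\} \;\subseteq\; (U_1 \oplus U_2) \oplus (V_1 \oplus V_2),
\]
with associators, unitors, and symmetry inherited from the canonical isomorphisms of direct sums, viewed as linear relations via their graphs. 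Functoriality of $\otimes$ and the coherence axioms are immediate from the corresponding facts in $\Vect$, since every linear map $f \maps U \to V$ embeds into $\LinRel$ as its graph $\{(u,f(u))\}$, and composition of relations agrees with composition of linear maps on graphs.

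Next I would define the dagger. For a linear relation $L \maps U \leadsto V$, set
\[
L^{\dagger} \;=\; \{\,(v,u) \,:\, (u,v) \in L\,\} \;\subseteq\; V \oplus U,
\]
and define it to be the identity on objects. It is immediate that $(L^\dagger)^\dagger = L$, that the identity relation $\Delta_U = \{(u,u)\}$ satisfies $\Delta_U^\dagger = \Delta_U$, and that $(L'L)^\dagger = L^\dagger (L')^\dagger$ by swapping the two defining pairs in the relational composite. Compatibility with the monoidal structure, $(L \otimes L')^\dagger = L^\dagger \otimes (L')^\dagger$, and with the symmetry, follow by direct inspection of the describing subspaces. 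Thus $\dagger$ is an involutive identity-on-objects contravariant symmetric monoidal functor.

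Finally, I would exhibit the compact structure by taking each object to be self-dual, $U^\ast = U$, with unit and counit given by diagonal subspaces:
\[
\eta_U \;=\; \{\,(0,u,u) \,:\, u \in U\,\} \;\colon\; \{0\} \leadsto U \oplus U,
\qquad
\varepsilon_U \;=\; \{\,(u,u,0) \,:\, u \in U\,\} \;\colon\; U \oplus U \leadsto \{0\}.
\]
These are manifestly linear subspaces, and $\varepsilon_U = \eta_U^\dagger$, so the compact structure is automatically compatible with the dagger once the snake identities are established. The remaining work is to verify the two snake (triangle) equations
\[
(\varepsilon_U \otimes 1_U)(1_U \otimes \eta_U) = 1_U, \qquad (1_U \otimes \varepsilon_U)(\eta_U \otimes 1_U) = 1_U,
\]
which are straightforward but slightly bureaucratic computations in relational composition: one plugs in the diagonals and checks that the existential quantifiers in the relational composite force all three copies of an element of $U$ to coincide. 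I expect this bookkeeping with associators of $\oplus$ to be the main (and really only) obstacle, as it requires keeping careful track of the order of summands; once done, symmetry of the two identities is automatic under the dagger.
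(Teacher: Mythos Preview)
Your proposal is correct and matches the paper's approach: the paper likewise takes the tensor product to be direct sum and the dagger to be relational transpose, but simply cites this as well known \cite{BaezEberleControl} without spelling out the compact structure or verifying the snake identities. Your explicit description of the self-duality via diagonal unit and counit and the outlined verification are the standard details behind that citation.
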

\begin{proof}
  This is well known \cite{BaezEberleControl}. The tensor product is given by direct sum: if $L \maps U 
  \leadsto V$ and $L' \maps U' \leadsto V'$, then $L \oplus L' \maps U \oplus U' \leadsto
  V \oplus V'$ is the direct sum of the subspaces $L$ and $L'$.   The dagger is given 
  by relational transpose: if $L \maps U \leadsto V$, then 
  \[   L^\dagger = \{(v,u) : \; (u,v) \in L \} .  \qedhere \]
\end{proof}

\begin{defn}
The \define{black box functor} for detailed balanced Markov processes
\[   \square \maps \DetBalMark \to \LinRel  \]
maps each finite set with probabilities $(N,q)$ (or energies) to the vector space
 \[\square(N,q)=  \R^N \oplus \R^N\]
of boundary probabilities and boundary flows, and each open detailed balanced
Markov process $M \maps X \to Y$ to its behavior
\[ 
  \square(M) = S[i,o](\mathrm{Graph}(\nabla F)) \maps \R^X \oplus
  \R^X \leadsto \R^Y \oplus \R^Y,
\]
where $F$ is the dissipation functional of $M$. 
\end{defn}

We still need to prove that this construction actually gives a functor.  We do this
in Theorem \ref{thm:commuting_triangle} by relating this construction to 
the black box functor for circuits
\[  \blacksquare \maps \Circ \to \LinRel  ,\]
constructed by Baez and Fong in \cite{BaezFongCirc}.

To define the functor $\blacksquare$, we first construct a decorated cospan category
$\Circ$ in which the morphisms are open circuits.  In brief, let 
\[
  H\maps (\FinSet,+) \longrightarrow (\Set,\times)
\]
map each finite set $N$ to the set of circuits 
\[ \xymatrix{  (0,\infty) & E \ar[l]_-c \ar[r]<-.5ex>_t  \ar[r] <.5ex>^s & N }  \]
on $N$. This can be equipped with coherence maps to form a lax monoidal functor
in the same manner as Markov processes. Using this lax monoidal functor $H$, we
make the following definition.

\begin{defn}
The category $\Circ$ is the decorated cospan category where an object is a
finite set and a morphism is an isomorphism class of open circuits $C \maps X
\to Y$. 
\end{defn}

Again, we will often refer to a morphism as simply an open circuit;
we mean as usual the isomorphism class of the open circuit.

\begin{cor}
The category $\Circ$ is a dagger compact category.
\end{cor}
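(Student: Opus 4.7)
The plan is to recognize this as an immediate consequence of Fong's general results on decorated cospan categories, applied to the lax monoidal functor $H$ defining $\Circ$. First, I would verify that $H \maps (\FinSet,+) \to (\Set,\times)$ is genuinely lax symmetric monoidal. On objects it sends a finite set $N$ to the set of circuits $(0,\infty) \leftarrow E \rightrightarrows N$ on $N$; on morphisms $f \maps N \to N'$ it post-composes the source and target maps with $f$, exactly as in Lemma \ref{lemma:MarkFunctor}. The required coherence maps $\Phi_{N,N'} \maps H(N) \times H(N') \to H(N+N')$ send a pair of circuits $(N,E,s,t,c)$ and $(N',E',s',t',c')$ to $(N+N', E+E', s+s', t+t', [c,c'])$, and the unit map $\Phi_1 \maps 1 \to H(\varnothing)$ picks out the unique empty circuit. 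This is the direct analog of Lemma \ref{lemma:MarkLax}; no detailed balance or rate conditions are needed since circuits have no analog of Kolmogorov's criterion to preserve, so the verification is actually strictly easier than in the Markov case.

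Next I would invoke Lemma \ref{lemma:fcospans} to conclude that $\Circ = H\Cospan$ is a decorated cospan category in the sense of Fong, with objects finite sets, morphisms isomorphism classes of decorated cospans, and composition by pushout decorated via $F([j,j']) \circ \Phi_{V,V'}$.

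The final step is to apply Fong's theorem (discussed immediately following Lemma \ref{lemma:fcospans}) that every decorated cospan category is in fact a hypergraph category, hence in particular a dagger compact category, with tensor product given by disjoint union of the underlying sets and coproduct of decorations via $\Phi$. This is the same argument that yielded the analogous statement for $\OpenMark$ in Theorem \ref{thm:OpenMark} and for $\OpenMark_{\epsilon}$ in the corresponding proposition; only the decorating functor changes. I do not anticipate a substantive obstacle: the entire content is the verification that $H$ is lax monoidal, and this is a routine application of the universal property of the coproduct in $\FinSet$, parallel to what has already been carried out for Markov processes.
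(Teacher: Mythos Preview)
Your proposal is correct and follows exactly the paper's implicit reasoning: the paper states this corollary without proof immediately after defining $\Circ$ as a decorated cospan category via the lax monoidal functor $H$, relying on Fong's general result that every decorated cospan category is a hypergraph category and hence dagger compact. Your explicit unpacking of the lax monoidal structure of $H$ and the invocation of Lemma~\ref{lemma:fcospans} is precisely the argument the paper leaves to the reader.
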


We utilize the main result from Baez and Fong's paper \cite{BaezFongCirc}: 

\begin{lem} \label{lem:blackbox}
  There exists a symmetric monoidal dagger functor, the \define{black box functor}  
for circuits: 
  \[ \blacksquare\maps \Circ \to \LinRel, \]
   mapping any finite set $X$ to the vector space
\[  \blacksquare(X) = \R^X \oplus \R^X, \] 
and any open circuit $C \maps X \to Y$ to its \define{behavior}, the linear relation 
  \[
   \blacksquare(C) = S[i,o](\mathrm{Graph}(\nabla Q))\maps \R^X \oplus \R^X
   \leadsto \R^Y \oplus \R^Y
  \]
where $Q$ is the power functional of $C$.
\end{lem}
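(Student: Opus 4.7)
The plan is to verify in turn the four properties asserted of $\blacksquare$: well-definedness on isomorphism classes of open circuits, functoriality, preservation of the symmetric monoidal structure, and compatibility with the dagger. The definition on objects, $\blacksquare(X)=\R^X\oplus\R^X$, requires nothing beyond naming a vector space; the interesting work is on morphisms. Given an open circuit $C\maps X\to Y$ with cospan $X\xrightarrow{i}N\xleftarrow{o}Y$ and edge data $(E,s,t,c)$, the power functional $Q\maps \R^N\to\R$ is the quadratic form $Q(\phi)=\tfrac12\sum_{e\in E}c_e(\phi_{s(e)}-\phi_{t(e)})^2$, and $\nabla Q\maps \R^N\to\R^N$ sends potentials to the currents one must inject at each node to sustain them in steady state (this is just Ohm's law plus Kirchhoff's current law for internal nodes). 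Thus $\mathrm{Graph}(\nabla Q)\subseteq\R^N\oplus\R^N$ is a linear subspace, and the map $S[i,o]$ pulls back potentials along $[i,o]\maps X+Y\to N$ and projects currents down to $X+Y$, producing a linear relation $\R^X\oplus\R^X\leadsto\R^Y\oplus\R^Y$. Well-definedness under isomorphism of cospans is immediate because such an isomorphism is a bijection $N\cong N'$ intertwining $i,o$ with $i',o'$ and the edge data, hence carrying $\mathrm{Graph}(\nabla Q)$ to $\mathrm{Graph}(\nabla Q')$.

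The hard part, and the main obstacle, is functoriality: given composable open circuits $C\maps X\to Y$ and $C'\maps Y\to Z$ with apex sets $N$ and $N'$, we must show $\blacksquare(C';C)=\blacksquare(C')\circ\blacksquare(C)$ as linear relations. I would attack this by observing that the apex of the composite is the pushout $N+_YN'$, and the power functional $Q''$ of the composite is $Q''(\psi)=Q(\psi|_N)+Q'(\psi|_{N'})$ where $\psi$ is a potential on $N+_YN'$, i.e.\ a pair of potentials on $N$ and $N'$ agreeing on $Y$. Extremizing over internal nodes (those not in $X\cup Z$) breaks into extremizing over internal nodes of $N$, of $N'$, and of the shared boundary $Y$, and the extremization over $Y$ yields precisely Kirchhoff's current law at each $y\in Y$: the outflow from $C$ at $y$ equals the inflow into $C'$ at $y$. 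Translated to linear relations, a pair $((p_X,j_X),(p_Z,j_Z))$ lies in $\blacksquare(C';C)$ iff there exist $(p_Y,j_Y^{\mathrm{out}})$ and $(p_Y,j_Y^{\mathrm{in}})$ with $j_Y^{\mathrm{out}}=j_Y^{\mathrm{in}}$ such that $((p_X,j_X),(p_Y,j_Y^{\mathrm{out}}))\in\blacksquare(C)$ and $((p_Y,j_Y^{\mathrm{in}}),(p_Z,j_Z))\in\blacksquare(C')$; this is exactly the definition of composition in $\LinRel$. Preservation of identity morphisms (given by the circuit on $X$ with no edges) is a direct check: its behavior is the diagonal relation $\{((p,j),(p,j))\}$.

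For symmetric monoidality, the tensor product in $\Circ$ is disjoint union of apex sets and of edge sets, so $Q_{C\otimes C'}=Q_C\oplus Q_{C'}$ and $\nabla Q_{C\otimes C'}=\nabla Q_C\oplus\nabla Q_{C'}$; restricting to boundaries, $\blacksquare(C\otimes C')=\blacksquare(C)\oplus\blacksquare(C')$, which is the tensor in $\LinRel$. The coherence isomorphisms and the symmetry are sent to the evident canonical relations. The dagger structure is handled by noting that reversing inputs and outputs in an open circuit does not change the underlying symmetric quadratic form $Q$, and transposing a linear relation $L\subseteq U\oplus V$ to $L^\dagger\subseteq V\oplus U$ corresponds exactly to swapping the roles of $X$ and $Y$ in $S[i,o]$; so $\blacksquare(C^\dagger)=\blacksquare(C)^\dagger$.

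In short, the whole result hinges on the single analytic fact that minimizing the quadratic power functional over internal potentials reproduces Kirchhoff's current law there, while the derivative of the minimum value with respect to boundary potentials gives the boundary currents; once this is in hand, all categorical structure transfers mechanically through the map $C\mapsto\mathrm{Graph}(\nabla Q)$ and the cospan-decoration machinery of Lemma \ref{lemma:fcospans}.
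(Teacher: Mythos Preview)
Your sketch is essentially correct and follows the same strategy as the cited reference, but it is worth noting that the paper does not actually prove this lemma: its entire proof is a pointer to Theorem~1.1 of Baez--Fong \cite{BaezFongCirc}, together with the remark that the subspace $S[i,o](\mathrm{Graph}(\nabla Q))\subseteq \R^X\oplus\R^X\oplus\R^Y\oplus\R^Y$ is being reinterpreted as a linear relation. What you have written is a condensed outline of the argument that appears in that reference. The decomposition into well-definedness, functoriality via additivity of the power functional on a pushout, monoidality via direct sum, and dagger via symmetry of $Q$ is exactly the structure of their proof, and your identification of Kirchhoff's current law at the glued nodes as the content of ``there exists $(p_Y,j_Y)$'' in relational composition is the right insight.

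One point of notation you should reconcile with the paper: you write $Q\maps\R^N\to\R$ as the full quadratic form over all nodes, but in the paper's conventions that object is the \emph{extended} power functional $P$, while $Q$ denotes the function on $\R^T$ (terminals only) obtained by minimizing $P$ over internal potentials. Consequently $\nabla Q$ lives on $\R^T\oplus\R^T$, not $\R^N\oplus\R^N$, and $S[i,o]$ is a relation $\R^T\oplus\R^T\leadsto\R^X\oplus\R^X\oplus\R^Y\oplus\R^Y$. Your argument survives this correction unchanged, since the minimization over internal nodes is exactly what you invoke anyway; but as written your $\nabla Q$ and the paper's $\nabla Q$ are different maps that happen to yield the same behavior after restriction. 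A second place where a full proof would need more care than your sketch provides is the treatment of non-injective $i,o$: when several input or output points map to the same terminal, the pushforward of currents in $S[i,o]$ involves a genuine sum, and checking that composition still matches requires tracking this carefully (this is handled in \cite{BaezFongCirc} but is easy to gloss over).
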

\begin{proof}
This is a simplified version of \cite[Theorem 1.1]{BaezFongCirc}.  Note that we are treating the subspace
\[    S[i,o](\mathrm{Graph}(\nabla Q))  \subseteq 
\R^X \oplus \R^X \oplus \R^Y \oplus \R^Y .  \]
Now we are treating this subspace as a linear relation from $\blacksquare(X)$
to $\blacksquare(Y)$.
\end{proof}

\section{The functor from detailed balanced Markov processes to circuits}
\label{sec:reduction_2}

In Section \ref{sec:reduction} we described a way to model an open detailed
balanced Markov process using an open circuit, motivated by similarities between
dissipation and power. We now show that the analogy between these two structures
runs even deeper: first, this modelling process is functorial, and second, the
behaviors of corresponding Markov processes and circuits are naturally
isomorphic.

\begin{lem} \label{lem:K}
There is a symmetric monoidal dagger functor
\[ K \maps \DetBalMark \to \Circ 
\]
which maps a finite set with probabilities $(N,q)$ to the underlying finite set
$N$, and an open detailed balanced Markov process 
\[ 
  \xymatrix{ 
  && X \ar[d]^{i} \\
(0,\infty) & E \ar[l]_-r \ar[r]<-.5ex>_t \ar[r] <.5ex>^s & N \ar[r] <.5ex>^q &
(0,\infty) \\ 
&& Y \ar[u]_{o}} 
\]
to the open circuit
\[ 
  \xymatrix{  
    && X \ar[d]^{i} \\
    (0,\infty) & E \ar[l]_-{c} \ar[r]<-.5ex>_t  \ar[r] <.5ex>^s & N
    &\mbox{\phantom{$(0,\infty)$}} \\
    && Y. \ar[u]_{o}
  }
\]
where 
\[  c_e =  r_e q_{s(e)}. \]
\end{lem}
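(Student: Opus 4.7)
My plan is to apply Fong's decorated cospan machinery (Lemma \ref{lemma:decoratedfunctors}) to construct $K$ from a monoidal natural transformation between decorating functors. Let $G \maps (\FinSet_\epsilon, +) \to (\Set, \times)$ be the lax monoidal functor decorating cospans with Markov processes with energies, and let $H \maps (\FinSet, +) \to (\Set, \times)$ be the one decorating with circuits. I would first precompose $H$ with the forgetful functor $U \maps \FinSet_\epsilon \to \FinSet$ so that both decorating functors share a common domain, then construct a monoidal natural transformation $\theta \maps G \Rightarrow H \circ U$ whose component at $(V,\epsilon)$ sends a Markov process with energies $(V,E,s,t,r,\epsilon)$ to the circuit $(V,E,s,t,c)$ with $c_e = r_e q_{s(e)}$, where $q_i \propto e^{-\beta\epsilon_i}$. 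Fong's theorem then yields a symmetric monoidal functor, and composing with the obvious projection on objects induced by $U$ gives $K$.

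The key verification is naturality of $\theta$ and compatibility with the monoidal structure. For a map $f \maps (V,\epsilon) \to (V',\epsilon')$ in $\FinSet_\epsilon$, the energy-preserving condition $\epsilon'(f(v)) = \epsilon(v)$ forces $q_{f(v)} = q_v$, so pushing a Markov process forward along $f$ and then applying $\theta$ yields the same circuit as applying $\theta$ and then pushing forward. Monoidality follows because the laxators on both sides act by disjoint union of edge sets, and the rule $c_e = r_e q_{s(e)}$ is computed edgewise, so combining two Markov processes with energies and then computing conductances gives the same circuit as computing conductances on each and then combining. The unit coherence condition is trivial, since both sides assign the unique empty circuit to the empty set. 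Given this data, Lemma \ref{lemma:decoratedfunctors} produces a symmetric monoidal functor, and the dagger property is automatic because the dagger in either decorated cospan category swaps the input and output legs of a cospan while leaving the decoration fixed, whereas $K$ modifies only the decoration.

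The main obstacle I anticipate is carefully handling the base-category mismatch between $\FinSet_\epsilon$ and $\FinSet$: Lemma \ref{lemma:decoratedfunctors} as stated gives a functor between decorated cospan categories over the same base, so I must verify that $U$ preserves finite colimits and then apply a further forgetful step that strips the energy labels off cospan apices after Fong's theorem is invoked. This preservation holds because pushouts in $\FinSet_\epsilon$ are built by taking the underlying $\FinSet$ pushout and equipping it with the uniquely induced energy function; compatibility of energies on the shared boundary (which is built into the definition of composition in $\DetBalMark$) guarantees this induced function exists. Once this base-change step is handled, the resulting composite is precisely the $K$ described in the lemma statement.
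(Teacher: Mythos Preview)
Your proposal is correct and follows essentially the same route as the paper: apply Fong's Lemma \ref{lemma:decoratedfunctors} via a monoidal natural transformation $\theta \maps G \Rightarrow H \circ U$ over the forgetful functor $U \maps \FinSet_\epsilon \to \FinSet$, and note that the edgewise rule $c_e = r_e q_{s(e)}$ is natural because maps in $\FinSet_\epsilon$ preserve energies (hence equilibrium probabilities). If anything, you have been more explicit than the paper about the base-change subtlety; the paper simply draws the triangle of lax monoidal functors and declares the verification ``easy to check,'' implicitly relying on the more general form of Fong's theorem that permits different base categories.
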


\begin{proof}  
  This is another simple application of Lemma \ref{lemma:decoratedfunctors}.
  To see that this gives a functor between the decorated cospan categories we need
  only check that the above function from detailed balanced Markov processes to
  circuits defines a monoidal natural transformation
  \[ 
    \xymatrix{ (\FinPopSet, + ) \ar[dd]_{(U,\upsilon)} \ar[drr]^{(G,\phi)} \ddrtwocell<\omit>{<0>_{\theta}} && \\
    && (\Set, \times) \\
    (\FinSet, + ) \ar[urr]_{(H,\phi')} && 
    } 
  \]
This is easy to check.
\end{proof}

In the above we have described two maps sending an open detailed balanced Markov process to a linear relation:
\[ \blacksquare \circ K \maps \DetBalMark \to \LinRel \]
and 
\[ \square \maps \DetBalMark \to \LinRel .\]
We know the first is a functor; for this second this remains to be proved.  We do this in the process of proving that these two maps are naturally isomorphic:

\begin{thm}
\label{thm:commuting_triangle}
There is a triangle of symmetric monoidal dagger functors \break between dagger compact categories: 
\[
   \xy
   (-20,20)*+{\DetBalMark}="1";
  (20,20)*+{\Circ}="2";
   (0,-20)*+{\LinRel}="5";
        {\ar^{K} "1";"2"};
        {\ar_{\square} "1";"5"};
        {\ar^{\blacksquare} "2";"5"};
        {\ar@{=>}^<<{\scriptstyle \alpha} (1,11); (-3,8)};
\endxy
\]
which commutes up to a monoidal natural isomorphism $\alpha$.  This natural
isomorphism assigns to each finite set with probabilities $(X,q)$ 
the linear relation $\alpha_{X,q}$ given by the linear map
\begin{align*}
  \alpha_{X,q}\maps \R^X \oplus \R^X &\longrightarrow \R^X \oplus \R^X \\
(\phi,\iota) &\longmapsto (q\phi,\iota)
\end{align*}
where $q\phi \in \R^X$ is the pointwise product of $q$ and $\phi$.
\end{thm}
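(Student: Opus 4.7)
The plan is to reduce everything to the already established Theorem \ref{thm.behaviors}, which is precisely the naturality square at the level of subspaces. First I would verify that each $\alpha_{X,q}$ is a well-defined linear isomorphism, hence a linear relation: since $q \maps X \to (0,\infty)$ is strictly positive, the map $(\phi,\iota) \mapsto (q\phi,\iota)$ is a bijective linear endomorphism of $\R^X \oplus \R^X$, with inverse $(p,j) \mapsto (p/q, j)$. Viewed as a linear relation it is its own graph, and it is a morphism in $\LinRel$.

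The core step is naturality. Given a morphism $M \maps (X,\chi) \to (Y,\upsilon)$ in $\DetBalMark$ with apex $(V,\epsilon)$ equipped with equilibrium $q \maps V \to (0,\infty)$, Theorem \ref{thm.behaviors} gives
\[ S[i,o]\,\Graph(\nabla F_M) \;=\; \bigl(\alpha_{X,q\circ i}\oplus\alpha_{Y,q\circ o}\bigr)\,S[i,o]\,\Graph(\nabla Q_{K(M)}). \]
Reading the left-hand side as $\square(M)$ and the right-hand side as $(\alpha_{X,\chi}\oplus\alpha_{Y,\upsilon})\circ\blacksquare(K(M))$ in $\LinRel$, and using that $\alpha_{X,\chi}$ is invertible, this gives the identity $\square(M) = \alpha_{Y,\upsilon}\circ\blacksquare(K(M))\circ\alpha_{X,\chi}^{-1}$, i.e.\ the commuting naturality square
\[ \xymatrix{ \R^X\oplus\R^X \ar[r]^{\blacksquare K(M)} \ar[d]_{\alpha_{X,\chi}} & \R^Y\oplus\R^Y \ar[d]^{\alpha_{Y,\upsilon}} \\ \R^X\oplus\R^X \ar[r]_{\square(M)} & \R^Y\oplus\R^Y } \]
A subtle point I would verify is that the restriction of the apex equilibrium $q$ along the leg $i \maps X \to V$ coincides with $\chi$ (and similarly for $o$); this holds because maps in $\FinSet_\epsilon$ preserve energies and $q$ is a Gibbs state in $\epsilon$.

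From this naturality square I would then deduce that $\square$ is a functor: for composable $M, M'$ the equality $\square(M';M) = \square(M');\square(M)$ follows by writing each side via $\alpha$-conjugation of $\blacksquare K(M';M) = \blacksquare K(M');\blacksquare K(M)$, where the latter uses functoriality of both $K$ (Lemma \ref{lem:K}) and $\blacksquare$ (Lemma \ref{lem:blackbox}), and the intermediate $\alpha$-factors cancel. Preservation of identities is straightforward. Thus $\square$ is a functor and $\alpha \maps \blacksquare\circ K \Rightarrow \square$ is a natural isomorphism.

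Finally I would check that $\alpha$ is monoidal and that all three functors are symmetric monoidal dagger functors. Monoidality of $\alpha$ amounts to $\alpha_{X+X',[\chi,\chi']} = \alpha_{X,\chi}\oplus\alpha_{X',\chi'}$ and compatibility at the unit, both immediate from the pointwise definition. The symmetric monoidal structures on $\square$ and $\blacksquare\circ K$ act as the identity on objects (both send $(X,\chi)$ to $\R^X\oplus\R^X$) and respect disjoint unions via direct sums, which is a direct inspection; similarly the dagger is preserved because the behavior subspace is defined symmetrically in inputs and outputs (the $S[i,o]$ construction and $\Graph(\nabla F_M)$ are dagger-symmetric in the same way as in the circuit case). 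The main obstacle is really just the bookkeeping in matching Theorem \ref{thm.behaviors} to the categorical form of naturality, and in particular making sure the conversion factor $\alpha$ is correctly identified on both legs of every cospan; everything else then follows by transport of structure from $\blacksquare\circ K$ along $\alpha$.
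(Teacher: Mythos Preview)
Your proposal is correct and follows essentially the same route as the paper's proof: use Theorem~\ref{thm.behaviors} to obtain the naturality square $\square(M) = \alpha_{Y,\upsilon}\circ\blacksquare K(M)\circ\alpha_{X,\chi}^{-1}$, deduce functoriality of $\square$ from that of $\blacksquare$ and $K$ via cancellation of the intermediate $\alpha$'s, and then observe that $\alpha$ is a monoidal isomorphism. Your added remark about why the restriction of $q$ along the cospan legs agrees with $\chi$ and $\upsilon$ is a useful sanity check that the paper leaves implicit.
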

\begin{proof}
We begin by simultaneously proving the functoriality of $\square$ and the naturality of
$\alpha$. The key observation is that we have the equality
\[   \square(M) = \alpha_{Y,r} \circ \blacksquare K(M) \circ \alpha_{X,q}^{-1}  \]
of linear relations $\R^X \oplus \R^X \to \R^Y \oplus \R^Y$. This is an
immediate consequence of Theorem \ref{thm.behaviors}, which relates the behavior
of the Markov process and the circuit:
\begin{align*}
  \square(M) &= S[i,o]\big(\Graph(\nabla F_M)\big) \\ 
  &= \big(\alpha_{X,q}\oplus \alpha_{Y,r}\big)S[i,o]\big(\Graph(\nabla Q_{K(M)})\big) \\
  &= \alpha_{Y,r} \circ S[i,o]\big(\Graph(\nabla Q_{K(M)})\big) \circ
  \alpha_{X,q}^{-1} \tag{$\ast$}\\
  &= \alpha_{Y,r} \circ \blacksquare K(M) \circ \alpha_{X,q}^{-1}
\end{align*}
The equation ($\ast$) may look a little unfamiliar, but is simply a switch
between two points of view: in the line above we apply the functions $\alpha$ to
the behavior, in the line below we compose the relations $\alpha$ with the
behavior. In either case the same subspace is obtained.

Another way of stating this `key observation' is as the commutativity of the
naturality square
\[
  \xymatrix{
    \R^X \oplus \R^X \ar[rr]^{\blacksquare K(M)} \ar[dd]_{\alpha_{X,q}} &&  \R^Y \oplus \R^Y 
    \ar[dd]^{\alpha_{Y,r}} \\ \\ 
    \R^X \oplus \R^X \ar[rr]^{\square(M)} &&  \R^Y \oplus \R^Y 
  }
\]
for $\alpha$. Thus if $\square$ is truly a functor, then $\alpha$ is a natural
transformation.

But the functoriality of $\square$ is now a consequence of the functoriality of
$\blacksquare$ and $K$. Indeed, for $M \maps (X,q) \to (Y,r)$ and $M' \maps (Y,r) \to (Z,s)$, we
have
\begin{align*}
 \square(M')\circ \square(M) 
 &=  \alpha_{Z,s} \circ \blacksquare K(M') \circ \alpha_{Y,r}^{-1} \circ \alpha_{Y,r} \circ
 \blacksquare K(M) \circ \alpha_{X,q}^{-1} \\
 &= \alpha_{Z,s} \circ \blacksquare K(M') \circ \blacksquare K(M) \circ
 \alpha_{X,q}^{-1} \\
 &= \alpha_{Z,s} \circ \blacksquare K(M' \circ M) \circ
 \alpha_{X,q}^{-1} \\
 &= \square(M'\circ M).
 \end{align*}
Thus $\alpha$ is a natural transformation. It is easily seen that $\alpha$ is
furthermore monoidal, and an isomorphism.  

As a consequence, the functor $\square$ can be given the structure of a symmetric 
monoidal dagger functor, in a way that makes the triangle commute up to $\alpha$.
\end{proof}

\chapter{Open reaction networks}\label{ch:rx}
We now turn our attention to constructing a compositional framework for describing `reaction networks,' certain types of labeled graphs which represent a network of interacting entities.  In the case of interacting chemical species these provide a graphical notation for a set of coupled differential equations. For reaction networks obeying mass-action kinetics these equations are generally non-linear, but the presence of conserved quantities in some cases simplifies the analysis of these equations. We are interested in open systems, where certain reaction participants are coupled to reservoirs which act as a source or sink for those participants. Much of the material of this Chapter can be found in \cite{BaezPRx}. 

In this Chapter we describe a decorated cospan category $\RxNet$ where the morphisms are isomorphism classes of open reaction networks with rates and a decorated cospan category $\Dynam$ where the morphisms are isomorphism classes of open dynamical systems. We prove that there is a `gray-boxing' functor $\graysquare \maps \RxNet \to \Dynam$ sending an open reaction network to the open dynamical system governing the evolution of the concentrations of the species in a reaction network as they evolve according to mass action kinetics. We then prove that there is a black-box functor 
\[        \blacksquare \maps \Dynam \to \SemialgRel \] 
to a category $\SemialgRel$ of `semi-algebraic relations' between real vector spaces, meaning relations defined by polynomials and inequalities. The composite of these functors provides a method for black-boxing open reaction networks.

\section{Reaction networks}
Typically, chemical reaction networks are drawn such that the vertices of the graph correspond to `complexes,' linear combinations of chemical species with natural number coefficients. These complexes are connected by directed edges which indicate the source and target of each reaction:
\[ \xymatrix{ A+B \ar[r]^-{r} & 2C} \] Each reaction is labeled by a non-negative rate constant $ r \in (0,\infty) $. There exists another graphical notation for such reaction networks as a certain type of bipartitie graph with one type of vertex corresponding to each species and another type of vertex corresponding to each reaction or transition. A directed edge from a species to a transition specifies that species as an input to that transition, i.e. that transition annihilates that species. A directed edge from a transition to a species specifies that species as an output of that transition, i.e. that transition creates members of that species: 
\[
\begin{tikzpicture}
	\begin{pgfonlayer}{nodelayer}
		\node [style=species] (0) at (-1, 0) {$C$};
		\node [style=species] (2) at (-4, -0.5) {$B$};
		\node [style=transition] (7) at (-2.5, 0) {$r_1$};
		\node [style=species] (18) at (-4, 0.5) {$A$};
	\end{pgfonlayer}
	\begin{pgfonlayer}{edgelayer}
		\draw [style=inarrow] (18) to (7);
		\draw [style=inarrow] (2) to (7);
		\draw [style=inarrow, bend right=15, looseness=1.00] (7) to (0);
		\draw [style=inarrow, bend left=15, looseness=1.00] (7) to (0);
	\end{pgfonlayer}
\end{tikzpicture}
\]

Omitting the rate constants labeling each transition, such structures are often called Petri nets. One should note that in the Petri net literature the vertices which chemists call species are often called `places.' Including the labels on the transitions gives a stochastic Petri net. An introduction to stochastic Petri nets can be found in \cite{StochPetri}. A Petri net can also be viewed as generating a specific monoidal category, however our approach differs from this one as we consider generalized open Petri nets as morphisms in a category themselves \cite{MeseguerMontanari, Sassone}. An introductory course on quantum techniques for stochastic Petri nets can be found in \cite{BaezBiamonte}. 

In the reaction network literature, graphs of this type were introduced under the name `SR-graphs,' short for `species-reaction graphs.' Properties of a reaction network's SR-graph can be used to determine whether a reaction network has the capacity to admit multiple steady states \cite{SRgraph}. Stronger results connecting the admissible behaviors of a reaction network to properties of its SR-graph can be found in \cite{SRgraph2, concordance}. These results depend only on the structure of the SR-graph, independent of any particular choice of rate constants. Results of this type began with the seminal work of Feinberg and Horn \cite{Feinberg, Horn}, resulting in the Deficiency Zero and Deficiency One Theorems \cite{def0,def1}. In this Chapter we show that these SR-graphs are morphisms in a decorated cospan category describing open chemical reaction networks.

\section{Open reaction networks}
In this section we describe a decorated cospan category where the morphisms are isomorphism classes of open reactoin networks. For open chemical reaction networks, the set $S$ will be the set of species, the concentration's of which evolve in time according to a set of coupled non-linear differential equations, the information for which can be encoded in a certain type of labeled graph structure which we depict below. For example, consider the reaction $ \xymatrix{ A+B \ar[r]^{r_1} & 2C }$ as an open system with three inputs $X$ and one output $Y$, which we draw in the following way:
\[
\begin{tikzpicture}
	\begin{pgfonlayer}{nodelayer}
		\node [style=species] (0) at (-1, 0) {$C$};
		\node [style=none] (1) at (-0.25, 0.75) {};
		\node [style=species] (2) at (-4, -0.5) {$B$};
		\node [style=none] (3) at (-4.75, 0.75) {};
		\node [style=none] (4) at (-5.25, 0.75) {};
		\node [style=inputdot] (5) at (0, 0) {};
		\node [style=none] (6) at (0.25, 0.75) {};
		\node [style=transition] (7) at (-2.5, 0) {$r_1$};
		\node [style=none] (8) at (-5.25, -0.75) {};
		\node [style=empty] (10) at (-5.25, 1) {$X$};
		\node [style=none] (11) at (-0.25, -0.75) {};
		\node [style=inputdot] (12) at (-5, -0.5) {};
		\node [style=inputdot] (13) at (-5, 0.5) {};
		\node [style=none] (14) at (-4.75, -0.75) {};
		\node [style=none] (15) at (0.25, -0.75) {};
		\node [style=inputdot] (16) at (-5, 0) {};
		\node [style=empty] (17) at (0.25, 1) {$Y$};
		\node [style=species] (18) at (-4, 0.5) {$A$};
	\end{pgfonlayer}
	\begin{pgfonlayer}{edgelayer}
		\draw [style=inarrow] (18) to (7);
		\draw [style=inarrow] (2) to (7);
		\draw [style=inarrow, bend right=15, looseness=1.00] (7) to (0);
		\draw [style=inarrow, bend left=15, looseness=1.00] (7) to (0);
		\draw [style=inputarrow] (5) to (0);
		\draw [style=inputarrow] (13) to (18);
		\draw [style=inputarrow] (16) to (2);
		\draw [style=inputarrow] (12) to (2);
		\draw [style=simple] (4.center) to (3.center);
		\draw [style=simple] (3.center) to (14.center);
		\draw [style=simple] (14.center) to (8.center);
		\draw [style=simple] (8.center) to (4.center);
		\draw [style=simple] (1.center) to (6.center);
		\draw [style=simple] (6.center) to (15.center);
		\draw [style=simple] (15.center) to (11.center);
		\draw [style=simple] (11.center) to (1.center);
	\end{pgfonlayer}
\end{tikzpicture}
\]
Notice that the maps including the inputs and outputs into the states of the system need not be one to one.

The decoration of the cospan is accomplished via a lax monoidal functor $F \maps (\FinSet, +) \to (\Set, \times)$ sending a finite set $S$ to the set of all possible reaction networks on $S$ which we denote by $F(S)$. A particular element of $F(S)$, i.e. a particular reaction network is `picked out' by a function $ s \maps 1 \to F(s)$. A lax monoidal functor comes equipped with a natural transformation 
\[ \Phi_{S,S'} \maps F(S) \times F(S') \to F(S+S')\]
which sends a pair of reaction networks to a reaction network on the coproduct of their species sets. This functor provides a method of composing decorated cospans and therefore composing open chemical reaction networks. 

Given two systems
\[ \xymatrix{ & S & & S' & \\ X \ar[ur] & & Y \ar[ul] \ar[ur] & & Z, \ar[ul] } \]
if the outputs of one system match the inputs of the other system, the two can be combined to yield a new system. Composition of the cospans is done via the pushout
\[ \xymatrix{ & & S+_Y S' & & \\ & S \ar[ur] & & S' \ar[ul] \\
X \ar[ur] & & Y \ar[ur] \ar[ul] & & Z. \ar[ul] } \]

Consider the open chemical reactions $ \xymatrix{ A+B \ar[r]^{r_1} & 2C}$ and $\xymatrix {D \ar[r]^{r_2} & E+F }$. We can think of these as two systems, the first with species $S=\{ A,B,C\}$ the second with species $S'=\{D,E,F\}$, each with one reaction labeled with rate constants $r_1$ and $r_2$ respectively. We consider each of these as open systems, the first with 3 inputs $X$ and a single output $Y$ the second with a single input $Y$ and with a two outputs $Z$. 
\[
\begin{tikzpicture}
	\begin{pgfonlayer}{nodelayer}
		\node [style=none] (0) at (-4.75, 0.75) {};
		\node [style=none] (1) at (-4.75, -0.75) {};
		\node [style=empty] (2) at (-5.25, 1) {$X$};
		\node[style=inputdot] (3) at (5.5,-0.5) {};
		\node [style=species] (4) at (-1, 0) {$C$};
		\node [style=none] (5) at (5.25, -0.75) {};
		\node [style=none] (6) at (0.25, -0.75) {};
		\node [style=species] (7) at (-4, -0.5) {$B$};
		\node [style=inputdot] (8) at (-5, -0.5) {};
		\node [style=none] (9) at (5.75, -0.75) {};
		\node [style=none] (10) at (-0.25, -0.75) {};
		\node [style=transition] (11) at (2.75, -0) {$r_2$};
		\node [style=species] (12) at (-4, 0.5) {$A$};
		\node [style=species] (13) at (1.0, 0) {$D$};
		\node [style=none] (14) at (5.75, 0.75) {};
		\node [style=none] (15) at (-0.25, 0.75) {};
		\node [style=inputdot] (16) at (5.5, 0.5) {};
		\node [style=inputdot] (17) at (0, 0) {};
		\node [style=none] (18) at (-5.25, -0.75) {};
		\node [style=none] (19) at (0.25, 0.75) {};
		\node [style=species] (20) at (4.5, 0.5) {$F$};
		\node [style=none] (21) at (5.25, 0.75) {};
		\node [style=inputdot] (22) at (-5, 0.5) {};
		\node [style=empty] (23) at (5.5, 1) {$Z$};
		\node [style=none] (24) at (-5.25, 0.75) {};
		\node [style=transition] (25) at (-2.5, 0) {$r_1$};
		\node [style=inputdot] (26) at (-5, 0) {};
		\node [style=empty] (27) at (0.25, 1) {$Y$};
		\node [style=species] (28) at (4.5, -0.5) {$E$};
	\end{pgfonlayer}
	\begin{pgfonlayer}{edgelayer}
		\draw [style=inarrow] (12) to (25);
		\draw [style=inarrow] (7) to (25);
		\draw[ style=inputarrow] (3) to (28);
		\draw [style=inarrow, bend right=15, looseness=1.00] (25) to (4);
		\draw [style=inarrow] (13) to (11);
		\draw [style=inarrow] (11) to (20);
		\draw [style=inarrow] (11) to (28);
		\draw [style=inarrow, bend left=15, looseness=1.00] (25) to (4);
		\draw [style=inputarrow] (17) to (13);
		\draw [style=inputarrow] (17) to (4);
		\draw [style=inputarrow] (22) to (12);
		\draw [style=inputarrow] (26) to (7);
		\draw [style=inputarrow] (8) to (7);
		\draw [style=inputarrow] (16) to (20);
		\draw [style=simple] (24.center) to (0.center);
		\draw [style=simple] (0.center) to (1.center);
		\draw [style=simple] (1.center) to (18.center);
		\draw [style=simple] (18.center) to (24.center);
		\draw [style=simple] (15.center) to (19.center);
		\draw [style=simple] (19.center) to (6.center);
		\draw [style=simple] (6.center) to (10.center);
		\draw [style=simple] (10.center) to (15.center);
		\draw [style=simple] (21.center) to (14.center);
		\draw [style=simple] (14.center) to (9.center);
		\draw [style=simple] (9.center) to (5.center);
		\draw [style=simple] (5.center) to (21.center);
	\end{pgfonlayer}
\end{tikzpicture}
\]
Since the outputs of the first process match the inputs of the second process, we can form the composite reaction network by gluing together the two reactions along their common overlap
\[
\begin{tikzpicture}
	\begin{pgfonlayer}{nodelayer}
		\node [style=inputdot] (0) at (-4.25, -0) {};
		\node [style=species] (1) at (-3.25, 0.5) {$A$};
		\node [style=none] (2) at (-4, 0.75) {};
		\node [style=none] (3) at (4, -0.75) {};
		\node [style=transition] (4) at (-1.75, -0) {$r_1$};
		\node [style=none] (5) at (-4.5, 0.75) {};
		\node [style=none] (6) at (4, 0.75) {};
		\node [style=transition] (7) at (1.5, -0) {$r_2$};
		\node [style=inputdot] (8) at (4.25, 0.5) {};
		\node [style=none] (9) at (-4.5, -0.75) {};
		\node [style=species] (10) at (0, -0) {$C$};
		\node [style=none] (11) at (4.5, -0.75) {};
		\node [style=inputdot] (12) at (-4.25, 0.5) {};
		\node [style=none] (13) at (4.5, 0.75) {};
		\node [style=empty] (14) at (-4.5, 1) {$X$};
		\node [style=none] (15) at (-4, -0.75) {};
		\node [style=empty] (16) at (4.25, 1) {$Z$};
		\node [style=species] (17) at (3.25, 0.5) {$F$};
		\node [style=species] (18) at (-3.25, -0.5) {$B$};
		\node [style=inputdot] (19) at (-4.25, -0.5) {};
		\node[ style=species] (20) at (3.25, -0.5) {$E$};
		\node[ style=inputdot] (21) at (4.25, -0.5) {};
	\end{pgfonlayer}
	\begin{pgfonlayer}{edgelayer}
		\draw [style=inarrow] (1) to (4);
		\draw [style=inarrow] (18) to (4);
		\draw [style=inarrow, bend right=15, looseness=1.00] (4) to (10);
		\draw [style=inarrow] (7) to (17);
		\draw [style=inarrow, bend left=15, looseness=1.00] (4) to (10);
		\draw [style=inputarrow] (12) to (1);
		\draw [style=inputarrow] (0) to (18);
		\draw [style=inputarrow] (19) to (18);
		\draw [style=inputarrow] (8) to (17);
		\draw [style=simple] (5.center) to (2.center);
		\draw [style=simple] (2.center) to (15.center);
		\draw [style=simple] (15.center) to (9.center);
		\draw [style=simple] (9.center) to (5.center);
		\draw [style=simple] (6.center) to (13.center);
		\draw [style=simple] (13.center) to (11.center);
		\draw [style=simple] (11.center) to (3.center);
		\draw [style=simple] (3.center) to (6.center);
		\draw [style=inarrow] (10) to (7);
		\draw [style=inarrow]  (7) to (20);
		\draw[ style=inputarrow] (21) to (20);
	\end{pgfonlayer}
\end{tikzpicture}
\]

Notice that states $C$ and $D$ have been identified, leaving a composite system with species $S+_Y S' =\{ A,B,C,E,F\}$ and the reaction network above. We can think of this simply as a situation in which the output one reaction serves as an input to the next reaction. We describe a category where the morphisms are these `open reaction networks'. A functor from this category to the category of cospans decorated by algebraic relations describes the dynamical behavior of the concentrations of the species of the reaction network. 

Let us see how this works for the our example. We can write down the rate equation for each of the pieces of the above reaction network as
\[ \frac{d}{dt} \left( \begin{array}{c} A \\ B \\ C 

\end{array} \right)  = \left( \begin{array}{c} -r_1 AB \\ -r_1 AB \\ 2r_1 AB \end{array} \right) \]
and
\[ \frac{d}{dt} \left( \begin{array}{c} D \\ E \\ F 
\end{array} \right)  = \left( \begin{array}{c} -r_2 D \\ r_2 D \\ r_2 D \end{array} \right). \]
In terms of the differential equations, composition of the two process in which species $D$ and $E$ are both identified with species $C$, can be thought of has happening in two steps. First, one \emph{identifies} the concentrations of $C$ and $D$.
\[ C=D\]
Next, the contributions to the time rate of each of the identified concentrations are \emph{added}
\[ \frac{dC}{dt} + \frac{dD}{dt} = 2 r_1 AB - r_2 D. \]
Abusing notation and calling the resulting species $C$, we have the differential equation describing the evolution of the concentrations in the composite reaction network
\[ \frac{d}{dt} \left( \begin{array}{c} A \\ B \\ C \\ E \\ F
\end{array} \right)  = \left( \begin{array}{c} -r_1 AB \\ -r_1 AB \\ 2r_1 AB -r_2 C \\ r_2 C \\ r_2 C \end{array} \right). \]

This provides a compositional method for describing the dynamics on the whole reaction network. We further introduce a category where the morphisms are relations involving only the input and output species concentrations, effectively `black-boxing' the reaction network.  The composite reaction 
\[
\begin{tikzpicture}
	\begin{pgfonlayer}{nodelayer}
		\node [style=inputdot] (0) at (-4.25, -0) {};
		\node [style=species] (1) at (-3.25, 0.5) {$A$};
		\node [style=none] (2) at (-4, 0.75) {};
		\node [style=none] (3) at (4, -0.75) {};
		\node [style=transition] (4) at (-1.75, -0) {$r_1$};
		\node [style=none] (5) at (-4.5, 0.75) {};
		\node [style=none] (6) at (4, 0.75) {};
		\node [style=transition] (7) at (1.5, -0) {$r_2$};
		\node [style=inputdot] (8) at (4.25, 0.5) {};
		\node [style=none] (9) at (-4.5, -0.75) {};
		\node [style=species] (10) at (0, -0) {$C$};
		\node [style=none] (11) at (4.5, -0.75) {};
		\node [style=inputdot] (12) at (-4.25, 0.5) {};
		\node [style=none] (13) at (4.5, 0.75) {};
		\node [style=empty] (14) at (-4.5, 1) {$X$};
		\node [style=none] (15) at (-4, -0.75) {};
		\node [style=empty] (16) at (4.25, 1) {$Z$};
		\node [style=species] (17) at (3.25, 0.5) {$F$};
		\node [style=species] (18) at (-3.25, -0.5) {$B$};
		\node [style=inputdot] (19) at (-4.25, -0.5) {};
		\node[ style=species] (20) at (3.25, -0.5) {$E$};
		\node[ style=inputdot] (21) at (4.25, -0.5) {};
	\end{pgfonlayer}
	\begin{pgfonlayer}{edgelayer}
		\draw [style=inarrow] (1) to (4);
		\draw [style=inarrow] (18) to (4);
		\draw [style=inarrow, bend right=15, looseness=1.00] (4) to (10);
		\draw [style=inarrow] (7) to (17);
		\draw [style=inarrow, bend left=15, looseness=1.00] (4) to (10);
		\draw [style=inputarrow] (12) to (1);
		\draw [style=inputarrow] (0) to (18);
		\draw [style=inputarrow] (19) to (18);
		\draw [style=inputarrow] (8) to (17);
		\draw [style=simple] (5.center) to (2.center);
		\draw [style=simple] (2.center) to (15.center);
		\draw [style=simple] (15.center) to (9.center);
		\draw [style=simple] (9.center) to (5.center);
		\draw [style=simple] (6.center) to (13.center);
		\draw [style=simple] (13.center) to (11.center);
		\draw [style=simple] (11.center) to (3.center);
		\draw [style=simple] (3.center) to (6.center);
		\draw [style=inarrow] (10) to (7);
		\draw [style=inarrow]  (7) to (20);
		\draw[ style=inputarrow] (21) to (20);
	\end{pgfonlayer}
\end{tikzpicture}
\]
has $A$ and $B$ as inputs and $E$ and $F$ as output species. We imagine a situation in which these chemicals are coupled to reservoirs which serve to maintain the concentrations of the input and output species at fixed values, $A_0, B_0, E_0,$ and $F_0$. Then, one can write down the differential equation obeyed by the `internal' concentration $C$, namely
\[ \frac{dC}{dt} = 2r_1 A_0 B_0 - r_2 C \]
where $A_0$ and $B_0$ are constant in time. We see that in this case, the steady state concentration of species $C$, determined by $\frac{dC}{dt} = 0$, is given by
\[ C^* = \frac{2 r_1 A_0 B_0}{r_2}. \]

One can compute the flows between the system and the reservoir necessary to maintain this steady state, providing an effective `black-boxing' of the reaction which involves only the concentrations and flows along the boundary of the open reaction network. In the above example, the open reaction network admitted a unique steady state. In general, open reaction networks can admit multiple steady states with distinct boundary concentration-flow pairs.

Even for `closed reaction networks', namely those with no boundary species, multiple steady states can be present in relatively simple reaction networks. Such reaction networks often require features such as \define{autocatalysis}, meaning a certain species is present as both an input and an output to the same reaction. Another common feature of simple reaction networks admitting multiple steady states is the presence of \define{trimolecular reactions}, meaning that three of the same species appear either as an input or an output to a reaction. For spatially uniform systems where the velocity distribution of species molecules is Maxwellian, the probability of a tri-molecular collision becomes exceedingly rare, so in the simplest regime such reactions are somewhat unphysical. For example, the reaction 
\[ \xymatrix{ 2A +B \ar@<0.5ex>[r]^{r_1}  & 3A \ar@<0.5ex>[l]^{r_2 } } \]
admits multiple steady states as a closed system. The corresponding rate equation for the concentrations of $A$ and $B$ is
\[ \frac{d}{dt} \left( \begin{array}{c} A \\ B \end{array} \right) = \left( \begin{array}{c} r_1 A^2 B - r_2 A^3 \\ -r_1 A^2 B +r_2 A^3 \end{array} \right).  \]
Steady states correspond to 
\[ r_1 A^2 B = r_2 A^3 \]
with the two solutions $A=0$ and $ r_2 A = r_1 B $. Notice the two solutions intersect only at $A=B=0$, the trivial solutions. Since the system of closed and consists of only a single reaction taking three molecules as the input and outputting three molecules, the total number of molecules will be conserved. 

This Chapter is structured as follows. In Section \ref{sec:rxnet}, we construct a decorated cospan category $ \RNet $ where an object is a finite set and a morphism is an isomorphism class of cospans of finite sets decorated by a reaction network. In Section \ref{sec:RxNet}, we construct a similar category $\RxNet$ where an object is a finite set and a morphism is an isomorphism class of cospans of finite sets decorated by a reaction network with \emph{rates}. The inclusion of these `rates' allow one to specify a partiuclar set of non-linear ordinary differential equations associated to a reaction network with rates. In Section \ref{sec:opendynam}, we construct a decorated cospan category $\Dynam$, the category of open dynamical systems where an object is a finite set and a morphism is an isomorphism class of cospans of finite sets decorated by an algebraic vector field, where by algebraic we mean a vector field defined component-wise in terms of polynomial relations in the concentrations of the species with real number coefficients.

In Section \ref{sec:gray}, we construct a `gray-boxing' functor $ \graysquare \maps \RxNet \to \Dynam$ sending a reaction network with rates on a finite set $S$ of species to a vector field $ v_S \maps \R^S \to \R^S$ which sends a vector of concentrations $c_S \in \R^S$ to the corresponding reaction velocities $v_S(c_S) \in \R^S$ which generate the time evolution of the concentrations via the additional relation
\[ \frac{dc_S}{dt} = v_S. \] This provides a functorial description of the dynamical behavior of the concentrations of all species in a reaction network. We then restrict our attention to the situation in which the concentrations of certain `boundary' species are held fixed via the coupling with some external `reservoir.' In this setting we define a `black-boxing' functor characterizing the behavior of a reaction network in terms of the behavior of its boundary concentrations and flows or velocities. 

\section{Reaction networks}
\label{sec:rxnet}

In this section we define a graph structure which we call a reaction network. These are certain types of bipartite graphs with species vertices and transition vertices. Edges either specify a species as an input to a transition or as an output to a transition. In the literature such graphs are often called Petri-Nets, place-transition nets, or SR-graphs. The term `stochastic Petri-net' was introduced in \cite{BaezBiamonte} for such graphs where the transitions are labeled by positive rate constants.  We construct the decorated cospan category $\RNet$ where objects are finite sets and morphisms are isomorphism classes of cospans in $\FinSet$ decorated by a reaction network.

\begin{defn}
A \define{reaction network} $(S,T,s,t)$ consists of:
\begin{itemize}
\item a finite set $S$,
\item a finite set $T$,
\item functions $s,t \maps T \to \N^S$.
\end{itemize}
We call the elements of $S$ \define{species}, those of $\N^S$ \define{complexes},
and those of $T$ \define{transitions}.  Any transition $\tau \in T$ has a \define{source}
$s(\tau)$ and \define{target} $t(\tau)$.  If $s(\tau) = \kappa$ and $t(\tau) = \kappa'$ 
we write $\tau \maps \kappa \to \kappa'$. 
\end{defn}

The set of complexes relevant to a given reaction network $(S,T,s,t)$ is
\[            K = \im(s) \cup \im(t) \subseteq \N^S. \]
The reaction network gives a graph with $K$ as its set of vertices and a directed 
edge from $\kappa \in K$ to $\kappa' \in K$ for each transition $\tau \maps \kappa 
\to \kappa'$.     This graph may have multiple edges or self-loops.  It is thus
the kind of graph sometimes called a `directed multigraph' or
`quiver'.   However, a graph of this kind can only arise from a reaction network if 
every vertex is the source or target of some edge.

We are interested in viewing reaction networks as open systems. 

\begin{defn}
Given finite sets $X$ and $Y$, an \define{open reaction network from $X$ to $Y$} is a cospan of finite sets
\[ \xymatrix{  & S  &  \\ X \ar[ur]^{i} & & Y \ar[ul]_{o} } \] 
together with a reaction network $\mathtt{R}$ on $S$. We often abbreviate such an open reaction network as $\mathtt{R} \maps X \to Y$. We say $X$ is the set of \define{inputs} of the open reaction network and $Y$ is its set of \define{outputs}. We call a species in $B = i(X) \cup o(Y)$ a \define{boundary species} of the open reaction network while a species in $S-B$ is called \define{internal}.
\end{defn}

Let $F(S)$ denote the set of all possible reaction networks on $S$.

\begin{lem}
Given an function $f \maps S \to S'$, there is a function $F(f) \maps F(S) \to F(S')$ that maps a reaction network on $S$ to a reaction network on $S'$. Moreover we have that
\[ F(fg) = F(f)F(g) \]
and
\[ F(1_S) = 1_{F(S)} \]
when $1_S \maps S \to S$ is the identity map.
\end{lem}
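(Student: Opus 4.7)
The plan is to mirror the construction of Lemma \ref{lemma:MarkFunctor}, replacing the vertex set of a graph by the set of complexes $\N^S$. The key observation is that the assignment $S \mapsto \N^S$ is itself functorial: any function $f \maps S \to S'$ extends uniquely to a monoid homomorphism $f_\ast \maps \N^S \to \N^{S'}$ defined on basis elements by $f_\ast(e_s) = e_{f(s)}$, or equivalently on a general complex $\kappa \in \N^S$ by
\[
    f_\ast(\kappa)_{s'} \;=\; \sum_{s \in f^{-1}(s')} \kappa_s.
\]
This assignment satisfies $(1_S)_\ast = 1_{\N^S}$ and $(fg)_\ast = f_\ast g_\ast$ by a direct computation on basis elements.

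Given this, I would define $F(f) \maps F(S) \to F(S')$ as follows: for a reaction network $(S, T, s, t) \in F(S)$, set
\[
    F(f)(S,T,s,t) \;=\; (S', T, f_\ast \circ s, f_\ast \circ t).
\]
That is, we keep the same transition set $T$ but push the source and target complexes forward along $f_\ast$. Since $f_\ast \circ s$ and $f_\ast \circ t$ are functions $T \to \N^{S'}$, the result is a reaction network on $S'$, so $F(f)$ is well-defined.

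Functoriality is then immediate from the functoriality of $(-)_\ast$. For the identity, $F(1_S)(S,T,s,t) = (S, T, (1_S)_\ast \circ s, (1_S)_\ast \circ t) = (S,T,s,t)$ since $(1_S)_\ast = 1_{\N^S}$. For composition, given $f \maps S \to S'$ and $g \maps S' \to S''$, we have
\[
    F(gf)(S,T,s,t) \;=\; (S'', T, (gf)_\ast \circ s, (gf)_\ast \circ t) \;=\; (S'', T, g_\ast f_\ast \circ s, g_\ast f_\ast \circ t),
\]
which is exactly $F(g)\bigl(F(f)(S,T,s,t)\bigr)$.

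There is no real obstacle here; the entire content is the functoriality of the free commutative monoid construction $S \mapsto \N^S$. The only mild care required is to verify $(fg)_\ast = f_\ast g_\ast$ by unwinding the sum over preimages, which is routine. This lemma is the reaction-network analogue of Lemma \ref{lemma:MarkFunctor} and sets up the groundwork for equipping $F$ with the lax monoidal structure needed to apply Lemma \ref{lemma:fcospans} and thereby construct the decorated cospan category $\RNet$.
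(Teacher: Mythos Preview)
Your proposal is correct and matches the paper's approach exactly: the paper defines $F(f)(S,T,s,t) = (S',T,f_*(s),f_*(t))$ via the same pushforward formula $f_*(\kappa)_{\sigma'} = \sum_{\sigma \in f^{-1}(\sigma')} \kappa_\sigma$ and then remarks that preservation of composition and identities are straightforward calculations. You have in fact written out more detail than the paper provides.
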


Note this lemma actually says that $F \maps (\FinSet,+) \to (\Set, \times)$ is a functor. \begin{lem}
\label{lemma:RFunctor}
There is a functor $F \maps \FinSet \to \Set$ such that:
\begin{itemize}
\item For any finite set $S$, $F(S)$ is the set of all reaction networks on $S$.
\item For any function $f \maps S \to S'$ between finite sets and any reaction network $(S,T,s,t) \in F(S)$, we have
\[    F(f)(S,T,s,t) = (S',T, f_*(s), f_*(t)) \]
where 
\beq  
f_*(s)(\tau)(\sigma') =\sum_{\{\sigma \in S : f(\sigma) = \sigma' \}} s(\tau)(\sigma)  
\label{stilde}
\eeq
and 
\beq  
f_*(t)(\tau)(\tilde{\sigma}) =\sum_{\{\sigma \in S : f(\sigma) = \sigma' \}} t(\tau)(\sigma) .
\label{ttilde}
\eeq
\end{itemize}
\end{lem}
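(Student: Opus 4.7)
The plan is to verify three things in sequence: that $F(f)$ is well-defined on objects of $F(S)$, that $F$ preserves composition, and that $F$ preserves identities. First, I would check well-definedness: for any transition $\tau \in T$, the formula $f_*(s)(\tau)(\sigma') = \sum_{\{\sigma : f(\sigma) = \sigma'\}} s(\tau)(\sigma)$ defines a function $S' \to \N$ because $S$ is finite and the values $s(\tau)(\sigma)$ are natural numbers, so each sum is a finite sum of natural numbers. Thus $f_*(s)(\tau) \in \N^{S'}$, giving a function $f_*(s) \maps T \to \N^{S'}$, and similarly for $f_*(t)$. So $(S', T, f_*(s), f_*(t))$ is indeed a reaction network on $S'$.

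Next, for the identity law: when $f = 1_S$, the preimage condition $f(\sigma) = \sigma'$ becomes $\sigma = \sigma'$, so the sum in \eqref{stilde} reduces to the single term $s(\tau)(\sigma')$. Hence $(1_S)_*(s) = s$ and likewise for $t$, giving $F(1_S)(S,T,s,t) = (S,T,s,t)$, so $F(1_S) = 1_{F(S)}$.

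The main computation is functoriality under composition. Given $f \maps S \to S'$ and $g \maps S' \to S''$, I would compute directly for any reaction network $(S,T,s,t)$ and any $\tau \in T$, $\sigma'' \in S''$:
\[
(gf)_*(s)(\tau)(\sigma'') = \sum_{\{\sigma : gf(\sigma) = \sigma''\}} s(\tau)(\sigma) = \sum_{\{\sigma' : g(\sigma') = \sigma''\}} \sum_{\{\sigma : f(\sigma) = \sigma'\}} s(\tau)(\sigma),
\]
where the second equality partitions the preimage $(gf)^{-1}(\sigma'')$ according to the intermediate value $f(\sigma) = \sigma' \in g^{-1}(\sigma'')$. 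The inner sum is exactly $f_*(s)(\tau)(\sigma')$, so the right-hand side equals $g_*(f_*(s))(\tau)(\sigma'')$. The same argument applies to $t$, so $F(gf) = F(g) \circ F(f)$.

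I do not expect any serious obstacles here; the only subtle step is the partition of $(gf)^{-1}(\sigma'')$ by intermediate values, which is a standard manipulation. The lemma is essentially the statement that the pushforward of $\N$-valued functions along maps of finite sets is functorial, restricted to functions $s,t \maps T \to \N^S$ viewed as $T$-indexed families of such valuations.
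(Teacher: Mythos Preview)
Your proposal is correct and follows exactly the approach the paper indicates: the paper's proof simply states that one need only check that $F$ preserves composition and identities, declaring both to be ``straightforward calculations,'' and you have carried out those calculations in full (together with a useful well-definedness check the paper leaves implicit). There is nothing to add.
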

\begin{proof}
To prove that $F$ is a functor we need only check that $F$ preserves composition
and sends identity functions to identity functions. Both are straightforward calculations.
\end{proof}

To get a decorated cospan category whose morphisms are open reaction networks we need that $F \maps \FinSet \to \Set$ be a lax monoidal functor.

\begin{lem}\label{lemma:RLax}
For any pair of finite sets $S$ and $S'$, there is a map $\varphi_{S,S'} \maps F(S) \times F(S') \to F(S+S')$ sending any pair consisting of a reaction network on $S$ and a reaction network on $S'$ to a reaction network on $S+S'$. This map makes $F \maps \FinSet \to \Set$ into a lax monoidal functor.
\end{lem}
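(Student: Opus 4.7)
The plan is to define $\varphi_{S,S'}$ by taking disjoint unions of transition sets and using the canonical isomorphism $\N^{S+S'} \cong \N^S \times \N^{S'}$ to extend the source and target functions. Concretely, given reaction networks $(S,T,s,t)$ and $(S',T',s',t')$, I set
\[
\varphi_{S,S'}\bigl((S,T,s,t),(S',T',s',t')\bigr) = (S+S',\, T+T',\, [\iota_S \circ s,\, \iota_{S'} \circ s'],\, [\iota_S \circ t,\, \iota_{S'} \circ t'])
\]
where $\iota_S \maps \N^S \to \N^{S+S'}$ extends by zero on $S'$ (and symmetrically for $\iota_{S'}$), and $[-,-]$ denotes the copairing out of the coproduct $T+T'$. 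The unit map $\varphi_1 \maps 1 \to F(\emptyset)$ picks out the unique reaction network $(\emptyset, \emptyset, !, !)$ with no species and no transitions.

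Next I would verify that $\varphi_{S,S'}$ is natural in $(S,S')$. Given functions $f \maps S \to R$ and $f' \maps S' \to R'$, I must check that the square
\[
\xymatrix{
F(S) \times F(S') \ar[r]^-{\varphi_{S,S'}} \ar[d]_{F(f) \times F(f')} & F(S+S') \ar[d]^{F(f+f')} \\
F(R) \times F(R') \ar[r]_-{\varphi_{R,R'}} & F(R+R')
}
\]
commutes. This reduces, via the explicit pushforward formulas \eqref{stilde}--\eqref{ttilde} from Lemma \ref{lemma:RFunctor}, to the elementary identity that the extension-by-zero maps $\iota_S$ and $\iota_{S'}$ intertwine with the pushforward along $f + f'$; that is, $F(f+f')$ applied to a reaction in $T$ only changes indices within the $R$-summand, agreeing with $F(f)$ followed by $\iota_R$.

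Then I would verify the three coherence diagrams required for lax symmetric monoidal functoriality (associativity and the two unit laws). The key observation is that the coproduct in $\FinSet$ is strictly associative only up to canonical isomorphism, and the construction of $\varphi_{S,S'}$ is built entirely from the universal property of the coproduct. Consequently, both composites in the associativity hexagon send a triple of reaction networks to the reaction network on $(S+S')+S'' \cong S+(S'+S'')$ whose transition set is $T+T'+T''$ and whose source/target maps are the joint copairing; the associator for $+$ in $\FinSet$ matches things up. The unit diagrams are immediate since $\emptyset + S \cong S$ and the empty reaction network contributes no transitions.

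The only mildly subtle point, which I expect to be the main bookkeeping obstacle, is keeping the identifications $\N^{S+S'} \cong \N^S \times \N^{S'}$ coherent with the pushforward formulas $f_*(s)$ and $f_*(t)$ defined in Lemma \ref{lemma:RFunctor}; I would handle this by checking the compatibility on generators (delta functions $\delta_\sigma \in \N^S$) and extending by linearity of the monoid structure on $\N^{S+S'}$. Once naturality and the three coherence diagrams are in hand, the lemma follows directly from Definition \ref{defn.lmf}.
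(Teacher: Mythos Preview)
Your proposal is correct and follows essentially the same approach as the paper: define $\varphi_{S,S'}$ via disjoint union of transition sets with copaired source/target maps, define the unit via the empty reaction network, and appeal to the universal property of the coproduct in $\FinSet$ for the coherence diagrams. The only difference is that you are more explicit than the paper about the identification $\N^{S+S'} \cong \N^S \times \N^{S'}$ and the extension-by-zero maps $\iota_S, \iota_{S'}$; the paper simply writes the combined source and target maps as $s+s'$ and $t+t'$ without unpacking this, and does not spell out the naturality square.
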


\begin{proof}
We define the natural transformation \[   \varphi_{S,S'} \maps F(S) \times F(S') \to F(S + S') \]
that sends a pair of reaction networks, one on $S$ and one on $S'$, to one
on the disjoint union $S + S'$ of these two sets.   

The map $\varphi_{S,S'}$ is defined by
\beq
\label{eq:Phi}
    \varphi_{S,S'} ((S,T,s,t), (S',T',s',t')) = (S + S', T + T', s+s', t+t').  
\eeq
In more detail, for the reaction network on the right hand side:
\begin{itemize}
\item The set of species is the disjoint union $S + S'$.
\item The set of transitions is the disjoint union $T + T'$.
\item The source map $s + s' \maps T + T' \to S + S'$ sends any transition $\tau \in T$ 
to $s(\tau) \in S$ and any transition $\tau' \in T'$ to $s'(\tau') \in S'$.  
\item The target map $t + t' \maps T + T' \to S + S'$ sends any transition $\tau \in T$ 
to $t(\tau) \in S$ and any transition $\tau' \in T'$ to $t'(\tau') \in S'$.  
\end{itemize}

We define the unit map assigning the trivial decoration
\[  \varphi \maps  1  \to F(\emptyset) \]
where $1$ is a one-element set, sending this one element to the unique reaction network with no species and no transitions. Commutativity of the hexagon and left/right unitor squares required for laxness of $F$ follows from the universal property of the coproduct in $\FinSet$.
\end{proof}

Suppose we have an open reaction network $\mathtt{R} \maps X \to Y$ and an open reaction network $\mathtt{R'} \maps Y \to Z$. Thus we have cospans of finite sets
\[ \xymatrix { & S & & S' & \\ X \ar[ur]^{i} & &  Y \ar[ul]_o \ar[ur]^{i'} & & Z \ar[ul]_{o'} } \] decorated with open reaction networks $\mathtt{R} \in F(S)$, $\mathtt{R'} \in F(S')$. To define the \define{composite} open reaction network $\mathtt{R'R} \maps X \to Z$ we need to decorate the composite cospan
\[ \xymatrix{ & S +_Y S' & \\ X \ar[ur]^{j \circ i} & & \ar[ul]_{j' \circ o'} Z. } \]
We have the map $[j,j'] \maps S+S' \to S+_Y S'$ including the disjoint union $S+S'$ into the pushout $S+_Y S'$. To choose an element of $F(S +_Y S')$, we take $\Phi_{S,S'}(\mathtt{R},\mathtt{R'})$, an open reaction network on $S+S'$ and apply the map 
\[F([j,j']) \maps F(S+S') \to F(S+_Y S').\] This gives the required reaction network on $S +_Y S'$.

This method of composition \emph{almost} gives a category with finite sets as objects and open reaction networks $ \mathtt{R} \maps X \to Y$ as morphisms. However since the disjoint union of sets is associative only up to isomorphism, then so is composition of open reaction networks. We are really dealing with a structure called a bicategory. For now we wish to work with a mere category. To this end we take \emph{isomorphism classes} of open reaction networks $\mathtt{R} \maps X \to Y$ as morphisms from $X$ to $Y$. In fact, it was recently shown that decorated cospan categories always give rise to certain bicategorical structures \cite{Kenny}.

\begin{thm}
There is a category $\RNet$ where:
\begin{itemize}
\item an object is a finite set,
\item a morphism from $X$ to $Y$ is an equivalence class of open reaction networks 
from $X$ to $Y$, 
\item Given morphisms represented by an open reaction network from $X$ to $Y$ and one from $Y$ to $Z$:
 \[
    (X \stackrel{i}\longrightarrow S \stackrel{o}\longleftarrow Y, R) 
    \quad \textrm{ and } \quad
    (Y \stackrel{i'}\longrightarrow S' \stackrel{o'}\longleftarrow Z, R'), 
  \]
their composite is the equivalence class of this cospan constructed via a pushout:
  \[
    \xymatrix{
      && S +_Y S' \\
      & S \ar[ur]^{j} && S' \ar[ul]_{j'} \\
      \quad X\quad \ar[ur]^{i} && Y \ar[ul]_{o} \ar[ur]^{i'} &&\quad Z \quad \ar[ul]_{o'}
    }
  \]
together with the reaction network on $S +_Y S'$ obtained by applying the map
\[      
\xymatrix{      F(S) \times F(S') \ar[rr]^-{\varphi_{S,S'}} && 
                     F(S + S') \ar[rr]^-{F([j,j'])} && F(S +_Y S') } \]
to the pair $(R,R') \in F(S) \times F(S')$.  
\end{itemize}
\end{thm}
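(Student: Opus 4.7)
The plan is to recognize this theorem as a direct application of Fong's decorated cospan construction (Lemma \ref{lemma:fcospans}) to the lax monoidal functor $F \maps (\FinSet,+) \to (\Set,\times)$ built in Lemmas \ref{lemma:RFunctor} and \ref{lemma:RLax}. Essentially all of the work has already been done; what remains is to assemble the pieces and explicitly describe the equivalence relation on open reaction networks.

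First I would recall that Lemma \ref{lemma:RFunctor} gives a functor $F \maps \FinSet \to \Set$ sending a finite set $S$ to the set of reaction networks on $S$ and a function $f \maps S \to S'$ to the map $F(f)$ that pushes forward source and target multisets along $f$ via the formulas (\ref{stilde}) and (\ref{ttilde}). Lemma \ref{lemma:RLax} then equips $F$ with coherence maps $\varphi_{S,S'} \maps F(S) \times F(S') \to F(S + S')$ described in (\ref{eq:Phi}), together with the unit map $\varphi \maps 1 \to F(\varnothing)$, making $F$ into a lax symmetric monoidal functor from $(\FinSet,+)$ to $(\Set,\times)$.

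Next I would invoke Lemma \ref{lemma:fcospans} directly: since $\FinSet$ has finite colimits (pushouts in particular) and $(F,\varphi)$ is lax symmetric monoidal, there is a decorated cospan category $F\Cospan$ whose objects are finite sets and whose morphisms are equivalence classes of cospans $X \stackrel{i}{\to} S \stackrel{o}{\leftarrow} Y$ in $\FinSet$ decorated with an element of $F(S)$, i.e.\ a reaction network on $S$. Two decorated cospans are equivalent iff there is an isomorphism of the underlying cospans (a bijection $S \to S'$ intertwining the legs) that carries one decoration to the other under the induced map $F(S) \to F(S')$; unwinding this using Lemma \ref{lemma:RFunctor} gives exactly the expected notion of isomorphism of open reaction networks. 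We define $\RNet := F\Cospan$.

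Finally I would read off composition from Fong's lemma: given $(X \to S \leftarrow Y, R)$ and $(Y \to S' \leftarrow Z, R')$, the composite cospan is obtained by the pushout $S +_Y S'$ with legs $j \circ i$ and $j' \circ o'$, and the composite decoration is $F([j,j'])(\varphi_{S,S'}(R,R'))$, which is precisely the formula in the statement. Associativity and unitality of this composition, as well as well-definedness on equivalence classes, are then immediate from Lemma \ref{lemma:fcospans}. The only step requiring any care is verifying that the map $F([j,j']) \circ \varphi_{S,S'}$ does not depend on the choice of pushout representative and that the resulting reaction network on $S+_Y S'$ matches the intuitive gluing described in the examples, but both follow from the naturality of $\varphi$ in the two arguments combined with the universal property of the pushout. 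No new verifications are needed beyond those already packaged in Fong's result.
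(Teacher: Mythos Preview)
Your proposal is correct and matches the paper's approach exactly: the paper's proof is simply ``This follows from Lemma \ref{lemma:RLax} together with Lemma \ref{lemma:fcospans}, where we explain the equivalence relation in detail.'' You have spelled out in more detail what invoking those two lemmas entails, but the argument is the same direct application of Fong's decorated cospan construction to the lax monoidal functor $F$.
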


\begin{proof}
This follows from Lemma \ref{lemma:RLax} together with Lemma 
\ref{lemma:fcospans}, where we explain the equivalence relation in detail.  
\end{proof}

In fact, Fong's machinery proves more.  We can take the `tensor product' of  two open reaction networks by setting them side by side.  They then act in parallel with no interaction between them.  This makes $\RNet$ into symmetric monoidal category. 
In fact it is one of a very nice sort, called a `hypergraph category'.  For more on this concept, see Fong's thesis \cite{FongThesis}.

\begin{thm}
The category $\RNet$ is a symmetric monoidal category where the tensor product of objects $X$ and $Y$ is their disjoint union $X + Y$, while the tensor product of the morphisms
\[
    (X \stackrel{i}{\longrightarrow} S \stackrel{o}{\longleftarrow} Y, R) 
    \quad \textrm{ and } \quad
    (X' \stackrel{i'}{\longrightarrow} S' \stackrel{o'}{\longleftarrow} Y', R) 
  \]
is defined to be
\[  ( X + X' \stackrel{i+i'}{\longrightarrow} S + S'  \stackrel{o + o'}{\longleftarrow} Y + Y', \;
\varphi_{S,S'}(R,R') ) .\]
In fact $\RNet$ is a hypergraph category, and thus a dagger-compact category.
\end{thm}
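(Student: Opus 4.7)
The plan is to recognize this theorem as a direct consequence of Fong's general machinery for decorated cospan categories, which was recalled as Lemma \ref{lemma:fcospans} in Chapter \ref{ch:catopen}. The verification work has essentially already been done: Lemma \ref{lemma:RFunctor} establishes $F \maps \FinSet \to \Set$ as a functor assigning the set of reaction networks to each finite set, and Lemma \ref{lemma:RLax} upgrades this to a lax symmetric monoidal functor $(F,\varphi) \maps (\FinSet,+) \to (\Set,\times)$ by providing the coherence maps $\varphi_{S,S'} \maps F(S) \times F(S') \to F(S+S')$ via disjoint union of reaction networks, together with a unit map $\varphi \maps 1 \to F(\varnothing)$. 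So the first step is simply to invoke Lemma \ref{lemma:fcospans}, which applied to this lax symmetric monoidal $F$ produces a symmetric monoidal structure on $F\Cospan = \RNet$ whose tensor product on objects is disjoint union and whose tensor product on morphisms is exactly the formula displayed in the theorem statement.

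Next I would verify that the explicit tensor product formula in Fong's general construction specializes correctly in our setting. Given open reaction networks $(X \xrightarrow{i} S \xleftarrow{o} Y, R)$ and $(X' \xrightarrow{i'} S' \xleftarrow{o'} Y', R')$, Fong's construction tensors the cospans componentwise as $X+X' \xrightarrow{i+i'} S+S' \xleftarrow{o+o'} Y+Y'$ and decorates using $\varphi_{S,S'}(R,R')$, which by equation (\ref{eq:Phi}) is precisely the reaction network $(S+S', T+T', s+s', t+t')$ consisting of the two reaction networks sitting side by side with no interaction. This matches the statement and makes physical sense: tensoring corresponds to placing two open reaction networks in parallel.

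For the hypergraph category claim, the plan is again to appeal directly to Fong. His Proposition 3.2 (referenced in the proof of Lemma \ref{lemma:fcospans}) states that for any lax symmetric monoidal functor $(F,\varphi) \maps (\CC,+) \to (\Set,\times)$ out of a category with finite colimits, the decorated cospan category $F\Cospan$ is automatically a hypergraph category. The hypergraph structure arises from the canonical special commutative Frobenius structure on each object carried by cospans in $\CC$, decorated with the empty-reaction-network decoration obtained via $\varphi \maps 1 \to F(\varnothing)$ and then transported along the relevant coproduct maps. Since every hypergraph category is in particular a dagger-compact category (with the dagger given by turning cospans around and compactness inherited from the Frobenius structure), the final clause follows without additional work.

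The only genuine step requiring care is checking that the natural transformation $\varphi_{S,S'}$ constructed in Lemma \ref{lemma:RLax} is itself symmetric monoidal in the sense required by Definition \ref{defn.lmf}, namely that the three coherence diagrams commute. This is essentially forced: under the canonical associator and unitors for disjoint union of finite sets, the formula $\varphi_{S,S'}((S,T,s,t),(S',T',s',t')) = (S+S', T+T', s+s', t+t')$ is associative and unital on the nose because disjoint union of sets, transitions, and source/target maps is. This is the only part where a small explicit verification is needed, but it amounts to chasing the universal property of the coproduct in $\FinSet$ and is entirely routine rather than an obstacle.
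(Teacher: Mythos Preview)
Your proposal is correct and takes essentially the same approach as the paper: both derive the result directly from Fong's general machinery for decorated cospan categories, using that Lemmas \ref{lemma:RFunctor} and \ref{lemma:RLax} supply the required lax symmetric monoidal functor. The paper's own proof is a one-line citation of Theorem 3.4 of Fong \cite{Fong}, which is the result behind the hypergraph-category claim you attribute to Proposition 3.2; your version simply unpacks what that citation amounts to.
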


\begin{proof}
This follows from Theorem 3.4 of Fong's paper on decorated cospans \cite{Fong}.
\end{proof}

\section{Open reaction networks with rates}
\label{sec:RxNet}

To get a dynamical system from an open reaction network we need to equip its with nonnegative real numbers called `rate constants':

\begin{defn}
A \define{reaction network with rates} $R = (S,T,s,t,r)$ consists of a reaction network
$(S,T,s,t)$ together with a function $r \maps T \to (0,\infty)$ specifying a \define{rate constant} for each transition.   We call any reaction network with rates having $S$ as its set of species a reaction network with rates \define{on} $S$.
\end{defn}

\noindent
Just as reaction networks are equivalent to Petri nets, reaction networks with rates
are equivalent to Petri nets where each transition is equipped with a rate constant.  
These are usually called `stochastic Petri nets', because they can be used to define 
stochastic processes \cite{BaezBiamonte, GossPeccoud, Haas}.   

The results of the last section easily generalize to reaction networks with rates: 

\begin{defn}
Given finite sets $X$ and $Y$, an \define{open reaction network with rates from} $X$ \define{to} $Y$ is a cospan of finite sets
\[ \xymatrix{  & S  &  \\ X \ar[ur]^{i} & & Y \ar[ul]_{o} } \] 
together with a reaction network with rates on $R$ on $S$. We often abbreviate all this data as $R \maps X \to Y$. 
\end{defn}

\begin{lem}
\label{lemma:RxFunctor}
There is a functor $F \maps \FinSet \to \Set$ such that:
\begin{itemize}
\item For any finite set $S$, $F(S)$ is the set of all reaction networks with rates on $S$.
\item For any function $f \maps S \to S'$ between finite sets and any reaction network with rates $(S,T,s,t,r) \in F(S)$, we have
\[    F(f)(S,T,s,t,r) = (S',T, f_*(s), f_*(t), r) \]
where $f_*(s)$ and $f_*(t)$ are defined as in Equations \eqref{stilde} and \eqref{ttilde}.
\end{itemize}
\end{lem}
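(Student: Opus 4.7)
The plan is to proceed exactly as in the proof of Lemma \ref{lemma:RFunctor}, observing that the addition of a rate function $r \maps T \to (0,\infty)$ introduces no new obligations because the pushforward $F(f)$ leaves both the transition set $T$ and the rate function $r$ untouched; it only acts on the source and target maps $s, t \maps T \to \N^S$ by postcomposing with the induced map $\N^f \maps \N^S \to \N^{S'}$ (equivalently, via the formulas \eqref{stilde} and \eqref{ttilde}). So the entire rate-carrying datum $r$ is a bystander, and functoriality reduces to functoriality of the assignment $(S, T, s, t) \mapsto (S', T, f_*(s), f_*(t))$ already verified in Lemma \ref{lemma:RFunctor}.

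First I would verify that the formula $F(f)(S,T,s,t,r) = (S', T, f_*(s), f_*(t), r)$ does produce a well-defined reaction network with rates on $S'$; this is immediate because $f_*(s), f_*(t) \maps T \to \N^{S'}$ and $r \maps T \to (0,\infty)$ is already in the required form. Next I would check preservation of identities: for $1_S \maps S \to S$, the induced map on $\N^S$ is the identity, so $(1_S)_*(s) = s$ and $(1_S)_*(t) = t$, giving $F(1_S)(S,T,s,t,r) = (S,T,s,t,r)$, i.e. $F(1_S) = 1_{F(S)}$.

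For preservation of composition, given $f \maps S \to S'$ and $g \maps S' \to S''$, I would compute both $F(g \circ f)(S,T,s,t,r)$ and $F(g) \circ F(f)(S,T,s,t,r)$. Both yield reaction networks with rates on $S''$ with the same transition set $T$ and the same rate function $r$, so it suffices to check $(g \circ f)_*(s) = g_*(f_*(s))$ and likewise for $t$. This is a routine unwinding of \eqref{stilde}: for each $\tau \in T$ and each $\sigma'' \in S''$,
\[
(g \circ f)_*(s)(\tau)(\sigma'') = \sum_{\{\sigma \in S \,:\, g(f(\sigma)) = \sigma''\}} s(\tau)(\sigma),
\]
which we reorganize by summing first over $\sigma' \in S'$ with $g(\sigma') = \sigma''$ and then over $\sigma \in S$ with $f(\sigma) = \sigma'$, yielding $g_*(f_*(s))(\tau)(\sigma'')$. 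The same computation applies to $t$.

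There is essentially no obstacle here: the only nontrivial content is the pushforward identity for $s$ and $t$, and that is exactly what was verified in Lemma \ref{lemma:RFunctor}. The inclusion of rates is strictly formal because the rate function $r$ is transported along the identity on $T$.
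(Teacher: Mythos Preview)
Your proposal is correct and follows exactly the paper's approach: the paper's proof consists solely of the remark ``This is a slight variation on Lemma \ref{lemma:RFunctor},'' which is precisely your observation that the rate function $r$ is a bystander and functoriality reduces to that already established case. You have simply spelled out the straightforward calculations that the paper leaves implicit.
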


\begin{proof}
This is a slight variation on Lemma \ref{lemma:RFunctor}.
\end{proof}

\begin{lem}
\label{lemma:RxLax}
The functor $F$ can be made lax symmetric monoidal from $(\FinSet, +)$ to $(\Set, \times)$.  To do this we equip it with the map $\varphi \maps 1 \to F(\emptyset)$ sending the one element of $1$ to the unique reaction network with rates having no species and no transitions, together with natural transformation $\varphi_{S,S'} \maps F(S) \times F(S') \to F(S + S')$ 
such that
\[
    \varphi_{S,S'} ((S,T,s,t,r), (S',T',s',t',r')) = (S + S', T + T', s+s', t+t',[r,r']).  
\]
where:
\begin{itemize}
\item The map $s + s' \maps T + T' \to S + S'$ sends any transition $\tau \in T$ 
to $s(\tau) \in S$ and any transition $\tau' \in T'$ to $s'(\tau') \in S'$.  
\item The map $t + t' \maps T + T' \to S + S'$ sends any transition $\tau \in T$ 
to $t(\tau) \in S$ and any transition $\tau' \in T'$ to $t'(\tau') \in S'$.  
\item The map $[r,r'] \maps T + T' \to [0,\infty)$ sends any transition $\tau \in T$ to $r(\tau)$ and any transition $\tau' \in T'$ to $r'(\tau')$.  
\end{itemize}
\end{lem}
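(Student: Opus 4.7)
The plan is to proceed by direct analogy with Lemma \ref{lemma:RLax}, treating the rate assignment $r \maps T \to (0,\infty)$ on equal footing with the source and target maps, and reducing every required coherence to the universal property of the coproduct in $\FinSet$.

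First I would verify that $\varphi_{S,S'}$ as displayed is well-defined, i.e.\ that $(S+S', T+T', s+s', t+t', [r,r'])$ really is a reaction network with rates on $S+S'$: the source and target maps land in $\N^{S+S'} \cong \N^S \oplus \N^{S'}$ via the evident inclusions, and $[r,r']$ is the copairing $T+T' \to (0,\infty)$ obtained from $r$ and $r'$ by the universal property of the coproduct. Next I would check naturality of $\varphi_{S,S'}$ in $(S,S')$: given $f \maps S \to \widetilde S$ and $g \maps S' \to \widetilde S'$, naturality asserts that assembling first and then pushing forward along $f+g$ agrees with pushing forward along $f$ and $g$ separately and then assembling. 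On the transition sets this is the identity, on sources and targets it is the statement that $(f+g)_*[s,s'] = [f_*s, g_* s']$ (which is immediate from the definition of $f_*$ in Lemma \ref{lemma:RxFunctor} and the universal property of $+$), and on rates the rate function is untouched by $F(f+g)$ so both routes yield $[r,r']$.

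The second step is the three coherence diagrams. For associativity, one must show that the two evident maps $F(S)\times F(S')\times F(S'') \to F(S+S'+S'')$ agree. Both send $(R,R',R'')$ to the reaction network with rates whose species set is $S+S'+S''$, whose transition set is $T+T'+T''$, whose source and target maps are assembled from $s,s',s''$ and $t,t',t''$ respectively, and whose rate function is $[r,r',r'']$. Equality of these follows from the associativity of the coproduct in $\FinSet$ up to the canonical associator, combined with the fact that $F$ acts functorially on the associator isomorphism. The left and right unitor squares are even easier: the unique reaction network with rates on $\emptyset$ has empty transition set, so combining it with any $(S,T,s,t,r)$ yields exactly $(S,T,s,t,r)$ modulo the canonical isomorphisms $\emptyset + S \cong S \cong S + \emptyset$.

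For symmetry I would check that $\varphi_{S,S'}$ intertwines the braidings on $(\FinSet,+)$ and $(\Set,\times)$: the braiding in $\FinSet$ swaps the two summands of $S+S'$ and $T+T'$, and the induced map on reaction networks with rates agrees with applying $\varphi_{S',S}$ to the swapped pair, again by the universal property of the coproduct. The main obstacle, such as it is, is purely bookkeeping: one must be careful that the rate function $r$ is genuinely a label on transitions rather than on edges in $\N^{S+S'}$, so that its behavior under $F(f)$ (which reindexes only sources and targets) is just the identity, and that all the copairings $[r,r']$, $[s,s']$, $[t,t']$ fit together coherently. Once this is noted, all three coherence diagrams reduce to equalities between the two canonical maps out of an iterated coproduct, which commute on the nose.
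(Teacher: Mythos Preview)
Your proposal is correct and takes essentially the same approach as the paper, which simply records that this is a slight variation on Lemma~\ref{lemma:RLax}. You have spelled out in detail what that variation entails---namely that the rate assignment $r$ is carried along as an extra label on transitions, untouched by $F(f)$ and handled by the copairing $[r,r']$---and reduced all the coherence conditions to the universal property of the coproduct in $\FinSet$, exactly as the paper does for the rate-free case.
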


\begin{proof}
This is a slight variation on Lemma \ref{lemma:RLax}. 
\end{proof}

We now come to the star of the show, the category $\RxNet$:

\begin{thm}
\label{thm:RxNet}
There is a category $\RxNet$ where:
\begin{itemize}
\item an object is a finite set,
\item a morphism from $X$ to $Y$ is an equivalence class of open reaction networks 
with rates from $X$ to $Y$, 
\item Given morphisms represented by an open reaction network with rates from $X$ to $Y$ and one from $Y$ to $Z$:
 \[
    (X \stackrel{i}\longrightarrow S \stackrel{o}\longleftarrow Y, R) 
    \quad \textrm{ and } \quad
    (Y \stackrel{i'}\longrightarrow S' \stackrel{o'}\longleftarrow Z, R'), 
  \]
their composite consists of the equivalence class of this cospan:
  \[
    \xymatrix{
      & S +_Y S' \\
      \quad X\quad \ar[ur]^{ji} && \quad Z \quad \ar[ul]_{j'o'}
    }
  \]
together with the reaction network with rates on $S +_Y S'$ 
obtained by applying the map
\[      
\xymatrix{      F(S) \times F(S') \ar[rr]^-{\varphi_{S,S'}} && 
                     F(S + S') \ar[rr]^-{F([j,j'])} && F(S +_Y S') } \]
to the pair $(R,R') \in F(S) \times F(S')$.  
\end{itemize}
The category $\RxNet$ is a symmetric monoidal category where the tensor product of objects $X$ and $Y$ is their disjoint union $X + Y$, while the tensor product of the morphisms
\[
    (X \stackrel{i}{\longrightarrow} S \stackrel{o}{\longleftarrow} Y, R) 
    \quad \textrm{ and } \quad
    (X' \stackrel{i'}{\longrightarrow} S' \stackrel{o'}{\longleftarrow} Y', R) 
  \]
is defined to be
\[  ( X + X' \stackrel{i+i'}{\longrightarrow} S + S'  \stackrel{o + o'}{\longleftarrow} Y + Y', \; \varphi_{S,S'}(R,R') ) .\]
In fact $\RxNet$ is a hypergraph category.
\end{thm}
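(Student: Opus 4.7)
The plan is to invoke the decorated cospan machinery of Fong wholesale. The two Lemmas immediately preceding the theorem have done all the genuine work: Lemma \ref{lemma:RxFunctor} supplies a functor $F\maps\FinSet\to\Set$ sending each finite set $S$ to the set $F(S)$ of reaction networks with rates on $S$, and Lemma \ref{lemma:RxLax} equips $F$ with a lax symmetric monoidal structure $(F,\varphi,\varphi_{-,-})\maps(\FinSet,+)\to(\Set,\times)$. With this input, the first half of the theorem follows directly from Lemma \ref{lemma:fcospans}: that lemma produces a category $F\Cospan$ whose objects are finite sets, whose morphisms are isomorphism classes of $F$-decorated cospans, and whose composition law is exactly the pushout-and-pushforward formula written in the statement. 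Defining $\RxNet := F\Cospan$ then gives objects, morphisms, and composition as described.

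For the monoidal structure, I would appeal to the second half of Fong's construction (Theorem 3.4 of \cite{Fong}, already cited in the proof of the analogous theorem for $\RNet$). This produces a symmetric monoidal structure on $F\Cospan$ whose tensor product on objects is the coproduct $+$ in $\FinSet$, whose tensor product of morphisms is obtained by taking the coproduct of the underlying cospans and then applying $\varphi_{S,S'}$ to combine their decorations, and whose symmetry is inherited from the symmetry of $+$ in $\FinSet$. The associators, unitors, and braiding are precisely those arising from the universal properties of coproducts and pushouts, and the coherence axioms follow from the laxness of $(F,\varphi,\varphi_{-,-})$. The description of the tensor product of morphisms in the statement agrees on the nose with what Fong's theorem produces.

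Finally, to upgrade this to a hypergraph category, I would again cite Fong's result that every decorated cospan category arising from a lax symmetric monoidal functor into $(\Set,\times)$ is a hypergraph category, with the Frobenius/special-commutative structure on each object supplied by the canonical cospans $X+X\to X\leftarrow\emptyset$ and $\emptyset\to X\leftarrow X+X$ (and their transposes), decorated with the trivial reaction network on $X$ (no transitions). As a hypergraph category it is in particular dagger compact.

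The only step requiring any care is verifying that the two equivalent descriptions of the tensor product of morphisms -- first, Fong's abstract recipe, and second, the concrete formula $\varphi_{S,S'}(R,R')$ written in the theorem -- coincide; but this is immediate from the definition of $\varphi_{S,S'}$ in Lemma \ref{lemma:RxLax}, since the decoration on $S+S'$ is literally the disjoint union of the species, transitions, source/target maps, and rate functions. Thus no obstacle arises: once Lemmas \ref{lemma:RxFunctor} and \ref{lemma:RxLax} are in hand, the theorem is a formal consequence of Fong's framework.
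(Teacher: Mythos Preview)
Your proposal is correct and takes essentially the same approach as the paper: the paper's proof is a one-line citation of Lemmas \ref{lemma:fcospans} and \ref{lemma:RxLax}, invoking Fong's decorated cospan machinery exactly as you describe. Your write-up is simply a more expanded version of that same argument.
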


\begin{proof}
This follows from Lemmas \ref{lemma:RxLax} and \ref{lemma:fcospans}, 
where we explain the equivalence relation in detail.  
\end{proof}

\section{The open rate equation}
\label{sec:openrate}

In chemistry, a reaction network with rates is frequently used as a tool to specify a dynamical system.   A dynamical system is often defined as a smooth manifold $M$ whose points are `states', together with a smooth vector field on $M$ saying how these states evolve in time.   In chemistry we take $M = [0,\infty)^S$ where $S$ is the set of species: a point $c \in [0,\infty)^S$ describes the \define{concentration} $c_\sigma$ of each species $\sigma \in S$.   The corresponding dynamical system is a first-order differential equation called the `rate equation':
\[   \frac{dc(t)}{dt} = v(c(t)) \]
where now $c \maps \R \to [0,\infty)^S$ describes the concentrations as a function of 
time, and $v$ is a vector field on $[0,\infty)^S$.   Of course, $[0,\infty)^S$ is not a smooth manifold.  On the other hand, the vector field $v$ is better than smooth: its components are polynomials, so it is \define{algebraic}.   For mathematical purposes this lets us treat $v$ as an vector field on all of $\R^S$, even though negative concentrations are unphysical.  

In more detail, suppose $R = (S,T,s,t,r)$ is a reaction network with rates. 
Then the rate equation is determined by a rule called the \define{law of mass action}. 
This says that each transition $\tau \in T$ contributes 
to $dc(t)/dt$ by the product of:
\begin{itemize}
\item the rate constant $r(\tau)$, 
\item the concentration of each species $\sigma$ raised to the power given by the number of times $\sigma$ appears as an input to $\tau$, namely $s(\tau)(\sigma)$, and
\item the vector $t(\tau) - s(\tau) \in \R^S$ whose $\sigma$th component is the change
in the number of items of the species $\sigma \in S$ caused by the transition $\tau$,
\end{itemize}
The second factor is a product over all species, and it deserves an abbreviated notation: 
given $c \in \R^S$ we write
\begin{equation}
\label{eq:power}
 c^{s(\tau)} = \prod_{\sigma \in S} {c_\sigma}^{s(\tau)(\sigma)} .
\end{equation}
Thus:

\begin{defn} 
We say the time-dependent concentrations $c \maps \R \to \R^S$ obey the
\define{rate equation} for the reaction network with rates $R = (S,T,s,t,r)$ if
\[
\frac{dc(t)}{d t} = 
\sum_{\tau \in T} r(\tau) (t(\tau) - s(\tau)) c(t)^{s(\tau)} .
\]
\end{defn}

\noindent
For short, we can write the rate equation as
\[    \frac{dc(t)}{d t} = v^R(c(t)) \]
where $v^R$ is the vector field on $\R^S$ given by
\begin{equation}
\label{eq:v^R}
v^R(c) = \sum_{\tau \in T} r(\tau) \, ( t(\tau) - s(\tau) ) c^{s(\tau)} 
\end{equation}
at any point $c \in \R^S$.  We call the components of this vector field \define{reaction velocities}, since in the rate equation they describe rates of change of concentrations.   

Given an \emph{open} reaction network with rates, we can go further: we can obtain an \emph{open} dynamical system.   We give a specialized definition of this concept suited to the case at hand:

\begin{defn}
Given finite sets $X$ and $Y$, an \define{open dynamical system from} $X$ \define{to} $Y$ is a cospan of finite sets
\[ \xymatrix{  & S  &  \\ X \ar[ur]^{i} & & Y \ar[ul]_{o} } \] 
together with an algebraic vector field $v$ on $\R^S$. 
\end{defn}

\noindent
The point is that given an open dynamical system of this sort, we can write down a generalization of the rate equation that takes into account `inflows' and `outflows' as well as the intrinsic dynamics given by the vector field $v$.   

To make this precise, let the
\define{inflows} $I \maps \R \to \R^X $ and \define{outflows} $O \maps \R \to \R^Y$ be arbitrary smooth functions of time.   We write the inflow at the point $x \in X$ as $I_x(t)$ or simply $I_x$, and similarly for the outflows.   Given an open dynamical system and a choice of inflows and outflows, we define the pushforward $i_*(I) \maps \R \to \R^S$ by
\[ i_*(I)_\sigma = \sum_{ \{ x : i(x) = \sigma \} } I_x \]
and define $o_*(O) \maps \R \to \R^S$ by
\[   o_*(O)_\sigma = \sum_{ \{ y : o(y) = \sigma \} } O_y \]
With this notation, the \define{open rate equation} is
\[ \frac{dc(t)}{dt} = v(c(t)) + i_*(I(t)) - o_*(O(t)). \]
The pushforwards here say that for any species $\sigma \in S$, the time derivative of 
the concentration $c_\sigma(t)$ takes into account the sum of all inflows at $x \in X$ such that $i(x) = \sigma$, minus the sum of outflows at $y \in Y$ such that $i(y) = \sigma$.  

To make these ideas more concrete, let us see in an example how to go from an open reaction network with rates to an open dynamical system and then its open rate equation.  Let $R$ be the following reaction network:
\[   \xymatrix{ A+B \ar[r]^{\tau} & C + D. }  \] 
The set of species is $S = \{A,B,C,D\}$ and the set of transitions is just $T = \{\tau\}$.   We can make $R$ into a reaction network with rates by saying the rate constant of $\tau$ is some positive number $r$.   This gives a vector field 
\[      v^R(A,B,C,D) = (-r A B, -r A B, r A B, r A B) \]
where we abuse notation in a commonly practiced way and use $(A,B,C,D)$ as the  coordinates for a point in $\R^S$: that is, the concentrations of the four species
with the same names.  The resulting rate equation is
\[ 
\begin{array}{rcl} 
\displaystyle{\frac{dA(t)}{dt}} &=& - r A(t) B(t) \\ \\
\displaystyle{\frac{dB(t)}{dt}} &=& - r A(t) B(t) \\ \\
\displaystyle{\frac{dC(t)}{dt}}&=& r A(t) B(t) \\ \\
\displaystyle{\frac{dD(t)}{dt}} &=& r A(t) B(t) .
\end{array}
\]
Next, we can make $R$ into an open reaction network $R \maps X \to Y$ as follows:
\[
\begin{tikzpicture}
	\begin{pgfonlayer}{nodelayer}
		\node [style=species] (A) at (-4, 0.5) {$A$};
		\node [style=species] (B) at (-4, -0.5) {$B$};
		\node [style=species] (C) at (-1, 0.5) {$C$};
		\node [style=species] (D) at (-1, -0.5) {$D$};
             \node [style=transition] (a) at (-2.5, 0) {$\tau$}; 
		
		\node [style=empty] (X) at (-5.1, 1) {$X$};
		\node [style=none] (Xtr) at (-4.75, 0.75) {};
		\node [style=none] (Xbr) at (-4.75, -0.75) {};
		\node [style=none] (Xtl) at (-5.4, 0.75) {};
             \node [style=none] (Xbl) at (-5.4, -0.75) {};
	
		\node [style=inputdot] (1) at (-5, 0.5) {};
		\node [style=empty] at (-5.2, 0.5) {$1$};
		\node [style=inputdot] (2) at (-5, 0) {};
		\node [style=empty] at (-5.2, 0) {$2$};
		\node [style=inputdot] (3) at (-5, -0.5) {};
		\node [style=empty] at (-5.2, -0.5) {$3$};

		\node [style=empty] (Y) at (0.1, 1) {$Y$};
		\node [style=none] (Ytr) at (.4, 0.75) {};
		\node [style=none] (Ytl) at (-.25, 0.75) {};
		\node [style=none] (Ybr) at (.4, -0.75) {};
		\node [style=none] (Ybl) at (-.25, -0.75) {};

		\node [style=inputdot] (4) at (0, 0) {};
		\node [style=empty] at (0.2, 0) {$4$};
		
		
	\end{pgfonlayer}
	\begin{pgfonlayer}{edgelayer}
		\draw [style=inarrow] (A) to (a);
		\draw [style=inarrow] (B) to (a);
		\draw [style=inarrow] (a) to (C);
		\draw [style=inarrow] (a) to (D);
		\draw [style=inputarrow] (1) to (A);
		\draw [style=inputarrow] (2) to (B);
		\draw [style=inputarrow] (3) to (B);
		\draw [style=inputarrow] (4) to (C);
		\draw [style=simple] (Xtl.center) to (Xtr.center);
		\draw [style=simple] (Xtr.center) to (Xbr.center);
		\draw [style=simple] (Xbr.center) to (Xbl.center);
		\draw [style=simple] (Xbl.center) to (Xtl.center);
		\draw [style=simple] (Ytl.center) to (Ytr.center);
		\draw [style=simple] (Ytr.center) to (Ybr.center);
		\draw [style=simple] (Ybr.center) to (Ybl.center);
		\draw [style=simple] (Ybl.center) to (Ytl.center);
	\end{pgfonlayer}
\end{tikzpicture}
\]
Here $X = \{1,2,3\}$ and $Y = \{4,5\}$, while the functions $i \maps X \to S$ and $o \maps Y \to S$ are given by
\[      i(1) = A, \; i(2) = i(3) = B, \; o(4) = C. \]
The corresponding open dynamical system is the span
\[ \xymatrix{  & S  &  \\ X \ar[ur]^{i} & & Y \ar[ul]_{o} } \]
decorated by the vector field $v^R$ on $\R^S$.   Finally, the corresponding open rate equation is
\[ 
\begin{array}{rcl} 
\displaystyle{\frac{dA(t)}{dt}} &=& - r A(t) B(t)  + I_1(t)\\ \\
\displaystyle{\frac{dB(t)}{dt}} &=& - r A(t) B(t) + I_2(t) + I_3(t) \\ \\
\displaystyle{\frac{dC(t)}{dt}} &=& r A(t) B(t) - O_4(t) \\ \\
\displaystyle{\frac{dD(t)}{dt}} &=& r A(t) B(t).
\end{array}
\]
Note that $dB/dt$ involves the sum of two inflow terms $I_1$ and $I_2$ 
since $i(2) = i(3) = B$, while $dD/dt$ involves neither inflow nor outflow terms 
since $D$ is in the range of neither $i$ nor $o$.  

\section{The category of open dynamical systems}
\label{sec:opendynam}

There is a category $\Dynam$ where the morphisms are open dynamical systems ----or more precisely, certain equivalence classes of these.  We compose two open dynamical systems by connecting the outputs of the first to the inputs of the second.  

To construct the category $\Dynam$, we again use the machinery of decorated cospans.  
For this we need a lax monoidal functor $D \maps \FinSet \to \Set$ sending any finite set $S$ to the set of algebraic vector fields on $\R^S$.

\begin{lem}
\label{lemma:Dfunctor}
There is a functor $D \maps \FinSet \to \Set$ such that:
\begin{itemize}
\item $D$ maps any finite set $S$ to 
\[ D(S) = \{ v \maps \R^S \to \R^S : \; v \textrm{ is algebraic}  \}. \] 
\item $D$ maps any function $f \maps S \to S'$ between finite sets to the function $D(f) \maps D(S) \to D(S')$ given as follows:
\[ D(f)(v) = f_* \circ v \circ f^* \]
where the pullback $ f^* \maps \R^{S'} \to \R^S $ is given by
\[ f^*(c)(\sigma) = c(f(\sigma)) \] 
while the pushforward $ f_* \maps \R^{S} \to \R^{S'} $ is given by
\[ f_*(c)(\sigma') = \sum_{ \{ \sigma \in S : f(\sigma) = \sigma' \} } c(\sigma). \]
\end{itemize}
\end{lem}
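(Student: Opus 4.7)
The plan is to verify the three things needed: that $D(f)(v)$ lands in $D(S')$ when $v \in D(S)$ (well-definedness), that $D$ preserves identities, and that $D$ preserves composition. The heart of the argument is the contravariance of pullback and covariance of pushforward under composition of functions between finite sets.

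First I would check well-definedness. Since $f^*\maps \R^{S'}\to \R^S$ is defined by $f^*(c)(\sigma) = c(f(\sigma))$ and $f_*\maps \R^S\to\R^{S'}$ is defined by summing over fibers, both are $\R$-linear maps between finite-dimensional real vector spaces; hence each is given by a polynomial (in fact degree one) in the standard coordinates. An algebraic vector field $v\maps\R^S\to\R^S$ has polynomial components. Therefore $f_*\circ v\circ f^*\maps\R^{S'}\to\R^{S'}$ has polynomial components as well, confirming $D(f)(v)\in D(S')$.

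Next I would check identity preservation: when $f=1_S$, both $f^*$ and $f_*$ are the identity on $\R^S$, so $D(1_S)(v)=v$ for every $v\in D(S)$, giving $D(1_S)=1_{D(S)}$.

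Finally I would verify functoriality on composition. Given $f\maps S\to S'$ and $g\maps S'\to S''$, the key algebraic observations are the two identities $(g\circ f)^* = f^*\circ g^*$ and $(g\circ f)_* = g_*\circ f_*$. The first follows directly from the definition: $((g\circ f)^*(c))(\sigma) = c(g(f(\sigma))) = (g^*(c))(f(\sigma)) = (f^*(g^*(c)))(\sigma)$. The second is a straightforward reindexing of sums over fibers, since the fiber of $g\circ f$ over $\sigma''\in S''$ is the disjoint union, over $\sigma'\in g^{-1}(\sigma'')$, of the fibers $f^{-1}(\sigma')$. Given these, for any $v\in D(S)$,
\[
D(g)(D(f)(v)) \;=\; g_*\circ(f_*\circ v\circ f^*)\circ g^* \;=\; (g_*\circ f_*)\circ v\circ(f^*\circ g^*) \;=\; (g\circ f)_*\circ v\circ(g\circ f)^* \;=\; D(g\circ f)(v),
\]
so $D(g\circ f)=D(g)\circ D(f)$. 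I do not anticipate any real obstacle; the only mildly fiddly point is the bookkeeping in verifying $(g\circ f)_* = g_*\circ f_*$, but this is a direct index-chasing calculation.
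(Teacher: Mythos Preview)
Your proposal is correct and follows essentially the same approach as the paper: the functoriality check rests on pushforward being covariant and pullback being contravariant, exactly as you argue. The paper's proof is terser---it gives only the one-line composition computation $D(f)D(g)(v) = f_* \circ g_* \circ v \circ g^* \circ f^* = (f\circ g)_* \circ v \circ (f\circ g)^* = D(fg)(v)$ and omits the well-definedness and identity checks you included---but the underlying reasoning is identical.
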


\begin{proof}
The functoriality of $D$ follows from the fact that pushforward is a covariant functor and pullback is a contravariant functor:
\[   D(f)D(g)(v) = f_* \circ g_* \circ v \circ g^* \circ f^* = (f\circ g)_* \circ v \circ (f\circ g)^* = D(fg)(v).  \qedhere \]
\end{proof}

\begin{lem}
\label{lemma:DLax}
The functor $D$ becomes lax symmetric monoidal from $(\FinSet, +)$ to $(\Set, \times)$ if we equip it with the natural transformation 
\[ \delta_{S,S'} \maps D(S) \times D(S') \to D(S + S') \]
given by
\[  \delta_{S,S'}(v,v') = i_* \circ v \circ i^* + i'_* \circ v' \circ {i'}^* \]
together with the unique map $\delta \maps 1 \to D(\emptyset)$.
Here $i \maps S \to S+S'$ and $i' \maps S' \to S+S'$ are the inclusions of $S$ and $S'$ into their disjoint union, and we add vector fields in the usual way.
\end{lem}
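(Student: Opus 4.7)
The plan is to verify in three stages: (a) $\delta_{S,S'}(v,v')$ actually lands in $D(S+S')$, (b) $\delta_{-,-}$ is natural, and (c) the coherence hexagon and two unitor triangles commute. The guiding observation is that the inclusions $i \maps S \to S+S'$ and $i' \maps S' \to S+S'$ induce a canonical linear isomorphism $\R^{S+S'} \cong \R^S \oplus \R^{S'}$ in which the pullbacks $i^*, {i'}^*$ are the two projections and the pushforwards $i_*, i'_*$ are the two injections, so that $i_* \circ i^* + i'_* \circ {i'}^* = \mathrm{id}_{\R^{S+S'}}$. Under this identification $\delta_{S,S'}(v,v')$ is simply the direct-sum vector field $v \oplus v'$. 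In particular it is algebraic, since $i^*$ and $i_*$ are linear and the sum of two compositions of linear maps with polynomial maps is polynomial.

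For naturality, given $f \maps S \to T$ and $f' \maps S' \to T'$ with coproduct $f+f' \maps S+S' \to T+T'$ and inclusions $j \maps T \to T+T'$, $j' \maps T' \to T+T'$, I would use the two identities $(f+f')_* \circ i_* = j_* \circ f_*$ and $i^* \circ (f+f')^* = f^* \circ j^*$ (and the primed analogues), each of which follows from functoriality of pushforward and pullback together with the universal property of the coproduct. Applying these termwise to $\delta_{T,T'}(D(f)(v), D(f')(v'))$ recovers $D(f+f') \circ \delta_{S,S'}(v,v')$.

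For the coherence hexagon, unwinding definitions it suffices to check that the two ways of forming a direct-sum vector field on $\R^{(S+S') + S''} = \R^{S+(S'+S'')}$ from $(v,v',v'')$ agree; this reduces to the associativity of the direct-sum decomposition of $\R^{S+S'+S''}$ induced by the three inclusions, which is a routine consequence of the universal property of the triple coproduct. The unitor triangles collapse to the fact that $\R^{\emptyset}$ is the zero vector space: the unique vector field in $D(\emptyset)$ is the zero map, so $\delta_{\emptyset,S}$ and $\delta_{S,\emptyset}$ act as the identity after identifying $\R^{\emptyset + S} \cong \R^{S} \cong \R^{S+\emptyset}$. For symmetry, the braiding $\tau \maps S+S' \to S'+S$ swaps the roles of $(i,i')$, so $D(\tau)$ intertwines $\delta_{S,S'}$ with $\delta_{S',S}$ composed with the symmetry of $\times$ on $\Set$.

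I do not anticipate a real obstacle; the content is almost entirely formal, flowing from the universal property of the coproduct in $\FinSet$ and the resulting direct-sum decomposition of $\R^{S+S'}$. The only point requiring a moment of care is checking that pushforward along a function between finite sets preserves algebraicity of a vector field, but this is immediate since $f_*$ is a linear map $\R^{S} \to \R^{S'}$ and algebraic maps are closed under pre- and post-composition with linear maps and under pointwise sums.
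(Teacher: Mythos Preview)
Your proposal is correct and follows the same approach as the paper, which simply states that ``by straightforward calculations one can verify all the conditions in the definition of lax symmetric monoidal functor.'' You have actually supplied the details the paper omits, and your key observation---that under the canonical isomorphism $\R^{S+S'} \cong \R^S \oplus \R^{S'}$ the map $\delta_{S,S'}(v,v')$ is just the direct-sum vector field $v \oplus v'$---is exactly the right way to make those calculations transparent.
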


\begin{proof}
By straightforward calculations one can verify all the conditions in the definition of lax symmetric monoidal functor, Def.\ \ref{defn.lmf}.   \end{proof}

\begin{thm}
\label{thm:dynam}
There is a category $\Dynam$ where:
\begin{itemize}
\item an object is a finite set,
\item a morphism from $X$ to $Y$ is an equivalence class of open dynamical
systems from $X$ to $Y$, 
\item Given an open dynamical system from $X$ to $Y$ and one from $Y$ to $Z$:
 \[
    (X \stackrel{i}\longrightarrow S \stackrel{o}\longleftarrow Y, v) 
    \quad \textrm{ and } \quad
    (Y \stackrel{i'}\longrightarrow S' \stackrel{o'}\longleftarrow Z, v'), 
  \]
their composite consists of the equivalence class of this cospan:
  \[
    \xymatrix{
      & S +_Y S' \\
      \quad X\quad \ar[ur]^{ji} && \quad Z \quad \ar[ul]_{j'o'}
    }
  \]
together with the algebraic vector field on $\R^{S+_Y S'}$ obtained by applying the map
\[      
\xymatrix{      D(S) \times D(S') \ar[rr]^-{\delta_{S,S'}} && 
                     D(S + S') \ar[rr]^-{D([j,j'])} && D(S +_Y S') } \]
to the pair $(v,v') \in D(S) \times D(S')$.  
\end{itemize}
The category $\Dynam$ is a symmetric monoidal category where the tensor product of objects $X$ and $Y$ is their disjoint union $X + Y$, while the tensor product of the morphisms
\[
    (X \stackrel{i}{\longrightarrow} S \stackrel{o}{\longleftarrow} Y, v) 
    \quad \textrm{ and } \quad
    (X' \stackrel{i'}{\longrightarrow} S' \stackrel{o'}{\longleftarrow} Y', v) 
  \]
is defined to be
\[  ( X + X' \stackrel{i+i'}{\longrightarrow} S + S'  \stackrel{o + o'}{\longleftarrow} Y + Y', \; \delta_{S,S'}(v,v') ) .\]
In fact $\Dynam$ is a hypergraph category.
\end{thm}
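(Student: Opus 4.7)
The plan is to obtain Theorem \ref{thm:dynam} as a direct application of Fong's decorated cospan machinery (Lemma \ref{lemma:fcospans} and its hypergraph-category refinement in Fong's thesis) to the lax symmetric monoidal functor
\[
  (D,\delta)\maps (\FinSet,+) \to (\Set,\times)
\]
assembled in Lemmas \ref{lemma:Dfunctor} and \ref{lemma:DLax}. These two lemmas already do all the conceptual work: Lemma \ref{lemma:Dfunctor} makes $D$ into a functor using that pullback is contravariant, pushforward is covariant, and algebraicity is preserved by composing with linear maps; Lemma \ref{lemma:DLax} equips $D$ with coherence data $\delta_{S,S'}$ and $\delta$ satisfying the laxness hexagon and unitor squares. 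So the first step is simply to invoke Lemma \ref{lemma:fcospans} with $(\CC,+) = (\FinSet,+)$ and $F = D$; this produces the hypergraph category $D\Cospan$ whose objects are finite sets and whose morphisms are isomorphism classes of cospans $X \to S \leftarrow Y$ decorated by an element $v \in D(S)$, i.e., an algebraic vector field on $\R^S$. Renaming $D\Cospan$ as $\Dynam$ gives the stated category, objects, and morphisms.

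The next step is to check that the composition rule produced by Fong's machinery coincides with the one claimed in the theorem. Given two open dynamical systems $(X \to S \leftarrow Y, v)$ and $(Y \to S' \leftarrow Z, v')$, Lemma \ref{lemma:fcospans} assigns to the pushout cospan the decoration $D([j,j'])\bigl(\delta_{S,S'}(v,v')\bigr)$. Unpacking using Lemma \ref{lemma:DLax} and the fact that $[j,j'] \circ i = j$ and $[j,j'] \circ i' = j'$, where $i,i'$ denote the coproduct inclusions into $S+S'$, we compute
\[
  D([j,j'])\bigl(\delta_{S,S'}(v,v')\bigr)
  = j_* \circ v \circ j^* + j'_* \circ v' \circ (j')^*,
\]
which is precisely the sum-of-pushed-forward vector fields that describes the interconnection of two open dynamical systems along their shared set of boundary states. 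This is the only substantive calculation, and it follows immediately from the pushforward/pullback functoriality already established. The tensor-product formula and the claim that $\Dynam$ is a hypergraph category are then immediate corollaries: Theorem 3.4 of Fong's decorated cospans paper guarantees that any category produced by Lemma \ref{lemma:fcospans} is a hypergraph category, with the disjoint-union tensor product and tensor of morphisms exactly as stated.

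The only step where care is required is to confirm that the decoration is genuinely well-defined on the pushout, i.e., that isomorphic cospans yield equal decorated cospans. This is handled by the equivalence relation on decorated cospans recalled in Lemma \ref{lemma:fcospans}: an isomorphism of cospans $\phi\maps S \to S$ (fixing $X$ and $Y$) induces a bijection $D(\phi)\maps D(S) \to D(S)$ on decorations, and two decorated cospans are identified when their vector fields correspond under this bijection. Since $D(\phi)(v) = \phi_* \circ v \circ \phi^*$ is functorial, the resulting equivalence classes compose associatively up to the usual pushout coherence, yielding a genuine category. I do not expect any serious obstacle; the main verification, if one wants to be explicit, is the compatibility check between $\delta_{S,S'}$ and the pushout map $[j,j']$, which is a one-line calculation from $(j)_* = [j,j']_* \circ i_*$ and $(j)^* = i^* \circ [j,j']^*$.
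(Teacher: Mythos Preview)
Your proposal is correct and takes essentially the same approach as the paper: the paper's proof is simply ``This follows from Lemmas \ref{lemma:fcospans} and \ref{lemma:DLax},'' invoking Fong's decorated cospan construction applied to the lax symmetric monoidal functor $D$. Your additional unpacking of the composite decoration and the well-definedness check are accurate elaborations of what Fong's machinery guarantees, but the paper does not spell these out.
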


\begin{proof}
This follows from Lemmas \ref{lemma:fcospans} and \ref{lemma:DLax}.
\end{proof}

\section{The gray-boxing functor}
\label{sec:gray}

Now we are ready to describe the `gray-boxing' functor $\graysquare \maps \RxNet \to \Dynam$.  This sends any open reaction network to the open dynamical system that it determines.  The functoriality of this process says that we can first compose networks and then find the open dynamical system of the resulting larger network, or first find the open dynamical system for each network and then compose these systems: either way, the result is the same.

To construct the gray-boxing functor we again turn to Fong's theory of decorated
cospans.  Just as this theory gives decorated cospan categories from lax symmetric 
monoidal functors, it gives functors between such categories from monoidal natural transformations.

\begin{thm}
There is a symmetric monoidal functor $\graysquare \maps \RxNet \to \Dynam$ that is
the identity on objects and sends each morphism represented by an open reaction network $( X \stackrel{i}{\to} S \stackrel{o}{\leftarrow} Y , R )$ to the morphism represented by the open dynamical system $( X \stackrel{i}{\to} S \stackrel{o}{\leftarrow} Y , v^R )$, where $v^R$ is defined by Equation \ref{eq:v^R}.  Moreover, $\graysquare$ is a hypergraph functor.
\end{thm}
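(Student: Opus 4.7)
The plan is to apply Fong's Lemma \ref{lemma:decoratedfunctors} to the lax monoidal functors $(F,\varphi)$ for reaction networks with rates (Lemmas \ref{lemma:RxFunctor}, \ref{lemma:RxLax}) and $(D,\delta)$ for algebraic vector fields (Lemmas \ref{lemma:Dfunctor}, \ref{lemma:DLax}). All that is needed is a monoidal natural transformation $\theta \maps F \Rightarrow D$ whose component at $S$ sends a reaction network with rates $(S,T,s,t,r)$ to the vector field $v^R$ defined in Equation~\eqref{eq:v^R}; Fong's lemma then delivers the desired hypergraph functor $\graysquare$ which acts as the identity on objects and on cospans, merely replacing the reaction-network decoration by the associated vector field.

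The work therefore reduces to two verifications. First, I would check naturality: for every function $f \maps S \to S'$ and every reaction network with rates $R = (S,T,s,t,r)$ on $S$, one needs
\[
D(f)(v^R) \;=\; v^{F(f)(R)},
\]
where $F(f)(R) = (S',T,f_*(s),f_*(t),r)$ and $D(f)(v^R) = f_* \circ v^R \circ f^*$. Evaluating at $c' \in \R^{S'}$ and writing $c = f^*(c')$, one has $c_\sigma = c'_{f(\sigma)}$, and the monomial in the law of mass action transforms as
\[
c^{s(\tau)} = \prod_{\sigma \in S}(c'_{f(\sigma)})^{s(\tau)(\sigma)} = \prod_{\sigma' \in S'}(c'_{\sigma'})^{\sum_{f(\sigma)=\sigma'} s(\tau)(\sigma)} = (c')^{f_*(s)(\tau)}.
\]
Since the pushforward $f_*$ is linear in $\R^S$ and agrees on the stoichiometric vectors with the prescription $f_*(t)(\tau) - f_*(s)(\tau) = f_*(t(\tau)-s(\tau))$, applying $f_*$ to $v^R(c)$ yields precisely $v^{F(f)(R)}(c')$.

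Second, I would check that $\theta$ is monoidal in the sense of Definition~\ref{def:monnattran}, i.e.\ that the square
\[
\xymatrix{
F(S)\times F(S') \ar[r]^-{\varphi_{S,S'}} \ar[d]_{\theta_S \times \theta_{S'}} & F(S+S') \ar[d]^{\theta_{S+S'}} \\
D(S)\times D(S') \ar[r]^-{\delta_{S,S'}} & D(S+S')
}
\]
commutes, together with the evident unit condition on the empty set. Both $\varphi_{S,S'}$ and the rate equation split transitions in $T+T'$ into those from $T$ and those from $T'$, each with its original source and target viewed through the inclusions $i \maps S \hookrightarrow S+S'$ and $i' \maps S' \hookrightarrow S+S'$. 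Thus $v^{\varphi_{S,S'}(R,R')}$ is a sum of two pieces, one indexed by $T$ and supported on the $\R^S$ summand, the other indexed by $T'$ and supported on the $\R^{S'}$ summand, which is exactly $i_* \circ v^R \circ i^* + i'_* \circ v^{R'} \circ (i')^* = \delta_{S,S'}(v^R,v^{R'})$.

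The main obstacle is the naturality calculation, because it is the only place where the nonlinear law of mass action interacts with the combinatorics of identifying species; everything else is formal. Once naturality and monoidality are established, Lemma~\ref{lemma:decoratedfunctors} immediately produces the symmetric monoidal (in fact hypergraph) functor $\graysquare \maps \RxNet \to \Dynam$ with the stated action on morphisms, since the inputs, outputs, and apex of the cospan are left untouched and only the decoration is transformed by $\theta$.
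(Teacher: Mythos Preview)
Your proposal is correct and follows essentially the same approach as the paper: invoke Fong's Lemma~\ref{lemma:decoratedfunctors} by exhibiting $\theta_S(R)=v^R$ as a monoidal natural transformation, verify naturality via the monomial identity $f^*(c')^{s(\tau)} = (c')^{f_*(s)(\tau)}$ together with linearity of $f_*$ on the stoichiometric vectors, and then check the monoidality square. The paper carries out exactly these two verifications, with the same key calculation for the mass-action monomials and the monoidality left as a straightforward check.
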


\begin{proof}
Recall that the functors $F,D \maps (\FinSet,+) \to (\Set,\times)$ assign to a set $S$ the set of all possible reaction networks on $S$ and the set of all algebraic vector fields on $S$, respectively.  By Lemma \ref{lemma:decoratedfunctors}, we may obtain a hypergraph functor $\graysquare \maps \RxNet \to \Dynam$ from a monoidal natural transformation $\theta_S \maps F(S) \to D(S)$.    For any $R \in F(S)$, let us define $\theta_S(R) \in D(S)$ by
\[  \theta_S(R) = v^R .\]
where the vector field $v^R$ is given by Equation \ref{eq:v^R}.

To check the naturality of $\theta$, we must prove that the following square commutes:
\[  \xymatrix{   F(S) \ar_{\theta_S}[d] \ar^{F(f)}[r] & F(S') \ar^{\theta_{S'} }[d] \\
D(S) \ar_{D(f)}[r] & D(S') } \]
where $F(f)$ was defined in Lemma \ref{lemma:RxFunctor} and $D(f)$ was defined in Lemma \ref{lemma:Dfunctor}.   So, consider any element $R \in F(S)$: that is, any reaction network with 
rates 
\[    R = (S,T,s,t,r) .\]
Let $R' = F(f)(R)$.   Thus,
\[   R' = (S',T,f_*(s),f_*(t),r).\]
We need to check that $D(f)(v^R) = v^{R'}$, or in other words,
\[    f_* \circ v^R \circ f^* = v^{R'}  .\]

To do this, recall from Equation \ref{eq:v^R} that for any concentrations $c' \in \R^{S'}$ we have
\[    v^{R'}(c') = \sum_{\tau \in T} r(\tau) (f_*(t)(\tau) - f_*(s)(\tau)) \;{c'}^{f_*(s)(\tau)} .\]
Using Equation (\ref{eq:power}) and the definition of pushforward and pullback, we obtain
\[  
\begin{array}{ccl}
{c'}^{f_*(s)(\tau)} &=& \displaystyle{\prod_{\sigma' \in S'} {c'_{\sigma'}}^{f_*(s)(\tau)(\sigma')} }
\\ \\ 
&=& \displaystyle{ \prod_{\sigma' \in S'} {c'_{\sigma'}}^{\sum_{\{\sigma : \; f(\sigma) = \sigma'\}} s(\tau)(\sigma)} } \\ \\
&=& \displaystyle{ \prod_{\sigma' \in S'}  \prod_{\{\sigma : \; f(\sigma) = \sigma'\}} 
{c'_{\sigma'}}^{s(\tau)(\sigma)} } \\ \\
&=& \displaystyle{  \prod_{\sigma \in S} {c'_{f(\sigma)}}^{s(\tau)(\sigma)} }  \\ \\
&=& \displaystyle{  \prod_{\sigma \in S} {f^*(c')_\sigma}^{s(\tau)(\sigma)} }  \\ \\
&=& f^*(c')^{s(\tau)} .
\end{array} 
\]
Thus, 
\[
\begin{array}{ccl}
   v^{R'}(c') &=&
\displaystyle{ \sum_{\tau \in T} r(\tau) (f_*(t)(\tau) - f_*(s)(\tau))  \; f^*(c')^{s(\tau)} } \\ \\
&=& f_*(v^R (f^*(c'))) .
\end{array}
\]
so $v^{R'} = f_* \circ v^R \circ f^*$ as desired.

We must also check that $\theta$ is monoidal.   By Definition \ref{def:monnattran}, this means showing
that
   \[
    \xymatrix{
      F(S) \times F(S') \ar[r]^-{\varphi_{S,S'}} \ar[d]_{\theta_S \times
      \theta_{S'}} & F(S + S') \ar[d]^{\theta_{S + S'}} \\
      D(S) \times D(S') \ar[r]^-{\delta_{S,S'}} & D(S+S')
    }
  \]
commutes for all $S,S'\in \FinSet$, where $\varphi$ was defined in Lemma \ref{lemma:RxLax} and
$\delta$ was defined in Lemma \ref{lemma:DLax}.   This is straighforward.
\end{proof}

The idea behind this theorem is best explained with an example.  Consider two composable open reaction networks with rates.   The first, $R \maps X \to Y$, is this:
\[
\begin{tikzpicture}
	\begin{pgfonlayer}{nodelayer}
		\node [style=species] (A) at (-4, 0.5) {$A$};
		\node [style=species] (B) at (-4, -0.5) {$B$};
		\node [style=species] (C) at (-1, 0) {$C$};
             \node [style=transition] (a) at (-2.5, 0) {$\alpha$}; 
		
		\node [style=empty] (X) at (-5.1, 1) {$X$};
		\node [style=none] (Xtr) at (-4.75, 0.75) {};
		\node [style=none] (Xbr) at (-4.75, -0.75) {};
		\node [style=none] (Xtl) at (-5.4, 0.75) {};
             \node [style=none] (Xbl) at (-5.4, -0.75) {};
	
		\node [style=inputdot] (1) at (-5, 0.5) {};
		\node [style=empty] at (-5.2, 0.5) {$1$};
		\node [style=inputdot] (2) at (-5, 0) {};
		\node [style=empty] at (-5.2, 0) {$2$};
		\node [style=inputdot] (3) at (-5, -0.5) {};
		\node [style=empty] at (-5.2, -0.5) {$3$};

		\node [style=empty] (Y) at (0.1, 1) {$Y$};
		\node [style=none] (Ytr) at (.4, 0.75) {};
		\node [style=none] (Ytl) at (-.25, 0.75) {};
		\node [style=none] (Ybr) at (.4, -0.75) {};
		\node [style=none] (Ybl) at (-.25, -0.75) {};

		\node [style=inputdot] (4) at (0, 0) {};
		\node [style=empty] at (0.2, 0) {$4$};
		
	\end{pgfonlayer}
	\begin{pgfonlayer}{edgelayer}
		\draw [style=inarrow] (A) to (a);
		\draw [style=inarrow] (B) to (a);
		\draw [style=inarrow, bend left =15] (a) to (C);
		\draw [style=inarrow, bend right =15] (a) to (C);
		\draw [style=inputarrow] (1) to (A);
		\draw [style=inputarrow] (2) to (B);
		\draw [style=inputarrow] (3) to (B);
		\draw [style=inputarrow] (4) to (C);
		\draw [style=simple] (Xtl.center) to (Xtr.center);
		\draw [style=simple] (Xtr.center) to (Xbr.center);
		\draw [style=simple] (Xbr.center) to (Xbl.center);
		\draw [style=simple] (Xbl.center) to (Xtl.center);
		\draw [style=simple] (Ytl.center) to (Ytr.center);
		\draw [style=simple] (Ytr.center) to (Ybr.center);
		\draw [style=simple] (Ybr.center) to (Ybl.center);
		\draw [style=simple] (Ybl.center) to (Ytl.center);
	\end{pgfonlayer}
\end{tikzpicture}
\]
It has species $S = \{A,B,C\}$ and transitions $T = \{\alpha\}$.  
The vector field describing its dynamics is 
\begin{equation}
\label{eq_v_1}
 v^R(A,B,C) = ( -r(\alpha) AB, -r(\alpha) AB , 2r(\alpha) AB). 
\end{equation}
The corresponding open rate equation is 
\begin{equation}
\label{eq:open_rate_1}
\begin{array}{rcl} 
\displaystyle{\frac{dA(t)}{dt}} &=& - r(\alpha) A(t) B(t)  + I_1(t)\\ \\
\displaystyle{\frac{dB(t)}{dt}} &=& - r(\alpha) A(t) B(t) + I_2(t) + I_3(t) \\ \\
\displaystyle{\frac{dC(t)}{dt}} &=& 2r(\alpha) A(t) B(t) - O_4(t).
\end{array}
\end{equation}

The second open reaction network with rates, $R' \maps Y \to Z$, is this:
\[
\begin{tikzpicture}
	\begin{pgfonlayer}{nodelayer}
		\node [style = species] (D) at (1, 0) {$D$};
		\node [style = transition] (b) at (2.5, 0) {$\beta$};
		\node [style = species] (E) at (4,0.5) {$E$};
		\node [style = species] (F) at (4,-0.5) {$F$};

		\node [style=empty] (Y) at (-0.1, 1) {$Y$};
		\node [style=none] (Ytr) at (.25, 0.75) {};
		\node [style=none] (Ytl) at (-.4, 0.75) {};
		\node [style=none] (Ybr) at (.25, -0.75) {};
		\node [style=none] (Ybl) at (-.4, -0.75) {};

		\node [style=inputdot] (4) at (0, 0) {};
		\node [style=empty] at (-0.2, 0) {$4$};
		
		\node [style=empty] (Z) at (5, 1) {$Z$};
		\node [style=none] (Ztr) at (4.75, 0.75) {};
		\node [style=none] (Ztl) at (5.4, 0.75) {};
		\node [style=none] (Zbl) at (5.4, -0.75) {};
		\node [style=none] (Zbr) at (4.75, -0.75) {};

		\node [style=inputdot] (5) at (5, 0.5) {};
		\node [style=empty] at (5.2, 0.5) {$5$};	
		\node [style=inputdot] (6) at (5, -0.5) {};
		\node [style=empty] at (5.2, -0.5) {$6$};	

	\end{pgfonlayer}
	\begin{pgfonlayer}{edgelayer}
		\draw [style=inarrow] (D) to (b);
		\draw [style=inarrow] (b) to (E);
		\draw [style=inarrow] (b) to (F);
		\draw [style=inputarrow] (4) to (D);
		\draw [style=inputarrow] (5) to (E);
		\draw [style=inputarrow] (6) to (F);
		\draw [style=simple] (Ytl.center) to (Ytr.center);
		\draw [style=simple] (Ytr.center) to (Ybr.center);
		\draw [style=simple] (Ybr.center) to (Ybl.center);
		\draw [style=simple] (Ybl.center) to (Ytl.center);
		\draw [style=simple] (Ztl.center) to (Ztr.center);
		\draw [style=simple] (Ztr.center) to (Zbr.center);
		\draw [style=simple] (Zbr.center) to (Zbl.center);
		\draw [style=simple] (Zbl.center) to (Ztl.center);
	\end{pgfonlayer}
\end{tikzpicture}
\]
It has species $S'=\{D,E,F\}$ and transitions $T' = \{\beta\}$.    The vector field describing its dynamics is
\begin{equation}
\label{eq:v_2}
 v^{R'}(D,E,F) = ( -r(\beta) D , r(\beta) D , r(\beta) D). 
\end{equation}
The corresponding open rate equation is
\begin{equation}
\label{eq:open_rate_2}
\begin{array}{rcl} 
\displaystyle{\frac{dD(t)}{dt}} &=& - r(\beta) D(t)  + I_4(t)\\ \\
\displaystyle{\frac{dE(t)}{dt}} &=& r(\beta) D(t) - O_5(t) \\ \\
\displaystyle{\frac{dF(t)}{dt}} &=& r(\beta) D(t) - O_6(t).
\end{array}
\end{equation}

Composing $R$ and $R'$ gives $R' R \maps X \to Z$, which looks like this:
\[
\begin{tikzpicture}
	\begin{pgfonlayer}{nodelayer}
		\node [style=inputdot] (0) at (-4.25, 0) {};
		\node [style=empty] at (-4.55, 0) {$2$};
		\node [style=species] (1) at (-3.25, 0.5) {$A$};
		\node [style=none] (2) at (-4, 0.75) {};
		\node [style=none] (3) at (4, -0.75) {};
		\node [style=transition] (4) at (-1.75, -0) {$\alpha$};
		\node [style=none] (5) at (-4.7, 0.75) {};
		\node [style=none] (6) at (4, 0.75) {};
		\node [style=transition] (7) at (1.5, -0) {$\beta$};
		\node [style=inputdot] (8) at (4.25, 0.5) {};
		\node [style=empty] at (4.55, 0.5) {$5$};
		\node [style=none] (9) at (-4.7, -0.75) {};
		\node [style=species] (10) at (0, -0) {$C$};
		\node [style=none] (11) at (4.7, -0.75) {};
		\node [style=inputdot] (12) at (-4.25, 0.5) {};
		\node [style=empty] at (-4.55, 0.5) {$1$};
		\node [style=none] (13) at (4.7, 0.75) {};
		\node [style=empty] (14) at (-4.4, 1.05) {$X$};
		\node [style=none] (15) at (-4, -0.75) {};
		\node [style=empty] (16) at (4.3, 1.05) {$Z$};
		\node [style=species] (17) at (3.25, 0.5) {$E$};
		\node [style=species] (18) at (-3.25, -0.5) {$B$};
		\node [style=inputdot] (19) at (-4.25, -0.5) {};
		\node [style=empty] at (-4.55, -0.5) {$3$};
		\node[ style=species] (20) at (3.25, -0.5) {$F$};
		\node[ style=inputdot] (21) at (4.25, -0.5) {};
		\node [style=empty] at (4.55, -0.5) {$6$};
	\end{pgfonlayer}
	\begin{pgfonlayer}{edgelayer}
		\draw [style=inarrow] (1) to (4);
		\draw [style=inarrow] (18) to (4);
		\draw [style=inarrow, bend right=15, looseness=1.00] (4) to (10);
		\draw [style=inarrow] (7) to (17);
		\draw [style=inarrow, bend left=15, looseness=1.00] (4) to (10);
		\draw [style=inputarrow] (12) to (1);
		\draw [style=inputarrow] (0) to (18);
		\draw [style=inputarrow] (19) to (18);
		\draw [style=inputarrow] (8) to (17);
		\draw [style=simple] (5.center) to (2.center);
		\draw [style=simple] (2.center) to (15.center);
		\draw [style=simple] (15.center) to (9.center);
		\draw [style=simple] (9.center) to (5.center);
		\draw [style=simple] (6.center) to (13.center);
		\draw [style=simple] (13.center) to (11.center);
		\draw [style=simple] (11.center) to (3.center);
		\draw [style=simple] (3.center) to (6.center);
		\draw [style=inarrow] (10) to (7);
		\draw [style=inarrow]  (7) to (20);
		\draw[ style=inputarrow] (21) to (20);
	\end{pgfonlayer}
\end{tikzpicture}
\]
Note that the species $C$ and $D$ have been identified, and we have arbitrarily called the resulting species $C$.   Thus, $R'R$ has species $S+_Y S' =\{ A,B,C,E,F\}$.   If we had chosen to call the resulting state something else, we would obtain an equivalent open reaction network with rates, and thus the same morphism in $\RxNet$.

At this point we can either compute the vector field $v^{R'R}$ or combine the vector fields $v^R$ and $v^{R'}$ following the procedure given in Theorem \ref{thm:dynam}.  Because $\graysquare$ is a functor, we should get the same answer either way.

The vector field $v^{R'R}$ can be read off from the above picture of $R'R$.  It is
\begin{equation}
\label{eq:v_3}
     v^{R'R}(A,B,C,E,F) = 
\end{equation}
\[
(-r(\alpha) A B, \; -r(\alpha) A B, \; 2 r(\alpha) AB - r(\beta) C, \; r(\beta) C, \; r(\beta) C ) .
\]
On the other hand, the procedure in Theorem \ref{thm:dynam} is to apply the composite of these maps:
\[      
\xymatrix{      D(S) \times D(S') \ar[rr]^-{\delta_{S,S'}} && 
                     D(S + S') \ar[rr]^-{D([j,j'])} && D(S +_Y S') } \]
to the pair of vector fields $(v^R,v^{R'}) \in D(S) \times D(S')$.    The first map
was defined in Lemma \ref{lemma:DLax}, and it yields
\[   \begin{array}{ccl}
  \delta_{S,S'}(v^R,v^{R'}) &=& i_* \circ v^R \circ i^* + i'_* \circ v^{R'} \circ {i'}^* \\  \\
&=&  ( -r(\alpha) AB, \; -r(\alpha) AB , \; 2r(\alpha) AB, \, 0, \; 0, \; 0) \; + \\ 
&& (0, \; 0,\; 0,\, -r(\beta) D , \;r(\beta) D , \; r(\beta) D) \\ \\
&=& ( -r(\alpha) AB,\; -r(\alpha) AB ,\; 2r(\alpha) AB, \; -r(\beta) D , \; r(\beta) D ,\; r(\beta) D).
\end{array}
\]
If we call this vector field $u$, the second map yields
\[  D([j,j'])(u) = [j,j']_* \circ u \circ [j,j']^*.\]
Applying $[j,j']^*$ to any vector of concentrations $(A,B,C,E,F) \in \R^{S +_Y S'}$ yields
$(A,B,C,C,E,F) \in \R^{S+S'}$, since the species $C$ and $D$ are identified by $[j,j']$.   Thus, 
\[   u \circ [j,j']^* =  ( -r(\alpha) AB, -r(\alpha) AB , 2r(\alpha) AB, -r(\beta) C , r(\beta) C , r(\beta) C). \]
Applying $[j,j']_*$ to this, we sum the third and fourth components, again because $C$
and $D$ are identified by $[j,j']$.  Thus,
\[   [j,j']_* \circ u \circ [j,j']^* =  ( -r(\alpha) AB, \, -r(\alpha) AB , \, 2r(\alpha) AB -r(\beta) C , \, r(\beta) C , \, r(\beta) C). \]
As expected, this vector field equals $v^{R'R}$.     The open rate equation of the composite open dynamical system is
\begin{equation}
\label{eq:open_rate_3}
\begin{array}{rcl} 
\displaystyle{\frac{dA(t)}{dt}} &=& - r(\alpha) A(t) B(t)  + I_1(t)\\ \\
\displaystyle{\frac{dB(t)}{dt}} &=& - r(\alpha) A(t) B(t) + I_2(t) + I_3(t) \\ \\
\displaystyle{\frac{dC(t)}{dt}} &=& 2r(\alpha) A(t) B(t) - r(\beta) C(t)  \\ \\
\displaystyle{\frac{dD(t)}{dt}} &=& r(\beta) C(t) - O_5(t) \\ \\
\displaystyle{\frac{dE(t)}{dt}} &=& r(\beta) C(t) - O_6(t) .
\end{array}
\end{equation}

One general lesson here is that when we compose open reaction networks, the process of identifying some of their species via the map $[j,j'] \maps S + S' \to S +_Y S'$ has two effects: copying concentrations and summing reaction velocities.  Concentrations are copied via the pullback $[j,j']^*$, while reaction velocities are summed via the pushfoward $[j,j']_*$.  A similar phenomenon occurs the compositional framework for electrical circuits, where voltages are copied and currents are summed \cite{BaezFongCirc}.  For a deeper look at this, see Section 6.6 of Fong's thesis \cite{FongThesis}.

\section{The black-boxing functor}
\label{sec:black}

The open rate equation describes the behavior of an open dynamical system for any choice of inflows and outflows.  One option is to choose these flows so that the input and output concentrations do not change with time.  In chemistry this is called `chemostatting'.    There will then frequently---though not always---be solutions of the open rate equation where \emph{all} concentrations are constant in time.   These are called `steady states'.  

In this section we take an open dynamical system and extract from it the relation between input and output concentrations and flows that holds in steady state.  We call the process of extracting this relation `black-boxing', since it discards information that cannot be seen at the inputs and ouputs.  The relation thus obtained is always `semi-algebraic', meaning that it can be described by polynomials and inequalities.  In fact, black-boxing defines a functor 
\[        \blacksquare \maps \Dynam \to \SemialgRel \]
where $\SemialgRel$ is the category of semi-algebraic relations between real vector spaces.    The functoriality of black-boxing means that we can compose two open dynamical systems and then black-box them, or black-box each one and compose the resulting relations: either way, the final answer is the same.   

We can also black-box open reaction networks with rates.  To do this, we simply compose the gray-boxing functor with the black-boxing functor:
\[   \RxNet \stackrel{\graysquare}{\longrightarrow} \Dynam \stackrel{\blacksquare}{\longrightarrow} \SemialgRel .\]

We begin by explaining how to black-box an open dynamical system.

\begin{defn}
Given an open dynamical system
$ (X \stackrel{i}\longrightarrow S \stackrel{o}\longleftarrow Y, v) $
we define the \define{boundary} species to be those in $B = i(X) \cup o(Y)$, and the \define{internal} species to be those in $S - B$.
\end{defn}

The open rate equation says
\[ \frac{dc(t)}{dt} = v(c(t)) + i_*(I(t)) - o_*(O(t)) \]
but if we fix $c$, $I$ and $O$ to be constant in time, this reduces to
\[     v(c) + i_*(I) - o_*(O) = 0 .\]   
This leads to the definition of `steady state':

\begin{defn}
Given an open dynamical system 
$ (X \stackrel{i}\longrightarrow S \stackrel{o}\longleftarrow Y, v) $  
together with $I \in \R^X$ and $O \in \R^Y$,
a \define{steady state} with inflows $I$ and outflows $O$ is an element $c \in \R^S$ such that 
\[     v(c) + i_*(I) - o_*(O) = 0 .\]   
\end{defn}

Thus, in a steady state, the inflows and outflows conspire to exactly compensate for
the reaction velocities.    In particular, we must have
\[   \left. v(c)\right|_{S - B} = 0 \]
since the inflows and outflows vanish on internal species.  

\begin{defn}
Given a morphism $F \maps X \to Y$ iin $\Dynam$ represented by the open dynamical system 
\[     (X \stackrel{i}\longrightarrow S \stackrel{o}\longleftarrow Y, v), \]
define its \define{black-boxing} to be the set
\[   \blacksquare(F) \subseteq \R^X \oplus \R^X \oplus \R^Y \oplus \R^Y \]
consisting of all 4-tuples $(i^*(c),I,o^*(c),O)$ where $c \in \R^S$ is a steady state
with inflows $I \in \R^X$ and outflows $O \in \R^Y$.
\end{defn}

We call $i^*(c)$ the \define{input concentrations} and $o^*(c)$ the \define{output concentrations}.   Thus, black-boxing records the relation between input concentrations, inflows, output concentrations and outflows that holds in steady state.  This is the `externally observable steady state behavior' of the open dynamical system.

Category theory enters the picture because relations are morphisms in a category.  For any sets $X$ and $Y$, a relation $A \maps X \relto Y$ is a subset $A \subseteq X \times Y$.   Given relations $A \maps X \relto Y$ and $B \maps Y \relto Z$, their composite $B \circ A \maps X \relto Z$ is the set of all pairs $(x,z) \in X \times Z$ such that there exists $y \in Y$ with $(x,y) \in A$ and $(y,z) \in B$.   This gives a category $\Rel$ with sets as objects and relations as morphisms.   

Black-boxing an open dynamical system $F \maps X \to Y$ gives a relation
\[   \blacksquare(F) \maps \R^X \oplus \R^X \relto \R^Y \oplus \R^Y  .\]
This immediately leads to the question of whether black-boxing is a functor from $\Dynam$ to 
$\Rel$.  

The answer is yes.  To get a sense for this, consider the example from Section \ref{sec:gray}, where we composed two open dynamical systems.   We first considered this open reaction network with rates:
\[
\begin{tikzpicture}
	\begin{pgfonlayer}{nodelayer}
		\node [style=species] (A) at (-4, 0.5) {$A$};
		\node [style=species] (B) at (-4, -0.5) {$B$};
		\node [style=species] (C) at (-1, 0) {$C$};
             \node [style=transition] (a) at (-2.5, 0) {$\alpha$}; 
		
		\node [style=empty] (X) at (-5.1, 1) {$X$};
		\node [style=none] (Xtr) at (-4.75, 0.75) {};
		\node [style=none] (Xbr) at (-4.75, -0.75) {};
		\node [style=none] (Xtl) at (-5.4, 0.75) {};
             \node [style=none] (Xbl) at (-5.4, -0.75) {};
	
		\node [style=inputdot] (1) at (-5, 0.5) {};
		\node [style=empty] at (-5.2, 0.5) {$1$};
		\node [style=inputdot] (2) at (-5, 0) {};
		\node [style=empty] at (-5.2, 0) {$2$};
		\node [style=inputdot] (3) at (-5, -0.5) {};
		\node [style=empty] at (-5.2, -0.5) {$3$};

		\node [style=empty] (Y) at (0.1, 1) {$Y$};
		\node [style=none] (Ytr) at (.4, 0.75) {};
		\node [style=none] (Ytl) at (-.25, 0.75) {};
		\node [style=none] (Ybr) at (.4, -0.75) {};
		\node [style=none] (Ybl) at (-.25, -0.75) {};

		\node [style=inputdot] (4) at (0, 0) {};
		\node [style=empty] at (0.2, 0) {$4$};
		
	\end{pgfonlayer}
	\begin{pgfonlayer}{edgelayer}
		\draw [style=inarrow] (A) to (a);
		\draw [style=inarrow] (B) to (a);
		\draw [style=inarrow, bend left =15] (a) to (C);
		\draw [style=inarrow, bend right =15] (a) to (C);
		\draw [style=inputarrow] (1) to (A);
		\draw [style=inputarrow] (2) to (B);
		\draw [style=inputarrow] (3) to (B);
		\draw [style=inputarrow] (4) to (C);
		\draw [style=simple] (Xtl.center) to (Xtr.center);
		\draw [style=simple] (Xtr.center) to (Xbr.center);
		\draw [style=simple] (Xbr.center) to (Xbl.center);
		\draw [style=simple] (Xbl.center) to (Xtl.center);
		\draw [style=simple] (Ytl.center) to (Ytr.center);
		\draw [style=simple] (Ytr.center) to (Ybr.center);
		\draw [style=simple] (Ybr.center) to (Ybl.center);
		\draw [style=simple] (Ybl.center) to (Ytl.center);
	\end{pgfonlayer}
\end{tikzpicture}
\]
Gray-boxing this gives a morphism in $\Dynam$, say $F \maps X \to Y$, represented by the
open dynamical system
\[         (X \stackrel{i}\longrightarrow S \stackrel{o}\longleftarrow Y, v^R) \]
where the cospan is visible in the figure and $v^R$ is the vector field on $\R^S$ given in
Equation (\ref{eq_v_1}).  If we now black-box $F$, we obtain the relation
\[   \blacksquare(F) = \{  (i^*(c),I,o^*(c),O) : \; v^R(c) + i_*(I) - i_*(O) = 0 \}. \] 
Here the inflows and outflows are
\[   I = (I_1, I_2, I_3) \in \R^X, \qquad O = O_4 \in \R^Y, \]
and vector of concentrations is $c = (A,B,C) \in \R^S$, so the input and output 
concentrations are 
\[   i^*(c) = (A,B,B) \in \R^X, \qquad o^*(c) = C \in \R^Y .\]
To find steady states with inflows $I$ and outflows $O$ we take the open rate equation, Equation (\ref{eq:open_rate_1}), and set all concentrations, inflows and outflows to constants:
\[
\begin{array}{rcr} 
 I_1 &=& r(\alpha) AB\, \\ 
 I_2 + I_3 &=& r(\alpha) AB\, \\ 
 O_4 &=& 2r(\alpha) AB.
\end{array}
\]
Thus, 
\begin{equation}   
\label{eq:black_1}
\blacksquare(F) = 
\end{equation}
\[ \{(A,B,B,I_1,I_2,I_3,C,O_4): \; I_1 = I_2 + I_3 = r(\alpha) AB , 
 O_4 = 2r(\alpha) AB \}.\]

Next we considered this open reaction network with rates:
\[
\begin{tikzpicture}
	\begin{pgfonlayer}{nodelayer}
		\node [style = species] (D) at (1, 0) {$D$};
		\node [style = transition] (b) at (2.5, 0) {$\beta$};
		\node [style = species] (E) at (4,0.5) {$E$};
		\node [style = species] (F) at (4,-0.5) {$F$};

		\node [style=empty] (Y) at (-0.1, 1) {$Y$};
		\node [style=none] (Ytr) at (.25, 0.75) {};
		\node [style=none] (Ytl) at (-.4, 0.75) {};
		\node [style=none] (Ybr) at (.25, -0.75) {};
		\node [style=none] (Ybl) at (-.4, -0.75) {};

		\node [style=inputdot] (4) at (0, 0) {};
		\node [style=empty] at (-0.2, 0) {$4$};
		
		\node [style=empty] (Z) at (5, 1) {$Z$};
		\node [style=none] (Ztr) at (4.75, 0.75) {};
		\node [style=none] (Ztl) at (5.4, 0.75) {};
		\node [style=none] (Zbl) at (5.4, -0.75) {};
		\node [style=none] (Zbr) at (4.75, -0.75) {};

		\node [style=inputdot] (5) at (5, 0.5) {};
		\node [style=empty] at (5.2, 0.5) {$5$};	
		\node [style=inputdot] (6) at (5, -0.5) {};
		\node [style=empty] at (5.2, -0.5) {$6$};	

	\end{pgfonlayer}
	\begin{pgfonlayer}{edgelayer}
		\draw [style=inarrow] (D) to (b);
		\draw [style=inarrow] (b) to (E);
		\draw [style=inarrow] (b) to (F);
		\draw [style=inputarrow] (4) to (D);
		\draw [style=inputarrow] (5) to (E);
		\draw [style=inputarrow] (6) to (F);
		\draw [style=simple] (Ytl.center) to (Ytr.center);
		\draw [style=simple] (Ytr.center) to (Ybr.center);
		\draw [style=simple] (Ybr.center) to (Ybl.center);
		\draw [style=simple] (Ybl.center) to (Ytl.center);
		\draw [style=simple] (Ztl.center) to (Ztr.center);
		\draw [style=simple] (Ztr.center) to (Zbr.center);
		\draw [style=simple] (Zbr.center) to (Zbl.center);
		\draw [style=simple] (Zbl.center) to (Ztl.center);
	\end{pgfonlayer}
\end{tikzpicture}
\]
Gray-boxing this gives a morphism $F' \maps X \to Y$ in $\Dynam$ represented by the open dynamical system 
\[         (Y \stackrel{i'}\longrightarrow S' \stackrel{o'}\longleftarrow Y, v^{R'}) \]
where $v^{R'}$ is given by Equation (\ref{eq:v_2}).   To black-box $F'$ we can follow the same
procedure as for $F$.  We take the open rate equation, Equation (\ref{eq:open_rate_2}), and look
for steady-state solutions:
\[
\begin{array}{rcr} 
I_4 &=& r(\beta) D\,  \\
O_5 &=& r(\beta) D\, \\
O_6 &=& r(\beta) D.
\end{array}
\]
Then we form the relation between input concentrations, inflows, output concentrations and outflows
that holds in steady state:
\begin{equation}
\label{eq:black_2}
\blacksquare(F') = 
 \{ (D,I_4,E,F,O_5,O_6) : \; I_4 = O_5 = O_6 = r(\beta) D \} .  
\end{equation}

Finally, we can compose these two open reaction networks with rates:
\[
\begin{tikzpicture}
	\begin{pgfonlayer}{nodelayer}
		\node [style=inputdot] (0) at (-4.25, 0) {};
		\node [style=empty] at (-4.55, 0) {$2$};
		\node [style=species] (1) at (-3.25, 0.5) {$A$};
		\node [style=none] (2) at (-4, 0.75) {};
		\node [style=none] (3) at (4, -0.75) {};
		\node [style=transition] (4) at (-1.75, -0) {$\alpha$};
		\node [style=none] (5) at (-4.7, 0.75) {};
		\node [style=none] (6) at (4, 0.75) {};
		\node [style=transition] (7) at (1.5, -0) {$\beta$};
		\node [style=inputdot] (8) at (4.25, 0.5) {};
		\node [style=empty] at (4.55, 0.5) {$5$};
		\node [style=none] (9) at (-4.7, -0.75) {};
		\node [style=species] (10) at (0, -0) {$C$};
		\node [style=none] (11) at (4.7, -0.75) {};
		\node [style=inputdot] (12) at (-4.25, 0.5) {};
		\node [style=empty] at (-4.55, 0.5) {$1$};
		\node [style=none] (13) at (4.7, 0.75) {};
		\node [style=empty] (14) at (-4.4, 1.05) {$X$};
		\node [style=none] (15) at (-4, -0.75) {};
		\node [style=empty] (16) at (4.3, 1.05) {$Z$};
		\node [style=species] (17) at (3.25, 0.5) {$E$};
		\node [style=species] (18) at (-3.25, -0.5) {$B$};
		\node [style=inputdot] (19) at (-4.25, -0.5) {};
		\node [style=empty] at (-4.55, -0.5) {$3$};
		\node[ style=species] (20) at (3.25, -0.5) {$F$};
		\node[ style=inputdot] (21) at (4.25, -0.5) {};
		\node [style=empty] at (4.55, -0.5) {$6$};
	\end{pgfonlayer}
	\begin{pgfonlayer}{edgelayer}
		\draw [style=inarrow] (1) to (4);
		\draw [style=inarrow] (18) to (4);
		\draw [style=inarrow, bend right=15, looseness=1.00] (4) to (10);
		\draw [style=inarrow] (7) to (17);
		\draw [style=inarrow, bend left=15, looseness=1.00] (4) to (10);
		\draw [style=inputarrow] (12) to (1);
		\draw [style=inputarrow] (0) to (18);
		\draw [style=inputarrow] (19) to (18);
		\draw [style=inputarrow] (8) to (17);
		\draw [style=simple] (5.center) to (2.center);
		\draw [style=simple] (2.center) to (15.center);
		\draw [style=simple] (15.center) to (9.center);
		\draw [style=simple] (9.center) to (5.center);
		\draw [style=simple] (6.center) to (13.center);
		\draw [style=simple] (13.center) to (11.center);
		\draw [style=simple] (11.center) to (3.center);
		\draw [style=simple] (3.center) to (6.center);
		\draw [style=inarrow] (10) to (7);
		\draw [style=inarrow]  (7) to (20);
		\draw[ style=inputarrow] (21) to (20);
	\end{pgfonlayer}
\end{tikzpicture}
\]
Gray-boxing the composite gives a morphism $F'F \maps X \to Y$ represented by the open dynamical system
\[         (Y \stackrel{i'}\longrightarrow S' \stackrel{o'}\longleftarrow Y, v^{R'R}) \]
where $v^{R'R}$ is given in Equation (\ref{eq:v_3}).   To black-box $F'F$ we take its open rate equation, Equation (\ref{eq:open_rate_3}), and look for steady state solutions:
\[
\begin{array}{rcl} 
I_1 &=& r(\alpha) AB \\
I_2 + I_3 &=& r(\alpha) AB \\ 
r(\beta) C &=& 2r(\alpha) AB  \\
O_5 &=& r(\beta) C \\
O_6 &=& r(\beta) C .
\end{array}
\]
The concentrations of internal species play only an indirect role after we black-box an open
dynamical system, since black-boxing only tells us the steady state relation between input concentrations, inflows, output concentrations and outflows.  In $F$ and $F'$ there were no internal species.  In $F'F$ there is one, namely $C$.  However, in this particular example the concentration $C$ is completely determined by the other data, so we can eliminate it from the above equations.  This is not true in every example.  But we can take advantage of this special feature here, obtaining these equations:
\[
\begin{array}{rcl} 
I_1 &=& r(\alpha) AB \\
I_2 + I_3 &=& r(\alpha) AB \\ 
O_5 &=& 2r(\alpha) AB  \\
O_5 &=& O_6 .
\end{array}
\]
We thus obtain
\begin{equation}
\label{eq:black_3}
\blacksquare(F'F)= 
\end{equation}
\[  \{ (A,B,B,I_1,I_2,I_3,E,F,O_5,O_6) : \; I_1  = I_2 + I_3 = r(\alpha) AB, O_5 = O_6 = 2 r(\alpha) AB \}   .\]
We leave it to the reader to finish checking the functoriality of black-boxing in this example:
\[         \blacksquare(F' F) = \blacksquare(F') \blacksquare(F) .\]
To do this, it suffices to compose the relations $\blacksquare(F)$ given in Equation (\ref{eq:black_1}) and $\blacksquare(F')$ given in Equation (\ref{eq:black_2}).

This example was a bit degenerate, because in each open dynamical system considered there was at most one steady state compatible with any choice of input concentrations, inflows, output concentrations and outflows.   In other words, even when there was an internal species, its concentration was determined by this `boundary data'.    This is far from generally true! 
Even for relatively simple `closed' reaction networks, namely those with no boundary species, multiple steady states may be possible.  Such reaction networks often involve features such as `autocatalysis', meaning that a certain species is present as both an input and an output to the same reaction.   
We expect the study of open reaction networks to give a new outlook on these questions.  However, our proof of the functoriality of black-boxing sidesteps this issue.

Before proving this result, it is nice to refine the framework slightly.  The black-boxing of an open dynamical system is far from an arbitrary relation: it is always `semialgebraic'. To understand this, we need a lightning review of semialgebraic geometry \cite{Coste}.

Let us use `vector space' to mean a finite-dimensional real vector space.  Given a vector space $V$, the collection of \define{semialgebraic subsets} of $V$ is the smallest collection that contains all sets of the form $\{P(v) = 0\}$ and $\{P(v) > 0\}$, where $P \maps V \to \R$ is any polynomial, and is closed under finite intersections, finite unions and complements.   The Tarski--Seidenberg theorem says that if $S \subseteq V \oplus W$ is semialgebraic then so is its projection to $V$, that is, the subset of $V$ given by
\[                \{v \in V :\; \exists w \in W \; (v,w) \in S \} .\]

If $U$ and $V$ are vector spaces, a \define{semialgebraic relation} $A \maps U \relto V$ is a semialgebraic subset $A \subseteq U \oplus V$.    If $A \maps U \relto V$ and $B \maps V \relto W$ are semialgebraic relations, so is their composite
\[      B \circ A = \{(u,w) : \; \exists v \in V \; (u,v) \in A \textrm{ and } (v,w) \in B \} \]
thanks to the Tarski--Seidenberg theorem.   The identity relation on any vector space is
also semialgebraic, so we obtain a category:

\begin{defn}
Let $\SemialgRel$ be the category with vector spaces as objects and semialgebraic relations as morphisms.
\end{defn}

We can now state the main theorem about black-boxing:

\begin{thm}
\label{thm:black}
There is a symmetric monoidal functor $\blacksquare \maps \Dynam \to \SemialgRel$
sending any finite set $X$ to the vector space $\R^X \oplus \R^X$ and any morphism $F \maps X \to Y$ to its black-boxing $\blacksquare(F)$.
\end{thm}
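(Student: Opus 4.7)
The plan is to verify in sequence: (i) $\blacksquare(F)$ is a well-defined semialgebraic relation; (ii) black-boxing respects composition; (iii) it respects identities; and (iv) it is symmetric monoidal.

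For (i), fix a representative $(X \xrightarrow{i} S \xleftarrow{o} Y, v)$ of a morphism $F$, and consider the locus
\[
  \Sigma \;=\; \{(c,I,O) \in \R^S \oplus \R^X \oplus \R^Y :\; v(c) + i_*(I) - o_*(O) = 0 \}.
\]
Since $v$ is algebraic and $i_*, o_*$ are linear, $\Sigma$ is the zero set of finitely many polynomials on $\R^S \oplus \R^X \oplus \R^Y$, hence semialgebraic. The black-boxing is the image of $\Sigma$ under the linear map $(c,I,O) \mapsto (i^*(c),I,o^*(c),O)$, which is polynomial, so by the Tarski--Seidenberg theorem $\blacksquare(F) \subseteq \R^X \oplus \R^X \oplus \R^Y \oplus \R^Y$ is semialgebraic. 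Independence of the representative follows because an isomorphism of cospans intertwines $v$ with the transported vector field and intertwines $i_*, o_*, i^*, o^*$ accordingly, yielding the same subset.

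For (ii), this is the main obstacle and deserves the most care. Given $F = (X \xrightarrow{i} S \xleftarrow{o} Y, v) \maps X \to Y$ and $F' = (Y \xrightarrow{i'} S' \xleftarrow{o'} Z, v') \maps Y \to Z$, the composite $F'F$ is decorated by $w = D([j,j']) \circ \delta_{S,S'}(v,v')$ on $S +_Y S'$. I will show the two relations
\[
  \blacksquare(F'F), \qquad \blacksquare(F')\circ \blacksquare(F)
\]
are equal by going in each direction. For the forward inclusion, given a steady state $\bar c \in \R^{S+_Y S'}$ with boundary data $(I,O)$ at $X$ and $Z$, pull back along $[j,j']$ to get $(c,c') \in \R^S \oplus \R^{S'}$ whose restrictions to $Y$ agree; let $J \in \R^Y$ denote the common restriction of $v$ and $v'$ reaction velocities needed to balance flows at $Y$. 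Concretely one defines $J = o^*(v(c))$ viewed via the relation $v(c) + i_*(I) - o_*(J) = 0$, and one checks from the pushforward formula defining $w$ that the analogous equation with $J$ flowing out of $F$ and into $F'$ holds for $v'$ as well; this is essentially the content of the computation of $v^{R'R}$ done in the text. Thus $c$ witnesses $(i^*c, I, o^*c, J) \in \blacksquare(F)$ and $c'$ witnesses $((i')^*c', J, (o')^*c', O) \in \blacksquare(F')$, and these compose in $\SemialgRel$ to give the required element. For the reverse inclusion, given compatible steady states for $F$ and $F'$ meeting at a common value on $Y$ (with matching outflow of $F$ equal to inflow of $F'$), glue $c$ and $c'$ along $Y$ via the coequalizer $[j,j']$ to produce $\bar c \in \R^{S+_Y S'}$; the shared $Y$-flow cancels in the combined rate equation, which is exactly the statement that $w(\bar c) + (ji)_*(I) - (j'o')_*(O) = 0$.

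For (iii), the identity morphism on $X$ is represented by the cospan $X \xrightarrow{\id} X \xleftarrow{\id} X$ decorated by the zero vector field; its steady state condition is $I = O$ and $i^*(c) = o^*(c) = c$, yielding the diagonal relation, which is the identity in $\SemialgRel$. For (iv), the tensor product of open dynamical systems takes disjoint unions of species and direct sums of vector fields (by Lemma \ref{lemma:DLax}), so the steady-state equation decouples into two independent equations and $\blacksquare(F \otimes F') = \blacksquare(F) \oplus \blacksquare(F')$ as semialgebraic subsets; compatibility with the symmetry isomorphisms is immediate from their linearity. Combined with the functoriality of (ii) and (iii), this establishes that $\blacksquare$ is a symmetric monoidal functor.
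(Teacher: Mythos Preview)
Your overall architecture matches the paper's, and parts (i), (iii), (iv), and the ``gluing'' half of (ii) are fine. The genuine gap is in the ``splitting'' direction of (ii), where you start from a steady state $\bar c$ of the composite and must produce an intermediate flow $J \in \R^Y$ making both $v(c) + i_*(I) - o_*(J) = 0$ and $v'(c') + i'_*(J) - o'_*(O) = 0$ hold. Your proposed formula $J = o^*(v(c))$ is not correct in general: if $o$ is not injective, then $o_*(o^*(v(c)))_\sigma = |o^{-1}(\sigma)|\cdot v(c)_\sigma$, which does not recover $v(c)_\sigma$; and if $\sigma \in i(X)\cap o(Y)$ the term $i_*(I)_\sigma$ is unaccounted for. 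Appealing to ``the computation of $v^{R'R}$ done in the text'' does not help, since that was a single example with injective $i,o,i',o'$.

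The paper handles this by an exactness argument you are missing. From the composite steady-state equation one gets $j_*\big(v(c)+i_*(I)\big) + j'_*\big(v'(c')-o'_*(O')\big) = 0$, i.e.\ the pair lies in $\ker(j_* + j'_*)$. Because the pushout square becomes, after applying the free-vector-space functor, a coequalizer
\[
\R^Y \xrightarrow{(o_*,\,-i'_*)} \R^S \oplus \R^{S'} \xrightarrow{j_* + j'_*} \R^{S+_Y S'},
\]
one has $\ker(j_* + j'_*) = \im(o_*,-i'_*)$, and \emph{this} is what guarantees the existence of $J$ with $v(c)+i_*(I) = o_*(J)$ and $v'(c')-o'_*(O') = -i'_*(J)$. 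Without invoking this exactness (or an equivalent argument), the existence of the intermediate flow is simply asserted, and the proof of $\blacksquare(F'F) \subseteq \blacksquare(F')\circ\blacksquare(F)$ is incomplete.
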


\begin{proof}
For any morphism $F \maps X \to Y$ in $\Dynam$ represented by the open dynamical system 
\[         (X \stackrel{i}\longrightarrow S \stackrel{o}\longleftarrow Y, v) \]
the set 
\[       \{(c, i^*(c),I,o^*(c),O) : \; v(c) + i_*(I) - i_*(O) = 0 \} \subseteq
\R^S \oplus \R^X \oplus \R^X \oplus \R^Y \oplus \R^Y \] 
is defined by polynomial equations, since $v$ is algebraic.   Thus, by the Tarski--Seidenberg theorem, the set 
\[    \blacksquare(F) = \{  (i^*(c),I,o^*(c),O) : \; v^R(c) + i_*(I) - i_*(O) = 0 \} \]
is semialgebraic.   

Next we prove that $\blacksquare$ is a functor.  Consider composable morphisms $F \maps X \to Y$ and $F' \maps Y \to Z$ in $\Dynam$.   We know that $F$ is represented by some open dynamical system 
\[         (X \stackrel{i}\longrightarrow S \stackrel{o}\longleftarrow Y, v) \]
while $F'$ is represented by some
\[         (Y \stackrel{i'}\longrightarrow S' \stackrel{o'}\longleftarrow Z, v') .\]
To compose these, we form the pushout
\[
    \xymatrix{
      && S +_Y S' \\
      & S \ar[ur]^{j} && S' \ar[ul]_{j'} \\
      \quad X\quad \ar[ur]^{i} && Y \ar[ul]_{o} \ar[ur]^{i'} &&\quad Z \quad \ar[ul]_{o'}
    }
\]
Then $F'F \maps X \to Z$ is represented by the open dynamical system
\[ (X \stackrel{j i}{\longrightarrow} S +_Y S' \stackrel{j' o'}{\longrightarrow} Z, u ) \]
where 
\[    u = j_* \circ v \circ j^* + {j'}_* \circ v' \circ {j'}^*  .\]

To prove that $\blacksquare$ is a functor, we first show that 
\[ \blacksquare(F'F) \subseteq \blacksquare(F') \blacksquare(F) \]  
Thus, given
\[     (i^*(c),I,o^*(c),O) \in \blacksquare(F), \qquad  ({i'}^*(c'),I',{o'}^*(c'),O') \in \blacksquare(F') \]
with 
\[   o^*(c) = {i'}^*(c'), \qquad O = I' \]
we need to prove that 
\[     (i^*(c),I,{o'}^*(c'),O') \in \blacksquare(F'F) \]
To do this, it suffices to find concentrations $b \in \R^{S +_Y S'}$ such that 
\[   (i^*(c),I,{o'}^*(c'),O') = ((ji)^*(b), I, {(j'o')}^*(b), O') \]
and $b$ is a steady state of $F'F$ with inflows $I$ and outflows $O'$.

Since  $o^*(c) = {i'}^*(c'),$ this diagram commutes:
\[
    \xymatrix{
      && \R \\
      & S \ar[ur]^{c} && S' \ar[ul]_{c'} \\
       && Y \ar[ul]^{o} \ar[ur]_{i'} &&
    }
\]
so by the universal property of the pushout there is a unique map $b \maps S +_Y S' \to \R$ such that
this commutes:
\begin{equation}
\label{eq:pushout}
    \xymatrix{
      && \R \\
     && S +_Y  S' \ar[u]^b \\      
      & S \ar@/^/[uur]^{c} \ar[ur]^{j} && S' \ar@/_/[uul]_{c'} \ar[ul]_{j'} \\
       && Y \ar[ul]^{o} \ar[ur]_{i'} &&
    }
\end{equation}
This simply says that because the concentrations $c$ and $c'$ agree on the `overlap' of our two
open dynamical systems, we can find a concentration $b$ for the composite system that restricts
to $c$ on $S$ and $c'$ on $S'$.

We now prove that $b$ is a steady state of the composite open dynamical system with
inflows $I$ and outflows $O'$:
\begin{equation}
\label{eq:steady_state_3}
   u(b) + (ji)_*(I) - (j'o')_*(O') = 0.
\end{equation}
To do this we use the fact that $c$ is a steady state of $F$ with inflows $I$ and outflows $O$:
\begin{equation}
\label{eq:steady_state_1}
   v(c) + i_*(I) - {o}_*(O) = 0
\end{equation}
and $c'$ is a steady state of $F'$ with inflows $I'$ and outflows $O'$:
\begin{equation}
\label{eq:steady_state_2}
   v'(c') + {i'}_*(I') - {o'}_*(O') = 0.
\end{equation}
We push forward Equation (\ref{eq:steady_state_1}) along $j$, push forward Equation
(\ref{eq:steady_state_2}) along $j'$, and sum them:
\[   j_*(v(c))  + (ji)_*(I) - (jo)_*(O) + j'_*(v'(c')) + (j'i')_*(I') - (j'o')_*(O') = 0. \]
Since $O = I'$ and $jo = j'i'$, two terms cancel, leaving us with
\[     j_*(v(c))  + (ji)_*(I) + j'_*(v'(c')) - (j'o')_*(O') = 0. \]
Next we combine the terms involving the vector fields $v$ and $v'$, with the help of Equation (\ref{eq:pushout}) and the definition of $u$:
\begin{equation}
\label{eq:u}
   \begin{array}{ccl}
  j_*(v(c)) + j'_*(v'(c')) &=& j_*(v(b \circ j)) + j'_*(v'(b \circ j')) \\
                                    &=& (j_* \circ v \circ j^* + j'_* \circ v' \circ j'^*)(b) \\
                                    &=& u(b)  .
\end{array}
\end{equation}
This leaves us with
\[         u(b) +  (ji)_*(I) - (j'o')_*(O') = 0 \]
which is Equation (\ref{eq:steady_state_3}), precisely what we needed to show.

To finish showing that $\blacksquare$ is a functor, we need to show that 
\[   \blacksquare(F'F) \subseteq \blacksquare(F') \blacksquare(F)  .\] 
So, suppose we have 
\[    ((ji)^*(b), I, {(j'o')}^*(b), O') \in \blacksquare(F'F) .\]
We need to show
\begin{equation}
\label{eq:composite}
  ((ji)^*(b), I, {(j'o')}^*(b), O') = (i^*(c),I,{o'}^*(c'),O') 
\end{equation}
where 
\[     (i^*(c),I,o^*(c),O) \in \blacksquare(F), \qquad  ({i'}^*(c'),I',{o'}^*(c'),O') \in \blacksquare(F') \]
and
\[   o^*(c) = {i'}^*(c'), \qquad O = I' .\]

To do this, we begin by choosing
\[   c = j^*(b), \qquad c' = {j'}^*(b) .\]
This ensures that Equation (\ref{eq:composite}) holds, and since $jo = j'i'$, it also ensures that 
\[  o^*(c) = (jo)^*(b) = (j'i')^*(b) = {i'}^*(c')  .\]
So, to finish the job, we only need to find an element $O = I' \in \R^Y$ such that $c$ is a steady state of $F$ with inflows $I$ and outflows $O$ and $c'$ is a steady state of $F'$ with inflows $I'$ and outflows $O'$.  Of course, we are given the fact that $b$ is a steady state of $F'F$ with inflows $I$ and outflows $O'$.   

In short, we are given Equation (\ref{eq:steady_state_3}), and we want to find $O = I'$ such that Equations (\ref{eq:steady_state_1}) and (\ref{eq:steady_state_2}) hold.  Thanks to our choices of $c$ and $c'$,  we can use Equation (\ref{eq:u}) and rewrite Equation (\ref{eq:steady_state_3}) as
\begin{equation}
\label{eq:steady_state_3'}
  j_*(v(c) + i_*(I)) \; + \; {j'}_*(v'(c') - {o'}_*(O')) = 0 .  
\end{equation}
Equations  (\ref{eq:steady_state_1}) and (\ref{eq:steady_state_2}) say that
\begin{equation}
\label{eq:steady_state_1'2'}
\begin{array}{lcl}
   v(c) + i_*(I) - {o}_*(O) &=& 0 \\  \\
   v'(c') + {i'}_*(I') - {o'}_*(O') &=& 0.
\end{array}
\end{equation}

Now we use the fact that 
\[
    \xymatrix{
      & S +_Y S' \\
       S \ar[ur]^{j} && S' \ar[ul]_{j'} \\
       & Y \ar[ul]^{o} \ar[ur]_{i'} &
    }
\]
is a pushout.  Applying the `free vector space on a finite set' functor, which preserves colimits, this implies that
\[
    \xymatrix{
      & \R^{S +_Y S'} \\
       \R^S \ar[ur]^{j_*} && \R^{S'} \ar[ul]_{{j'}_*} \\
       & \R^Y \ar[ul]^{o_*} \ar[ur]_{i'_*} &
    }
\]
is a pushout in the category of vector spaces.   Since a pushout is formed by taking first a coproduct and then a coequalizer, this implies that 
\[
     \xymatrix{
      \R^Y \ar@<-.5ex>[rr]_-{(0,i'_*)} \ar@<.5ex>[rr]^-{(o_*,0)} && \R^S \oplus \R^{S'} \ar[rr]^{j_* + j'_*}
   && \R^{S +_Y S'}
}
\]
is a coequalizer.  Thus, the kernel of $j_* + j'_*$ is the image of $(o_*,0) - (0,i'_*)$.   Equation (\ref{eq:steady_state_3'}) says precisely that 
\[    (v(c) + i_*(I), v'(c') - o'_*(O')) \in \ker(j_* + j'_*)  .\]
Thus, it is in the image of $o_* - i'_*$.  In other words, there exists some element $O = I' \in \R^Y$
such that 
\[   (v(c) + i_*(I), v'(c') - o'_*(O')) = (o_*(O), -i'_*(I')).\]
This says that Equations (\ref{eq:steady_state_1}) and (\ref{eq:steady_state_2}) hold, as desired.

Finally, we need to check that $\blacksquare$ is symmetric monoidal.  But this is a straightforward calculation, so we leave it to the reader.
\end{proof}

It is worth comparing our black-boxing theorem, Theorem \ref{thm:black}, to Spivak's work on open dynamical systems \cite{Spivak}.  He describes various  categories where the morphisms are open dynamical systems, and constructs functors from these categories to $\Rel$, whcih describe the steady state relations between inputs and outputs.  None of his results subsume ours, but they are philosophically very close.  Both are doubtless special cases of a more general theorem that is yet to be formulated. It would be interesting to connect this line of work with recent results on the thermodynamics of open chemical reaction networks which connects the existence and interpretation of various thermodynamic quantities with topological properties of the open reaction network \cite{PoletinniCRN}.
\chapter{Conclusions}\label{ch:conc}

At the graphical level, Petri Nets or reaction networks provide a very general syntax, utilized not only to represent sets of coupled non-linear differential equations, but also to reason about models of concurrent computing and distributed systems \cite{modgenpet}. The category $\RNet$ provides a new framework for the construction of open versions of such Petri Nets and likely admits many interesting assignments of semantics or behavior.

Restricting to single-reactant to single-product transitions, reaction networks reduce to Markov processes as labeled graphs:

\[ \begin{tikzpicture}
	\begin{pgfonlayer}{nodelayer}
	       \node[style=none] (1) at (0,0) {$ \mapsto $ };
		\node [style=species] (0) at (-1, 0) {$B$};
		\node [style=transition] (7) at (-2.5, 0) {$r_1$};
		\node [style=species] (18) at (-4, 0) {$A$};
	\end{pgfonlayer}
	\begin{pgfonlayer}{edgelayer}
		\draw [style=inarrow] (18) to (7);
		\draw [style=inarrow] (7) to (0);
	\end{pgfonlayer}

\begin{scope}[shift={(5,0)},->,>=stealth',shorten >=1pt,thick,scale=1.1]
\node[main node, scale=.65](A) at (-4,0) {$A$};
\node[main node, scale=.65](B) at (-1.5,0) {$B$};

  \path[every node/.style={font=\sffamily\small}, shorten >=1pt]
    (A) edge [] node[above] {$r_1$} (B);
\end{scope}
\end{tikzpicture}
\]
In such a case, the rate equation
\[ \frac{dc}{dt} = \sum_{\tau \in T} r(\tau) (t(\tau) - s(\tau)) c(t)^{s(\tau)} \]
also reduces to the master equation
\[ \frac{dc_i}{dt} = \sum_j \left( H_{ij}c_j - H_{ji}c_i \right) \]
where the concentrations in a reaction network play the role of the probabilities in a Markov process.

Since each transition has a single input and a single output, under the correspondence $T \equiv E$ we can write
\[ \frac{dc}{dt} = \sum_{e \in E} r_e( t(e) - s(e)) c(t)^{s(e)}. \]
Taking a particular component, we have
\[ \frac{dc_i}{dt} = \sum_{e \in E} r_e( t(e)_i - s(e)_i ) c(t)^{s(e)} \] 
\[ \frac{dc_i}{dt} = \sum_j \left( \sum_{e \maps j \to i} r_e c_j - \sum_{e \maps i \to j} r_e c_i \right). \]
Recalling that for Markov processes, $H_{ij} = \sum_{e \maps j \to i} r_e$ we can write this as
\[ \frac{dc_i}{dt} = \sum_j \left( H_{ij}c_j - H_{ji}c_i \right) \]
which we recognize as the master equation.

At the level of categories, this means that we have an inclusion functor
\[ I \maps \OpenMark \to \RxNet \]
and that composing it with the functor
\[ \graysquare \maps \RxNet \to \Dynam \]
and applying the composite to an open Markov process yields the open master equation.

Our approach to black-boxing open Markov processes and open reaction networks differed substantially. To tackle open Markov processes we first restricted our attention to a subcategory $\DetBalMark$ of open detailed balanced Markov processes. We did this essentially because it is possible to start with two open Markov processes whose underlying Markov processes both satisfy Kolmogorov's criterion and compose them to give a process whose underlying Markov process violates Kolmogorov's criterion. By adding additional structure to the inputs and outputs of an open Markov process, namely by labeling the states with energies, we arrived at a notion of composition which preserved the property of admitting a detailed balanced equilibrium. Since every open detailed balanced Markov process has an underlying open Markov process, there is a forgetful functor 
\[ F \maps \DetBalMark \to \OpenMark . \] 
We then showed that there is a functor
\[ K \maps \DetBalMark \to \Circ \]
sending an open detailed balanced Markov process to a corresponding electrical circuit. 
Utilizing the fact that non-equilibrium steady states minimize dissipation in an open Markov process together with the existing result that there is a black-boxing functor for electrical circuits \cite{BaezFongCirc}
\[ \vdarkgraysquare \maps \Circ \to \LinRel\]
we arrived at a functor
\[ \square \maps \DetBalMark \to \LinRel \] 
sending any open detailed balanced Markov process to the subspace of possible steady state boundary probabilities and probability flows, viewed as a linear relation. A key step in this construction is that the steady-states of the open Markov processes we were black-boxing obeyed a variational principle provided by the minimization of the dissipation, which we saw approximates the rate of entropy production for steady-states near the underlying detailed balanced equilibrium. 

For reaction networks we took a more general approach by first showing that there is a category of open dynamical systems and a functor
\[ \graysquare \maps \RxNet \to \Dynam \]
sending an open reaction network to its open reaction network viewed as a morphism in the decorated cospan category $\Dynam$.
Composing this with
\[ \blacksquare \maps \Dynam \to \SemialgRel, \] 
yields a functor sending an open dynamical system to the semialgebraic relation corresponding to possible steady state boundary concentrations and flows.  Every linear relation is in fact semialgebraic, meaning that there is a functor
\[ U \maps \LinRel \to \SemialgRel. \]

We can summarize the above results involving various categories and functors between them with a single diagram:
\[ 
\xymatrix{
\DetBalMark \ar[dd]_K \ar[r]^-F \ar[ddr]_{\square} & \Mark \ar[r]^I  \ar[dd]^{\darkgraysquare} & \RNet  \ar[d]^{\graysquare} \\ 
& &   \Dynam \ar[d]^{\blacksquare} \\
\Circ \ar[r]_{\vdarkgraysquare} & \LinRel \ar[r]_-U & \SemialgRel 
}
\]
This diagram commutes up to natural transformation. 

Our approach to black-boxing open reaction networks applies to any open reaction network and therefore any Markov process, even those which do admit a detailed balanced equilibrium. Thus this approach is more general and makes no use of a variational principle. However, we saw that relative entropy serves as a Lyapunov function for Markov processes
\[ \frac{d}{dt} I(p(t),q) \leq 0 \]
and that relative entropy approximates dissipation for steady-states near equilibrium $\epsilon_i = 1+\frac{p_i}{q_i}$
\[ \frac{d}{dt} I(p(t),q) \approx -D(p)+ O(\epsilon^2).\] 
In addition, we saw that the open master equation can be written as a gradient flow with dissipation serving as a potential function
\[ \frac{dp}{dt} = -\nabla D(p) \]
on a space where the gradient $\nabla$ is given by $\nabla_i = \frac{q_i}{2} \frac{\partial}{\partial p_i}$.

In chemical reaction network theory, there is a class of reaction networks which admit a particularly nice equilibrium state called a `complex balanced equilibrium.' A reaction network admitting such an equilibrium is called a complex balanced reaction network. The existence of such an equilibrium again amounts to a condition on the structure and the rates of the reaction network. This condition guarantees that there exist values of the concentrations for which if a complex is annihilated at some net rate in the network, then it is created elsewhere at the same net rate, and vice versa. For detailed balanced Markov processes, the entropy relative to the equilibrium state serves as a Lyapunov function. In such a situation this relative entropy is in fact the difference in free energy of a non-equilibrium steady state from the equilibrium free energy. For complex balanced reaction networks, there exists a Lyapunov function corresponding to a type of free energy \cite{Horn}. In addition, mass-action kinetics for complex balanced reaction networks can be written as a gradient flow involving a certain related potential function \cite{Mielke}. 

Thus while not included presently there is likely a category of open complex balanced reaction networks and a corresponding method of black-boxing such networks which does involve a variational principle. The morphisms in the category of open complex balanced reaction networks would all correspond to morphisms in $\RNet$, however in order to ensure that composition of open complex balanced reaction networks resulted in an open complex balanced reaction network, the objects require more structure than that of finite sets. This is analogous to our assignment of energies to states in a Markov process order to close the category of open detailed balanced Markov processes under the operation of composition. An analogous assignment for complex balanced reaction networks would relate the existence of a free energy serving as a Lyapunov function to that of a potential function whose gradient on certain space generates the rate equation. This would be a natural next step in this line of work.

\nocite{*}
 \bibliographystyle{alpha}

\begin{thebibliography}{9}

\bibitem{AC} S.\ Abramsky and B.\ Coecke, A categorical semantics of quantum protocols, in {\sl Proceedings of the 19th IEEE Conference on Logic in Computer Science (LiCS04)}, IEEE Computer Science Press, 2004, 415--425. Available as \href{http://arxiv.org/abs/quant-ph/0402130}{arXiv:quant-ph/0402130}.

\bibitem{alberti1982stochasticity} P.\ M.\ Alberti and A.\ Uhlmann, \textsl{Stochasticity and Partial Order: Doubly Stochastic Maps and Unitary Mixing}, D.\ Reidel: Dordrecht, 1982.

\bibitem{AndrieuxGaspard} D.\ Andrieux, and P.\ Gaspard, Fluctuation theorem for currents and Schnakenberg network theory, \textsl{ J.\ Stat.\ Phys.}, 127,  2007, 107--131.

\bibitem{BaezBiamonte} J.\ C.\ Baez and J.\ D.\ Biamonte, Quantum techniques for stochastic mechanics. Available as \href{http://arxiv.org/abs/1209.3632}{arXiv:1209.3632}.

\bibitem{BaezCoyaRebro} J.\ C.\ Baez, B.\ Coya, and F.\ Rebro, Props in network theory. Available as \href{https://arxiv.org/abs/1707.08321}{arXiv:1707.08321}.

\bibitem{BaezEberleControl} J.\ C.\ Baez and J.\ Eberle, Categories in control, \textsl{ Theory Appl.\ Cat.}, 30, 2015, 836--881. Available as \href{https://arxiv.org/abs/1405.6881}{arXiv:1405.6881}.

\bibitem{BaezFongCirc} J.\ C.\ Baez and B.\ Fong, A compositional framework for passive linear networks. Available as \href{http://arxiv.org/abs/1504.05625}{arXiv:1504.05625}.

\bibitem{BaezFongP} J.\ C.\ Baez, B.\ Fong, and B.\ S.\ Pollard, A compositional framework for open Markov processes, \textsl{ Jour.\ Math.\ Phys.}, 57, 2016, 033301. Available as \href{http://arxiv.org/abs/1508.06448}{arXiv:1508.06448}.


\bibitem{BaezPRx} J.\ C.\ Baez and B.\ S.\ Pollard, A compositional framework for reaction networks, \textsl{ Rev.\ Math.\ Phys.}, 29, 2017, 1750028. Available as \href{https://arxiv.org/abs/1704.02051}{arXiv:1704.02051}.

\bibitem{BaezStay} J.\ C.\ Baez and M.\ Stay,  Physics, topology, logic and computation: a Rosetta Stone, in {\sl New Structures for Physics}, ed.\ B.\ Coecke, Lecture Notes in Physics, 813, Springer: Berlin, 2011, 173--286.  Available as \href{http://arxiv.org/abs/0903.0340}{arXiv:0903.0340}.


\bibitem{SRgraph2} M.\ Banaji and G.\ Craciun, Graph-theoretic criterion for injectivity and unique equilibria in general chemical reaction systems, \textsl{Adv.\ Appl.\ Math.}, 44, 2010, 168--184. Available as \href{http://arxiv.org/abs/0809.1308}{arXiv:0809.1308}.

\bibitem{Benabou} J.\ B\'enabou, Introduction to bicategories, in \textsl{Reports of the Midwest Category Seminar}, eds.\ J.\ B\'enabou et.\ al., Springer Lecture Notes in Mathematics, 47, Springer: New York, 1967, 1--77.

\bibitem{Bott} R.\ Bott and J.\ Mayberry, Matrices and trees, \textsl{ Economic Activity Analysis}, ed. O.\ Morgenstern, John Wiley \& Sons: New York, 1954,  391--400. 

\bibitem{BMN} S.\ Bruers, C.\ Maes, and K.\ Neto\v cn\'y, On the validity of entropy production principles for linear electrical circuits. \textsl{ Jour.\ Stat.\ Phys.},  2007, 129, 725--740.

\bibitem{cohen1993relative} J.\ E.\ Cohen, Y.\ Iwasa, G.\ Rautu, M.\ B.\ Ruskai, E.\ Seneta, and G.\ Zbaganu, Relative entropy under mappings by stochastic matrices, \textsl{Linear Algebra Appl.}, 179, 1993, 211--235.

\bibitem{cohenmajorization}  J.\ E.\ Cohen,  Y.\ Derriennic, and G.\ Zbaganu,  Majorization, monotonicity of relative entropy, and stochastic matrices, \textsl{Contemp.\ Math.}, 149, 1993, 251--259.

\bibitem{Coste} M.\ Coste, An introduction to semialgebraic geometry, RAAG Network School, 2002.  Available at \href{http://gcomte.perso.math.cnrs.fr/M2/CosteIntroToSemialGeo.pdf}{http://gcomte.perso.math.cnrs.fr/M2/CosteIntroToSemialGeo.pdf}.

\bibitem{Kenny} K.\ Courser, A bicategory of decorated cospans, to appear in \textsl{Theory Appl.\ Cat.}, 32, 2017, 985--1027. Available as \href{https://arxiv.org/abs/1605.08100}{arXiv:1605.08100}.

\bibitem{cover1994processes} T.\ M.\ Cover, Which processes satisfy the Second Law?, in \textsl{Physical Origins of Time Asymmetry}, eds.\ J.\ J.\ Halliwell, J.\ Perez-Mercader and W.\ H.\ Zurek, Cambridge University Press: New York, 1994, 98--107.

\bibitem{SRgraph} G.\ Craciun, Y.\ Tang, and M.\ Feinberg, Understanding bistability in complex enzyme-driven reaction networks, \textsl{Proc.\ Natl.\ Acad.\ Sci.\ USA}, 103, 2006, 8697--8702. Available at \href{http://www.pnas.org/content/103/23/8697.abstract}{http://www.pnas.org/content/103/23/8697.abstract}.

\bibitem{csiszar1963} I.\ Csisz{\'a}r, Eine informationstheoretische Ungleichung und ihre Anwendung auf den Beweis der Ergodizitat von Markoffschen Ketten, \textsl{Publ.\ Math.\ Inst.\ Hungar.\ Acad.\ Sci.}, 8, 1963, 85--108. 

\bibitem{dupuis2012construction} P.\ Dupuis and M.\ Fischer, On the construction of Lyapunov functions for nonlinear Markov processes via relative entropy, preprint 2012.

\bibitem{JasonThesis} J.\ M.\ Erbele, Categories in Control: Applied PROPs, Ph.D. thesis, University of California Riverside, 2016. Available as \href{https://arxiv.org/abs/1611.07591}{arXiv:1611.07591}.

\bibitem{Feinberg} M.\ Feinberg, Complex balancing in general kinetic systems, \textsl{Arch.\ Rational Mech.\ Anal.}, 49, 1972, 187--194.

\bibitem{def0} M.\ Feinberg, The existence and uniqueness of steady states for a class of chemical reaction networks, \textsl{Arch.\ Rational Mech.\ Anal.}, 132, 1995, 311--370.

\bibitem{def1} M.\ Feinberg, Multiple steady states for chemical reaction networks of deficiency one, \textsl{Arch.\ Rational Mech.\ Anal.}, 132, 1995, 371--406.

\bibitem{Horn} M.\ Feinberg and F.\ J.\ Horn, Chemical mechanism structure and the coincidence of the stochiometric and kinetic subspaces, \textsl{Arch.\ Rational Mech.\ Anal.}, 66, 1977, 83--97. 

\bibitem{Fong} B.\ Fong, Decorated cospans, \textsl{Theory Appl.\ Cat.}, 30, 2015, 1096--1120. Available as \href{https://arxiv.org/abs/1502.00872}{arXiv:1502.00872}.

\bibitem{FongThesis} B.\ Fong, \textsl{The Algebra of Open and Interconnected Systems}, Ph.D. thesis, University of Oxford, 2016.  Available as \href{https://arxiv.org/abs/1609.05382}{arXiv:1609.05382}.

\bibitem{Gardiner} C.\ W.\ Gardiner, \textsl{Handbook of Stochastic Methods: for Physics, Chemistry, and the Natural Sciences}, ed.\ H. Haken, Springer Series in Synergetics, 13, Springer: Berlin, 1985.

\bibitem{GP} P.\ Glandsorf and I.\ Prigogine, \textsl{ Thermodynamic Theory of Structure, Stability and Fluctuations}, Wiley-Interscience: New York, 1971.

\bibitem{gorban2010entropy} A.\ N.\ Gorban, P.\ A.\ Gorban, and G.\ Judge, Entropy: The Markov ordering approach, \textsl{Entropy}, 12, 2010, 1145--1193. Available as \href{https://arxiv.org/abs/1003.1377}{arXiv:1003.1377}.

\bibitem{StochPetri} P.\ J.\ E.\ Goss and J.\ Peccoud, Quantitative modeling of stochastic systems in molecular biology by using stochastic Petri nets, \textsl{Proc.\ Natl.\ Acad.\ Sci.\ USA}, 98, 1998, 6750--6755. Available at \href{http://www.pnas.org/content/95/12/6750.full.pdf}{http://www.pnas.org/content/95/12/6750.full.pdf}.

\bibitem{DeGrootM} S.\ R.\ de Groot and P.\ Mazur, \textsl{ Non-equilibrium Thermodynamics}, North-Holland Publishing Company: Amsterdam, 1962.

\bibitem{Haas} P.\ J.\ Haas, \textsl{Stochastic Petri Nets: Modelling, Stability, Simulation}, Springer: Berlin, 2002.

\bibitem{HillTree} T.\ L.\ Hill, Studies in irreversible thermodynamics IV: Diagrammatic representation of steady state fluxes for unimolecular systems, \textsl{ Jour.\ Theor.\ Bio.}, 10, 1966, 442--459. 

\bibitem{Hill} T.\ L.\ Hill, \textsl{ Free Energy Transduction in Biology: The Steady-State Kinetic and Thermodynamic Formalism}, Academic Press: New York, 1977.

\bibitem{HillDover} T.\ L.\ Hill, \textsl{ Free Energy Transduction and Biochemical Cycle Kinetics}, Springer-Verlag: New York, 1989, reprinted, Dover: New York, 2005.

\bibitem{HillScience} T.\ L.\ Hill and E.\ Eisenberg, Muscle contraction and free energy transduction in biological systems, \textsl{ Science}, 227, 1985, 999-1006.

\bibitem{Qians} D.\ Jiang, M.\ Qian, and M.\ P.\ Qian, \textsl{ Mathematical Theory of Nonequilibrium Steady States}, Springer: Berlin, 2004.

\bibitem{PSWalters} P.\ Katis, N.\ Sabadini, and R.\ F.\ C.\ Walters, Representing place/transition nets in Span(Graph), in \textsl{Proc. 5th AMAST Conf.}, LNCS, 1349, 1997, 322--336.  

\bibitem{Kelly} F.\ P.\ Kelly, \textsl{Reversibility and Stochastic Networks}, Wiley: Chichester, 1979,  reprinted, Cambridge University Press: New York, 2011.

\bibitem{King} E.\ L.\ King and C.\ Altman, A schematic method of deriving the rate laws for enzyme-catalyzed reactions, \textsl{J.\ Phys.\ Chem.\ }, 60, 1956, 1375--1378.

\bibitem{Kingman} J.\ F.\ C.\ Kingman, Markov population processes, \textsl{Jour.\ Appl.\ Prob.}, 6, 1969, 1--18.  

\bibitem{kullback1951information} S.\ Kullback and R.\ A.\ Leibler, On information and sufficiency, \textsl{Ann. Math. Statist.}, 22, 1951, 79--86.

\bibitem{Landauer} R.\ Landauer, Inadequacy of entropy and entropy derivatives in characterizing the steady state. \textsl{ Phys.\ Rev.\ A}, 12, 1975, 636--638.

\bibitem{Landauer2} R.\ Landauer, Stability and entropy production in electrical circuits. \textsl{Jour.\ Stat.\ Phys.}, 13, 1975, 1--16.

\bibitem{Lawvere}  F.\ W.\ Lawvere, Functorial semantics of algebraic theories and some algebraic problems in the context of functorial semantics of algebraic theories, 1963, \textsl{Reprints in Theory Appl.\ Categ.}, 5, 2005, 1--121. Available at \href{http://www.tac.mta.ca/tac/reprints/articles/5/tr5abs.html}{http://www.tac.mta.ca/tac/reprints/articles/5/tr5abs.html}.

\bibitem{LJ} G.\ Lebon and D.\ Jou, \textsl{Understanding Non-equilibrium Thermodynamics}, Springer: Berlin, 2008.

\bibitem{Leinster} T.\ Leinster, Basic bicategories. Available as \href{http://arxiv.org/abs/math/9810017}{math.CT/9810017}.
    
\bibitem{Lindblad} G.\ Lindblad, \textsl{Non-equilibrium Entropy and Irreversibility}, D. Reidel: Dordecht, Holland, 1983.

\bibitem{MacLane} S.\ Mac Lane, \textsl{Categories for the Working Mathematician}, Springer: Berlin, 1998.

 \bibitem{MN} C.\ Maes and K.\ Neto\v cn\'y, Minimum entropy production principle from a dynamical fluctuation law, \textsl{Jour.\ Math.\ Phys.}, 48, 2007, 053306.

\bibitem{modgenpet}  M.\ A.\ Marsan, G.\ Balbo, G.\ Conte, S.\ Donatelli, and G.\ Franceschinis, \textsl{Modelling with Generalized Stochastic Petri Nets},  John Wiley \& Sons: New York, 1994.

\bibitem{merhav2011data}  N.\ Merhav, Data processing theorems and the second law of thermodynamics, \textsl{IEEE Trans.\ Inform.\ Theory}, 5, 2011, 4926--4939.

\bibitem{MeseguerMontanari} J.\ Meseguer and U.\ Montanari, Petri nets are monoids, \textsl{Inform.\ and Comput.}, 88 1990, 105--155. 

\bibitem{Mielke} A.\ Mielke, A gradient structure for reaction-diffusion systems and for energy-drift diffusion systems, \textsl{Nonlinearity}, 24, 2011, 1329--1346. 

\bibitem{moran1961entropy} P.\ A.\ P.\ Moran, Entropy, Markov processes and Boltzmann's H-theorem, \textsl{Proc.\ Cambridge Philos.\ Soc.}, 57, 1961, 833--842.

\bibitem{morimoto1963markov}  T.\ Morimoto, Markov processes and the H-theorem, \textsl{J.\ Phys.\ Soc.\ Japan}, 18, 1963, 328--331. 


\bibitem{OPK} G.\ Oster, A.\ Perelson, and A.\ Katchalsky, Network thermodynamics, \textsl{Nature}, 234, 1971, 393--399.

\bibitem{OPKBio} G.\ Oster, A.\ Perelson, and A.\ Katchalsky, Network thermodynamics: dynamic modeling of biophysical systems, \textsl{Quart.\ Rev.\ Biophys.}, 1, 1973, 1--134.

\bibitem{OPChem} A.\ Perelson and G.\ Oster, Chemical reaction networks, \textsl{IEEE Trans.\ Circ.\ Sys.}, 21, 1974, 709--721.

\bibitem{PoletinniCRN} M.\ Poletinni, and M.\ Esposito, Irreversible thermodynamics of open chemical networks I: Emergent cycles and broken conservation laws, \textsl{J.\ Chem.\ Phys.}, 141, 2014, 024117.

\bibitem{Pollard} B.\ S.\ Pollard,  A Second Law for open Markov processes, \textsl{Open Syst.\ Inf.\ Dyn.}, 23, 2016, 1650006. Available as \href{http://arxiv.org/abs/1410.6531}{arXiv:1410.6531}.

\bibitem{PollardBio} B.\ S.\ Pollard, Open Markov processes: A compositional perspective on non-equilibrium steady states in biology, \textsl{Entropy}, 18, 2016, 140. Available as \href{https://arxiv.org/abs/1601.00711}{arXiv:1601.00711}.

\bibitem{Prigogine} I.\ Prigogine, \textsl{Non-Equilibrium Statistical Mechanics}, Interscience Publishers: New York, 1962.

\bibitem{PrigogineEnt} I.\ Prigogine, \textsl{Etud{\'e} Thermodynamique des ph{\'e}nom{\'e}nes irr{\'e}versibles}, Dunod: Paris and Desoer: Li{\'e}ge, 1947. 

\bibitem{Qian1} H.\ Qian, Open-system nonequilibrium steady state: statistical thermodynamics, fluctuations, and chemical oscillations, \textsl{J.\ Phys.\ Chem.\ B}, 31, 2006, 15063--74.

\bibitem{Qian2} H.\ Qian and D.\ A.\ Beard, Thermodynamics of stoichiometric biochemical networks in living systems far from equilibrium, \textsl{ Biophys.\ Chem.}, 114, 2005, 213--220.

\bibitem{Qian3} H.\ Qian and L.\ Bishop, The chemical master equation approach to nonequilibrium steady-state of open biochemical systems: Linear single-molecule enzyme kinetics and nonlinear biochemical reaction networks, \textsl{ Int.\ J.\ Mol.\ Sci.},  11, 2010, 3472--3500.

\bibitem{Sassone} V.\ Sassone, On the category of Petri net computations,  in \textsl{TAPSOFT'95: Proc. Intl. Joint Conference on Theory and Practice of Software Development}, LNCS, 915, 1995, 334--348. 

\bibitem{SchnakenRev} J.\ Schnakenberg, Network theory of microscopic and macroscopic behavior of master equation systems, \textsl{ Rev.\ Mod.\ Phys.}, 48, 1976, 571--585.

\bibitem{SchnakenBook} J.\ Schnakenberg, \textsl{ Thermodynamic Network Analysis of Biological Systems}, Springer: Berlin, 1981. 

\bibitem{Se} P.\ Selinger, Dagger compact closed categories and completely positive maps, in {\sl QPL2005: Proceedings of the 3rd International Workshop on Quantum Programming Languages}, ENTCS, 170, 2007, 139--163.  Available at \href{https://ncatlab.org/nlab/files/SelingerPositiveMaps.pdf}{https://ncatlab.org/nlab/files/SelingerPositiveMaps.pdf}.

\bibitem{concordance} G.\ Shinar and M.\ Feinberg, Concordant chemical reaction networks and the Species-Reaction Graph, \textsl{Math.\ Bio.}, 241, 2013, 1--23. 

\bibitem{SpivakWiring} D.\ Spivak, The operad of wiring diagrams: formalizing a graphical language for databases, recursion, and plug-and-play circuits. Available as \href{https://arxiv.org/abs/1305.0297}{arXiv:1305.0297}.

\bibitem{Spivak} D.\ Spivak, The steady states of coupled dynamical systems compose according to matrix arithmetic.  Available as \href{https://arxiv.org/abs/1512.00802}{arXiv:1512.00802}.

\bibitem{Tutte} W.\ Tutte, The dissection of equilateral triangles into equilateral triangles, \textsl{ Proc.\ Cambridge Philos.\ Soc.}, 44, 1948, 463--482.

\bibitem{VanKampen} N.\ G.\ Van Kampen, \textsl{Stochastic Processes in Physics and Chemistry}, North Holland: Amsterdam, 1981.










\end{thebibliography}


\end{document}